\DeclareMathOperator*{\wlim}{w--lim}
\DeclareMathOperator*{\vGlim}{{\mathcal G}--lim}
\DeclareMathOperator*{\wvGlim}{{w--\mathcal G}--lim}
\DeclareMathOperator*{\wBstarlim}{{w^\star--B^*}--lim}
\DeclareMathOperator*{\wBlim}{{w--B}--lim}
 \newcommand{\cs}{{\rm the Cauchy-Schwarz inequality }}
\newcommand{\cS}{{\rm the Cauchy-Schwarz inequality}}
\newcommand{\N}{{\mathbb{N}}} 
\newcommand{\R}{{\mathbb{R}}} 
\newcommand{\C}{{\mathbb{C}}}
 \renewcommand{\c}{{\rm c}}
\newcommand{\e}{{\rm e}} \newcommand{\ess}{{\rm ess}}
 \renewcommand{\i}{{\rm i}}
\renewcommand{\d}{{\rm d}}
\newcommand{\diag}{{\rm diag}}
\renewcommand{\Re}{{\rm Re}\,} \renewcommand{\Im}{{\rm Im}\,}
\newcommand{\intR}{{-\!\!\!\!\!\int_R\,}}
\DeclarePairedDelimiter\inp\langle\rangle
\newcommand\parb[2][]{#1 \big ( #2#1\big )} \newcommand\parbb[2][]{#1
  \Big ( #2#1\Big )} 
 \newcommand{\grad}{{\rm grad }\,}
 \renewcommand{\exp}{{\rm exp}}
\newcommand{\mand}{\text{ and }} \newcommand{\mfor}{\text{ for }}
\newcommand{\vA}{{\mathcal A}} \newcommand{\vB}{{\mathcal B}}
 \newcommand{\vD}{{\mathcal D}}
\newcommand{\vE}{{\mathcal E}} 
\newcommand{\vG}{{\mathcal G}} \newcommand{\vI}{{\mathcal I}}
 \newcommand{\vH}{{\mathcal H}}
 \newcommand{\vN}{{\mathcal N}}
\theoremstyle{plain}
\newtheorem{thm}{Theorem}[section]
\newtheorem{proposition}[thm]{Proposition}
\newtheorem{lemma}[thm]{Lemma} \newtheorem{corollary}[thm]{Corollary}
\theoremstyle{definition} 
 \newtheorem{example}[thm]{Example}
 \newtheorem{cond}[thm]{Condition}
\newtheorem{remarks}[thm]{Remarks}
 \newtheorem*{remarks*}{Remarks}
\newtheorem*{remark*}{Remark}
\numberwithin{equation}{section}
\title{Stationary scattering theory on manifolds, I\hspace{-.1em}I}
\thanks{ K.I. is supported by JSPS KAKENHI grant nr. 25800073.\, 
E.S. is supported by  DFF grant nr.  4181-00042.
}
\author{K. Ito}
\address[K. Ito]{Department of Mathematics, Kobe University\\
1-1 Rokkodai, Nada, Kobe, 657-8501, Japan}
\email{ito-ken@math.kobe-u.ac.jp}
\author{E. Skibsted} \address[E. Skibsted]{Institut for Matematiske
  Fag \\
  Aarhus Universitet\\ Ny Munkegade 8000 Aarhus C, Denmark}
\email{skibsted@imf.au.dk}
\begin{document}
\begin{abstract} 
Based on our previous study  \cite {IS3} 
we develop fully the stationary scattering theory for the 
Schr\"odinger operator on a manifold possessing an  escape
  function. A particular class of examples are manifolds  with
  Euclidean and/or hyperbolic ends, possibly with unbounded and
  non-smooth obstacles.
We develop  the theory largely along the classical
lines \cite{Sa, Co}  and derive in particular WKB-asymptotics of
  a class of minimal  generalized eigenfunctions. 
As an application we  prove  a conjecture of \cite{HPW} on
cross-ends transmissions 
 in its natural and
strong form within the  framework of our theory.
\end{abstract}

\maketitle
\tableofcontents

\section{Introduction}\label{sec:intr} 
Let $(M,g)$ be a connected Riemannian manifold.
In this paper we study  stationary scattering  theory for the geometric Schr\"odinger operator
\begin{align*}
H=H_0+V; \quad
H_0=-\tfrac12\Delta=\tfrac12p_i^*g^{ij}p_j,\ p_i=-\mathrm i\partial_i,
\end{align*}
on the Hilbert space ${\mathcal H}=L^2(M)$. The potential $V$ is
real-valued and bounded, and the self-adjointness of $H$ is realized
by the Dirichlet boundary condition. We shall develop a long-range stationary
scattering theory to a large   extent along the lines of \cite{Sa,
  Co} on $\R^d$ or  exterior domains of $\R^d$. In particular our
theory relies on the existence of an intrinsic
escape function. For other previous works on
scattering with unbounded obstacles we refer to  \cite{Il1,Il2,
  Il3}. For previous time-dependent short-range scattering theories on manifolds we
refer to \cite{IN,IS1}. We shall develop the time-dependent scattering
theory   from the stationary theory of this paper elsewhere
\cite{IS4}. 

Our main results are asymptotic completeness Theorem
\ref{thm:dist-four-transf}, i.e. the existence of
unitarily diagonalizing distorted Fourier transforms, and a
characterization of an associated class of minimal generalized
eigenfunctions in terms of (zeroth order) WKB-asymptotics  Theorem    \ref{thm:char-gener-eigenf-1}.  As an
application we prove a conjecture of \cite{HPW} on cross-ends
transmissions. It is stated in a strong form as   Corollary \ref{cor:transmission}. The results of
the paper are obtained in terms of an intrinsic escape function
geometrically controlled by parameters. At the border of our
parameter constraints we construct an example for which the minimal
generalized eigenfunctions do not have WKB-asymptotics  Example
\ref{ex:countex}. Whence 
our somewhat technical conditions are more natural than a first
reading  might indicate and in a  sense 
optimal.

\subsection{Setting and results from
  \cite{IS3}}\label{subsec:preliminary result}
Our  paper is a direct
continuation of  \cite{IS3}, and we start by recalling the setting and 
various results there 
partly to fix notation and terminologies.
This subsection exhibits only a minimal review,
and we refer to \cite[Subsection~1.1]{IS3} for 
more details and 
to \cite[Subsection~1.2]{IS3} for 
several examples of manifolds satisfying the abstract conditions appearing below.

\subsubsection{Basic setting}

We assume an \textit{end} structure on $M$ in a somewhat disguised form.\begin{cond}\label{cond:12.6.2.21.13}
Let $(M,g)$ be a connected Riemannian manifold  of dimension $d\ge 1$.
There exist a function $r\in C^\infty(M)$ with image $r(M)=[1,\infty)$
and constants 
$c>0$ and $r_0\ge 2$ such that:
\begin{enumerate}
\item\label{item:12.4.7.19.40} 
The gradient vector field $\omega=\mathop{\mathrm{grad}} r\in\mathfrak X(M)$ is
forward complete in the sense that the integral curve of $\omega$ is defined
for any initial point $x\in M$ and any non-negative time parameter $t\ge 0$.
\item\label{item:12.4.7.19.40b}
The bound $|\mathrm dr|=|\omega|\ge c$ holds on $\{x\in M\,|\, r(x)> r_0/2\}$. 
\end{enumerate}
\end{cond}

Under Condition~\ref{cond:12.6.2.21.13}
each component of the subset 
$E=\{x\in M\,|\, r(x)>r_0\}$
is called an \emph{end} of $M$,
and, along with Condition~\ref{cond:12.6.2.21.13a} below, 
the function $r$ may model a distance function there.
We note that by 
Condition~\ref{cond:12.6.2.21.13} (\ref{item:12.4.7.19.40b})
and the implicit function theorem  
the $r$-spheres 
\begin{align*}
S_R=\{x\in M\,|\,r(x)=R\};\quad R> r_0/2,
\end{align*} 
are submanifolds of $M$.
We will construct the \textit{spherical coordinates} on $E$
in Subsection \ref{subsec:Limiting Hilbert space}.

Let us impose more conditions on the geometry of $E$ in terms of the radius function $r$.
Choose $\chi\in C^\infty(\mathbb{R})$ such that 
\begin{align}
\chi(t)
=\left\{\begin{array}{ll}
1 &\mbox{ for } t \le 1, \\
0 &\mbox{ for } t \ge 2,
\end{array}
\right.
\quad
\chi\ge 0,
\quad
\chi'\le 0,\quad \sqrt{1-\chi}\in C^\infty,
\label{eq:14.1.7.23.24}
\end{align}
and set 
\begin{align}
\eta=1-\chi(2r/r_0),\quad
\tilde\eta=|\mathrm dr|^{-2}\eta 
=|\mathrm dr|^{-2}\bigl(1-\chi(2r/r_0)\bigr)
.
\label{eq:13.9.23.5.54b}
\end{align}
We introduce a  ``radial'' differential operator $A$:
\begin{align}
A
=\mathop{\mathrm{Re}} p^r
=\tfrac12\bigl( p^r+( p^r)^*\bigr);\quad 
p^r=-\mathrm i \nabla^r,\ 
\nabla^r=\nabla_{\omega}=g^{ij}(\nabla_i r)\nabla_j,
\label{eq:13.9.23.2.24}
\end{align}
and also the ``spherical'' tensor $\ell$ 
and the associated differential operator $L$: 
\begin{align}
\ell=g-\tilde\eta\,\mathrm dr\otimes \mathrm dr,\quad 
L=p_i^*\ell^{ij}p_j
.
\label{eq:13.9.23.5.54}
\end{align}
As we can see easily in the spherical coordinates
introduced in Subsection \ref{subsec:Limiting Hilbert space}, the tensor $\ell$
may be identified with the pull-back of $g$ to the $r$-spheres. We
call $L$ the spherical part of $-\Delta$. Note that if $|\d r|=1$ then
$-L$ acts as  the Laplace--Beltrami operator on $S_r$ (in general as a kind of
perturbation of this operator, see \eqref{eq:lsubr}).
 We remark that the tensor $\ell$ clearly
satisfies
\begin{align}
0\le \ell\le g,\quad
\ell^{\bullet i}(\nabla r)_i=(1-\eta)\mathrm dr,
\label{eq:14.12.23.13.49}
\end{align}
where the first bounds of \eqref{eq:14.12.23.13.49} are
understood as quadratic form estimates on the fibers
of the tangent bundle of $M$.
The quantities  of \eqref{eq:13.9.23.5.54} will play a
major role in this paper. 

Let us recall a local expression of the Levi--Civita connection $\nabla$:
If we denote the Christoffel symbol by 
$\Gamma^k_{ij}=\tfrac12g^{kl}(\partial_i g_{lj}+\partial_j g_{li}-\partial_lg_{ij})$,
then for any smooth function $f$ on $M$
\begin{align}
(\nabla f)_i&=(\nabla_if)=(\mathrm d f)_i=\partial_if,\quad
(\nabla^2 f)_{ij}
=\partial_i\partial_j f-\Gamma^k_{ij}\partial_kf.
\label{eq:14.3.5.0.14}
\end{align}
Note that $\nabla^2f $ is the geometric Hessian of $f$. 

\begin{cond}\label{cond:12.6.2.21.13a}
There exist constants $\tau,C>0$ such that 
globally on $M$
\begin{subequations}
\begin{align}
\begin{split}
|\nabla|\mathrm dr|^2|\leq Cr^{-1-\tau/2}
,\quad |\nabla^k r|\leq C\text{ for }\,k\in\{1,2\},\quad 
\bigl|\ell^{\bullet i}\nabla_i\Delta r\bigr|\leq Cr^{-1-\tau/2}.
\end{split}\label{eq:13.9.28.13.28}
\end{align} There exists $\sigma'>0$ such that for all   $R>r_0/2$, and as  quadratic forms on fibers of the tangent bundle of $S_R$, 
\begin{align}
R\,\iota^*_R\nabla^2r\ge \tfrac12\sigma' |\mathrm dr|^2\iota^*_R g,
\label{eq:13.9.5.3.30}
\end{align} where $\iota_R\colon S_R\hookrightarrow M$ 
is the inclusion map.
\end{subequations}
\end{cond}

We note that 
Condition~\ref{cond:12.6.2.21.13a} and the identity
\begin{align}
(\nabla^2r)^{ij}(\nabla r)_j=\tfrac12(\nabla|\mathrm dr|^2)^i
\label{eq:14.12.27.3.22}
\end{align} was used in \cite{IS3} to obtain the  more practical version
of \eqref{eq:13.9.5.3.30}: For any $\sigma\in(0,\sigma')$ and $\tau$
as in Condition~\ref{cond:12.6.2.21.13a}  there exists $C>0$
such that globally on $M$
\begin{align}
r\parb{\nabla^2r-\tfrac 12\tilde \eta^2(\nabla^r|\mathrm dr|^2)\d r
  \otimes\d r} \ge \tfrac12\sigma |\mathrm dr|^2 \ell- Cr^{-\tau}g.
\label{eq:13.9.5.3.30C}
\end{align}

Next we introduce an effective potential:
\begin{align}
q=V+\tfrac18\tilde\eta \bigl[(\Delta r)^2+2\nabla^r\Delta r\bigr].
\label{eq:14.12.10.22.48}
\end{align}
\begin{cond}\label{cond:10.6.1.16.24}
  There exists a splitting by real-valued functions:
  \begin{align*}
  q=q_1+q_2;\quad q_1\in C^1(M)\cap L^\infty(M),\  q_2\in L^\infty(M),
  \end{align*}
such that for some $\rho',C>0$ the following bounds hold globally on $M$:
\begin{align}\label{eq:60}
\nabla ^rq_1\leq Cr^{-1-\rho'},\quad
|q_2|\le Cr^{-1-\rho'}.
\end{align}
\end{cond}
We remark that in this  paper only derivatives of $r$ of order at most
five are used quantitatively.

Now let us mention the self-adjoint realizations of $H$ and $H_0$.
Since $(M,g)$ can be incomplete, the operators $H$ and $H_0$ are not necessarily 
essentially self-adjoint on $C^\infty_{\mathrm{c}}(M)$.
We realize $H_0$ as a self-adjoint operator by imposing the 
Dirichlet boundary condition, i.e.\ 
$H_0$ is the unique self-adjoint
operator associated with the closure of the quadratic form
\begin{align*}
\langle H_0\rangle_\psi=\langle \psi, -\tfrac12 \Delta \psi\rangle,\quad \psi
\in C^\infty_{\mathrm{c}}(M).
\end{align*}
We denote the form closure and the self-adjoint realization by the same symbol $H_0$. 
Define the associated Sobolev spaces $\mathcal H^s$ by
\begin{align}\mathcal H^s=(H_0+1)^{-s/2}{\mathcal H}, \quad s\in\mathbb{R}.
\label{eq:13.9.5.1.2}
\end{align}
Then $H_0$ may be understood as a
closed quadratic form on $Q(H_0)=\mathcal H^1$.
Equivalently, $H_0$ makes sense also as a bounded operator 
$\mathcal H^1\to\mathcal H^{-1}$, whose action coincides with 
that for distributions.  
By the definition of the Friedrichs extension 
the self-adjoint realization of $H_0$ 
is the restriction of such distributional
$H_0\colon \mathcal H^1\to\mathcal H^{-1}$ to the domain:
\begin{align*} 
{\mathcal D}(H_0)=\{\psi\in\mathcal H^1\,|\, H_0\psi\in {\mathcal H}\}\subseteq {\mathcal H}.
\end{align*}
Since $V$ is bounded and self-adjoint by Conditions~\ref{cond:12.6.2.21.13}--\ref{cond:10.6.1.16.24}, 
we can realize the self-adjoint operator $H=H_0+V$ simply as
\begin{align*}
H=H_0+V,\quad \mathcal D(H)=\mathcal D(H_0).
\end{align*}

In contrast to (\ref{eq:13.9.5.1.2}) we introduce the Hilbert spaces $\mathcal H_s$
and $\mathcal H_{s\pm}$ with configuration weights:
\begin{align*}
\mathcal H_s=r^{-s}{\mathcal H}, \quad 
{\mathcal H_{s+}=\bigcup_{s'>s}\mathcal H_{s'},\quad
\mathcal H_{s-}=\bigcap_{s'<s}\mathcal H_{s'},\quad}
s\in\mathbb{R}.
\end{align*}
We consider the $r$-balls $B_R=\{ r(x)<R\}$ and  
the characteristic functions 
\begin{align}
\begin{split}
F_\nu=F(B_{R_{\nu+1}}\setminus B_{R_\nu}),\ R_\nu=2^{\nu},\ \nu\ge 0,
\end{split}\label{eq:13.9.12.17.34}
\end{align}
 where $F(\Omega)$ is used for sharp characteristic
 function of a subset $\Omega\subseteq M$. 
Define the associated Besov spaces $B$ and $B^*$ by
\begin{align}
\begin{split}
B&=\{\psi\in L^2_{\mathrm{loc}}(M)\,|\,\|\psi\|_{B}<\infty\},\quad 
\|\psi\|_{B}=\sum_{\nu=0}^\infty R_\nu^{1/2}
\|F_\nu\psi\|_{{\mathcal H}},\\
B^*&=\{\psi\in L^2_{\mathrm{loc}}(M)\,|\, \|\psi\|_{B^*}<\infty\},\quad 
\|\psi\|_{B^*}=\sup_{\nu\ge 0}R_\nu^{-1/2}\|F_\nu\psi\|_{{\mathcal H}},
\end{split}\label{eq:13.9.7.15.11}
\end{align}
respectively.
We also define $B^*_0$ to be the closure of $C^\infty_{\mathrm c}(M)$ in $B^*$.
Recall the nesting:
\begin{align*}
\mathcal H_{1/2+}\subsetneq B\subsetneq \mathcal H_{1/2}
\subsetneq\vH
\subsetneq\mathcal H_{-1/2}\subsetneq B^*_0\subsetneq B^*\subsetneq \mathcal H_{-1/2-}.
\end{align*}

Using the function $\chi\in C^\infty(\mathbb R)$ of \eqref{eq:14.1.7.23.24},
define $\chi_n,\bar\chi_n,\chi_{m,n}\in C^\infty(M)$ for 
 $n>m\ge 0$ by
\begin{align}
\chi_n=\chi(r/R_n),\quad \bar\chi_n=1-\chi_n,\quad
\chi_{m,n}=\bar\chi_m\chi_n.
\label{eq:11.7.11.5.14}
\end{align}  
 Let us introduce an auxiliary space:
\begin{align*}
\mathcal N=\{\psi\in L^2_{\mathrm{loc}}(M)\,|\, \chi_n\psi\in \mathcal
H^1\mbox{ for all }n\ge 0\}.
\end{align*}
This is a  space of  functions that satisfy the
  Dirichlet boundary condition,
possibly with infinite $\mathcal H^1$-norm on $M$.
 Note that under Conditions~\ref{cond:12.6.2.21.13}--\ref{cond:10.6.1.16.24} the manifold $M$
  may be, e.g.\ a half-space in the Euclidean space, and there could be 
a ``boundary'' even for large $r$, which is ``invisible'' from inside $M$. 

\subsubsection{Review of the previous results}
Now we gather and review the main results of \cite{IS3}.
Note that all the theorems in this subsection are already proved there.

Our first theorem is Rellich's theorem, the absence of $B^*_0$-eigenfunctions 
with eigenvalues above a certain ``critical energy''
$\lambda_0\in\mathbb R$ defined by 
\begin{align}
\lambda_0
=\limsup_{r\to\infty}q_1
=
\lim_{R\to\infty}
\bigl(\sup\{q_1(x)\,|\, r(x)\ge R\}\bigr).
\label{eq:13.9.30.6.8}
\end{align}
For the Euclidean and the hyperbolic spaces and many other examples 
the critical energy $\lambda_0$ can be computed 
explicitly, and the essential
   spectrum is given by $\sigma_\ess(H)= [\lambda_0,\infty).$
  The latter is  usually seen in terms of Weyl sequences, see  \cite{K1}.

\begin{thm}\label{thm:13.6.20.0.10}
Suppose Conditions~\ref{cond:12.6.2.21.13}--\ref{cond:10.6.1.16.24},
and let $\lambda>\lambda_0$.
If a function  $\phi\in  L^2_{\mathrm{loc}}(M)$ satisfies that
\begin{enumerate}
\item\label{item:13.7.29.0.27}
$(H-\lambda)\phi=0$ in the distributional sense, 
\item\label{item:13.7.29.0.26}
$\bar\chi_m\phi\in \mathcal N \cap B_0^*$ for all $m\ge 0$ large enough,
\end{enumerate}
then $\phi=0$ in $M$.
\end{thm}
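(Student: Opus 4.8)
My plan is to run the positive-commutator (``virial'') method with the escape function $r$, in the form of a bootstrap that successively improves the decay of $\phi$. The conjugate operator is the radial operator $A=\mathop{\mathrm{Re}}p^r$ of \eqref{eq:13.9.23.2.24}, and the basic object is the quadratic-form identity extracted from $(H-\lambda)\phi=0$ by pairing against $\mathop{\mathrm{Re}}\bigl(\chi_{m,n}\Theta(r)\mathrm i A\phi\bigr)$ for a radial weight $\Theta\ge0$ to be chosen. Here the double cutoff $\chi_{m,n}$ of \eqref{eq:11.7.11.5.14} localizes away from the ``invisible boundary'' of the possibly incomplete $M$ and realizes the Dirichlet condition through $\bar\chi_m\phi\in\vN$; one then lets $n\to\infty$ with $m$ fixed, checking that the boundary contributions at $r=R_n$ vanish because $\bar\chi_m\phi\in B^*_0$. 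The outcome is a Mourre-type lower bound on $\langle\phi,\mathrm i[H,A]\phi\rangle$ whose leading part is, schematically (the eigenequation being used to trade the radial kinetic energy for the potential),
\[
\mathrm i[H,A]\ \gtrsim\ \tfrac{\sigma}{r}|\mathrm dr|^2L\ +\ \tfrac{2\sigma}{r}\bigl(\lambda-q_1\bigr)\ -\ Cr^{-1-\min(\tau,\rho')},
\]
in which the first term is nonnegative by the uniform convexity of the $r$-spheres \eqref{eq:13.9.5.3.30C}, the second is $\ge$ a positive constant times $r^{-1}$ for $r$ large since $\lambda>\lambda_0=\limsup q_1$, and the error gathers the geometric terms of \eqref{eq:13.9.28.13.28}, the term $\nabla^rq_1$ of \eqref{eq:60}, and the contributions of the effective potential $q$ of \eqref{eq:14.12.10.22.48}.

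First, the ``Rellich'' step: with $\Theta$ a bounded increasing function of $r$ and only the hypothesis $\bar\chi_m\phi\in B^*_0$ fed in, the identity forces $\bar\chi_m\phi\in\vH$; since $\phi-\bar\chi_m\phi$ is a compactly supported $\vH^2$ function by interior elliptic regularity, $\phi\in\mathcal D(H)$ and $\phi$ is a genuine $L^2$-eigenfunction with eigenvalue $\lambda>\lambda_0$. Bootstrapping: taking $\Theta=r^{2s}\eta$ and iterating upgrades $\bar\chi_m\phi\in\vH_s$ to $\bar\chi_m\phi\in\vH_{s+\delta}$ (for a fixed gain $\delta>0$, and $m$ possibly enlarged), so $\phi$ decays faster than every power of $r$; then taking $\Theta=\mathrm e^{2\alpha r}\eta$ and iterating once more, using that the conjugation by $\mathrm e^{\alpha r}$ only improves the positive term---it shifts the effective energy to $\lambda+\tfrac12\alpha^2|\mathrm dr|^2$, even further above $\lambda_0$---while all errors are absorbed by the decay already gained, yields $\mathrm e^{\alpha r}\phi\in\vH$ for every $\alpha\ge0$, i.e.\ $\phi$ decays faster than any exponential. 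Finally, a solution of $(H-\lambda)\phi=0$ on a connected $M$ with smooth $g$ and $V\in L^\infty$ that decays faster than any exponential must vanish identically: this is a unique-continuation-from-infinity statement, obtainable from a Carleman estimate with a convex radial weight (the hypotheses $|\mathrm dr|\ge c$ on ends and the sphere convexity \eqref{eq:13.9.5.3.30C} provide exactly the needed convexity), or alternatively by pushing the exponentially weighted identity one further step to contradict $\phi\not\equiv0$ directly.

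The main obstacle I expect is the first step: making the heuristic positivity of $\mathrm i[H,A]$ rigorous \emph{starting from the borderline $B^*_0$ regularity}---extracting a genuine decay gain when the only a priori control on $\phi$ is this weak---while simultaneously justifying every integration by parts against the incomplete boundary through the $n\to\infty$ limit of the $\chi_{m,n}$-localized identities, for which the decay encoded in $B^*_0$ (as opposed to merely $B^*$) is essential. Concretely, all the geometric error terms of Conditions~\ref{cond:12.6.2.21.13a}--\ref{cond:10.6.1.16.24}---in particular the $\ell^{\bullet i}\nabla_i\Delta r$ term in \eqref{eq:13.9.28.13.28} and the $\nabla^r\Delta r$ buried in the effective potential $q$ of \eqref{eq:14.12.10.22.48}---must be dominated by $\tfrac{\sigma}{r}|\mathrm dr|^2L+\tfrac{2\sigma}{r}(\lambda-\lambda_0)$, and the careful bookkeeping of their $r$-weights, together with the expansion of $\mathrm i[H,A]$ itself (which produces $\nabla^2r$, hence after a further commutator brings in higher derivatives of $r$), is the technical heart of the argument.
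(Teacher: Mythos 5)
This theorem is not proved in the present paper: as the authors state at the start of Subsection~1.1.2, all theorems recalled there are quoted from the companion paper [IS3], so a direct comparison with ``the paper's own proof'' is not possible from this source. Your proposed route---a virial/positive-commutator identity with conjugate operator $A$ and a $\chi_{m,n}$-truncated radial weight (with the $n\to\infty$ limit controlled by the $B^*_0$ hypothesis), a Rellich step upgrading $B^*_0$ to $L^2$, a polynomial bootstrap, exponential weights $\mathrm e^{\alpha r}$ exploiting the energy shift $\lambda\mapsto\lambda+\tfrac12\alpha^2|\mathrm dr|^2$, and a final unique-continuation-from-infinity step---is the standard Froese--Herbst-type scheme that [IS3] follows, and you correctly identify the technical crux: extracting a genuine gain from the borderline $B^*_0$ regularity while dominating the geometric error terms (those coming from $\nabla^2r$, $\Delta r$, and the effective potential $q$) by the convexity term $\sigma r^{-1}|\mathrm dr|^2\ell$ and the spectral gap $\lambda-q_1$.

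One inaccuracy worth flagging: you write that ``$\phi-\bar\chi_m\phi$ is a compactly supported $\vH^2$ function by interior elliptic regularity.'' On an incomplete $M$ the $r$-balls $B_R$ need not be compact (the paper says so explicitly just after Condition~\ref{cond:12.6.2.21.13b}), so $\chi_m\phi$ is in general not compactly supported, and interior elliptic regularity alone does not place it in $\vD(H_0)$. This is precisely why the hypothesis $\bar\chi_m\phi\in\mathcal N$ is imposed: the Dirichlet condition at the ``invisible boundary'' has to be threaded through the $\chi_{m,n}$-truncated virial identities themselves, not obtained a posteriori from compactness. The overall strategy still works, but this step of your sketch must be replaced by a genuine argument that the truncated commutator identity, combined with $\bar\chi_m\phi\in\mathcal N\cap B^*_0$, yields $\bar\chi_m\phi\in\vH$ and then $\phi\in\vD(H)$.
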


Next we discuss the limiting absorption principle
and the radiation condition
related to the resolvent $R(z)=(H-z)^{-1}.$ 
We state a locally uniform bound for the resolvent as
a map: $B\to B^*$.
For that we need a compactness condition.
\begin{cond}\label{cond:12.6.2.21.13b}
In addition to
Conditions~\ref{cond:12.6.2.21.13}--\ref{cond:10.6.1.16.24}, 
there exists an open subset $\mathcal I\subseteq (\lambda_0,\infty)$
such that for any $n\ge 0$ and  compact interval $I\subseteq \mathcal I$ 
the mapping 
\begin{align*}
  \chi_nP_H(I)\colon \mathcal H\to\mathcal H
\end{align*}
is  compact, where $P_H(I)$ denotes the spectral projection onto $I$ for $H$. 
\end{cond}

Due to Rellich's compact embedding theorem 
\cite[Theorem X\hspace{-.1em}I\hspace{-.1em}I\hspace{-.1em}I.65]{RS},
``boundedness'' of $r$-balls provides a criterion for Condition~\ref{cond:12.6.2.21.13b}:
If each $r$-ball $B_R$, $R\geq 1$, is 
isometric to a bounded subset of a complete manifold, 
Condition~\ref{cond:12.6.2.21.13b} is satisfied for $\mathcal I=(\lambda_0,\infty)$.
Condition~\ref{cond:12.6.2.21.13b} in fact includes more general situations 
where $M$ has multiple ends of different critical energies and $r$-balls are unbounded
as in \cite{K4}.

We fix any $\sigma\in (0,\sigma')$ and then 
large enough $C>0$ in agreement with \eqref{eq:13.9.5.3.30C}, 
and introduce the positive quadratic form
\begin{align*}
h:=\nabla^2r-\tfrac 12\tilde \eta^2(\nabla^r|\mathrm dr|^2)\d r
  \otimes\d r+2Cr^{-1-\tau}g\ge \tfrac12\sigma r^{-1}|\mathrm dr|^2 \ell+Cr^{-1-\tau}g.
\end{align*} 
For any  subset $I\subseteq \mathcal I$ we denote
\begin{align*}
I_\pm=\{z=\lambda\pm \mathrm i\Gamma\in \mathbb C\,|\,\lambda\in I,\ \Gamma\in (0,1)\},
\end{align*}
respectively. 
We also use the notation 
$\langle T\rangle_\phi=\langle\phi,T\phi\rangle$.

\begin{thm}\label{thm:12.7.2.7.9}
Suppose Condition~\ref{cond:12.6.2.21.13b}
and let $I\subseteq \mathcal I$ be a compact interval.
Then there exists $C>0$ such that 
for any $\phi=R(z)\psi$ with $z\in I_\pm$ and $\psi\in B$
\begin{align}
\|\phi\|_{B^*}+\|p^r\phi\|_{B^*}
+\langle p_i^*h^{ij}p_j\rangle_\phi^{1/2}
+\|H_0\phi\|_{B^*}
\le C\|\psi\|_B.
\label{eq:13.8.22.4.59c}
\end{align}
\end{thm}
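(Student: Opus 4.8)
The plan is to establish \eqref{eq:13.8.22.4.59c} by a positive commutator (Mourre-type) argument built from the radial operator $A=\Re p^r$ of \eqref{eq:13.9.23.2.24} and the escape function $r$. Throughout fix $z=\lambda\pm\mathrm i\Gamma\in I_\pm$ and $\phi=R(z)\psi$ with $\psi\in B$, so that $\phi\in\mathcal D(H)=\mathcal D(H_0)\subseteq\mathcal H^1$ and $(H-z)\phi=\psi$. First I would make two reductions. Since $H_0\phi=z\phi+\psi-V\phi$ with $V$ bounded and $B\hookrightarrow B^*$, one has $\|H_0\phi\|_{B^*}\le C\bigl(\|\phi\|_{B^*}+\|\psi\|_B\bigr)$, so the last term of \eqref{eq:13.8.22.4.59c} is controlled once the first is; and for $\Gamma$ bounded away from $0$ the whole estimate follows at once from $\|R(z)\|_{\mathcal H\to\mathcal H}\le\Gamma^{-1}$, the bound $h\le Cg$ (a consequence of Condition~\ref{cond:12.6.2.21.13a}), and the embeddings $B\hookrightarrow\mathcal H\hookrightarrow B^*$. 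So it remains to bound $\|\phi\|_{B^*}+\|p^r\phi\|_{B^*}+\langle p_i^*h^{ij}p_j\rangle_\phi^{1/2}$ by $C\|\psi\|_B$ uniformly for small $\Gamma>0$.

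For the core estimate one fixes a large parameter $\varrho$, picks a bounded nondecreasing radial weight $\theta=\theta(r)$ supported in $\{r\gtrsim\varrho\}$ whose derivative is comparable to the dyadic density underlying the norms of $B$ and $B^*$, and sets $\mathcal A=\Re(\theta p^r)$. Since $\mathcal A\phi$ need not lie in $\mathcal D(H)$, I would work first with the regularizations $\mathcal A_n=\Re(\chi_n\theta p^r)$ (and a further mollification in the momentum variable): here $\phi\in\mathcal N$ and multiplication by $\chi_n$ preserves the Dirichlet condition, so the incompleteness of $(M,g)$ is harmless and the possibly ``invisible'' boundary of $M$ produces no boundary terms under integration by parts. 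For such $\mathcal A_n$ one has the exact identity
\[
\langle\mathrm i[H,\mathcal A_n]\rangle_\phi=2\Im\langle\mathcal A_n\phi,\psi\rangle\pm2\Gamma\langle\mathcal A_n\rangle_\phi ,
\]
and one expands the commutator. Its principal part is a radial term $\asymp\theta'\,|p^r\phi|^2$ together with a spherical term $\asymp p_i^*\bigl(\theta(\nabla^2r)^{ij}\bigr)p_j$; by \eqref{eq:13.9.5.3.30C} the spherical term bounds below a fixed multiple of $\langle p_i^*h^{ij}p_j\rangle_\phi$, up to an $O(\theta r^{-1-\tau})$ remainder, while the radial term dominates $\|p^r\phi\|_{B^*}^2$ on $\{r\gtrsim\varrho\}$. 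The contribution of the effective potential $q$ of \eqref{eq:14.12.10.22.48} is $\asymp-\theta\,\nabla^rq$, bounded above by $C\theta r^{-1-\rho'}$ through Condition~\ref{cond:10.6.1.16.24}. The remaining commutator terms are either of size $O(\theta r^{-1-\varepsilon})$, by Condition~\ref{cond:12.6.2.21.13a} and the identity \eqref{eq:14.12.27.3.22}, or yield the pairing $\le C\|p^r\phi\|_{B^*}\|\psi\|_B$; and the term $\pm2\Gamma\langle\mathcal A_n\rangle_\phi$ coming from the nonreal spectral parameter is controlled using the relative $H_0$-boundedness of $\mathcal A_n$ (uniform in $n$, where compactness of $I$ enters) together with $\Gamma\|\phi\|_{\mathcal H}^2=|\Im\langle\phi,\psi\rangle|\le\|\phi\|_{B^*}\|\psi\|_B$. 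Because the $O(\theta r^{-1-\varepsilon})$ remainders carry a gain $\varrho^{-\varepsilon}$ relative to the principal density $\theta'$, they are absorbed into the left-hand side once $\varrho$ is large, at the cost of a contribution supported in $\{r\lesssim\varrho\}$. Letting $n\to\infty$, combining with an elementary radial (Hardy-type) inequality bounding $\|\phi\|_{B^*}$ by $\|p^r\phi\|_{B^*}$ plus lower order, and using Condition~\ref{cond:12.6.2.21.13b} to make the interior contribution genuinely compact, one arrives at
\[
\|\phi\|_{B^*}^2+\|p^r\phi\|_{B^*}^2+\langle p_i^*h^{ij}p_j\rangle_\phi\le C\|\psi\|_B^2+C_I\,\|\chi_N P_H(I')\phi\|_{\mathcal H}^2
\]
for a slightly larger compact interval $I'\supseteq I$ and some fixed large $N$, the constants depending only on $I$.

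To upgrade this to the uniform bound \eqref{eq:13.8.22.4.59c} I would argue by contradiction. If it failed there would be $z_k\in I_\pm$ and $\psi_k\in B$ with $\|\psi_k\|_B\to0$ while $\phi_k=R(z_k)\psi_k$ is normalized so the left-hand side of \eqref{eq:13.8.22.4.59c} equals $1$; the a priori estimate keeps $\|\phi_k\|_{B^*}$ bounded, so along a subsequence $z_k\to\lambda\in I$ and $\phi_k\rightharpoonup\phi$ weakly, whence $(H-\lambda)\phi=0$ distributionally and, from weak lower semicontinuity of the global quantity $\langle p_i^*h^{ij}p_j\rangle_\phi$ together with the eigenequation, $\bar\chi_m\phi\in\mathcal N\cap B^*_0$ for $m$ large. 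Rellich's Theorem~\ref{thm:13.6.20.0.10} then forces $\phi=0$; but then $\chi_N P_H(I')\phi_k\to0$ strongly by Condition~\ref{cond:12.6.2.21.13b}, the compact error term tends to $0$, and the a priori estimate forces the left-hand side of \eqref{eq:13.8.22.4.59c} to $0$, contradicting the normalization. This yields the uniform bound, and together with the reduction of the first paragraph it gives \eqref{eq:13.8.22.4.59c}.

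The hard part is the commutator bookkeeping in the core step: the weight $\theta$ must be chosen so that simultaneously the positive principal term retains a constant uniform over $\lambda\in I$, the sharp $B^*$-bound on $p^r\phi$ (not merely a weighted $L^2$-bound) is genuinely produced, and every remainder carries the $O(r^{-1-\varepsilon})$ decay needed for absorption at large $r$; this is precisely where the decay exponents $\tau$ and $\rho'$ of Conditions~\ref{cond:12.6.2.21.13a} and~\ref{cond:10.6.1.16.24}, and the identity \eqref{eq:14.12.27.3.22}, enter quantitatively. By comparison the passage to the uniform bound is soft, Rellich's theorem doing the work of excluding the obstruction of a possible generalized eigenfunction.
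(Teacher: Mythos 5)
Your proposal is essentially the standard commutator-plus-contradiction scheme that the paper (deferring to its prequel \cite{IS3}, where Theorem~\ref{thm:12.7.2.7.9} is actually proved) relies on. The core step—Mourre-type commutator with a bounded, slowly increasing radial weight $\theta$, using \eqref{eq:13.9.5.3.30C} for positivity of the spherical part and Conditions~\ref{cond:12.6.2.21.13a}--\ref{cond:10.6.1.16.24} for the remainders, followed by a soft contradiction argument in which Rellich's Theorem~\ref{thm:13.6.20.0.10} and the compactness Condition~\ref{cond:12.6.2.21.13b} kill a putative generalized eigenfunction—is precisely the route taken there (in the spirit of \cite{Sa,Co}). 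The identity $\langle\mathrm i[H,\mathcal A_n]\rangle_\phi=2\Im\langle\mathcal A_n\phi,\psi\rangle\pm2\Gamma\langle\mathcal A_n\rangle_\phi$ and the reductions in your first paragraph are correct, and your handling of the Dirichlet boundary through $\vN$ and the $\chi_n$-cutoffs is exactly the point of the auxiliary space. The only places you slide over are the small-$\Gamma$ absorption of $\Gamma\langle\mathcal A_n\rangle_\phi$ (you need $\Gamma\|p^r\phi\|^2\lesssim\Gamma\|\phi\|^2+\|\phi\|_{B^*}\|\psi\|_B$, which follows from the quadratic form identity for $H_0$ and the resolvent equation, not just from $\Gamma\|\phi\|^2=|\Im\langle\phi,\psi\rangle|$) and the precise construction of the weight $\theta$ yielding the genuine $B^*$ (rather than weighted $L^2$) control; both are standard but worth spelling out.
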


In our theory the Besov boundedness \eqref{eq:13.8.22.4.59c} 
does not immediately imply the limiting absorption principle,
and for the latter  we need also radiation condition bounds implied by
minor 
additional regularity conditions.

\begin{cond}\label{cond:12.6.2.21.13bbb}
In addition to Condition~\ref{cond:12.6.2.21.13b} 
there exist splittings $q_1=q_{11}+q_{12}$ and 
$q_2=q_{21}+q_{22}$ 
by real-valued functions
\begin{align*}
q_{11}\in C^2(M)\cap L^\infty(M),\quad
q_{12}, 
q_{21}\in C^1(M)\cap L^\infty(M),\quad
q_{22}\in L^\infty(M)
\end{align*}
and constants $\rho,C>0$
such that for $\alpha=0,1$
\begin{align*}
|\nabla^rq_{11}|&\le Cr^{-(1+\rho/2)/2},&
|\ell^{\bullet i}\nabla_i q_{11}|&\le Cr^{-1-\rho/2},&
|\mathrm d\nabla^rq_{11}|&\le Cr^{-1-\rho/2},\\
|\mathrm dq_{12}|&\le Cr^{-1-\rho/2},&
|(\nabla^r)^\alpha q_{21}|&\le Cr^{-\alpha-\rho},&
q_{21}\nabla^r q_{11}&\leq Cr^{-1-\rho},\\
|q_{22}|&\le Cr^{-1-\rho/2}.&&
\end{align*}
\end{cond}

Our  radiation condition bounds are
stated in terms of the  distributional radial differential operator 
$A$ defined in \eqref{eq:13.9.23.2.24}
and an asymptotic complex phase $a$ given below.
Pick a
  smooth decreasing function $r_\lambda\geq 2r_0$ of
  $\lambda>\lambda_0$ such that
  \begin{align}
    \lambda+\lambda_0-2q_1\geq 0\mfor r\geq r_\lambda/2,
  \label{eq:14.6.29.23.46}
  \end{align}
  and that  $r_\lambda= r_0$ for all $\lambda$ large enough.
Then we set 
\begin{align*}
\eta_\lambda=1-\chi(2r/r_\lambda),
\end{align*}
and for $z=\lambda \pm \i \Gamma \in\mathcal I\cup\mathcal I_\pm$
\begin{subequations}
 \begin{align}
b&=\eta_\lambda|\mathrm dr|\sqrt{2(z-q_1)},\qquad\quad\ \  
\tilde b=\tilde\eta b,\label{eq:13.9.5.7.2300}\\
a&=b\pm\tfrac14\eta_\lambda (p^rq_{11})\big/(z-q_1),\quad 
\tilde a=\tilde\eta a,\label{eq:13.9.5.7.23}
\end{align} 
\end{subequations}
respectively, where the branch of square root is chosen such that 
$\mathop{\mathrm{Re}}\sqrt w>0$ for $w\in \mathbb C\setminus
(-\infty,0]$. Note that for $z\in \mathcal I$ there are two values of
$a$ (and similarly of course for $\tilde a$) which could be denoted $a_\pm$. For convenience we prefer to use
the shorter notation. 
Note also that the phase $a$ of \eqref{eq:13.9.5.7.23} 
is an approximate solution to the radial Riccati equation
\begin{align}
\pm p^ra+a^2-2|\mathrm dr|^2(z-q_1)=0
\label{eq:15.3.11.19.35}
\end{align} 
in the sense that it makes the quantity on the left-hand side of
\eqref{eq:15.3.11.19.35} small for large $r\ge 1$.  
The quantity $b$ of \eqref{eq:13.9.5.7.2300} alone already gives an
approximate solution to the same equation, however with the second
term of \eqref{eq:13.9.5.7.23} a better approximation is obtained, cf.\
Lemma~\ref{lem:13.9.2.7.18}. 
Set
\begin{align}
\beta_c
=\tfrac12\min\{\sigma,\tau,\rho\}.\label{eq:10c}
\end{align} Here and whenceforth we consider $\sigma \in (0,\sigma')$
as a fixed parameter. 
\begin{thm}
  \label{prop:radiation-conditions} 
  Suppose Condition~\ref{cond:12.6.2.21.13bbb}, and let $I\subseteq \mathcal I$
be a compact interval.
  Then for all  $\beta\in [0,\beta_c)$
  there exists $C>0$ such that 
  for any $\phi=R(z)\psi$ with $\psi\in r^{-\beta}B$ and $z\in I_\pm$
\begin{align}
\|r^\beta(A\mp a)\phi\|_{B^*}
+\langle p_i^*r^{2\beta} h^{ij}p_j\rangle_{\phi}^{1/2}
&\leq C\|r^\beta\psi\|_B,\label{eq:14cccCff}
\end{align} 
respectively.
\end{thm}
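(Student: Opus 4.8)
The plan is to prove the radiation condition bound \eqref{eq:14cccCff} by a commutator (Mourre-type) argument, upgrading Theorem~\ref{thm:12.7.2.7.9} with the weight $r^{2\beta}$. First I would fix $\phi=R(z)\psi$ with $z=\lambda\pm\i\Gamma\in I_\pm$, and work with the conjugated ``propagation observable''
\[
\Phi = r^{2\beta}\,\mathrm{Op}\bigl((A\mp a)^*(A\mp a)\bigr) + \text{lower-order radial/spherical terms},
\]
or more precisely a regularized version $\Phi_R$ with a cutoff $\eta_\lambda\chi(r/R)$ and an $\epsilon$-regularization of the weight $r^{2\beta}\mapsto r^{2\beta}(1+r/R)^{-2\beta}$ to make everything bounded, then let $R\to\infty$ at the end. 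The core computation is to expand the Heisenberg-type derivative $\mathrm D_\Phi := \i[H,\Phi] + \text{(corrections from }z\text{ non-real)}$ and to show, using the Riccati approximation property \eqref{eq:15.3.11.19.35} of $a$ (i.e.\ Lemma~\ref{lem:13.9.2.7.18}) and the curvature lower bound \eqref{eq:13.9.5.3.30C} feeding into the form $h$, that
\[
\langle \mathrm D_{\Phi_R}\rangle_\phi \ge c\,\|r^\beta(A\mp a)\phi\|_{B^*,R}^2 + c\,\langle p_i^* r^{2\beta} h^{ij}p_j\rangle_\phi - C\|r^\beta\psi\|_B^2 - (\text{remainders} \to 0).
\]
On the other side, $\langle \mathrm D_{\Phi_R}\rangle_\phi = 2\,\mathrm{Im}\langle \psi, \Phi_R\phi\rangle + (\text{absorption term} \ge 0 \text{ or controllable})$, and the right-hand side is bounded by $C\|r^\beta\psi\|_B\,\|\Phi_R\phi\|_{B^*} \le C\|r^\beta\psi\|_B^2 + \tfrac12(\text{main terms})$ after using Theorem~\ref{thm:12.7.2.7.9} to control $\|\phi\|_{B^*}$, $\|p^r\phi\|_{B^*}$, $\|H_0\phi\|_{B^*}$ and a bootstrap/induction on $\beta$ in steps of size $<\beta_c$ (starting from $\beta=0$, where the bound is exactly Theorem~\ref{thm:12.7.2.7.9}). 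Absorbing the main terms and letting the regularization parameters go completes the estimate.

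The key steps in order: (1) Set up the regularized observable $\Phi_R$ incorporating the weight $r^{2\beta}$, the energy cutoff $\eta_\lambda$, and a spatial cutoff; record that all operators involved are well defined on $\mathcal D(H)$ and that the Dirichlet boundary condition is preserved (no boundary terms, using $\chi_n$-approximation and density, as in \cite{IS3}). (2) Compute $\i[H,\Phi_R]$ term by term: the commutator of $H_0$ with $r^{2\beta}\cdot(\cdot)$ produces (a) the positive radial term from $\i[H,A]$-type pieces, which via the Riccati relation \eqref{eq:15.3.11.19.35} reorganizes into $|r^\beta(A\mp a)\phi|^2$ plus a good sign, (b) the positive spherical/Hessian term $p_i^* r^{2\beta} h^{ij} p_j$ coming from \eqref{eq:13.9.5.3.30C} together with the $2\beta$ differentiation of the weight (here the constraint $\beta<\beta_c=\tfrac12\min\{\sigma,\tau,\rho\}$ is exactly what guarantees the extra $\beta r^{-1}$ from differentiating $r^{2\beta}$ does not overwhelm the $\sigma r^{-1}$ gain, so the net coefficient stays positive), and (c) error terms controlled by $q_{11},q_{12},q_{21},q_{22}$ bounds from Condition~\ref{cond:12.6.2.21.13bbb}, the geometric bounds \eqref{eq:13.9.28.13.28}, and $\Gamma$-dependent terms of a favourable sign. (3) Pair with $\phi$, integrate by parts to move one $H$ onto $\phi$ (using $(H-z)\phi=\psi$), and estimate the source contributions $\mathrm{Im}\langle\psi,\Phi_R\phi\rangle$ using $\psi\in r^{-\beta}B$ and the dual-Besov bounds on the components of $\Phi_R\phi$ obtained from Theorem~\ref{thm:12.7.2.7.9}. (4) Run the induction on $\beta$ and pass $R\to\infty$ (and the other regularizations), invoking monotone convergence / Fatou for the $B^*$-norms.

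The main obstacle I expect is step (2b)–(2c): producing a genuinely positive lower bound for the spherical/Hessian piece after the weight $r^{2\beta}$ has been commuted through $L$ (and through $A$). Differentiating $r^{2\beta}$ generates a term of order $2\beta r^{-1}$ times $p_i^*\ell^{ij}p_j$ with an \emph{indefinite} sign, and it must be dominated by the $\tfrac12\sigma r^{-1}|\mathrm dr|^2\ell$ gain in \eqref{eq:13.9.5.3.30C}; this is precisely why $\beta$ must stay below $\tfrac12\sigma$, and the bookkeeping of constants (together with the $r^{-1-\tau}$ and $r^{-1-\rho}$ error reservoirs) is delicate. A secondary technical nuisance is handling the non-commutativity of $a$ (which depends on $r$, on the spherical variables through $q_{11}$, and is complex when $z\notin\mathbb R$) with $p^r$ and $L$: one must show the resulting commutators are lower order, i.e.\ $O(r^{-1-\rho/2})$-type in the relevant operator norms, using the regularity of $q_{11}$ in Condition~\ref{cond:12.6.2.21.13bbb}. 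Once these positivity and error-control points are secured, the rest is the standard Mourre/radiation-bound machinery adapted to the Besov setting, exactly as in \cite{Sa,Co} but with the intrinsic $r$ and the form $h$ in place of the Euclidean structure.
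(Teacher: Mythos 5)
You should first note that the present paper does not prove Theorem~\ref{prop:radiation-conditions}: it appears in Subsection~\ref{subsec:preliminary result}, which the authors explicitly present as a \emph{review} of results already established in \cite{IS3} (``Note that all the theorems in this subsection are already proved there''). So strictly speaking there is no proof in this source to compare against, and what follows is an assessment of your sketch on its own merits and against the tools this paper makes available.

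The overall strategy you describe — a weighted commutator/integration-by-parts argument building on Theorem~\ref{thm:12.7.2.7.9}, using the Riccati approximation for $a$ (Lemma~\ref{lem:13.9.2.7.18}) and the Hessian lower bound feeding into $h$ — is the right one in spirit. But there are concrete gaps. (i) The observable $\Phi=r^{2\beta}\mathrm{Op}\bigl((A\mp a)^*(A\mp a)\bigr)+\dots$ is third order once $[H,\Phi]$ is expanded (a factor $A$ from $\i[H_0,r^{2\beta}]$ against two factors $(A\mp a)$), and in particular $\i[H,\Phi]$ does not organize into $\|r^{\beta}(A\mp a)\phi\|_{B^*}^2$ the way your step (2a) asserts; that quantity is $\langle\Phi\rangle_\phi$ itself, not a piece of the commutator. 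The natural object suggested by the paper's own algebra is first order: pair the equation $(H-z)\phi=\psi$ with $\Theta(A\mp a)\phi$ for a radial weight $\Theta\sim r^{2\beta}$, expand $H-z$ via the decomposition $\tfrac12(A\pm a)\tilde\eta(A\mp a)+\tfrac12L+O(r^{-\kappa})$ of Lemma~\ref{lem:15.1.15.15.16ff}, and move $(A\pm a)$ back by integration by parts; the derivative $\Theta'$ produced by $[A,\Theta]$ is what yields the $B^*$-norm of $r^{\beta}(A\mp a)\phi$. Your write-up should be reorganized around that skeleton. (ii) The ``induction on $\beta$ in steps of size $<\beta_c$'' is asserted but never used; as you note, $\beta=0$ is Theorem~\ref{thm:12.7.2.7.9}, and if the weighted commutator estimate closes for a given $\beta<\beta_c$ it should close in a single pass (with the $B$-to-$B^*$ bounds from Theorem~\ref{thm:12.7.2.7.9} controlling the unweighted pieces); if a bootstrap really were needed you would have to justify a uniform lower bound on the step size. (iii) You explain only the $\sigma/2$ constraint in $\beta_c=\tfrac12\min\{\sigma,\tau,\rho\}$. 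The $\tau/2$ and $\rho/2$ constraints are equally essential — they are what ensures the error terms $O(r^{-1-\tau})$ hidden in $h$, and the $O(r^{-\kappa})$ remainders of $q$ and of the Riccati equation, remain subordinate after multiplication by $r^{2\beta}$ — and a complete proof has to account for them explicitly.
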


The limiting absorption principle reads.

\begin{corollary}\label{cor:12.7.2.7.9b}
Suppose Condition~\ref{cond:12.6.2.21.13bbb},
and let $I\subseteq \mathcal I$ be a compact interval.
For any $s>1/2$
and $\epsilon\in (0,\min\{(2s-1)/(2s+1),\beta_c/(\beta_c+1)\})$ 
there exists $C>0$ such that for $\alpha=0,1$ and any $z,z'\in I_+$ or $z,z'\in I_-$ 
\begin{align}
\|p^\alpha R(z)-p^\alpha R(z')\|_{\mathcal B(\mathcal H_s,\mathcal H_{-s})}\le C|z-z'|^\epsilon.
\label{eq:14.12.30.21.52}
\end{align}
In particular, the operators $p^\alpha R(z)$, $\alpha=0,1$,  attain
uniform limits as $I_\pm \ni z \to \lambda \in I$ in the norm topology of 
${\mathcal B}(\mathcal H_s,\mathcal H_{-s})$, say denoted  
\begin{align}
p^\alpha R(\lambda\pm\mathrm i0):=\lim_{I_\pm \ni z\to \lambda}p^\alpha R(z), \quad \lambda\in I,
\label{eq:14.12.30.21.53}
\end{align}
respectively. 
These limits $p^\alpha R(\lambda\pm\mathrm i0)\in{\mathcal B}(B,B^*)$,
and  $R(\lambda\pm\mathrm i0):B\to \vN\cap B^*$.
\end{corollary}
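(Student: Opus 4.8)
The proof of Corollary~\ref{cor:12.7.2.7.9b} is a standard ``upgrade'' argument that extracts H\"older continuity of the resolvent in the weighted spaces ${\mathcal B}(\mathcal H_s,\mathcal H_{-s})$ from the uniform bounds and radiation estimates of Theorems~\ref{thm:12.7.2.7.9} and~\ref{prop:radiation-conditions}. The plan is as follows. First I would fix $s>1/2$ and choose $\epsilon$ in the stated range so that $\epsilon<\beta_c/(\beta_c+1)$, hence $\beta:=\epsilon/(1-\epsilon)\in[0,\beta_c)$; then $1+\beta=1/(1-\epsilon)$, and the inequality $s>1/2$ together with $\epsilon<(2s-1)/(2s+1)$ guarantees $s-\epsilon(s+1/2)>1/2$, i.e. one can interpolate the weight $r^{-s}$ against the Besov-type weights. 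The identity to exploit is the first resolvent equation $R(z)-R(z')=(z-z')R(z)R(z')$. Writing the left side of \eqref{eq:14.12.30.21.52} for $\alpha=0$ as $(z-z')\,p^0R(z)R(z')$ and using that $\mathcal H_s\hookrightarrow r^{-\beta}B$ and $B^*\hookrightarrow \mathcal H_{-s}$ continuously when $s$ and $\beta$ are related as above, one gets from Theorem~\ref{thm:12.7.2.7.9} (applied to $\phi=R(z')\psi$, and then to $R(z)$ acting on the result) a crude bound $\|R(z)-R(z')\|_{\mathcal B(\mathcal H_s,\mathcal H_{-s})}\le C|z-z'|$; this already gives \eqref{eq:14.12.30.21.52} with exponent $1$ on that space, but I want the \emph{uniform} bound with the correct exponent and for $\alpha=1$ as well, and more importantly I want continuity up to the real axis.

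The mechanism for the fractional exponent is an interpolation in the spectral parameter. Fix $\lambda_0'=\mathrm{Re}\,z$ and split $\psi=P_H(J)\psi+P_H(J^c)\psi$ where $J$ is a small interval of length comparable to $|z-z'|$ centred appropriately; on $J^c$ the resolvent difference is controlled by $|z-z'|$ times bounded operators via functional calculus, picking up a factor $|z-z'|/\mathrm{dist}$, while on $J$ one uses the uniform Besov bound of Theorem~\ref{thm:12.7.2.7.9} together with $\|P_H(J)\|_{\mathcal B(\mathcal H_s,\mathcal H_s)}$-type estimates and the weight trade $r^{-s}$ versus $r^{-(1/2+)}$ to gain a power $|z-z'|^{\epsilon}$ out of the length of $J$. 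Balancing the two contributions over the choice of $|J|$ yields the exponent $\epsilon$ in \eqref{eq:14.12.30.21.52}; the role of the radiation bound Theorem~\ref{prop:radiation-conditions} is precisely to make the $r^{\beta}$-weighted version of this argument legitimate, since without it the uniform bound of Theorem~\ref{thm:12.7.2.7.9} is only on $B\to B^*$ and cannot absorb the extra weight needed to convert the $B^*$-modulus of continuity into an $\mathcal H_{-s}$-modulus with a positive power of $|z-z'|$.

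Once \eqref{eq:14.12.30.21.52} is established for $z,z'$ in $I_+$ (and symmetrically in $I_-$), the limits \eqref{eq:14.12.30.21.53} follow because $z\mapsto p^\alpha R(z)$ is uniformly H\"older, hence uniformly continuous, on $I_\pm$ with values in the Banach space ${\mathcal B}(\mathcal H_s,\mathcal H_{-s})$, which is complete; so the limit exists as $I_\pm\ni z\to\lambda$ and is itself H\"older in $\lambda$. To identify the target space I would argue: for fixed $\psi\in B$ the family $R(z)\psi$ is bounded in $B^*$ by Theorem~\ref{thm:12.7.2.7.9} uniformly in $z\in I_\pm$, and converges in $\mathcal H_{-s}\subseteq \mathcal H_{-1/2-}$; a bounded-in-$B^*$ net converging in a weaker topology has its limit in $B^*$ with the norm bound preserved, giving $R(\lambda\pm\mathrm i0)\in{\mathcal B}(B,B^*)$ and likewise $p^\alpha R(\lambda\pm\mathrm i0)\in{\mathcal B}(B,B^*)$ for $\alpha=1$ using the $p^r\phi$ term in \eqref{eq:13.8.22.4.59c}. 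Finally $R(\lambda\pm\mathrm i0)\psi\in\mathcal N$ because each approximant $R(z)\psi\in\mathcal D(H_0)\subseteq\mathcal N$, the space $\mathcal N$ is defined by the local conditions $\chi_n\cdot\in\mathcal H^1$, and convergence in $\mathcal H_{-s}$ together with the uniform bound $\|H_0R(z)\psi\|_{B^*}\le C\|\psi\|_B$ from \eqref{eq:13.8.22.4.59c} forces local $\mathcal H^1$-convergence of $\chi_n R(z)\psi$, so the limit satisfies the same local Dirichlet condition. The main obstacle is the second paragraph: carefully choosing the spectral interval $J$ and the weight exponents so that the two estimates balance to the \emph{sharp} exponent $\epsilon$ claimed, while keeping $\beta<\beta_c$ so that Theorem~\ref{prop:radiation-conditions} applies — everything else is bookkeeping with the continuous inclusions among the $\mathcal H_s$, $B$, $B^*$, and $\mathcal N$.
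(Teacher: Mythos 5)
This corollary is stated here as a \emph{review} result: the paper explicitly says all theorems in Subsection~1.1.2 ``are already proved'' in \cite{IS3}, so there is no proof in the present paper to compare against. Evaluating your sketch on its own terms, however, there are two genuine gaps.

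First, the ``crude bound'' $\|R(z)-R(z')\|_{\mathcal B(\mathcal H_s,\mathcal H_{-s})}\le C|z-z'|$ obtained by chaining Theorem~\ref{thm:12.7.2.7.9} through $R(z)R(z')$ is false. Theorem~\ref{thm:12.7.2.7.9} gives a uniform bound $B\to B^*$; applied to $R(z')\psi$ it produces an element of $B^*$, and $B^*\not\subseteq B$, so you cannot feed it back into the same bound for the outer $R(z)$. The only unconditional bound on the outer resolvent is $\|R(z)\|_{\mathcal B(\mathcal H)}\le|\mathop{\mathrm{Im}}z|^{-1}$, which blows up. If a uniform Lipschitz bound with exponent $1$ held, the boundary value $R(\lambda\pm\mathrm i0)$ would be Lipschitz in $\lambda$, a smoothness well beyond what the hypotheses can give.

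Second, the spectral-projection split $\psi=P_H(J)\psi+P_H(J^c)\psi$ with $|J|\sim|z-z'|$ is not a legitimate mechanism here. The projections $P_H(J)$ are not known to be bounded on the weighted space $\mathcal H_s$ (indeed, on $\R^d$ a sharp spectral projection destroys decay), and any quantitative control of $\|P_H(J)\|$ or of the spectral density in the weighted topology is exactly the content of the limiting absorption principle one is trying to prove, so the argument would be circular. No part of the sketch explains how the radiation condition bounds of Theorem~\ref{prop:radiation-conditions} actually enter the estimate, other than by assertion.

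The standard route, used in \cite{IS3} (and in the references \cite{Sa,Is,HS1} the paper points to), is a \emph{spatial} rather than spectral splitting. Set $\beta<\min\{\beta_c,\,s-\tfrac12\}$, so that $\mathcal H_s\subseteq r^{-\beta}B$. With $\phi'=R(z')\psi$, $\check\phi=R(\bar z)\psi'$, insert $1=\chi_m+\bar\chi_m$ and write $\langle\psi',(R(z)-R(z'))\psi\rangle=(z-z')\langle\check\phi,\phi'\rangle$; the $\chi_m$ part is bounded by $CR_m|z-z'|\|\psi\|_B\|\psi'\|_B$ via Theorem~\ref{thm:12.7.2.7.9}, while on the $\bar\chi_m$ part a Green's identity exchanges $H$ against $z,z'$ and generates commutator terms whose leading part is $R_m^{-1}\bar\chi_m'A$. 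Decomposing $A$ with respect to $A\mp a$ and using Theorem~\ref{prop:radiation-conditions} with weight $r^\beta$ on $(A-a)\phi'$ and $(A+a)\check\phi$ gains a factor $R_m^{-\beta}$. Optimizing $R_m\sim|z-z'|^{-1/(\beta+1)}$ yields \eqref{eq:14.12.30.21.52} with $\epsilon=\beta/(\beta+1)<\min\{\beta_c/(\beta_c+1),(2s-1)/(2s+1)\}$. The $\alpha=1$ case is identical using the $\|p^r\phi\|_{B^*}$ term of \eqref{eq:13.8.22.4.59c}. Your final paragraph deducing \eqref{eq:14.12.30.21.53}, $p^\alpha R(\lambda\pm\mathrm i0)\in\mathcal B(B,B^*)$ and $R(\lambda\pm\mathrm i0)\colon B\to\mathcal N\cap B^*$ from the Hölder estimate plus the uniform Besov bound is sound.
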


Given  the limiting resolvents $R(\lambda\pm\mathrm i0)$
  the radiation condition bounds for real spectral parameters
follow directly from Theorem~\ref{prop:radiation-conditions}.

\begin{corollary}
  \label{cor:radiation-conditions}
  Suppose Condition~\ref{cond:12.6.2.21.13bbb},  and let 
  $I\subseteq \mathcal I$ be a compact interval.
  Then for all $\beta \in [0,\beta_c)$
  there exists $C>0$ such that 
  for any $\phi=R(\lambda\pm\mathrm i0)\psi$ with $\psi\in r^{-\beta}B$ and 
  $\lambda\in I$ 
\begin{align}
\|r^\beta(A\mp a)\phi\|_{B^*}
+\langle p_i^*r^{2\beta} h^{ij}p_j\rangle_{\phi}^{1/2}
&\leq C\|r^\beta\psi\|_B,\label{eq:14cccCa} 
 \end{align} 
respectively.
\end{corollary}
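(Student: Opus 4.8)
The plan is to deduce \eqref{eq:14cccCa} from the bound \eqref{eq:14cccCff} of Theorem~\ref{prop:radiation-conditions} by letting $z\to\lambda$ and exploiting lower semicontinuity of the two functionals on the left-hand side. Fix $\beta\in[0,\beta_c)$, a compact $I\subseteq\mathcal I$, a sign $\pm$, a point $\lambda\in I$ and $\psi\in r^{-\beta}B$, so that $r^\beta\psi\in B$. I would pick $z_n=\lambda\pm\mathrm i\Gamma_n\in I_\pm$ with $\Gamma_n\downarrow0$, set $\phi_n=R(z_n)\psi$ and $\phi=R(\lambda\pm\mathrm i0)\psi$, and let $a_n$ denote the phase \eqref{eq:13.9.5.7.23} evaluated at $z=z_n$, $a$ its value at $z=\lambda$. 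First I would record three facts. (i) On the support of $\eta_\lambda$ (i.e.\ for $r\ge r_\lambda/2$) one has by \eqref{eq:14.6.29.23.46} that $z-q_1=\lambda-q_1\ge\tfrac12(\lambda-\lambda_0)>0$, which stays away from the branch cut $(-\infty,0]$; hence the explicit formulas \eqref{eq:13.9.5.7.2300}--\eqref{eq:13.9.5.7.23} give $a_n\to a$ locally uniformly on $M$ with $\{a_n\}$ uniformly bounded. (ii) By Corollary~\ref{cor:12.7.2.7.9b} (applied first to compactly supported $\psi$, which are dense in $B$, and then extended using the uniform bound $\|R(z)\,\cdot\,\|_{B^*}\le C\|\cdot\|_B$ of Theorem~\ref{thm:12.7.2.7.9}), $\phi_n\to\phi$ in $\mathcal D'(M)$, and $\{\phi_n\}$ is bounded in $L^2$ on each $r$-ball by $C\|r^\beta\psi\|_B$. (iii) By Theorem~\ref{prop:radiation-conditions}, $\|r^\beta(A\mp a_n)\phi_n\|_{B^*}+\langle p_i^*r^{2\beta}h^{ij}p_j\rangle_{\phi_n}^{1/2}\le C\|r^\beta\psi\|_B$ with $C$ independent of $n$.

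Since $A$ is a first order differential operator with smooth coefficients and $a_n\to a$ locally uniformly while $\{\phi_n\}$ is locally $L^2$-bounded, it follows that $r^\beta(A\mp a_n)\phi_n\to r^\beta(A\mp a)\phi$ and $\nabla\phi_n\to\nabla\phi$ in $\mathcal D'(M)$. For the first term of \eqref{eq:14cccCa} I would use weak-$\star$ lower semicontinuity of $\|\cdot\|_{B^*}$: a sequence bounded in $B^*$ and convergent in $\mathcal D'(M)$ has its limit in $B^*$ with $B^*$-norm at most the bound, as one sees by testing against $C_c^\infty$ functions shell by shell on $B_{R_{\nu+1}}\setminus B_{R_\nu}$. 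Hence $\|r^\beta(A\mp a)\phi\|_{B^*}\le C\|r^\beta\psi\|_B$. For the second term, fix an $r$-ball $B_R$; the functional $v\mapsto\int_{B_R}r^{2\beta}h^{ij}(\nabla_iv)^*(\nabla_jv)\,\mathrm dg$ is a nonnegative quadratic form in $\nabla v$ and is therefore weakly lower semicontinuous, so
\begin{align*}
\int_{B_R}r^{2\beta}h^{ij}(\nabla_i\phi)^*(\nabla_j\phi)\,\mathrm dg
\le\liminf_{n\to\infty}\langle p_i^*r^{2\beta}h^{ij}p_j\rangle_{\phi_n}
\le C^2\|r^\beta\psi\|_B^2 .
\end{align*}
Letting $R\to\infty$ and invoking monotone convergence yields $\langle p_i^*r^{2\beta}h^{ij}p_j\rangle_{\phi}\le C^2\|r^\beta\psi\|_B^2$; note $\phi\in\mathcal N$ by Corollary~\ref{cor:12.7.2.7.9b}, so $\nabla\phi$ is a well-defined local $L^2$ object and the form makes sense. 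Finally, adding the two estimates (equivalently, using superadditivity of $\liminf$ directly on the full left-hand side of \eqref{eq:14cccCff} evaluated along $z_n$) gives \eqref{eq:14cccCa} after renaming the constant.

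The content here is entirely ``soft'': no new hard analytic estimate is needed beyond Theorems~\ref{thm:12.7.2.7.9}--\ref{prop:radiation-conditions} and Corollary~\ref{cor:12.7.2.7.9b}, which is why the result is ``direct''. The two points requiring a little care are checking that the phase $a_n$ really converges across the square-root branch (handled by \eqref{eq:14.6.29.23.46}, which keeps $z-q_1$ positive and bounded away from $0$ on $\supp\eta_\lambda$), and verifying that both nonlinear functionals are lower semicontinuous under the weak modes of convergence available, namely $\mathcal D'$-convergence of $\phi_n$ together with the uniform $B^*$- and weighted form bounds. I expect the mild bookkeeping needed to pass from compactly supported $\psi$ to general $\psi\in r^{-\beta}B$ (so as to legitimately write $\phi=R(\lambda\pm\mathrm i0)\psi=\lim_n R(z_n)\psi$ in $\mathcal D'$) to be the only place where one has to be slightly attentive.
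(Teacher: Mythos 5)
Your argument is correct and is the natural fleshing-out of the one-line justification the paper gives (the paper states the bound ``follows directly'' from Theorem~\ref{prop:radiation-conditions} once the limiting resolvents exist, i.e.\ a limiting argument, which is exactly what you carry out via lower semicontinuity of the $B^*$-norm and of the weighted Dirichlet form). The technical points you flag — uniform boundedness and local uniform convergence $a_n\to a$ thanks to $\lambda-q_1$ staying away from the branch cut on $\supp\eta_\lambda$, the passage from compactly supported $\psi$ to general $\psi\in r^{-\beta}B$ via the uniform $\mathcal B(B,B^*)$-bound of Theorem~\ref{thm:12.7.2.7.9}, local $L^2$- and local $H^1$-boundedness from the $B^*$- and form bounds so that weak subsequential limits identify with $\phi$ and $\nabla\phi$ in $\mathcal D'$, and $\phi\in\mathcal N$ so the form is well defined — are the right ones, and the lower semicontinuity of the shell-wise $B^*$-norm and of the convex quadratic form are both standard and correctly invoked. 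One small cosmetic remark: for $\beta>0$ the density step is unnecessary, since $r^\beta\psi\in B\subset\mathcal H_{1/2}$ gives $\psi\in\mathcal H_{1/2+\beta}$ and Corollary~\ref{cor:12.7.2.7.9b} applies directly; the approximation is only needed at $\beta=0$.
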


For the Euclidean and the hyperbolic spaces without potential $V$ we
have $\beta_c \ge 1$. Hence in these cases the bound
\eqref{eq:14cccCa} hold for any $\beta\in [0,1)$.  We remark that for
the Euclidean space and a sufficiently regular potential the bound
\eqref{eq:14cccCa} is well-known for $\beta\in [0,1)$, cf.
\cite{Is,Sa,HS1}.  However in this case one can actually allow
$\beta\in [1,2)$, cf.  \cite{HS1}. If $\beta>1$ is allowed the
existence of the distorted Fourier transform follows easy, cf.
\cite{HS1, HS2, Sk}. This is demonstrated in Subsection \ref{subsec:easy
  case}.

As another application of the radiation condition bounds 
we have  characterized  the limiting resolvents $R(\lambda\pm\mathrm i0)$. 
For the Euclidean space such characterization
is usually referred to as the Sommerfeld uniqueness result, see
for example  \cite{Is}. 
\begin{corollary}\label{cor:13.9.9.8.23}
  Suppose Condition \ref{cond:12.6.2.21.13bbb}, and let
  $\lambda\in\mathcal I$, $\phi\in L^2_{\mathrm{loc}}(M)$ and
  $\psi\in r^{-\beta}B$ with  $\beta\in [0,\beta_c)$.
Then 
$\phi=R(\lambda\pm\mathrm i0)\psi$ holds if and only if
both of the following conditions hold:
\begin{enumerate}[(i)]
\item\label{item:13.7.29.0.29}
$(H-\lambda)\phi=\psi$ in the distributional sense.
\item\label{item:13.7.29.0.28}
$\phi\in \mathcal N\cap
r^\beta B^*$  and $(A\mp a)\phi\in r^{-\beta}B^*_0$.
\end{enumerate}
\end{corollary}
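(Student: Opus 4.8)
\textbf{Proof plan for Corollary \ref{cor:13.9.9.8.23}.} The plan is to prove the two implications separately, using Corollary \ref{cor:12.7.2.7.9b} for the ``only if'' direction and the Rellich-type uniqueness Theorem \ref{thm:13.6.20.0.10} for the ``if'' direction.

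For the ``only if'' part, suppose $\phi=R(\lambda\pm\mathrm i0)\psi$. That $(H-\lambda)\phi=\psi$ distributionally is immediate by approximating $\lambda$ with $z\in I_\pm$, using $(H-z)R(z)\psi=\psi$, and passing to the limit in $\mathcal H_{-s}$ (distributionally) via Corollary \ref{cor:12.7.2.7.9b}; one also gets $\phi\in\mathcal N$ directly from the last sentence of that corollary. The mapping property $R(\lambda\pm\mathrm i0):B\to\mathcal N\cap B^*$ from Corollary \ref{cor:12.7.2.7.9b} gives $\phi\in B^*\subseteq r^\beta B^*$; to upgrade to $\phi\in r^\beta B^*$ when $\beta>0$ and $\psi\in r^{-\beta}B$ one applies the radiation bound \eqref{eq:14cccCa} to control $\|r^\beta(A\mp a)\phi\|_{B^*}$ and then reconstructs $r^\beta\phi$ from its radial derivative by integration along the flow of $\omega$ (the elementary estimate $\|A\phi\|_{B^*}+\|r^\beta(A\mp a)\phi\|_{B^*}$ controlling $\|r^\beta\phi\|_{B^*}$ up to boundary data, since $a$ is bounded in modulus by a multiple of $|\mathrm dr|$); alternatively one re-runs the a priori estimate of Theorem \ref{prop:radiation-conditions} with the weight $r^\beta$ already in place. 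Finally $(A\mp a)\phi\in r^{-\beta}B^*_0$: the bound \eqref{eq:14cccCa} gives $r^\beta(A\mp a)\phi\in B^*$, and membership in the \emph{closed} subspace $B^*_0$ (rather than merely $B^*$) is obtained by a density argument, approximating $\psi$ by $C^\infty_{\mathrm c}$ functions and $z$ by points of $I_\pm$, for which $R(z)\psi\in\mathcal D(H)\subseteq\mathcal H$ and hence $(A\mp a)R(z)\psi$ lies in $\mathcal H_{-1/2-}$-decaying spaces contained in $B^*_0$, then using the uniform bound \eqref{eq:14cccCff} to pass to the limit inside the closed subspace.

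For the ``if'' part, suppose $\phi\in L^2_{\mathrm{loc}}(M)$ satisfies (i) and (ii). Set $\phi'=R(\lambda\pm\mathrm i0)\psi$; by the ``only if'' part already proved, $\phi'$ satisfies (i) and (ii) as well. Then $\phi-\phi'$ satisfies $(H-\lambda)(\phi-\phi')=0$ distributionally, lies in $\mathcal N$, in $r^\beta B^*$, and $(A\mp a)(\phi-\phi')\in r^{-\beta}B^*_0$. The goal is to conclude $\phi-\phi'=0$ from Theorem \ref{thm:13.6.20.0.10}, whose hypothesis (\ref{item:13.7.29.0.26}) requires $\bar\chi_m(\phi-\phi')\in\mathcal N\cap B^*_0$ for large $m$. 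Membership in $\mathcal N$ is clear; the issue is upgrading $r^\beta B^*$ (or even $B^*$) to $B^*_0$, i.e.\ showing the generalized eigenfunction decays in the Besov-star sense. This is exactly where the homogeneous radiation condition is used: for a solution of the homogeneous equation, having \emph{both} signs controlled is impossible unless the function vanishes, but more directly one shows that a homogeneous solution with $(A\mp a)(\phi-\phi')\in r^{-\beta}B^*_0$ and $\phi-\phi'\in r^\beta B^*$ must in fact lie in $B^*_0$ by a Fourier-type/flow argument together with the a priori estimates: writing $u=\phi-\phi'$, one propagates the $B^*_0$-smallness of $(A\mp a)u$ into $B^*_0$-smallness of $u$ itself along the escape flow, exploiting $\Re a\ge c>0$ on each end (from \eqref{eq:14.6.29.23.46} and the definition of $b$), so that the ``outgoing'' radial ODE $A u \approx \pm a u + (\text{small in } B^*_0)$ is dissipative in the required direction and forces $u\in B^*_0$; alternatively this is precisely the content of the Sommerfeld-uniqueness argument in the Euclidean model, adapted here via the commutator estimates underlying Theorem \ref{prop:radiation-conditions}. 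Once $\bar\chi_m u\in\mathcal N\cap B^*_0$ is established, Theorem \ref{thm:13.6.20.0.10} gives $u=0$, hence $\phi=\phi'=R(\lambda\pm\mathrm i0)\psi$.

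The main obstacle is the homogeneous-solution step in the ``if'' direction: bridging the gap between the radiation-condition hypothesis on $(A\mp a)u$ and the $B^*_0$-hypothesis needed to invoke Rellich's theorem. The natural route is to show directly that the radial transport equation satisfied by $u$, with the good sign of $\Re a$, upgrades $r^\beta B^*$-membership of $u$ together with $r^{-\beta}B^*_0$-membership of $(A\mp a)u$ to $B^*_0$-membership of $u$; this requires a careful integration-by-parts/commutator argument paralleling the proof of Theorem \ref{prop:radiation-conditions} but run on the homogeneous equation, and tracking that the error terms indeed land in the closed subspace $B^*_0$ rather than only in $B^*$. Everything else is a matter of assembling the already-established results and routine density arguments.
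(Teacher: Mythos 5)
The top-level plan (two implications; Corollary \ref{cor:12.7.2.7.9b} and Theorem \ref{prop:radiation-conditions}/Corollary \ref{cor:radiation-conditions} for the ``only if'' direction; Rellich's Theorem \ref{thm:13.6.20.0.10} applied to the difference for the ``if'' direction) is the correct skeleton for a Sommerfeld-type uniqueness result of this form. Note, however, that the present paper does not reprove Corollary \ref{cor:13.9.9.8.23}: it is quoted verbatim from \cite{IS3}, so there is no in-text proof to compare against. Evaluated on its own merits, your proposal contains the right outline but also several concrete gaps.

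First, a small confusion: since $r\ge 1$ and $\beta\ge 0$, the inclusion $B^*\subseteq r^\beta B^*$ is automatic, so the sentence about ``upgrading'' $\phi$ from $B^*$ to $r^\beta B^*$ is vacuous; there is nothing to upgrade. This is harmless, but it signals a misreading of the weight convention that recurs below.

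Second, and more seriously, the ``only if'' argument for $(A\mp a)\phi\in r^{-\beta}B^*_0$ does not go through as written. The limiting absorption principle of Corollary \ref{cor:12.7.2.7.9b} gives convergence $R(z)\to R(\lambda\pm\i 0)$ in the norm of $\mathcal B(\mathcal H_s,\mathcal H_{-s})$, not in $B^*$. Knowing that $(A\mp a)R(z)\psi\in B^*_0$ for each $z\in I_\pm$ (via $\mathcal H\subset\mathcal H_{-1/2}\subset B^*_0$) and that the family is uniformly bounded in $B^*$ by \eqref{eq:14cccCff} lets you conclude, at best, that the limit lies in $B^*$ by weak-$\ast$ compactness; but $B^*_0$ is norm-closed, \emph{not} weak-$\ast$ closed, so membership of the limit in $B^*_0$ does not follow from uniform boundedness plus convergence in the weaker topology. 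Some additional ingredient is needed (for instance a direct estimate on the outer shells of $r^\beta(A\mp a)\phi$, or a radiation estimate giving a slightly stronger weight than the one required, so that $r^\beta(A\mp a)\phi$ lands in $r^{-\epsilon}B^*\subset B^*_0$). As stated, the density argument has a genuine hole.

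Third, in the ``if'' direction you correctly isolate the crux — showing that the homogeneous solution $u=\phi-\phi'$ with the stated radiation condition lies in $B^*_0$ so that Theorem \ref{thm:13.6.20.0.10} applies — but the mechanism you name is not the right one. You describe the radial equation $Au\approx\pm au$ as ``dissipative'' because $\mathop{\mathrm{Re}} a>0$; in fact $A\approx-\i\nabla^r$, so $Au=\pm au$ is the \emph{oscillatory} transport $\partial_ru\approx\pm\i au$, whose amplitude is essentially conserved ($\mathop{\mathrm{Re}} a=b$ contributes only a phase, and $\mathop{\mathrm{Im}} a=O(r^{-1-\rho/2})$ is integrable). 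There is no dissipation to propagate. The actual mechanism is the Green/commutator identity $0=\inp{\i[H,\chi_n]}_u\approx 2\mathop{\mathrm{Re}}\inp{\chi_n'u,\eta Au}$ combined with the substitution $Au=\pm au+(A\mp a)u$: the term $\pm\inp{\chi_n'\eta\,\mathop{\mathrm{Re}} a}_u$ has a definite sign because $\chi_n'\le0$ and $\mathop{\mathrm{Re}} a\ge c>0$ on the end, while the cross term with $(A\mp a)u$ is $o(1)$ by $r^{-\beta}B^*_0$-smallness and Cauchy--Schwarz — whence the shell averages $R^{-1}\int_{B_{2R}\setminus B_R}|u|^2$ tend to zero, i.e.\ $u\in B^*_0$. (This is exactly the type of computation used in the paper's proof of Lemma \ref{lem:14.5.12.13.42}.) You acknowledge that you have not carried this step out, so the proposal at present is a plausible plan with the central lemma unproved and a misleading heuristic attached to it, rather than a complete proof.
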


\subsection{Limiting Hilbert space}\label{subsec:Limiting Hilbert space}

To state the main results of the paper in Subsection~\ref{subsec:Distorted Fourier transform}
here we introduce the spherical coordinates and the limiting Hilbert space.

\subsubsection{Abstract construction}
Let us begin with an abstract theory.
We construct the spherical coordinates on $E$ under Condition~\ref{cond:12.6.2.21.13}
Using $\tilde \eta$ of \eqref{eq:13.9.23.5.54b}, 
define the \textit{normalized} gradient vector field $\tilde \omega\in \mathfrak X(M)$ by 
\begin{align*}
\tilde \omega=\tilde\eta\omega
.
\end{align*}
We let  
\begin{align}
\tilde y\colon 
\widetilde{\mathcal M}\to M,\ 
(t,x)\mapsto\tilde y(t,x)=\exp(t\tilde\omega)(x), 
\label{eq:14.12.10.6.32}
\end{align}
denote  the maximal flow generated by 
the vector field $\tilde \omega$ (whence by
Condition~\ref{cond:12.6.2.21.13}  $[0,\infty)\times M\subseteq
\widetilde{\mathcal M}$). 
By definition it satisfies, in local coordinates,
\begin{align*}
\partial_t\tilde y^i(t,x)=\tilde \omega^i(\tilde y(t,x))
=(\tilde \eta g^{ij}\nabla_jr)(\tilde y(t,x)),\quad 
\tilde y(0,x)=x.
\end{align*}
This implies  in particular that for  $r(x)\geq r_0$ and $t\ge 0$ 
\begin{align}
r(\tilde y(t,x))=r(x)+t,
\label{eq:13.9.21.16.44}
\end{align}
and hence the semigroup \eqref{eq:14.12.10.6.32} induces a family of 
diffeomorphic embeddings
\begin{align}
\iota_{R,R'}=\tilde y(R'-R,{}\cdot{})_{|S_R}\colon S_R\to S_{R'};\quad r_0\le R\le R',
\label{eq:13.9.21.18.9}
\end{align}
satisfying 
\begin{align}
\iota_{R',R''}\circ \iota_{R,R'}=\iota_{R,R''};\quad r_0\le R\le R'\le R''.
\label{eq:13.9.21.18.9b}
\end{align}
Through \eqref{eq:13.9.21.18.9} and \eqref{eq:13.9.21.18.9b}
we may regard $S_R\subseteq S_{R'}$ for any $R\le R'$ in a well-defined manner.
Such inclusions naturally induce a manifold structure on the union
\begin{align}
S=\bigcup_{R> r_0} S_R. 
\label{eq:13.9.21.18.9c}
\end{align}
In fact, the manifold $S$ can be  attained as an inductive limit, but we
do not get into  technical details since we are going to use a concrete
simple procedure (which is facilitated  by an additional condition).  The
manifold $S$ may in any case be
considered as a boundary of $M$ at infinity.  Let $\sigma$ be any
local coordinates on $S$.  We can define $\sigma(x)$ for $x\in E$ by
considering $x\in S_{r(x)}\subseteq S$.  Then the spherical
coordinates of a point $x\in E$, written slightly inconsistently, are
the components of $(r,\sigma)=(r(x),\sigma(x))\in (r_0,\infty)\times
S$.  We shall refer to $r$ as the \textit{radius function}, and
$S_R\subseteq S$ as the \textit{angular} or \textit{spherical
  manifolds}.  Note that in such coordinates $E$ is identified with an
open subset of the half-infinite cylinder $(r_0,\infty)\times S$ whose
$r$-sections are monotonically increasing and exhausting $S$.

Regarding $r\geq r_0$ just as a parameter and letting $\d \mathcal A_r$ 
be the naturally induced measure on $S_r$, we introduce the Hilbert space
  \begin{align}
  \mathcal G_r=L^2(S_r,\mathrm d\tilde{\mathcal A}_r);\quad 
  \d \tilde{\mathcal A}_r 
  =
  |\mathrm dr|^{-1} \d \mathcal A_r
  =(\det g)^{1/2}\,\mathrm d\sigma^2 \cdots  \mathrm d\sigma^d,
\label{eq:13.9.20.0.17}
  \end{align}
where the last equality holds in the spherical coordinates 
with any local coordinates for $S_r\subseteq S$. 
As for the measure $\mathrm d\tilde{\mathcal A}_r$ we note that the co-area formula (cf.\ \cite[Theorem
C.5]{Ev}) is valid for all integrable functions $\phi$ supported in $E$:
In the spherical coordinates
\begin{equation}
  \label{eq:co_area}
  \int_E \phi(x) \bigl(\det g(x)\bigr)^{1/2} \,\d x
=\int_{r_0}^\infty \d r \int_{S_r}\phi(r,\sigma)
  \,\d \tilde{\mathcal A}_r(\sigma).
\end{equation}
Noting \eqref {eq:co_area}, we can construct isometric embeddings
\begin{align}
i_{r,r'}\colon \mathcal G_r\to\mathcal G_{r'}\quad
\text{for }r_0\leq r\le r'
\label{eq:1600125}
\end{align}
as follows.
For any $\xi_r\in \mathcal G_r$ we define 
$i_{r,r'}\xi_r=\xi_{r'}\in
\mathcal G_{r'}$ by letting, in the spherical coordinates,
\begin{align} \label{eq:transm}
\begin{split}
\xi_{r'}(\sigma)
&=\left({\det g(r,\sigma)}\big/{\det g(r',\sigma)}\right)^{1/4}\xi_{r}(\sigma)
\quad\text{for }
(r,\sigma)\in E,
\end{split}
\end{align}
and $\xi_{r'}(\sigma)=0$ for $(r,\sigma)\notin E$.
 Indeed \eqref{eq:1600125} are isometric embeddings
satisfying 
\begin{align*}
i_{r',r''}\circ i_{r,r'}=i_{r,r''}\quad\text{for }r_0\le r\le r'\le r''.
\end{align*} 
Then in parallel to \eqref{eq:13.9.21.18.9c} 
we may regard $\mathcal G_r\subseteq\mathcal G_{r'}$ for $r_0\le r\le r'$,
and these inclusions naturally induce a pre-Hilbert space structure on the union
\begin{align*}
\mathcal G_\infty=\bigcup_{r> r_0}\mathcal G_r.
\end{align*}
We can define
the ``limiting Hilbert space'' $\vG$ as the completion  of $\mathcal G_\infty$.

We remark that if $\omega$ is also backward complete  
all the above embeddings are in fact equalities,
i.e.\ $S_r\cong S_{r'}$ and $\mathcal G_r\cong \mathcal G_{r'}$ for $r_0\le r\le r'$,
and we may construct the limiting objects just by letting 
$S=S_{r_0}$ and $\mathcal G=\mathcal G_{r_0}$.
This is a motivation for the following concrete construction.

\subsubsection{Concrete construction}

In this paper we impose an additional geometric condition  
under which the set $S$ and the associated limiting
Hilbert space $\mathcal G$ can be realized more concretely than  above. 
\begin{cond}\label{cond:altG}
  There exists an extended  Riemannian manifold $(M^{\rm ex}, g^{\rm
    ex})$ of dimension $d$ in
  which $(M, g)$ is isometrically embedded. The previous 
  Condition \ref{cond:12.6.2.21.13} is also fulfilled
  for $(M^{\rm ex}, g^{\rm ex})$ (with the same constants $c$ and $
  r_0$) possibly without the connectedness assumption. In addition
  the extended vector field, say denoted
  by $\omega^{\rm ex}$, is backward
  complete in $M^{\rm ex}$ (that is complete in $M^{\rm ex}$). 
\end{cond}

Under
Condition \ref{cond:altG} we define
\begin{align}\label{eq:extend}
  \begin{split}
  S^{\rm ex}(M)&=\bigl\{x\in S_{r_0}^{\rm ex}\,\big|\,\{\tilde y^{\rm
    ex}(t,x)\,|\,t\ge 0\}\cap M\neq\emptyset\bigr\},
\\\vG^{\rm ex}&= L^2(
  S^{\rm ex}(M),\d \tilde \vA^{\rm ex}_{r_0})\subseteq L^2(
  S_{r_0}^{\rm ex},\d \tilde \vA^{\rm ex}_{r_0}).  
  \end{split}
\end{align} This leads to the isometrical embedding $\vG_r\subseteq \vG^{\rm ex}$, $r\geq r_0$,
by mapping $\vG_r\ni \xi\to \xi^{\rm ex}\in \vG^{\rm ex}$ 
given in the spherical coordinates by
\begin{align}\label{eq:trans2}
  \xi^{\rm ex}(\sigma)=
\left({\det g^{\rm ex}(r,\sigma)}\big/{\det g^{\rm ex}(r_0,\sigma)}\right)^{1/4}
\xi(\sigma)\quad 
\text{for }(r,\sigma)\in S_r,
\end{align} and $\xi^{\rm ex}=0$ at other points in $S^{\rm ex}(M)$.

If $\omega$ is  forward and 
backward complete obviously we do not need extended objects and $\vG=L^2(
  S_{r_0},\d \tilde \vA_{r_0})$. 

To study scattering on $M$ it is convenient (although not necessary
for all of our results) to use $\vG^{\rm ex}$ and integrals of
extended orbits (as appearing above). Although we shall not elaborate
our methods should have the potential of  some similar results as in
this paper with different
conditions than Condition \ref{cond:altG}. For convenience we shall
in the paper  drop the superscript ``ex'' and write 
$\vG=\vG^{\rm ex}$, $\tilde y(t, x)=\tilde y^{\rm ex}(t, x)$, 
etc., whenever it follows from the context that these objects are
``extended''. The notation $S$ and $\d\tilde\vA$ will exclusively be used for
$S^{\rm ex}(M)$ and $\d\tilde \vA^{\rm ex}_{r_0}$,
respectively. Whence $\vG=L^2(S,\d \tilde \vA)$.

The above formulas
\eqref{eq:transm} and \eqref{eq:trans2} can be
understood in terms of translations on $\vH$ or on 
$\vH^\mathrm{ex}$.
 We introduce  \textit{normalized radial
  translations} $\tilde T(t)\colon {\mathcal H}\to{\mathcal H}$, $t\in
\mathbb R$, as follows.  Recall the notation of the normalized maximal
flow \eqref{eq:14.12.10.6.32}.  Then $\tilde T(t)\psi$,
$\psi\in\mathcal H$, is defined by
\begin{align}
\begin{split}
(\tilde T(t)\psi)(x)
&=\tilde J(t,x)^{1/2}
\left({\det g(\tilde y(t,x))}\big/{\det g(x)}\right)^{1/4}\psi(\tilde y(t,x))\\
&=\exp \left(\int_0^t\tfrac12(\mathop{\mathrm{div}}\tilde\omega)(\tilde y(s,x))\,\mathrm{d}s\right)\psi(\tilde y(t,x))
\end{split}
\label{eq:12.9.24.22.46cc}
\end{align}
if $(t,x)\in\widetilde{\mathcal M}$, and $(\tilde T(t)\psi)(x)
=0$ otherwise, 
where $\tilde J(t,{}\cdot{})$ is the Jacobian of the mapping
$\tilde y(t,{}\cdot{})\colon M\to M$
and $\mathop{\mathrm{div}}\tilde\omega=\mathop{\mathrm{tr}}\nabla\tilde\omega$.
 Note that  $\tilde J=1$ in the spherical coordinates, clearly showing
 a relationship to \eqref{eq:transm}. The well-definedness and the equivalence of the two expressions in \eqref{eq:12.9.24.22.46cc} 
can be verified similarly to the unnormalized flow in \cite{IS3}.
Here we only note that for any $\psi\in\mathcal H$ the former expression of 
\eqref{eq:12.9.24.22.46cc} and a change of variables imply 
\begin{align}
\|\tilde T(t)\psi\|
=\biggl(\int_{\tilde M(t)}|\psi(x)|^2\bigl(\det g(x)\bigr)^{1/2}\,\mathrm dx\biggr)^{1/2};\quad 
\tilde M(t)=\tilde y(\max\{t,0\},M).
\label{eq:15.2.7.1.22}
\end{align}
Hence the operators $\tilde T(t)=\e^{\i t \tilde A_+}$, $t\ge 0$, and the operators $\tilde T(-t)=\e^{-\i t \tilde A_-}$, $t\ge 0$,
form  strongly continuous one-parameter semigroups of surjective
partial isometries and  isometries, respectively. 
The operators 
 are the adjoints of each other in the sense that  
$\tilde T(t)^*=\tilde T(-t)$, however in general
 the family $(\tilde T(t))_{t\in \R}$ 
 does  not form a one-parameter group. This is in contrast to the
 similarly defined  quantity for  $ M^{\rm ex}$, say denoted $(\tilde T^{\rm
   ex}(t))_{t\in \R}=
(\e^{\i t \tilde A^{\rm ex}})_{t\in \R}$,  which indeed is a group on
$\vH^\mathrm{ex}$ with self-adjoint generator
\begin{align*}
\tilde A^\mathrm{ex}=\mathop{\mathrm{Re}}\parb{-\mathrm i
\nabla_{\tilde \omega^\mathrm{ex}}},
\end{align*}
where 
\begin{align*}
\quad\nabla_{\tilde \omega^\mathrm{ex}}
= ({\tilde \omega^\mathrm{ex}})^i\nabla_i,\quad 
\tilde \omega^\mathrm{ex}&=\tilde \eta^\mathrm{ex} (\nabla r^\mathrm{ex}),\quad \tilde \eta^\mathrm{ex}=\eta^\mathrm{ex}|\mathrm dr^\mathrm{ex}|^{-2}.
\end{align*}
 Note  the
 relationship $\tilde T(t)=1_M\tilde T^{\rm
   ex}(t)1_M$ for all $t\in \R$. We shall use the notation $\tilde A$
 as a generic notation for the generators $\tilde A_+, \tilde A_-$ and
 $\tilde A^\mathrm{ex}$ without distinction, 
if it is not confusing from the context.

\subsection{Main results}\label{subsec:Distorted
  Fourier transform}

\subsubsection{Distorted Fourier transform}

We need  additional assumptions. The following one
suffices for constructing the \textit{distorted Fourier transform}. 
\begin{cond}\label{cond:12.6.2.21.13c}
  Along with Condition~\ref{cond:altG},
  Condition~\ref{cond:12.6.2.21.13bbb} holds  with
  \begin{align}\label{eq:BAScond}
    2\beta_c=\min\{\sigma,\tau,\rho\}>1.
  \end{align}
The
  function $\tilde b=\tilde b(\lambda,x)$ has a real $C^1$-extension
  to $\vI \times M^{\rm ex}$, say denoted by ${\tilde b}^{\rm ex}$ (or
  by $\tilde b$ again for short). The following
  bound holds uniformly in $x\in E$ and locally uniformly in $\lambda\in\vI$: 
\begin{align}
   \sup_{r_0\leq \check r\leq r(x)}\bigg |\nabla'\int_{\check r-r(x)}^{0}\tilde b^{\rm ex}(\tilde y^{\rm ex}(t,x))\,\d t\bigg | \le 
C 
r(x)^{-1/2},
  \label{eq:13.9.23.16.41b2}
  \end{align} where $\nabla'=\ell^{\bullet i}\nabla_i$ denotes the covariant derivative  for the
 $r$-sphere
$S_{r(x)}$ (with induced  Riemannian metric).
\end{cond}
\begin{remarks*}
  If $M^{\rm ex}=M$ the technical bound \eqref{eq:13.9.23.16.41b2}
  follows from Lemma \ref{lemma:L_10b}. More generally we can \textit{verify}
  \eqref{eq:13.9.23.16.41b2} assuming Condition~\ref{cond:altG} and
  that various of the requirements in Conditions
  \ref{cond:12.6.2.21.13a} and \ref{cond:12.6.2.21.13bbb} for
  quantities on $M$ hold as well for the extended quantities (on
  $M^{\rm ex}$) with \eqref{eq:BAScond}. This follows from our proof
  of Lemma \ref{lemma:L_10b}. The
  bound is only used  in   the proof of Lemma \ref{lem:normBasi}, and we note
  that it is not needed if we impose the strengthening
  \eqref{eq:13.9.12.13.46bB} of \eqref{eq:BAScond} (however we do need
  it for the alternative Condition \ref{cond:12.6.2.21.13bb}
  \eqref{item:14.5.1.8.31}).
\end{remarks*}

For any $\psi\in\mathcal H_{1+}$ and $r\geq r_0$ 
we introduce a function $\xi(r)\in \vG $ using
the mapping \eqref{eq:trans2} (and omitting  the superscript ``ex'')
and noting the expression in \eqref{eq:12.9.24.22.46cc}.
We let 
\begin{align}\label{eq:defRes}
  \xi(r)(\sigma)= \exp \biggl(\int_{r_0}^{r}\parbb{\mp \i\tilde
    b+ \tfrac
    12\mathop{\mathrm{div}}\tilde\omega}(s,\sigma)\,\mathrm{d}s\biggr)
[\sqrt b R(\lambda\pm\mathrm
  i0)\psi](r,\sigma),
\end{align} or, alternatively, 
\begin{align}
  \xi(r)=\e^{\i
  (r-r_0)(\tilde A^{\rm ex }\mp\tilde
      b^{\rm ex })}\bigl [\sqrt b R(\lambda\pm\mathrm
  i0)\psi\bigr
    ]_{|S_r}=\e^{\i
  (r-r_0)(\tilde A\mp\tilde
      b)}\bigl [\sqrt b R(\lambda\pm\mathrm
  i0)\psi\bigr
    ]_{|S_r}.
\label{eq:160126}
\end{align}
Then we would like to define the ``distorted Fourier transform'' by
\begin{align}\label{eq:disF}
  F^\pm(\lambda)\psi =\vGlim_{r\to \infty} \xi(r);\quad 
\psi\in\mathcal H_{1+}.
\end{align}
 By definition the function  $F^\pm(\lambda)\psi\in \vG=L^2(S,\d \tilde \vA)$, and we
note that  our  construction of
$F^\pm(\lambda)\psi$ is 
non-canonical   primarily due to the freedom in choosing $\vG$. In fact for
$M^{\rm ex}=M$ the only non-canonical feature comes from  the   dependence
of $r_0$ (determining  $\vG$ in that case), while in general there is an additional freedom in choosing
extended functions.

 Of course we need to justify the definition \eqref{eq:disF}.

  \begin{thm}
\label{thm:strong} 
Suppose Condition~\ref{cond:12.6.2.21.13c}.  Then for any $\psi\in
{\mathcal H_{1+}}$  there exist the limits
\eqref{eq:disF}. The maps $\mathcal
I\ni\lambda\mapsto F^\pm(\lambda)\psi\in \mathcal G$ are
   continuous. Moreover
 the identities
  \begin{align}
  \label{eq:fund}
  \|F^\pm(\lambda)\psi\|^2=2\pi\inp{\psi, \delta(H-\lambda)\psi};\quad
\delta(H-\lambda):=\pi^{-1}\mathop{\mathrm{Im}}R(\lambda+\i 0),
\end{align}
hold. 
  \end{thm}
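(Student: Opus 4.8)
The plan is to fix $\phi=R(\lambda\pm\mathrm i0)\psi$ throughout; by Corollary~\ref{cor:12.7.2.7.9b} this exists and $\phi\in\mathcal N\cap B^*$, and, choosing $s>1$ with $\psi\in\mathcal H_s$ and then $\beta\in(1/2,\min\{\beta_c,s-1/2\})$ so that $\psi\in r^{-\beta}B$, Corollaries~\ref{cor:radiation-conditions} and~\ref{cor:13.9.9.8.23} give $(A\mp a)\phi\in r^{-\beta}B^*_0$ and $\langle p_i^*r^{2\beta}h^{ij}p_j\rangle_\phi<\infty$. The first substantive step is to differentiate \eqref{eq:defRes} in $r$. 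Writing the flow derivative along $\tilde\omega$ as $\nabla_{\tilde\omega}=\mathrm i\tilde A-\tfrac12\operatorname{div}\tilde\omega$ and using $\tilde b=\tilde\eta b$, one gets a term $\mathrm i\tilde\eta\sqrt{b}\,(A\mp b)\phi$, a term $\tilde\eta(\nabla^r\sqrt{b})\phi$, and genuinely lower order pieces ($\nabla^r\tilde\eta$, etc.). The decisive observation is a cancellation: replacing $A\mp b$ by $A\mp a$ via the explicit formula \eqref{eq:13.9.5.7.23} introduces $\tfrac14\eta_\lambda(p^rq_{11})/(z-q_1)\,\phi$, and for $r$ large (where $\eta_\lambda\equiv1$) this is exactly cancelled by the $q_{11}$–part of $\nabla^r\sqrt{b}$, since $b$ contains $q_1\supseteq q_{11}$ under the square root; the surviving error collects $\nabla^r|\mathrm dr|^2$, $\nabla^rq_{12}$ and $\nabla^r\tilde\eta$, hence is $\mathcal O(r^{-1-\varepsilon})$ times $\phi$ with $\varepsilon=\tfrac12\min\{\tau,\rho\}$. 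Thus, for $r$ large,
\begin{align*}
\partial_r\xi(r)=\mathrm e^{\,\mathrm i(r-r_0)(\tilde A\mp\tilde b)}\bigl[\mathrm i\tilde\eta\sqrt{b}\,(A\mp a)\phi+\mathcal O(r^{-1-\varepsilon})\phi\bigr]_{|S_r}.
\end{align*}

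The main point, and the main obstacle, is to show $\{\xi(r)\}_{r}$ is Cauchy in $\mathcal G$. The naive estimate $\int_{r_0}^\infty\|\partial_r\xi(r)\|_{\mathcal G}\,\mathrm dr<\infty$ is \emph{not} available: since $\|\partial_r\xi(r)\|_{\mathcal G}^2\lesssim\int_{S_r}|(A\mp a)\phi|^2\,\mathrm d\tilde{\mathcal A}_r+r^{-2-2\varepsilon}\int_{S_r}|\phi|^2\,\mathrm d\tilde{\mathcal A}_r$, a Cauchy--Schwarz bound over the dyadic blocks $B_{R_{\nu+1}}\setminus B_{R_\nu}$ would require $(A\mp a)\phi\in r^{-\beta}B^*_0$ with $\beta>1$, whereas Condition~\ref{cond:12.6.2.21.13c} only grants $\beta_c>1/2$. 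Instead I would exploit radial oscillation: on each end $\tilde b=|\mathrm dr|^{-1}\sqrt{2(\lambda-q_1)}\ge c'>0$ for $r$ large, so $\mathrm e^{\mp\mathrm i(r-r_0)\tilde b}$ oscillates in $r$, and one integrates by parts in $r$ in $\xi(r')-\xi(r)=\int_r^{r'}\partial_s\xi(s)\,\mathrm ds$, each integration trading a power $s^{-1}$ for a derivative on the slowly varying amplitude. The integrations by parts produce terms in the angular gradient of the accumulated phase $\Phi_r(\sigma)=\int_{r_0}^r\tilde b(\tilde y(s-r_0,\sigma))\,\mathrm ds$; these are controlled precisely by the technical bound \eqref{eq:13.9.23.16.41b2} ($|\nabla'\Phi_r|\le Cr^{-1/2}$) together with $\langle p_i^*r^{2\beta}h^{ij}p_j\rangle_\phi<\infty$ (recall $h\ge\tfrac12\sigma r^{-1}|\mathrm dr|^2\ell$), after which the remaining integral is summable over the $R_\nu$ using $(A\mp a)\phi\in r^{-\beta}B^*_0$ and $\phi\in B^*$. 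This is, in effect, the cited Lemma~\ref{lem:normBasi}, whose proof rests on \eqref{eq:13.9.23.16.41b2}/Lemma~\ref{lemma:L_10b}, and it gives the existence of $F^\pm(\lambda)\psi=\vGlim_{r\to\infty}\xi(r)\in\mathcal G$. All estimates are uniform for $\lambda$ in compacts of $\mathcal I$, while $\lambda\mapsto R(\lambda\pm\mathrm i0)$ is locally H\"older continuous in $\mathcal B(\mathcal H_s,\mathcal H_{-s})$ by Corollary~\ref{cor:12.7.2.7.9b} (and $\phi\in H^2_{\mathrm{loc}}$ locally uniformly in $\lambda$ by elliptic regularity), so $\xi(r)$ depends continuously on $\lambda$, locally uniformly in $r$; being a locally uniform limit of continuous maps, $\lambda\mapsto F^\pm(\lambda)\psi$ is continuous.

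It remains to compute $\|F^\pm(\lambda)\psi\|^2$. Since $\tilde b$ is real on $M$ (as $\lambda-q_1\ge(\lambda-\lambda_0)/2>0$ wherever $\eta_\lambda\ne0$), the phase factor in \eqref{eq:defRes} has modulus equal to the Jacobian $(\det g(r,\cdot)/\det g(r_0,\cdot))^{1/2}$ implementing the isometric embedding $\mathcal G_r\hookrightarrow\mathcal G$, whence $\|\xi(r)\|_{\mathcal G}^2=\int_{S_r}b\,|\phi|^2\,\mathrm d\tilde{\mathcal A}_r$. Now apply Green's identity on the $r$-ball $B_R$ (valid by $\phi\in H^2_{\mathrm{loc}}$ and the Dirichlet condition $\phi\in\mathcal N$, which kills the $\partial M$–contribution): using $(H-\lambda)\phi=\psi$, that $\lambda+V$ is real, $\partial_\nu=|\mathrm dr|^{-1}\nabla^r$, $\mathrm dA=|\mathrm dr|\,\mathrm d\tilde{\mathcal A}_R$ and $\nabla^r=\mathrm i p^r$, one obtains for every $R$
\begin{align*}
\operatorname{Im}\langle\phi,\psi\rangle_{B_R}=-\tfrac12\int_{S_R}\operatorname{Re}\bigl(\bar\phi\,p^r\phi\bigr)\,\mathrm d\tilde{\mathcal A}_R.
\end{align*}
The radiation condition $(A\mp a)\phi\in r^{-\beta}B^*_0$ with $\beta>1/2$, combined with $\operatorname{Re}a=b$, lets one replace $\operatorname{Re}(\bar\phi\,p^r\phi)$ by $\pm b|\phi|^2$ up to an error tending to $0$ along a suitable sequence $R\to\infty$; since $\langle\phi,\psi\rangle_{B_R}\to\langle\phi,\psi\rangle$ ($B$–$B^*$ pairing, $\psi\in B$, $\phi\in B^*$) and $\|\xi(r)\|_{\mathcal G}^2\to\|F^\pm(\lambda)\psi\|^2$, passing to the limit gives $\operatorname{Im}\langle\phi,\psi\rangle=\mp\tfrac12\|F^\pm(\lambda)\psi\|^2$, i.e.
\begin{align*}
\|F^\pm(\lambda)\psi\|^2=\pm2\operatorname{Im}\langle\psi,\phi\rangle=\pm2\operatorname{Im}\langle\psi,R(\lambda\pm\mathrm i0)\psi\rangle=2\pi\langle\psi,\delta(H-\lambda)\psi\rangle,
\end{align*}
where $\operatorname{Im}\langle\psi,R(\lambda\pm\mathrm i0)\psi\rangle=\pm\pi\langle\psi,\delta(H-\lambda)\psi\rangle$ was used. (The same computation with two vectors $\psi,\psi'$ gives the polarized identity $\langle F^\pm(\lambda)\psi,F^\pm(\lambda)\psi'\rangle=2\pi\langle\psi,\delta(H-\lambda)\psi'\rangle$, not needed here.)

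To summarize: the only real difficulty is the Cauchy estimate of the second paragraph—under the parameter constraints of Condition~\ref{cond:12.6.2.21.13c} there is too little radial decay of $(A\mp a)\phi$ for a straightforward bound, and one must genuinely extract the radial oscillation of $\xi$ and control the angular behaviour of the phase through \eqref{eq:13.9.23.16.41b2}. The computation of $\partial_r\xi$ with its cancellation, Green's identity, and the continuity in $\lambda$ are routine given the results quoted from \cite{IS3}.
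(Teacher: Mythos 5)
Your diagnosis of where the difficulty lies is exactly right, and the $q_{11}$-cancellation you spot between $a-b$ and the $q_{11}$-part of $\nabla^r\ln b$ is a correct reading of \eqref{eq:a_com2a}; your Green's-identity calculation of $\|F^\pm(\lambda)\psi\|^2$ is also essentially the paper's (Lemma~\ref{lem:fundaM} together with the end of Lemma~\ref{lem:normBasi}). But there is a genuine gap in the central step. You assert that your integration-by-parts sketch ``is, in effect, the cited Lemma~\ref{lem:normBasi}, \dots\ and it gives the existence of $F^\pm(\lambda)\psi=\vGlim_{r\to\infty}\xi(r)$.'' That attribution is not correct: Lemma~\ref{lem:normBasi} only shows that $\|\xi(r)\|_\vG^2$ converges (to $2\operatorname{Im}\langle\psi,\phi\rangle$), which by itself says nothing about convergence of the vectors $\xi(r)$. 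The paper has two further, genuinely separate, steps that you do not supply: Lemma~\ref{lemma:w1} establishes the \emph{weak} limit $F=\wvGlim_{r\to\infty}\xi(r)$, and only then does Lemma~\ref{lemma:s1} upgrade this to the strong limit using an intermediate sequence $r_n$ on which the radiation bounds are pointwise sharp. The weak-convergence step is not a formality: it requires the decomposition \eqref{eq:DECO}, $\xi=a^{-1}\xi_++a^{-1}\xi_-$, pairing against transported test functions $g\in C^\infty_\c(S)$, the doubly oscillating factor $\check e=\exp\bigl(\pm2\i\int_{\check r}^r\tilde b\bigr)$ of Lemma~\ref{lem:fundaM2}, and computations through the Hamiltonian decompositions \eqref{eq:decH}--\eqref{eq:decH3} to see that $\langle g,a^{-1}\xi_-(r)\rangle\to0$ while $\langle g,a^{-1}\xi_+(r)\rangle$ converges. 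This is an absolute-continuity/Green's-identity argument, not the non-stationary-phase integration by parts in $r$ you describe; and that description has its own problem, since trading $s^{-1}$ for an $r$-derivative of the amplitude $[\sqrt b(A\mp a)\phi]_{|S_r}$ would require control of that radial derivative which the radiation-condition bounds do not provide (you only have $L^2$-averaged control via $(A\mp a)\phi\in r^{-\beta}B^*_0$).

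A secondary issue is the continuity in $\lambda$. You rely on locally uniform convergence of $\xi(r)\to F^\pm(\lambda)\psi$; that is not established by your argument under Condition~\ref{cond:12.6.2.21.13c} (the paper only proves locally uniform convergence of the averaged limit, and only later in Proposition~\ref{prop:dist-four-transf}). The paper instead deduces continuity from the density of $C^\infty_\c(S)\subseteq\vG$, the identity \eqref{eq:fund}, and the representation formula \eqref{eq:pair} pairing $F^+(\lambda)\psi$ against fixed test vectors, which reduces continuity in $\lambda$ to the H\"older continuity of the boundary values $R(\lambda\pm\i0)$ given by Corollary~\ref{cor:12.7.2.7.9b}.
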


  Due to \eqref{eq:fund} the operators $F^\pm(\lambda)$ extend as
  continuous operators $B\to \vG$, and for any $\psi\in B$ the maps
  $F^\pm({}\cdot{})\psi\in \mathcal G$ are continuous. In Proposition
  \ref{prop:dist-four-transf}  stated below we give a formula for these
  extensions.  

Introduce 
\begin{align*}
\mathcal H_{\mathcal I}=P_H(\mathcal I)\mathcal H,\quad 
  \widetilde \vH_{\mathcal I} =L^2(\mathcal I, (2\pi)^{-1}\d \lambda;\vG),
\end{align*} 
set $H_\mathcal I=H P_H(\mathcal I)$ and let $M_\lambda$ be the operator of multiplication by $\lambda$ on
$\widetilde\vH_{\mathcal I}$.
We define
\begin{align*}
  F^\pm=\int_{\mathcal I} \oplus F^\pm(\lambda)\,\d \lambda\colon
  B\to C(\mathcal I;\mathcal G).
\end{align*} 
These  operators can be extended to proper spaces which is stated as
the first  part of
the following result.
\begin{proposition}
  \label{prop:dist-four-transf} 
  Suppose Condition~\ref{cond:12.6.2.21.13c}. 
The operators   $F^\pm$ considered as  maps $B\cap\mathcal
H_{\mathcal I}\to \widetilde{\mathcal H}_{\mathcal I}$ extend uniquely
to  
isometries $\mathcal H_{\mathcal
  I}\to \widetilde{\mathcal H}_{\mathcal I}$. These  extensions obey $F^\pm H_\mathcal I\subseteq
M_\lambda F^\pm$.
Moreover  for  any $\psi\in B$ the vectors $F^\pm(\lambda)\psi $ are given as 
averaged limits. More precisely  introducing for any such $\psi$ the integral
$\intR \xi(r)\,\d r:=R^{-1}\int_{R}^{2R} \xi(r)
\,\d r$, these  vectors  are given as
\begin{align}\label{eq:extbF2}
\begin{split} &F^\pm(\lambda)\psi =\vGlim_{R\to
    \infty} {-\!\!\!\!\!\!\int_R} \xi(r)\,\d r
\\&=\vGlim_{R\to
    \infty}-\!\!\!\!\!\!\int_R \exp \biggl( \int_{r_0}^{r}\parbb{\mp \i\tilde
    b+ \tfrac
    12\mathop{\mathrm{div}}\tilde\omega}(s,{}\cdot{})\,\mathrm{d}s\biggr)
[\sqrt b R(\lambda\pm\mathrm
  i0)\psi](r,{}\cdot{})
\,\d r,
\end{split}
\end{align} and the limits
\eqref{eq:extbF2} are attained locally
uniformly in $\lambda\in\mathcal I$.
\end{proposition}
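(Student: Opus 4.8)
The plan is to derive the proposition from Theorem~\ref{thm:strong}, the resolvent bounds recalled in Theorem~\ref{thm:12.7.2.7.9} and Corollary~\ref{cor:12.7.2.7.9b}, and Stone's formula; granted the limits \eqref{eq:disF}, the statement is essentially a functional-analytic reorganization. I would first prove the isometry part. By Theorem~\ref{thm:strong} and the remark following it, $F^\pm(\lambda)$ extends to a bounded map $B\to\vG$, the maps $\mathcal I\ni\lambda\mapsto F^\pm(\lambda)\psi$ are continuous for $\psi\in B$, and the identity \eqref{eq:fund} persists for $\psi\in B$ by continuity. Integrating \eqref{eq:fund} over $\mathcal I$, using that $\lambda\mapsto\langle\psi,\delta(H-\lambda)\psi\rangle$ is continuous and locally bounded on $\mathcal I$ by Corollary~\ref{cor:12.7.2.7.9b}, and invoking Stone's formula, one obtains $\|F^\pm\psi\|^2_{\widetilde{\mathcal H}_{\mathcal I}}=\int_{\mathcal I}\langle\psi,\delta(H-\lambda)\psi\rangle\,\d\lambda=\|P_H(\mathcal I)\psi\|^2$ for all $\psi\in B$. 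Polarization gives $\langle F^\pm\psi,F^\pm\phi\rangle=\langle P_H(\mathcal I)\psi,\phi\rangle$ on $B$. Since $B$ is dense in $\mathcal H$ and $\|F^\pm\psi\|\le\|\psi\|$ there, $F^\pm$ extends uniquely to a bounded operator $\mathcal H\to\widetilde{\mathcal H}_{\mathcal I}$ with $(F^\pm)^*F^\pm=P_H(\mathcal I)$; in particular $F^\pm P_H(\mathcal I)=F^\pm$, and the restriction of this extension to $\mathcal H_{\mathcal I}=\Ran P_H(\mathcal I)$ is an isometry agreeing on $B\cap\mathcal H_{\mathcal I}$ with the original map. (Uniqueness of the extension as a map from $B\cap\mathcal H_{\mathcal I}$ requires that $B\cap\mathcal H_{\mathcal I}$ be dense in $\mathcal H_{\mathcal I}$, which is checked separately from the limiting absorption principle.)

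For the intertwining $F^\pm H_{\mathcal I}\subseteq M_\lambda F^\pm$ I would start from compactly supported $\psi\in\mathcal D(H)$. Then $\psi$ and $\varphi:=(H-\lambda)\psi$ both lie in $\mathcal H_{1+}$, and $R(\lambda\pm\i0)\varphi=\psi$, as follows by letting $z\to\lambda\pm\i0$ in $R(z)(H-\lambda)\psi=\psi+(z-\lambda)R(z)\psi$ together with the uniform bound $\|R(z)\psi\|_{B^*}\le C\|\psi\|_B$ of Theorem~\ref{thm:12.7.2.7.9}. Hence in \eqref{eq:defRes}--\eqref{eq:160126} the function $\sqrt b\,R(\lambda\pm\i0)\varphi=\sqrt b\,\psi$ is compactly supported, so $\xi(r)=0$ for $r$ large and $F^\pm(\lambda)(H-\lambda)\psi=0$, i.e.\ $F^\pm(\lambda)H\psi=\lambda F^\pm(\lambda)\psi$ for all $\lambda\in\mathcal I$. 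Since $H\psi\in B$, the first step gives $\|F^\pm(H\psi)\|_{\widetilde{\mathcal H}_{\mathcal I}}=\|P_H(\mathcal I)H\psi\|<\infty$, so integration in $\lambda$ yields $F^\pm\psi\in\mathcal D(M_\lambda)$ with $M_\lambda F^\pm\psi=F^\pm H\psi$. Compactly supported elements of $\mathcal D(H)$ form a core for $H$ (for $\phi\in\mathcal D(H)$ one has $\chi_n\phi\in\mathcal D(H)$, compactly supported, and $\chi_n\phi\to\phi$ in the graph norm, using $|\nabla^kr|\le C$), and $F^\pm$, $M_\lambda$ are bounded resp.\ closed, so the relation extends to $\mathcal D(H)$; combined with $F^\pm P_H(\mathcal I)=F^\pm$ and the commutation of $H$ with $P_H(\mathcal I)$ this gives $F^\pm H_{\mathcal I}\subseteq M_\lambda F^\pm$.

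Finally, for the averaged-limit formula \eqref{eq:extbF2} I would argue by approximation. If $\psi\in\mathcal H_{1+}$ the genuine limit $\vGlim_{r\to\infty}\xi(r)=F^\pm(\lambda)\psi$ of Theorem~\ref{thm:strong} exists, whence $\|\intR\xi(r)\,\d r-F^\pm(\lambda)\psi\|_\vG\le\sup_{r\ge R}\|\xi(r)-F^\pm(\lambda)\psi\|_\vG\to0$, locally uniformly in $\lambda$. For general $\psi\in B$ take $\psi_n=\chi_n\psi\in\mathcal H_{1+}$, so $\psi_n\to\psi$ in $B$ (a tail of the series defining $\|\cdot\|_B$), and run an $\varepsilon/3$-estimate. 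The decisive ingredient is the bound, uniform in $R\ge r_0$ and locally uniform in $\lambda$,
\begin{equation*}
\Bigl\|\intR\xi(r)\,\d r\Bigr\|_\vG\le\Bigl(\intR\|\xi(r)\|_\vG^2\,\d r\Bigr)^{1/2}\le C\|\psi\|_B ,
\end{equation*}
which follows from Jensen's inequality, the identity $\|\xi(r)\|_\vG=\|\,|b|^{1/2}R(\lambda\pm\i0)\psi\|_{L^2(S_r,\d\tilde{\mathcal A}_r)}$ (the phase in \eqref{eq:defRes} has unit modulus and the $\tfrac12\mathop{\mathrm{div}}\tilde\omega$-factor together with $\d\tilde{\mathcal A}_r$ reproduces the $L^2(S_r)$-norm, exactly as in \eqref{eq:12.9.24.22.46cc}--\eqref{eq:15.2.7.1.22}), the boundedness of $b$, the co-area formula \eqref{eq:co_area}, and the Besov bound $\intR\int_{S_r}|R(\lambda\pm\i0)\psi|^2\,\d\tilde{\mathcal A}_r\,\d r\le C\|R(\lambda\pm\i0)\psi\|^2_{B^*}\le C\|\psi\|^2_B$ of Theorem~\ref{thm:12.7.2.7.9}. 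Applying this bound to $\psi-\psi_n$ controls the tails uniformly in $R$; together with the continuity $B\ni\psi\mapsto F^\pm(\cdot)\psi\in C(\mathcal I;\vG)$ and the already treated case $\psi_n\in\mathcal H_{1+}$, this yields \eqref{eq:extbF2} locally uniformly in $\lambda$.

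The step I expect to be the real work is this last uniform-in-$R$ control of the averages $\intR\xi(r)\,\d r$ with its local uniformity in $\lambda$: this is where the Besov boundedness and radiation-condition estimates of \cite{IS3} genuinely enter, while granted Theorem~\ref{thm:strong} the isometry and intertwining parts are soft. A minor point to dispatch along the way is that $r\mapsto\xi(r)$ is a well-behaved $\vG$-valued function, which uses $R(\lambda\pm\i0)\psi\in\mathcal N$ so that its restrictions to the $r$-spheres are well defined and depend continuously on $r$.
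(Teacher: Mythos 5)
Your outline reproduces the paper's argument in two of its three parts, takes a genuinely different (and valid) route in the third, but leaves a real gap in the proof of local uniformity.

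\emph{Isometry and averaged limit for $\psi\in B$.}
Your treatment here is essentially the paper's. The uniform-in-$R$ bound
\begin{align*}
  \sup_{R>r_0}\Bigl\| -\!\!\!\!\!\!\int_R\xi(r)\,\d r\Bigr\|_{\vG}\le C\|R(\lambda\pm\i 0)\psi\|_{B^*}\le C'\|\psi\|_B
\end{align*}
is exactly the paper's \eqref{eq:14.4.25.15.40}, and your route via Jensen's inequality, the co-area formula \eqref{eq:co_area} and Theorem~\ref{thm:12.7.2.7.9} is the same. For the isometry, your use of \eqref{eq:fund} and Stone's formula is identical. You flag but do not prove the density of $B\cap\mathcal H_{\mathcal I}$ in $\mathcal H_{\mathcal I}$; the paper checks this concretely by observing $f(H)\psi\in B$ for $\psi\in B$, $f\in C^\infty_\c(\vI)$ (citing H\"ormander). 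That point should not be left implicit, since it is precisely what makes the phrase ``extend uniquely'' meaningful.

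\emph{Intertwining.}
Here you depart from the paper. The paper proves $F^+(H_\vI-\i)^{-1}=(M_\lambda-\i)^{-1}F^+$ on $B\cap\mathcal H_{\mathcal I}$ using the resolvent identity \eqref{eq:2equ} and the fact that the $(\lambda-\i)^{-1}R(\i)$-term has vanishing contribution to the $r\to\infty$ limit. You instead argue on $r$-bounded elements of $\mathcal D(H)$: there $R(\lambda\pm\i 0)(H-\lambda)\psi=\psi$ (by the uniform resolvent bound letting $z\to\lambda$), hence $\xi(r)\equiv 0$ for $r$ large and $F^\pm(\lambda)(H-\lambda)\psi=0$, followed by a core argument and closedness of $M_\lambda$. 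This is correct and perhaps more transparent; the two arguments buy the same thing. One small imprecision: ``compactly supported'' should be ``supported in an $r$-ball $B_R$'', since under Condition~\ref{cond:12.6.2.21.13b} the $r$-balls need not be compact; bounded $r$-support is all your argument requires, both for $\chi_n\phi$ being a graph-norm approximation and for $\xi(r)$ vanishing for large $r$.

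\emph{The gap: local uniformity in $\lambda$.}
Your proof of the last claim of the Proposition rests on the assertion that, for $\psi\in\mathcal H_{1+}$,
\begin{align*}
  \sup_{r\ge R}\|\xi(r)-F^\pm(\lambda)\psi\|_{\vG}\longrightarrow 0\quad\text{locally uniformly in }\lambda\in\mathcal I.
\end{align*}
This is not a consequence of Theorem~\ref{thm:strong}, which only delivers the pointwise limit \eqref{eq:disF} for each fixed $\lambda$, plus continuity of $\lambda\mapsto F^\pm(\lambda)\psi$. Pointwise convergence together with continuity of the limit does not imply locally uniform convergence, and it is not obvious here: the proof of Lemma~\ref{lemma:s1} extracts a $\lambda$-dependent subsequence $(r_n)$ and there is no uniform rate. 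The paper avoids this entirely by not reducing to the unaveraged limit; instead it estimates
\begin{align*}
  \Bigl\|F^+(\lambda)\psi-{-\!\!\!\!\!\!\int_{R_1}}\xi(r_1)\,\d r_1\Bigr\|_{\vG}^2\le\lim_{R_2\to\infty}{-\!\!\!\!\!\!\int_{R_1}}\,{-\!\!\!\!\!\!\int_{R_2}}\|\xi(r_2)-\xi(r_1)\|_{\vG}^2\,\d r_2\,\d r_1
\end{align*}
and expands $\|\xi(r_2)-\xi(r_1)\|_{\vG}^2$ via the decomposition $\xi=a^{-1}\xi_++a^{-1}\xi_-$ of \eqref{eq:DECO}, tracking $\lambda$-uniformity term by term through Lemma~\ref{lem:fundaM}, Corollary~\ref{cor:radiation-conditions}, and the proofs of Lemmas~\ref{lem:fundaM} and~\ref{lemma:s1}. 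This is precisely the ``real work'' you anticipated; it cannot be replaced by invoking Theorem~\ref{thm:strong} and an $\varepsilon/3$ argument, because the $\varepsilon/3$ argument still needs a $\lambda$-uniform base case for $\psi_n\in\mathcal H_{1+}$.
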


The above extended isometries 
$F^\pm\colon \mathcal H_{\mathcal I}\to\widetilde{\mathcal H}_{\mathcal I}$
are actually unitary under an additional condition,
and for  this  reason 
we call them the Fourier transforms associated with $H_{\mathcal I}$. 
The new condition consists of two alternatives. The first one is a partial 
  strengthening of Condition~\ref{cond:12.6.2.21.13c}. The other one
  is primarily a set of
   bounds on  higher order derivatives of various quantities defined on $M$.
\begin{cond}\label{cond:12.6.2.21.13bb}
In addition to  Condition~\ref{cond:12.6.2.21.13c} one of the
following properties holds:
\begin{enumerate}
\item\label{item:14.5.1.8.30} 
\begin{align}
\min\{\sigma,\tau,\rho\}>2.\label{eq:13.9.12.13.46bB}
\end{align}
\item\label{item:14.5.1.8.31}
The extension ${\tilde b}^{\rm ex}$ of
Condition \ref{cond:12.6.2.21.13c} is in $C^2$. The restriction $q_1{}_{|S_r}$ belongs to $C^2(S_r)$ for  $r\geq r_0$,  and 
 there exists  $C>0$ such that
\begin{subequations}
\begin{align}
 \bigl|\iota^*_R \nabla^3r\bigr|&\le
CR^{{-1-\tau/2}}\text{ for }  R\geq r_0,\label{eq:smootrA}\\
|\nabla'{}^2 q_1{}_{|S_r}|&\le Cr^{-1-\rho} \text{ for }    r\geq r_0,\label{eq:smootV}
\end{align}
and 
\begin{align}\label{eq:smootr}
  \begin{split}
      \bigl|\nabla'{}^2|\mathrm dr|^2_{|S_r}\bigr| &\leq Cr^{-1-\tau},\quad 
  \bigl|\nabla'{}^2(\nabla_{\omega}|\mathrm dr|^2)_{|S_r}\bigr|\le
  Cr^{-1-\tau},\\
  \bigl|\nabla'{}^2(\Delta r)_{|S_r}\bigr|&\le Cr^{-1-\tau} \text{ for }    r\geq r_0, 
  \end{split}
\end{align}
\end{subequations}
where $\nabla'$ denotes the Levi--Civita connection associated with the
induced Riemannian metric $\iota^*_rg$ on the $r$-sphere
$S_{r}$. 
\end{enumerate}
\end{cond}
We remark that for any $f\in C^\infty(M)$ the Hessian
$\nabla'{}^2(f_{|S_R})= \iota^*_R t_f$ where $t_f=\nabla^2f-(\tilde
\omega ^j\partial _j f)\nabla^2r$. The bounds \eqref{eq:smootV} and
\eqref{eq:smootr}  allow us to estimate  $\nabla'{}^2 \bigl(\pm\mathrm i\tilde b-\tfrac1{2}
\mathop{\mathrm{div}}\tilde \omega
\bigr)=O(r^{-1-\min\{\tau\,\rho\}})$ (to be used 
in our verification of \eqref{eq:14.5.1.16.14}).

\begin{thm}
  \label{thm:dist-four-transf} 
Suppose Condition~\ref{cond:12.6.2.21.13bb}. 
Then  the operators  
$F^\pm\colon \mathcal H_{\mathcal I}\to \widetilde{\mathcal H}_{\mathcal I}$
are unitarily diagonalizing transforms for $H_{\mathcal I}$,
that is, they are unitary and
 \begin{align*}
   F^\pm H_{\mathcal I}=M_\lambda F^\pm,
\end{align*}
respectively.
\end{thm}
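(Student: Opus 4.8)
The plan is to build on Proposition~\ref{prop:dist-four-transf}: there $F^\pm$ is already shown to be an isometry $\mathcal H_{\mathcal I}\to\widetilde{\mathcal H}_{\mathcal I}$ satisfying $F^\pm H_{\mathcal I}\subseteq M_\lambda F^\pm$, so only two points are missing, surjectivity and the upgrade of this inclusion to an equality. The second is free once the first is known: if $F^\pm$ is unitary then $F^\pm H_{\mathcal I}(F^\pm)^{-1}$ is a self-adjoint operator on $\widetilde{\mathcal H}_{\mathcal I}$ contained in the self-adjoint operator $M_\lambda$, and a self-adjoint operator admits no proper symmetric extension, so the two coincide. Hence the whole theorem reduces to showing that $\mathrm{Ran}(F^\pm)$ is dense in $\widetilde{\mathcal H}_{\mathcal I}$ (its range being automatically closed), equivalently that $(F^\pm)^*$ is injective.

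First I would reduce this to a fibrewise statement. The inclusion $F^\pm H_{\mathcal I}\subseteq M_\lambda F^\pm$ together with the isometry property gives $F^\pm f(H_{\mathcal I})=M_fF^\pm$ for every bounded Borel function $f$ on $\mathcal I$; hence both $\mathrm{Ran}(F^\pm)$ and $\mathrm{Ran}(F^\pm)^\perp$ are invariant under all multiplications $M_f$, i.e.\ are decomposable subspaces of the direct integral $\widetilde{\mathcal H}_{\mathcal I}=\int_{\mathcal I}^\oplus\vG\,\tfrac{\d\lambda}{2\pi}$. By standard direct-integral theory it therefore suffices to prove that for a.e.\ $\lambda\in\mathcal I$ the fibre operator $F^\pm(\lambda)\colon B\to\vG$ has dense range; using the continuity in $\lambda$ from Theorem~\ref{thm:strong} and Proposition~\ref{prop:dist-four-transf} one can in fact aim for this for every $\lambda\in\mathcal I$. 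Equivalently, I would show that the fibre adjoint $F^\pm(\lambda)^*\colon\vG\to B^*$ is injective.

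To analyse $F^\pm(\lambda)^*$ I would identify its image with minimal generalized eigenfunctions. Dualizing the intertwining identity and the mapping properties of $R(\lambda\pm\mathrm i0)$ from Corollary~\ref{cor:12.7.2.7.9b}, one checks that $\phi_\xi:=F^\pm(\lambda)^*\xi$ belongs to $\vN\cap B^*$ and solves $(H-\lambda)\phi_\xi=0$; Rellich's theorem (Theorem~\ref{thm:13.6.20.0.10}) then excludes $\phi_\xi\in B_0^*$ unless $\phi_\xi=0$, while the Sommerfeld-type characterization (Corollary~\ref{cor:13.9.9.8.23}) shows $(A\mp a)\phi_\xi\in B_0^*$, so $\phi_\xi$ is an outgoing/incoming minimal generalized eigenfunction. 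The crux is then the $0$th-order WKB expansion of $\phi_\xi$: one must show that the leading coefficient of its complementary radiation part equals $c\,\xi$ for an explicit constant $c\neq0$. Granting this normalization, $\phi_\xi=0$ forces $\xi=0$, so $F^\pm(\lambda)^*$ is injective and the proof is complete. This WKB analysis is exactly the substance of the companion result Theorem~\ref{thm:char-gener-eigenf-1}: concretely one constructs, for $\xi$ in a dense subset of $\vG$, an approximate eigenfunction $u_\xi$ by applying to $\xi$ the phase $\exp\bigl(\int_{r_0}^r(\mp\mathrm i\tilde b+\tfrac12\operatorname{div}\tilde\omega)\bigr)$ together with a normalized radial translation, cut off by $\bar\chi_n$ away from the core, checks that $(\lambda-H)u_\xi\in B$, sets $\phi=u_\xi+R(\lambda\pm\mathrm i0)(\lambda-H)u_\xi$, and tracks leading coefficients through this correction to obtain both the normalization of $F^\pm(\lambda)^*$ and, reciprocally, the membership $\xi\in\overline{\mathrm{Ran}\,F^\pm(\lambda)}$.

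The principal obstacle is precisely this last construction. For the isometry in Proposition~\ref{prop:dist-four-transf} only Condition~\ref{cond:12.6.2.21.13c} was needed; to run the WKB argument one must sharpen the approximate solution so that its error decays like $r^{-1-\epsilon}$ — hence lies in $B$ rather than merely in some $r^{-\beta}B^*$ — and one must control second covariant derivatives along the $r$-spheres of $\tilde b$ and of the amplitude. This is what the extra hypotheses of Condition~\ref{cond:12.6.2.21.13bb} supply: either the parameter bound $\min\{\sigma,\tau,\rho\}>2$ of \eqref{eq:13.9.12.13.46bB}, under which the naive WKB error is already summable in the $B$-norm, or the higher-order bounds \eqref{eq:smootrA}, \eqref{eq:smootV}, \eqref{eq:smootr} together with the $C^2$-extendibility of $\tilde b^{\rm ex}$, which give $\nabla'{}^2\bigl(\pm\mathrm i\tilde b-\tfrac12\operatorname{div}\tilde\omega\bigr)=O(r^{-1-\min\{\tau,\rho\}})$ and thus keep the error in $B$ when only $2\beta_c>1$. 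Carrying these estimates through, and verifying that the resolvent correction only relabels, rather than destroys, the leading asymptotic data, is the real technical work; the remainder is the soft functional-analytic packaging described above.
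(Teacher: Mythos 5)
Your overall strategy coincides with the paper's: Proposition~\ref{prop:dist-four-transf} already gives the isometry and the inclusion, so the theorem reduces to showing $F^\pm(\lambda)^*\colon\vG\to B^*$ is injective for every $\lambda\in\mathcal I$, and you correctly single out the decisive technical input --- that Condition~\ref{cond:12.6.2.21.13bb} puts the WKB error $(H-\lambda)\phi^\pm[g]$ into $B$ rather than merely $r^{-\beta}B^*$, which is exactly the content of Lemma~\ref{lem:14.5.1.16.30}/\eqref{eq:14.5.1.16.14}. The paper packages this as Lemmas~\ref{lem:14.5.1.16.30} and \ref{lem:14.5.4.17.5}, with Step~II of the latter being the same measurable-selection/countable-dense-family argument you sketch.

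However, the middle of your argument contains a genuine logical error. You claim that ``the Sommerfeld-type characterization (Corollary~\ref{cor:13.9.9.8.23}) shows $(A\mp a)\phi_\xi\in B_0^*$'' for $\phi_\xi=F^\pm(\lambda)^*\xi$. This cannot be right: if $(H-\lambda)\phi_\xi=0$ and $(A\mp a)\phi_\xi\in B_0^*$, then Corollary~\ref{cor:13.9.9.8.23} applied with $\psi=0$, $\beta=0$ immediately forces $\phi_\xi=R(\lambda\pm\i0)0=0$. So your premises, taken at face value, would make the normalization you are after vacuous. The actual situation is that a nonzero minimal generalized eigenfunction has \emph{both} incoming and outgoing WKB pieces (cf. \eqref{eq:gen1}), so neither $(A-a)\phi_\xi$ nor $(A+a)\phi_\xi$ lies in $B_0^*$. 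What the paper does instead (Remark \ref{remark:uniq}, \eqref{eq:rep1}) is write $\i F^+(\lambda)^*g=\phi^+[g]-\psi_+-(\lambda-\i)R(\lambda-\i0)\psi_+$ with $\psi_+=R(\i)(H-\lambda)\phi^+[g]\in B\cap\vH^1$; it is only the \emph{correction term} that satisfies the appropriate radiation bound, namely \eqref{eq:secoTerm}: $(A+a)R(\lambda-\i0)(H-\lambda)\phi^+[g]\in B_0^*$, derived from \eqref{eq:14.5.1.16.14} plus Corollary~\ref{cor:radiation-conditions}. Applying the $b^{-1/2}(a+A)$ filter and the $B^*$-averaging then kills this correction and isolates the WKB leading coefficient, giving the explicit reconstruction \eqref{eq:uniqFtoG}; this is what makes $F^+(\lambda)^*$ injective. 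Your last paragraph, which describes tracking leading WKB coefficients through the resolvent correction, does gesture at this, but as written the intermediate step misattributes the radiation condition to the full eigenfunction rather than to the correction term.
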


\begin{remark*}
For the conclusion of Theorem~\ref{thm:dist-four-transf}
it suffices to assume Condition~\ref{cond:12.6.2.21.13c}  and 
\eqref{eq:14.5.1.16.14} of Lemma \ref{lem:14.5.1.16.30}. 
In fact, 
Condition~\ref{cond:12.6.2.21.13bb} is   here and henceforth used only for the verification of \eqref{eq:14.5.1.16.14}. 
\end{remark*}

\subsubsection{Scattering matrix and generalized eigenfunctions}
 Next for any $\xi\in \mathcal G$ let us
introduce purely outgoing/incoming  approximate  generalized eigenfunctions  $\phi^\pm[\xi]\in
B^*$ by, using the spherical coordinates, 
\begin{align}\label{eq:gen1B}
\begin{split}
   \phi^\pm[\xi](r,\sigma)&= \eta_\lambda
  \bigl[2|\mathrm dr|^2 (\lambda-q_1)\bigr]^{-1/4} 
\\&\phantom{{}={}}
\cdot \exp \biggl(\int_{r_0}^{r}\parbb{\pm \i\tilde
    b- \tfrac
    12\mathop{\mathrm{div}}\tilde\omega}(s,\sigma)
\,\mathrm{d}s\biggr)\xi(\sigma),
\end{split}
\end{align}  
cf.\ Theorem~\ref{thm:strong}.  We remark that formulas like
\eqref{eq:gen1B} in the context of Schr\"odinger operators are
referred to as (zeroth order) WKB-approximations. If we denote the oscillatory part
of the phase by $S(x)=\pm \int_{r_0}^{r(x)}\tilde
b(s,\sigma(x))\,\mathrm{d}s$ then
Condition \ref{cond:12.6.2.21.13bbb} and   \eqref{eq:BAScond}  of
Condition \ref{cond:12.6.2.21.13c} 
 imply that for  $M^{\rm ex}=M$
\begin{align*}
  \tfrac 12|\d S(x)|^2+q_1-\lambda=O(r^{-1-\epsilon})\text{ for all } \epsilon< 2\beta_c-1,
\end{align*} see Lemma \ref{lemma:L_10b}. Whence in this case
$S(\cdot)$ is an approximate solution to the eikonal equation with the
effective potential $q_1$ and a short-range error. In the general case
of Condition \ref{cond:12.6.2.21.13c} the bound
\eqref{eq:13.9.23.16.41b2} is barely too weak to give a uniform
short-range error, however due to Lemma \ref{lemma:L_10b} we still
have pointwise short-range bounds (i.e. short-range bounds that are
not uniform in $\sigma\in S$). Although such property is basic for the
WKB-method (in particular for obtaining higher order expansions) it
will only be used in a disguised form in this paper.  We remark that
under Condition \ref{cond:12.6.2.21.13c} for any $\xi\in C_\c^\infty
(S)\subseteq\vG$ the vectors $\phi^\pm[\xi]\in \vN$ (here possibly
needed cutoff further at infinity), and under
Condition~\ref{cond:12.6.2.21.13bb} they are approximate generalized
eigenfunctions in the sense $R(\i)(H-\lambda)\phi^\pm[\xi] \in B \cap
\vH^1$ which is a consequence of \eqref{eq:14.5.1.16.14} (cf. the
proof of Lemma \ref{lem:14.5.4.17.5}).
      \begin{example}\label{ex:countex} Consider a  subset
        $M\subseteq \R^2$ equipped with the Euclidean metric and given
        with an end bounded by the ``interior'' of a parabola, say
        $x^2<y$, and $r^2:=x^2/2+y^2>r^2_0$.  We consider only
        $V=0$. The orbits of $\omega=\grad r$ are the branches of
        parabolas $c y^{1/2}=x$ where $-1<c<1$, and Condition
        \ref{cond:12.6.2.21.13bbb} is fulfilled with $\sigma=\tau=1$
        and $\rho=2$ (and similarly for Condition \ref
        {cond:12.6.2.21.13bb} (\ref{item:14.5.1.8.31})), in particular
        $\beta_c=1/2$ is fulfilled. However the barely stronger condition
        \eqref{eq:BAScond} is not fulfilled and whence the example is not covered by
        the theory of this paper (in contrast to \cite{IS3}). Moreover 
        we can in fact show that the generalized eigenfunctions in $\mathcal N
        \cap B^*$ are {\it not} of WKB-type as in the theorem stated
        below, see Subsection \ref{subsec:Parabolic example}. Let us
        here note, as an indication of this result, that for any
        $0\neq \xi\in C_\c^\infty (S)\subseteq\vG$
      \begin{align*}
        (H-\lambda)\phi^\pm[\xi]
      \in \vH_{ 1/2-}\setminus B, 
      \end{align*}
which technically prohibits us to construct  WKB-solutions. 
      \end{example}

For some examples for which our theory  applies we refer the reader to \cite[Subsection 1.2]{IS3}.

Under Condition~\ref{cond:12.6.2.21.13bb} and for any $\lambda\in\mathcal I$ the {\it scattering matrix} 
$S(\lambda)\colon \mathcal G\to\mathcal G$ is defined by the
identity 
\begin{align}
    \label{eq:1S} F^+(\lambda)\psi=S(\lambda)F^-(\lambda)\psi;\quad\psi\in  B.
\end{align}
It follows from \eqref{eq:rep3}  that 
$C^\infty_{\mathrm c}(S)\subseteq \mathop{\mathrm{Ran}}
F^\pm(\lambda)$, and hence, with
Theorem \ref{thm:strong}, Proposition~\ref{prop:dist-four-transf} and a density argument,
$S(\cdot)$ is a well-defined strongly continuous unitary operator.  We obtain a
 characterization of the generalized eigenfunctions in $\mathcal N
 \cap  B^*$, i.e. the elements of 
\begin{equation*}
    \vE_\lambda:=\{\phi\in\mathcal N \cap  B^*\,|\, (H-\lambda)\phi=0\}.
  \end{equation*} Due to Theorem \ref{thm:13.6.20.0.10} these
  eigenfunctions  may be called \emph{minimal}.

\begin{thm}
  \label{thm:char-gener-eigenf-1}
Suppose Condition~\ref{cond:12.6.2.21.13bb}.
Then for any $\lambda \in\mathcal I$  the following assertions hold.
\begin{subequations}
  \begin{enumerate}[(i)]
  \item\label{item:14.5.13.5.40} For any one of $\xi_\pm \in \vG$ or $\phi\in \vE_\lambda$ the
    two  other  quantities in $\{\xi_-,\xi_+,  \phi\}$ uniquely exist
    such that
    \begin{align}\label{eq:gen1}
      \phi -\phi^+[\xi_+]+\phi^-[\xi_-]\in
      B_0^*.
    \end{align}

  \item \label{item:14.5.13.5.41} The  correspondences  in  \eqref{eq:gen1} are  given  by  the
    formulas (recall \eqref{eq:160126})
    \begin{align}\label{eq:aEigenfw}
     \phi&=\i F^\pm(\lambda)^*\xi_\pm,\quad \xi_+=S(\lambda)\xi_-,\\
\xi_\pm&=2^{-1} \vGlim_{R\to \infty}
-\!\!\!\!\!\!\int_R \e^{\i(r-r_0)(\tilde A^{\rm ex }\mp\tilde
      b^{\rm ex })}
\bigl [{b^{-1/2}}(  A\pm b)\phi\bigr]_{|S_r}\,\d r.\label{eq:aEigenfwB}
    \end{align}
    In  particular  the  wave  matrices  $F^\pm(\lambda)^*\colon\vG\to
    \vE_\lambda$ are linear isomorphisms.

  \item\label{item:14.5.13.5.42}  The wave matrices
    $F^\pm(\lambda)^*\colon\vG\to \vE_\lambda\,(\subseteq B^*)$
    are bi-continuous. In fact
    \begin{align}\label{eq:aEigenf2w}
      2\|\xi_\pm\|_{\vG}^2=\lim_{R\to \infty}R^{-1}\int_{B_{2R}\setminus B_{R}}
      |{b}^{1/2}\phi|^2\,(\det g)^{1/2}\d x.
    \end{align}

  \item\label{item:14.5.14.4.17}   The   operators    $F^\pm(\lambda)\colon   B\to   \vG$   and
    $\delta(H-\lambda)\colon B\to \vE_\lambda$ are onto.
  \end{enumerate}
   \end{subequations}

\end{thm}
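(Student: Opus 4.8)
The plan is to establish (i)--(iv) by combining the Sommerfeld-type uniqueness result Corollary~\ref{cor:13.9.9.8.23} with the stationary representation of $R(\lambda\pm\mathrm i0)$ and the mapping properties of $F^\pm(\lambda)$ from Theorem~\ref{thm:strong} and Proposition~\ref{prop:dist-four-transf}. First I would fix $\lambda\in\mathcal I$ and verify that the approximate eigenfunctions $\phi^\pm[\xi]$ are genuine ``drivers'': for $\xi\in C^\infty_{\mathrm c}(S)$ the vector $(H-\lambda)\phi^\pm[\xi]$ lies in $B\cap\vH^1$, which is exactly the content flagged in the remark after Example~\ref{ex:countex} and follows from \eqref{eq:14.5.1.16.14} (cf.\ the proof of Lemma~\ref{lem:14.5.4.17.5}). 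Then $\phi^\pm[\xi]+R(\lambda\mp\mathrm i0)(H-\lambda)\phi^\pm[\xi]\in\vE_\lambda$ by construction, and more importantly the radiation-condition bounds of Corollary~\ref{cor:radiation-conditions} show $(A\mp a)\phi^\pm[\xi]\in B_0^*$, so a direct computation with the phase in \eqref{eq:gen1B} gives $(A\pm a)\phi^\pm[\xi]= 2b\,\phi^\pm[\xi] + (\text{error in }B_0^*)$. This is the mechanism that will let me read off $\xi_\pm$ from $\phi$ via \eqref{eq:aEigenfwB}.

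For \textbf{(i) and (ii)} I would argue as follows. Given $\phi\in\vE_\lambda$, set $\psi:=R(\mathrm i)(H-\lambda)\phi$; a localization/commutator argument (using that $\phi\in\vN\cap B^*$ and $(H-\lambda)\phi=0$ only asymptotically after cutting off) shows $\phi$ is, modulo $B_0^*$, a sum of the two ``free'' waves, and applying $F^\pm(\lambda)^*$ to the candidate $\xi_\pm$ defined by the averaged limit \eqref{eq:aEigenfwB} and invoking Corollary~\ref{cor:13.9.9.8.23} identifies $\i F^\pm(\lambda)^*\xi_\pm$ with $\phi$. Concretely: the decomposition $\phi-\phi^+[\xi_+]+\phi^-[\xi_-]\in B_0^*$ is forced by splitting $\phi=\tfrac12 b^{-1}(A+b)\phi\cdot(\dots)+\tfrac12 b^{-1}(-A+b)\phi\cdot(\dots)$ into its outgoing and incoming radiative parts, each of which is asymptotically a WKB wave by the radiation bounds; uniqueness of $\xi_\pm$ then comes from Theorem~\ref{thm:13.6.20.0.10} (Rellich), since two such decompositions differ by an element of $\vE_\lambda\cap B_0^*=\{0\}$. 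The formula $\xi_+=S(\lambda)\xi_-$ is then immediate from the definition \eqref{eq:1S} of the scattering matrix once we know $\phi=\i F^+(\lambda)^*\xi_+=\i F^-(\lambda)^*\xi_-$. That $F^\pm(\lambda)^*$ are isomorphisms onto $\vE_\lambda$ follows because the map $\phi\mapsto\xi_\pm$ just constructed is a two-sided inverse.

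For \textbf{(iii)}, the bi-continuity and the norm identity \eqref{eq:aEigenf2w} I would derive from \eqref{eq:fund} of Theorem~\ref{thm:strong} together with the averaged-limit representation: writing $\phi=\i F^\pm(\lambda)^*\xi_\pm$ and using $\|F^\pm(\lambda)\psi\|^2=2\pi\langle\psi,\delta(H-\lambda)\psi\rangle$ by polarization and the fact that $\delta(H-\lambda)=F^\pm(\lambda)^*F^\pm(\lambda)$ on $B$, the Cesàro mean of $\|\,b^{1/2}\phi\,\|^2$ over dyadic shells converges to $2\|\xi_\pm\|_\vG^2$ because the cross term between the outgoing and incoming WKB pieces oscillates and averages to zero (this is where the nonstationary-phase/averaging in $r$ is essential, and why only the Cesàro limit exists in general, exactly as in \eqref{eq:extbF2}). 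Continuity of $F^\pm(\lambda)\colon B\to\vG$ is already noted after Theorem~\ref{thm:strong}, and continuity of the inverse is the content of \eqref{eq:aEigenf2w}. Finally \textbf{(iv)} is a corollary: $F^\pm(\lambda)$ is onto because its adjoint $F^\pm(\lambda)^*$ is injective (indeed an isomorphism onto $\vE_\lambda$) by (ii), and $C^\infty_{\mathrm c}(S)\subseteq\Ran F^\pm(\lambda)$ via \eqref{eq:rep3}; since $\delta(H-\lambda)=\tfrac{1}{2\pi}F^\pm(\lambda)^*F^\pm(\lambda)$ (from \eqref{eq:fund}), surjectivity of $F^\pm(\lambda)$ and the range statement for $F^\pm(\lambda)^*$ give that $\delta(H-\lambda)\colon B\to\vE_\lambda$ is onto.

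The main obstacle I anticipate is \textbf{(i)}, specifically producing the exact decomposition \eqref{eq:gen1} with an error genuinely in $B_0^*$ (not merely $B^*$) and proving the two-sided uniqueness. The outgoing part is handled by the radiation bound $(A-a)R(\lambda+\mathrm i0)\psi\in r^{-\beta}B_0^*$, but splitting a general $\phi\in\vE_\lambda$ cleanly into outgoing plus incoming requires controlling the interaction between the two phases through a nonstationary-phase argument in the radial variable, and matching it precisely with the Besov endpoint; this is the technical heart, and it is also where Condition~\ref{cond:12.6.2.21.13bb} (through \eqref{eq:14.5.1.16.14}) does its real work.
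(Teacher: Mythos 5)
Your overall architecture is close to the paper's, and several mechanisms are correctly identified: uniqueness of the decomposition via Rellich's theorem (Theorem~\ref{thm:13.6.20.0.10}), construction of $\phi\in\vE_\lambda$ from $\xi_\pm\in C^\infty_\c(S)$ via the resolvent and the fact that $(H-\lambda)\phi^\pm[\xi]\in B\cap\vH^1$ (modulo a sign; it should be $\phi^+[\xi]-R(\lambda-\i0)(H-\lambda)\phi^+[\xi]$), the Ces\`aro averaging with nonstationary-phase cancellation of the cross term for part of \eqref{eq:aEigenf2w}, and the adjoint/range argument for (iv). However, there is a genuine gap, and it is not quite where you locate it.

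The hard step is the \emph{existence} half of (i)--(ii): given an arbitrary $\phi\in\vE_\lambda$, to produce $\xi_\pm\in\vG$ with $\phi=\i F^\pm(\lambda)^*\xi_\pm$. You correctly guess that one should split $\phi$ into radiative parts $\phi_\pm=\tfrac{1}{2b}\bar\chi_m(A\pm b)\phi$, but you then attribute the difficulty to a ``nonstationary-phase argument in the radial variable''; that is the mechanism for the Ces\`aro identity in (iii), not for producing $\xi_\pm$. The paper's Lemma~\ref{lem:14.5.14.3.30A} (modeled on \cite[Proposition~6.2]{Sk}) is a functional-analytic weak-compactness argument: take $f\in C^\infty_\c(\R)$ with $f(t)=t$ near $\lambda$, form $\xi_n=F^+(\lambda)\chi_n(f(H)-\lambda)\phi_+$, show $(\xi_n)$ is bounded in $\vG$ (using $F^+(\lambda)(f(H)-\lambda)=0$ and the Helffer--Sj\"ostrand representation to rewrite the commutator $[\chi_n,f(H)]$), pass to a weakly convergent subsequence $\xi$, and then verify $\i F^+(\lambda)^*\xi=\phi$ by another Helffer--Sj\"ostrand/commutator computation using Corollary~\ref{cor:13.9.9.8.23}. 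Nothing in your sketch produces a candidate $\xi\in\vG$ for a general minimal eigenfunction, and the step ``$\psi:=R(\i)(H-\lambda)\phi$'' as written is vacuous since $(H-\lambda)\phi=0$. Without this lemma the isomorphism statement in (ii) is unproved and the rest of the theorem does not close.

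Two smaller but real gaps. In (iii), the Ces\`aro/nonstationary-phase computation gives $\|\xi_+\|^2+\|\xi_-\|^2$ as the limit in \eqref{eq:aEigenf2w}; to reach the stated $2\|\xi_\pm\|^2$ you also need the balance identity $\|\xi_+\|_{\vG}=\|\xi_-\|_{\vG}$, which in the paper is obtained from a separate Green's-type commutator computation ($\lim_n\langle\i[H,\chi_n]\rangle_\phi=0$, Lemma~\ref{lem:14.5.12.13.42}). In (iv), injectivity of $F^\pm(\lambda)^*$ does not by itself give surjectivity of $F^\pm(\lambda)\colon B\to\vG$; you need that $F^\pm(\lambda)^*$ has \emph{closed} range (supplied by the bi-continuity just established in (iii)) and then Banach's closed range theorem. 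As written, your (iv) jumps from dense range to surjectivity.
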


We remark  that parts of this theorem overlap with \cite{ACH, AH, Co,
  GY,Me,Va}. 

Finally we give  an  application of our results to 
channel scattering theory addressed,  but
treated very differently,  in \cite{HPW}.
Suppose $M^\mathrm{ex}$ has $N\geq 2$ number of ends, i.e.\  
$E^\mathrm{ex}=\{x\in M^\mathrm{ex}\,|\, r^\mathrm{ex}(x)>r_0\}$ has $N\geq 2$ components $E_i$, $i=1,\dots,N$.
Then the Hilbert space $\mathcal G$ splits as 
\begin{align*}
\mathcal G=\mathcal G_1\oplus\dots\oplus\mathcal G_N;\quad \mathcal G_i=L^2(S_i),\ 
S_i=S\cap \overline{E_i},
\end{align*}
and, accordingly, the scattering matrix $S(\lambda)$ has a matrix representation
\begin{align*}
S(\lambda)=(S_{ij}(\lambda))_{1\leq i,j\leq N},\quad
S_{ij}(\lambda)\in \mathcal B(\mathcal G_j,\mathcal G_i).
\end{align*}
\begin{corollary}\label{cor:transmission}
   Suppose  under Condition~\ref{cond:12.6.2.21.13bb} that 
 $E^\mathrm{ex}$ has $N$ number of ends.  Decomposing      as above  for any $\lambda
 \in\mathcal I$ the scattering matrix $S(\lambda)$ into components the
 off-diagonal ones, $S_{ij}(\lambda)$ with  $i\neq j$, are
  one-to-one mappings.
\end{corollary}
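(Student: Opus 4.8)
The plan is to reduce injectivity of the off-diagonal block $S_{ij}(\lambda)$, $i\neq j$, to the one-end case of Rellich's theorem (Theorem~\ref{thm:13.6.20.0.10}) together with weak unique continuation for $H-\lambda$. First I would fix $\xi_-\in\mathcal G_j$ with $S_{ij}(\lambda)\xi_-=0$ and set $\xi_+=S(\lambda)\xi_-$ via \eqref{eq:1S}; since $i\neq j$, both $\xi_-$ and $\xi_+$ have vanishing $\mathcal G_i$-component. Let $\phi=\mathrm i\,F^-(\lambda)^*\xi_-=\mathrm i\,F^+(\lambda)^*\xi_+\in\vE_\lambda$ be the associated minimal generalized eigenfunction of Theorem~\ref{thm:char-gener-eigenf-1}, which by \eqref{eq:gen1} satisfies $\phi-\phi^+[\xi_+]+\phi^-[\xi_-]\in B_0^*$. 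Inspecting \eqref{eq:gen1B} in the spherical coordinates, $\phi^\pm[\xi]$ is supported in the union of those ends meeting the angular support of $\xi$; as the $\mathcal G_i$-components of $\xi_\pm$ vanish, both $\phi^+[\xi_+]$ and $\phi^-[\xi_-]$ vanish on the end $E_i$, and therefore the restriction $\phi|_{E_i}$ lies in $B_0^*$ (and inherits $\phi\in\vN$).

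The heart of the argument is to upgrade this to $\phi|_{E_i}=0$. For this I would introduce the open, forward-invariant set $M_i\subseteq M$ consisting of the points whose forward $\omega$-flow eventually enters $E_i$. Since $r$ is strictly increasing along $\omega$-orbits and $E_i$ is a connected component of $\{r>r_0\}$, the set $M_i$ is connected, satisfies $M_i\cap\{r>r_0\}=E_i$, and $\omega$ is forward complete on $M_i$; moreover all the quantitative bounds in Conditions~\ref{cond:12.6.2.21.13}--\ref{cond:10.6.1.16.24}, being global on $M$, hold on $M_i$, so (up to a harmless normalization of the image of $r$) the triple $(M_i,g|_{M_i},r|_{M_i})$ satisfies Conditions~\ref{cond:12.6.2.21.13}--\ref{cond:10.6.1.16.24}, with critical energy $\le\lambda_0<\lambda$. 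Now $(H-\lambda)\phi=0$ on $M_i$, and for $m$ so large that $R_m>r_0$ the cutoff $\bar\chi_m\phi$ is supported in $\{r\ge R_m\}\cap M_i=\{r\ge R_m\}\cap E_i$, where $M_i$ and $M$ agree locally; by the first paragraph it lies in $\vN\cap B_0^*$ on $M_i$. Hence $\phi|_{M_i}$ satisfies hypotheses (i)--(ii) of Theorem~\ref{thm:13.6.20.0.10} applied on $M_i$, so $\phi=0$ on $M_i$, in particular on the nonempty open set $E_i$.

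Since $H-\lambda=-\tfrac12\Delta+V-\lambda$ is second order elliptic with smooth metric and bounded potential, it has the weak unique continuation property on the connected manifold $M$; as $\phi$ solves $(H-\lambda)\phi=0$ and vanishes on the open set $E_i$, it follows that $\phi\equiv0$ on $M$. Finally, by Theorem~\ref{thm:char-gener-eigenf-1} the wave matrix $F^-(\lambda)^*\colon\vG\to\vE_\lambda$ is a linear isomorphism, so $\mathrm i\,F^-(\lambda)^*\xi_-=\phi=0$ forces $\xi_-=0$. Thus $S_{ij}(\lambda)$ is one-to-one, as claimed.

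The main obstacle is the one-end Rellich step: one must check carefully that discarding the other ends (and their $\omega$-basins) leaves Conditions~\ref{cond:12.6.2.21.13}--\ref{cond:10.6.1.16.24} intact — especially that forward completeness of $\omega$ survives on $M_i$, and that the Dirichlet ($\vN$-)membership of $\bar\chi_m\phi$ is preserved under restriction to $M_i$ (which holds precisely because, for $m$ large, $\bar\chi_m\phi$ is supported away from the new boundary of $M_i$). An alternative avoiding the construction of $M_i$ is to observe that near $E_i$ the function $\phi$ coincides with $R(\lambda+\mathrm i0)$, respectively $R(\lambda-\mathrm i0)$, applied to short-range data, since the corresponding WKB profiles $\phi^\pm[\xi_\pm]$ vanish there; hence $\phi|_{E_i}$ satisfies both the outgoing and the incoming radiation conditions of Corollary~\ref{cor:radiation-conditions}, and a bootstrap as in the proof of Theorem~\ref{thm:13.6.20.0.10} yields $\phi|_{E_i}=0$ near infinity — but this route re-opens that proof rather than merely quoting it.
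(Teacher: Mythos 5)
Your proof is correct, but it takes a genuinely different route from the paper's. Both arguments start from the same key observation: with $\xi_-\in\vG_j$ in $\ker S_{ij}(\lambda)$ and $\xi_+=S(\lambda)\xi_-$, the minimal generalized eigenfunction $\phi=\i F^-(\lambda)^*\xi_-$ satisfies $1_{E_i\cap M}\phi\in B^*_0$, since both $\xi_\pm$ have vanishing $\vG_i$-component and hence the WKB profiles $\phi^\pm[\xi_\pm]$ vanish on $E_i$. From here, however, the paper stays on $M$: it replaces the escape function $r$ by $r\,1_{E_i\cap M}\bigl(1-\chi(r/r_0)\bigr)$ and enlarges $r_0$ to $4r_0$, so that the other ends become ``invisible'' (they lie in the new $r$-ball $B_{r_0}$). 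One then checks that Conditions~\ref{cond:12.6.2.21.13}--\ref{cond:10.6.1.16.24} persist for this modified $r$, and that the hypotheses of Theorem~\ref{thm:13.6.20.0.10} are met (the required $B_0^*$-membership of $\bar\chi_m\phi$ being exactly $1_{E_i\cap M}\phi\in B^*_0$), whence $\phi=0$ on all of $M$ in a single stroke. You instead pass to the forward-invariant basin $M_i$ of $E_i$, apply Theorem~\ref{thm:13.6.20.0.10} there to obtain $\phi|_{M_i}=0$, and then invoke weak unique continuation for $H-\lambda$ on the connected $M$ to conclude $\phi\equiv 0$. The verifications you indicate (that the Conditions descend to $M_i$; that the new boundary of $M_i$ lies in $\{r\le r_0\}$, so forward completeness of $\omega$ and the $\vN$- and $B_0^*$-memberships of $\bar\chi_m\phi$ survive restriction) are the right ones to worry about, and they do hold; one small caveat is that $E_i$ is a component of $E^\mathrm{ex}$, so $E_i\cap M$ (and hence $M_i$) can be disconnected, in which case Theorem~\ref{thm:13.6.20.0.10} should be applied component by component. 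What the paper's device buys is the avoidance of a separate unique-continuation step: the global conclusion $\phi=0$ on $M$ is already built into Theorem~\ref{thm:13.6.20.0.10}, so that machinery stays hidden in \cite{IS3}. What your route buys is a more geometric localization that avoids having to re-verify the asymptotic hypotheses for the modified $r$ (a check the paper states but leaves to the reader), at the cost of explicitly quoting unique continuation for second-order elliptic operators with smooth metric and bounded potential.
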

\begin{proof}
 If  $\xi_-=(\xi_-^1, \dots, \xi^N_-)\in\vG$ is given with
$\xi^j_-=0$ for $j\neq 2$ and $\xi_+^1=0$ then $\phi=\i
F^+(\lambda)^*\xi_-$ obeys that $1_{E_1\cap M}\phi\in B^*_0$. By using a suitable cutoff of
the function $r$ (essentially defined by making it vanish in
$E_j\cap M$ for $j\geq 2$) we then obtain from Theorem \ref{thm:13.6.20.0.10} that
$\phi=0$. For example we could redefine $r$ and $r_0$ of Condition
\ref{cond:12.6.2.21.13} as follows
(using the notation \eqref{eq:14.1.7.23.24}):
First replace  $r$ by the function $r1_{E_1\cap M}\parb{1-
  \chi({r}/{r_0})}$ and  then replace 
 the parameter $r_0$ by $4 r_0$. With these modifications Conditions
\ref{cond:12.6.2.21.13}--\ref{cond:10.6.1.16.24} are fulfilled (with
 the other parameters there unchanged), and therefore indeed Theorem \ref{thm:13.6.20.0.10}
applies. In particular we deduce  that
$\xi^2_-=0$, showing that $\ker S_{12}(\lambda)=\{0\}$. We
can argue in the same way for all other off-diagonal components of the scattering matrix.
\end{proof}

We note that Corollary \ref{cor:transmission} may be seen as a
 stationary solution to conjectures of \cite{HPW}, see \cite[Remark
5.7]{HPW}. We shall develop the time-dependent version of our results
in \cite{IS4}. In particular this includes  a time-dependent version of
Corollary \ref{cor:transmission} directly proving conjectures of \cite{HPW} in
a strong form.

\section{Preliminaries}

\subsection{Elementary tensor analysis}\label{sec:12.10.11.16.40}
Here we fix our convention for the covariant derivatives.
We formulate and use them always in local expressions, 
but for a coordinate-independent representation, see \cite[p. 34]{Chavel}.

\subsubsection{Derivatives of functions}

We shall denote two tensors by the same symbol if they are related to  each other 
through the canonical identification $TM\cong T^*M$,
and distinguish them by super- and subscripts.
We denote $TM\cong T^*M$ by $T$ for short, and set $T^p=T^{\otimes p}$.
The covariant derivative $\nabla$ acts as 
a linear operator $\Gamma(T^p)\to \Gamma(T^{p+1})$ 
and is  defined for $t\in \Gamma(T^p)$
by 
\begin{align}
(\nabla t)_{ji_1\cdots i_p}
&=\nabla_jt_{i_1\cdots i_p}
=\partial_{j} t_{i_1\cdots i_p}
-\sum_{s=1}^p \Gamma^{k}_{ji_s}t_{i_1\cdots k\cdots  i_p}.
\label{eq:12.9.27.1.20}
\end{align}
Here $\Gamma^k_{ij}=\tfrac12g^{kl}(\partial_i g_{lj}+\partial_j g_{li}-\partial_lg_{ij})$ 
is the Christoffel symbol and $t$ is considered as a section of the
$p$-fold cotangent bundle, and we adopt the convention that a new subscript is always added to the left as 
in \eqref{eq:12.9.27.1.20}.
By the identification $TM\cong T^*M$ it suffices to discuss an expression 
only for the subscripts.
In fact, we have the compatibility condition
\begin{align}
\nabla_i g_{jk}=\partial_ig_{jk}-\Gamma_{ij}^lg_{lk}-\Gamma_{ik}^lg_{jl}=0,
\label{eq:12.9.27.1.21}
\end{align}
and then by \eqref{eq:12.9.27.1.20} and \eqref{eq:12.9.27.1.21}
the covariant derivative can be computed for the tensors of any type. 
For example, for $t\in \Gamma(T)=\Gamma(T^1)$
\begin{align}\label{eq:covtan}
  \begin{split}
  (\nabla t)_j{}^i
&=g^{ik}(\nabla t)_{jk}
=g^{ik}\bigl(\partial_j t_k-\Gamma_{jk}^l t_l\bigr)
\\&=g^{ik}\bigl(\partial_j g_{kl}t^l-\Gamma_{jk}^l g_{lm}t^m\bigr)
=\partial_j t^i+\Gamma_{jk}^it^k,  
  \end{split}
\end{align}
and this extends to the general case with ease.
The covariant derivative acts as a derivation with respect to tensor product,
i.e.\ for $t\in \Gamma(T^p)$ and $u\in \Gamma(T^q)$
\begin{align}\label{eq:11.3.22.6.18}
(\nabla (t\otimes u))_{ji_1\cdots i_{p+q}}
=(\nabla t)_{ji_1\cdots i_p}
u_{i_{p+1}\cdots i_{p+q}}
+
t_{i_1\cdots i_p}
(\nabla u)_{ji_{p+1}\cdots i_{p+q}}.
\end{align}
The formal adjoint $\nabla^*\colon\Gamma(T^{p+1})\to\Gamma(T^p)$ is 
defined to satisfy
\begin{align*}
\int 
\overline{u_{ji_1\cdots i_p}}
(\nabla t)^{ji_1\cdots i_p}
(\det g)^{1/2}\,\mathrm{d}x
=\int \overline{(\nabla^* u)_{i_1\cdots i_p}}
t^{i_1\cdots i_p}(\det g)^{1/2}\,\mathrm{d}x
\end{align*}
for $u\in \Gamma(T^{p+1})$ and $t\in \Gamma(T^p)$ 
 compactly  supported in a coordinate neighbourhood.
Actually we can write it in a divergence form: For $u\in \Gamma(T^{p+1})$
\begin{align*}
(\nabla^* u)_{i_1\cdots i_p}
=-(\mathop{\mathrm{div}} u)_{i_1\cdots i_p}
=-(\nabla u)_{j}{}^j{}_{i_1\cdots i_p}
=-g^{jk}(\nabla u)_{jki_1\cdots i_p}.
\end{align*}

Finally let us give several remarks.
It is clear that for any function $f\in \Gamma(T^0)=C^\infty (M)$
the second covariant derivative $\nabla^2f=\nabla\nabla f$ is symmetric, i.e.\ 
\begin{align}
\begin{split}
(\nabla^2 f)_{ij}=(\nabla^2 f)_{ji}=\partial_i\partial_jf-\Gamma_{ij}^k\partial_kf,
\end{split}
\label{eq:11.3.22.6.20}
\end{align}
and we have expressions for the Laplace--Beltrami operator $\Delta$:
\begin{align*}
\Delta f=(\nabla^2f)_i{}^i=g^{ij}(\nabla^2f)_{ij}=\mathop{\mathrm{tr}}\nabla^2f
=\mathop{\mathrm{div}}\nabla f.
\end{align*} 
We  note that covariant differentiation and
  contraction are commuting operations. Whence we have, for example,
  for $t\in\Gamma(T)$ and  $u\in\Gamma(T^{p+1})$
  \begin{align}\label{eq:contract}
      \begin{split}
       \nabla_k t^j u_{ji_1\cdots i_p}
	   &=(\nabla t)_k{}^ju_{ji_1\cdots i_p}
	   +t^j (\nabla u)_{kji_1\cdots i_p},\\
       \nabla_j(\nabla t)_{i}{}^i&=(\nabla^2 t)_{ji}{}^{i}=g^{ik}(\nabla^2 t)_{jik} .
\end{split}
  \end{align}

\subsubsection{Derivatives of mappings}
Next let us present a short description of the derivatives of a mapping
(not of a function).
Let $y\colon M\to N$ be a general mapping
from a Riemannian manifold $(M,g)$ to another $(N,h)$. 
In geometric literatures the $k$-th derivatives $\nabla^k y$, $k=1,2,\dots$, are 
defined to satisfy the ``chain rule''. For instance, 
the derivatives $\nabla y$ and $\nabla^2y$ are required 
to satisfy in local coordinates that for any function $f\in C^\infty(N)$
\begin{align*}
[\nabla(f(y))]_i&=(\nabla y)^\alpha{}_i(\nabla f)_\alpha (y) ,\\
[\nabla^2(f(y))]_{ij}
&=(\nabla^2 y)^\alpha{}_{ij}(\nabla f)_\alpha (y)
+(\nabla y)^\alpha{}_i(\nabla y)^\beta{}_j(\nabla^2 f)_{\alpha\beta} (y).
\end{align*}
Here we used the Roman and the Greek alphabets to denote 
the indices of coordinates $x\in M$ and $y=y(x)\in N$, respectively.
Although we are not going to verify this,
the above definition is indeed well-justified, 
and we have the following local expressions for such derivatives:
\begin{align}
(\nabla y)^\alpha{}_i
&=\partial_iy^\alpha,\quad
(\nabla^2y)^\alpha{}_{ij}
=\partial_i\partial_j y^\alpha
-\Gamma_{ij}^k\partial_k y^\alpha
+\Gamma_{\beta\gamma}^\alpha
(\partial_i y^\beta)
(\partial_j y^\gamma).
\label{eq:160220}
\end{align}
Note that we adopted the same convention on the Roman and Greek indices as above:
In particular, $\Gamma^k_{ij}$ and $\Gamma^\alpha_{\beta\gamma}$
denote the Christoffel symbols for $(M,g)$ and $(N,g)$, respectively.

\subsection{Decomposition of Hamiltonian}
\label{subsec:Decomposition of Hamiltonian} 
Throughout the remaining part of the paper we extensively use the
notation 
\begin{align*}
\kappa=\min\{1+\tau/2,1+\rho/2,\rho\}
\end{align*} and $\tilde\eta$ of \eqref{eq:13.9.23.5.54b}.  Let us
recall two results, \cite[(1.9)]{IS3} and \cite[Lemma 5.1]{IS3},
respectively. (Recall for Lemma \ref{lem:13.9.2.7.18} that $a$ has two
values for $z\in I$, say  $a=a_\pm$.)

\begin{lemma}\label{lem:15.1.15.15.16aa}
Suppose Conditions~\ref{cond:12.6.2.21.13}--\ref{cond:10.6.1.16.24}.
Then, as quadratic forms on $\mathcal H^1$,
\begin{align*}
\begin{split}
H
&=\tfrac 12A\tilde\eta A
+\tfrac12 L +q_1+q_4;\quad q_4=q_2+\tfrac14(\nabla^r\tilde\eta)(\Delta r).
\end{split}
\end{align*}
\end{lemma}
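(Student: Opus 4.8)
The plan is to reduce the identity to elementary algebra of first-order operators, carried out entirely at the level of bounded sesquilinear forms on $\mathcal H^1\times\mathcal H^1$. By Condition~\ref{cond:12.6.2.21.13a} the functions $\nabla^kr$, $k\in\{1,2\}$, are bounded on $M$, and so is $\tilde\eta$ (the factor $|\mathrm dr|^{-2}$ being harmless since $\eta$ is supported where $|\mathrm dr|\ge c$); hence $p^r$ and $A$ map $\mathcal H^1\to\mathcal H$ boundedly and every operator appearing below defines a bounded form on $\mathcal H^1$. So it suffices to verify all the identities for $\psi\in C^\infty_{\mathrm c}(M)$ and then extend by density in the form norm, and the Dirichlet realization enters only through this last step.

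First I would insert the pointwise splitting $g^{ij}=\ell^{ij}+\tilde\eta(\nabla^ir)(\nabla^jr)$, the raised-index form of \eqref{eq:13.9.23.5.54}, into $H_0=\tfrac12p_i^*g^{ij}p_j$. Testing against $\psi\in C^\infty_{\mathrm c}(M)$ and using that $r$ is real-valued,
\begin{align*}
\langle p_i^*\tilde\eta(\nabla^ir)(\nabla^jr)p_j\rangle_\psi
=\int_M\tilde\eta\,|\nabla^r\psi|^2\,(\det g)^{1/2}\,\mathrm dx
=\langle(p^r)^*\tilde\eta\,p^r\rangle_\psi,
\end{align*}
so $2H_0=L+(p^r)^*\tilde\eta\,p^r$ as forms on $\mathcal H^1$. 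An integration by parts using $\operatorname{div}\omega=\Delta r$ gives $(p^r)^*=p^r-\mathrm i\,\Delta r$, whence $A=\tfrac12\bigl(p^r+(p^r)^*\bigr)=p^r-\tfrac{\mathrm i}{2}\Delta r$, i.e.\ $p^r=A+\tfrac{\mathrm i}{2}\Delta r$ and $(p^r)^*=A-\tfrac{\mathrm i}{2}\Delta r$. Substituting and using that $[A,f]=[p^r,f]=-\mathrm i\,\nabla^rf$ for any function $f$,
\begin{align*}
(p^r)^*\tilde\eta\,p^r
&=\bigl(A-\tfrac{\mathrm i}{2}\Delta r\bigr)\tilde\eta\bigl(A+\tfrac{\mathrm i}{2}\Delta r\bigr)
=A\tilde\eta A+\tfrac{\mathrm i}{2}[A,\tilde\eta\,\Delta r]+\tfrac14\tilde\eta(\Delta r)^2\\
&=A\tilde\eta A+\tfrac12(\nabla^r\tilde\eta)(\Delta r)+\tfrac12\tilde\eta\,\nabla^r\Delta r+\tfrac14\tilde\eta(\Delta r)^2.
\end{align*}

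Collecting terms, $H=H_0+V=\tfrac12A\tilde\eta A+\tfrac12L+\bigl(V+\tfrac18\tilde\eta[(\Delta r)^2+2\nabla^r\Delta r]\bigr)+\tfrac14(\nabla^r\tilde\eta)(\Delta r)$, and the middle bracket is exactly $q=q_1+q_2$ by \eqref{eq:14.12.10.22.48} and Condition~\ref{cond:10.6.1.16.24}; this gives the assertion with $q_4=q_2+\tfrac14(\nabla^r\tilde\eta)(\Delta r)$. The only point requiring genuine care, rather than routine bookkeeping, is that the commutator produces the term $\nabla^r\Delta r$, a once-contracted third derivative of $r$ not controlled by Condition~\ref{cond:12.6.2.21.13a} alone: one uses that $\tilde\eta\,\nabla^r\Delta r$ enters only inside $q$, which is bounded by hypothesis, while $(\nabla^r\tilde\eta)(\Delta r)$ is bounded because $\nabla^r\tilde\eta$ is controlled by $\nabla|\mathrm dr|^2$ and $\chi'$ via \eqref{eq:13.9.28.13.28} and $\Delta r=\operatorname{tr}\nabla^2r$ is bounded. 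Hence $q_4\in L^\infty(M)$ and the decomposition is a bona fide identity of closed quadratic forms; verifying this internal consistency is the substance of the argument.
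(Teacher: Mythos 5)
Your proof is correct and is essentially the inevitable argument: split the metric as $g^{ij}=\ell^{ij}+\tilde\eta(\nabla^ir)(\nabla^jr)$, reduce the radial block to $(p^r)^*\tilde\eta\,p^r$ by the pointwise identity $(\nabla^jr)p_j=p^r$, substitute $p^r=A+\tfrac{\mathrm i}{2}\Delta r$, $(p^r)^*=A-\tfrac{\mathrm i}{2}\Delta r$, and expand. The algebra checks out: the commutator $\tfrac{\mathrm i}{2}[A,\tilde\eta\,\Delta r]=\tfrac12(\nabla^r\tilde\eta)(\Delta r)+\tfrac12\tilde\eta\,\nabla^r\Delta r$ together with $\tfrac14\tilde\eta(\Delta r)^2$ reassembles exactly into $q-V$ plus the leftover $\tfrac14(\nabla^r\tilde\eta)(\Delta r)$ that the paper folds into $q_4$, and your regularity remarks (boundedness of $A\colon\mathcal H^1\to\mathcal H$, $0\le\ell\le g$ for $L$, and the absorption of the uncontrolled third-derivative term $\tilde\eta\,\nabla^r\Delta r$ into the a~priori bounded function $q$) are precisely what is needed to pass from $C^\infty_{\mathrm c}(M)$ to $\mathcal H^1$.

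Note that the present paper does not itself supply a proof — Lemma~\ref{lem:15.1.15.15.16aa} is quoted as \cite[(1.9)]{IS3} — so there is no internal proof to compare against; but your computation is the standard derivation and recovers the identity exactly, with the correct constants.
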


\begin{lemma}\label{lem:13.9.2.7.18}
Let $I\subseteq \mathcal I$ be a compact interval.
There exist $C>0$ such that uniformly in $z\in I\cup I_+$ or $z\in I\cup I_-$
\begin{align*}
& |a|\le C,\quad 
\bigl|\pm p^ra+a^2-2|\mathrm dr|^2(z-q_1)\bigr|
+\bigl|\ell^{\bullet i}\nabla_i a\bigr|\le Cr^{-\kappa}.
\end{align*}
\end{lemma}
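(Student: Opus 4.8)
\textbf{Proof plan for Lemma~\ref{lem:13.9.2.7.18}.}
The plan is to verify the three estimates directly from the explicit formulas \eqref{eq:13.9.5.7.2300}--\eqref{eq:13.9.5.7.23} for $b$ and $a$, exploiting that all the quantities involved ($|\mathrm dr|^2$, $q_1$, $p^rq_{11}$, their spherical derivatives) are bounded and have decaying radial derivatives by Conditions~\ref{cond:12.6.2.21.13a} and~\ref{cond:12.6.2.21.13bbb}. First I would treat the bound $|a|\le C$: on the support of $\eta_\lambda$ one has $r\ge r_\lambda/2$, where by \eqref{eq:14.6.29.23.46} the quantity $\lambda+\lambda_0-2q_1\ge 0$; combined with $z-q_1=\lambda-q_1\pm\mathrm i\Gamma$ and $|\mathrm dr|\le C$ (from $|\nabla^kr|\le C$) this gives $|b|\le C$ uniformly in $z\in I\cup I_\pm$, and the correction term $\tfrac14\eta_\lambda(p^rq_{11})/(z-q_1)$ is bounded since $|p^rq_{11}|=|\nabla^rq_{11}|$ is bounded and $|z-q_1|$ is bounded below on the relevant region (here one uses that $\lambda$ ranges over the compact interval $I$ and $q_1\le\lambda_0\le\lambda-\delta$ for $r$ large, plus a compactness argument for moderate $r$, or simply that $\Re\sqrt{2(z-q_1)}>0$ keeps us away from the branch cut).

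Next I would establish the spherical-derivative bound $|\ell^{\bullet i}\nabla_ia|\le Cr^{-\kappa}$. Writing $\nabla'=\ell^{\bullet i}\nabla_i$, apply $\nabla'$ to $a=\eta_\lambda|\mathrm dr|\sqrt{2(z-q_1)}\pm\tfrac14\eta_\lambda(p^rq_{11})/(z-q_1)$ via the product/chain rule. The factor $\eta_\lambda$ is a function of $r$ alone so $\nabla'\eta_\lambda=0$; the term $\nabla'|\mathrm dr|$ is controlled by $|\nabla|\mathrm dr|^2|\le Cr^{-1-\tau/2}$; the term $\nabla'q_1=\ell^{\bullet i}\nabla_iq_1$ is controlled by the spherical bounds on $q_{11}$ and $q_{12}$ from Condition~\ref{cond:12.6.2.21.13bbb} (namely $|\ell^{\bullet i}\nabla_iq_{11}|\le Cr^{-1-\rho/2}$ and $|\mathrm dq_{12}|\le Cr^{-1-\rho/2}$, which dominate $|\nabla'q_1|$ up to $q_2$-terms that are even better); and $\nabla'(p^rq_{11})$ is controlled by $|\mathrm d\nabla^rq_{11}|\le Cr^{-1-\rho/2}$. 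Collecting exponents, every contribution decays at least like $r^{-\min\{1+\tau/2,\,1+\rho/2,\,\rho\}}=r^{-\kappa}$, which is the claimed rate.

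Finally, for the Riccati defect $\pm p^ra+a^2-2|\mathrm dr|^2(z-q_1)$, I would first compute it with $b$ in place of $a$. Since $b^2=\eta_\lambda^2|\mathrm dr|^2\cdot 2(z-q_1)$, on the set $\{\eta_\lambda=1\}$ one has $b^2-2|\mathrm dr|^2(z-q_1)=0$ exactly, and the full defect reduces to $\pm p^rb$ plus terms supported where $\d\eta_\lambda\ne 0$, i.e.\ on a fixed compact $r$-region where everything is bounded (so those terms are $O(r^{-N})$ for any $N$). The radial derivative $p^rb=-\mathrm i\nabla^rb$ hits $|\mathrm dr|$ (giving $\nabla^r|\mathrm dr|^2=O(r^{-1-\tau/2})$ by \eqref{eq:14.12.27.3.22} and Condition~\ref{cond:12.6.2.21.13a}) and $q_1$ (giving $\nabla^rq_1=\nabla^rq_{11}+\nabla^rq_{12}=O(r^{-(1+\rho/2)/2})+\dots$); one should check the weaker exponent $(1+\rho/2)/2$ here still beats $\kappa$ after one more differentiation, or rather use that $\nabla^rq_{11}$ appears inside $p^rb$ which is itself $O(r^{-(1+\rho/2)/2})$, and this is where the correction term $\pm\tfrac14\eta_\lambda(p^rq_{11})/(z-q_1)$ in the definition of $a$ is designed to cancel the leading piece: substituting $a=b+e$ with $e=\pm\tfrac14\eta_\lambda(p^rq_{11})/(z-q_1)$, the cross term $\pm p^re+2be$ in $\pm p^ra+a^2$ cancels $\pm p^rb$'s $q_{11}$-contribution up to $O(r^{-\kappa})$, as already recorded in the discussion around \eqref{eq:15.3.11.19.35}. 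The remaining error terms ($e^2$, $p^re$, the $|\mathrm dr|^2$-derivative piece, the $q_{12},q_2$-pieces) are each $O(r^{-\kappa})$ by direct inspection of the exponents. The main obstacle is bookkeeping: keeping track of which of the many derivative bounds in Conditions~\ref{cond:12.6.2.21.13a} and~\ref{cond:12.6.2.21.13bbb} controls each term and checking that the worst exponent is exactly $\kappa$; there is no conceptual difficulty, only the need to organize the cancellation in the correction term cleanly.
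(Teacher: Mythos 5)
Your proof is correct in substance, and it is worth noting that the paper does not actually prove this lemma: it is recalled verbatim from the companion paper, cited as \cite[Lemma~5.1]{IS3}, so there is no in-paper argument to compare against. Your direct verification from the formulas \eqref{eq:13.9.5.7.2300}--\eqref{eq:13.9.5.7.23} is the natural one, and you correctly identify the one non-trivial point: the bound $|\nabla^r q_{11}|\le Cr^{-(1+\rho/2)/2}$ is \emph{weaker} than $r^{-\kappa}$, so the $q_{11}$-contribution to $\pm p^r b$ is not small by itself and must be cancelled. Writing $a=b+e$ with $e=\pm\tfrac14\eta_\lambda(p^rq_{11})/(z-q_1)$, that cancellation is done exactly (on $\{\eta_\lambda=1\}$) by the cross term $2be$ alone --- the term $\pm p^r e$ is itself $O(r^{-\kappa})$ by $|\mathrm d\nabla^r q_{11}|\le Cr^{-1-\rho/2}$ and does not participate in the cancellation, so your phrase ``the cross term $\pm p^re+2be$ cancels\dots'' is a harmless but slightly misleading grouping. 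Likewise $e^2=O(r^{-1-\rho/2})=O(r^{-\kappa})$, $\nabla^rq_{12}=O(r^{-1-\rho/2})$, and the $|\mathrm dr|^2$-piece is $O(r^{-1-\tau/2})$, giving the claimed rate $\kappa=\min\{1+\tau/2,1+\rho/2,\rho\}$.

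Two small inaccuracies you should fix before writing this up. First, the parenthetical ``which dominate $|\nabla'q_1|$ up to $q_2$-terms that are even better'' is off: $q_1=q_{11}+q_{12}$ and $q_2$ simply does not enter $a$ or $b$, so there are no $q_2$-terms to discard in $\nabla'q_1$; the conditions on $q_{21},q_{22}$ (including $q_{21}\nabla^r q_{11}\le Cr^{-1-\rho}$) are used elsewhere in the paper, not here. Second, for the $|a|\le C$ bound, the key point you mention somewhat obliquely --- that $\Re(z-q_1)\ge(\lambda-\lambda_0)/2\ge\delta>0$ on $\operatorname{supp}\eta_\lambda$ by \eqref{eq:14.6.29.23.46}, with $\delta$ uniform over the compact interval $I$ --- deserves to be stated cleanly, since it is precisely this lower bound that makes both $\sqrt{2(z-q_1)}$ and the denominator $(z-q_1)$ in $e$ harmless; no separate ``compactness argument for moderate $r$'' is needed once the cutoff $\eta_\lambda$ is in place, because $a$ vanishes identically where $r<r_\lambda/2$.
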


We may consider Lemma \ref{lem:15.1.15.15.16aa} as a decomposition  of $H$ into a sum of radial and
spherical components (see the discussion at the end of
the section). In the next 
section we shall use similar decompositions:

\begin{lemma}\label{lem:15.1.15.15.16ff}
Let $I\subseteq \mathcal I$  be a compact interval.
Then 
 as a quadratic form on $\bar \chi _n\mathcal H^1\subseteq \mathcal H^1$  for  any large $ n$
 and uniformly in $z=\lambda \pm
\i \Gamma \in I\cup I_\pm $
\begin{subequations}
\begin{align}\label{eq:decH}
H-z
&=
\tfrac 12(A\pm a)\tilde\eta(A\mp a)+\tfrac12 L +O(r^{-\kappa}),\\
H-z &=
\tfrac 12 b^{1/2} (\tilde A\pm \tilde b)b^{-1/2}(A\mp a)+\tfrac12 L +O(r^{-\kappa})(A\mp a)+O(r^{-\kappa}),\label{eq:decH2}\\
H-z &=
\tfrac 12\tfrac a {\sqrt b}  (\tilde A\pm \tilde b)\tfrac  {\sqrt b} a(A\mp a)+\tfrac12 L +O(r^{-\kappa})(A\mp a)+O(r^{-\kappa}).\label{eq:decH3}
 \end{align}  
\end{subequations}
\end{lemma}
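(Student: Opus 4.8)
The starting point is the decomposition of Lemma~\ref{lem:15.1.15.15.16aa},
\begin{align*}
H=\tfrac12 A\tilde\eta A+\tfrac12 L+q_1+q_4,\qquad q_4=q_2+\tfrac14(\nabla^r\tilde\eta)(\Delta r),
\end{align*}
which, by Condition~\ref{cond:10.6.1.16.24} and the bounds \eqref{eq:13.9.28.13.28}, gives $q_4=O(r^{-\kappa})$ (and $\kappa\le 1+\rho/2$ covers the $q_4$ term while $\kappa\le\rho$ covers $q_2$). So modulo $O(r^{-\kappa})$ we need to absorb $q_1-z$ into a ``perfect square'' built from the radial operator $A$ and the phase $a$. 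First I would compute
\begin{align*}
\tfrac12(A\pm a)\tilde\eta(A\mp a)=\tfrac12 A\tilde\eta A\mp\tfrac12 A\tilde\eta a\pm\tfrac12 a\tilde\eta A-\tfrac12 a\tilde\eta a,
\end{align*}
and observe that the two cross terms combine, using $p^r=-\mathrm i\nabla^r$ and $A=\operatorname{Re}p^r$, into $\mp\tfrac12[\tilde\eta^{1/2}A\tilde\eta^{1/2},a]$-type commutators plus $-\tfrac12\tilde\eta a^2$; the commutator of $A$ with the multiplication operator $a$ produces (up to lower-order terms coming from $\nabla\tilde\eta$, which are $O(r^{-\kappa})$ by \eqref{eq:13.9.28.13.28}) precisely $\mp\tfrac{\mathrm i}{2}\tilde\eta(\nabla^r a)=\mp\tfrac12\tilde\eta\, p^r a$. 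Collecting, the right-hand side of \eqref{eq:decH} equals $\tfrac12 A\tilde\eta A+\tfrac12 L-\tfrac12\tilde\eta\bigl(\pm p^r a+a^2\bigr)+O(r^{-\kappa})$. Now invoke Lemma~\ref{lem:13.9.2.7.18}: $\pm p^r a+a^2=2|\mathrm dr|^2(z-q_1)+O(r^{-\kappa})$, and since $\tilde\eta|\mathrm dr|^2=\eta=1+O(r^{-\infty})$ away from $\{r\le r_0\}$ (which is why the identity is only claimed as a form on $\bar\chi_n\mathcal H^1$ for large $n$, and why the $q_4$, $q_2$ prefactors $\tilde\eta$ cause no harm), the term becomes $-(z-q_1)+O(r^{-\kappa})$. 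Combined with $q_1$ from Lemma~\ref{lem:15.1.15.15.16aa} this yields $H-z$, proving \eqref{eq:decH}. Throughout one keeps track that $a$ itself is bounded (Lemma~\ref{lem:13.9.2.7.18}), so all the ``$O(r^{-\kappa})$'' symbols denote genuine bounded operators with the stated decay, understood as forms on $\bar\chi_n\mathcal H^1$.

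For \eqref{eq:decH2} the idea is to rewrite the radial square $\tfrac12(A\pm a)\tilde\eta(A\mp a)$ by inserting the conjugation with $b^{1/2}$, using the relation between $A$, $\tilde A$ and $\tilde b$. Recall $\tilde A=\operatorname{Re}(-\mathrm i\nabla_{\tilde\omega})$ with $\tilde\omega=\tilde\eta\omega$, so formally $b^{1/2}(\tilde A\pm\tilde b)b^{-1/2}$ differs from $\tilde\eta A\pm\tilde b$ by a commutator $[\,\cdot\,,b^{1/2}]$ and derivative-of-$\tilde\eta$ terms; the key point is that $\tilde A\mp\tilde b$ applied after factoring is designed so that $b^{1/2}(\tilde A\pm\tilde b)b^{-1/2}(A\mp a)$ reproduces $(A\pm a)\tilde\eta(A\mp a)$ up to terms of the form $O(r^{-\kappa})(A\mp a)+O(r^{-\kappa})$. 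Concretely I would start from \eqref{eq:decH} and write $(A\pm a)\tilde\eta=\tilde\eta(A\pm a)+[A,\tilde\eta]$, then replace $\tilde\eta(A\pm a)$ — note $\tilde\eta b=\tilde b$ and $a=b+O(r^{-\kappa})$ by \eqref{eq:13.9.5.7.23} together with the $q_{11}$-bound from Condition~\ref{cond:12.6.2.21.13bbb} — by $b^{1/2}\cdot b^{-1/2}\tilde\eta a\cdot$ and commute $b^{1/2}$ through, the error being $[b^{1/2},\,\cdot\,]\sim(\nabla^r b^{1/2})=O(r^{-1})=O(r^{-\kappa})$ since $\kappa\le 1$. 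The replacement $\tilde\eta a\leftrightarrow b^{1/2}\cdot b^{-1/2}a$ and $a\leftrightarrow b$ on the outer factor all cost only $O(r^{-\kappa})$, and because the remaining rightmost factor is always $(A\mp a)$ (which is itself $O(r^{-\kappa})$-close to $(A\mp b)$), any such error can be written in the advertised form $O(r^{-\kappa})(A\mp a)+O(r^{-\kappa})$. Finally \eqref{eq:decH3} is obtained from \eqref{eq:decH2} by the further symmetric replacement $b^{1/2}(\tilde A\pm\tilde b)b^{-1/2}\leftrightarrow \tfrac{a}{\sqrt b}(\tilde A\pm\tilde b)\tfrac{\sqrt b}{a}$, valid since $a/b=1+O(r^{-\kappa})$ by \eqref{eq:13.9.5.7.23} and Lemma~\ref{lem:13.9.2.7.18}, with the difference again absorbed into $O(r^{-\kappa})(A\mp a)+O(r^{-\kappa})$ after commuting the scalar factors past $\tilde A\pm\tilde b$ (the commutator contributing $\nabla^r(\log(a/b))=O(r^{-\kappa})$).

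\textbf{Main obstacle.} The delicate point is not any single commutator but the bookkeeping of \emph{where} the factor $(A\mp a)$ sits and the fact that $A$ is only a form on $\mathcal H^1$, not a bounded operator; every time one commutes $A$ (or $\tilde A$) past a multiplication operator one must check that the resulting derivative-of-coefficient term genuinely has the $r^{-\kappa}$ decay — this is exactly what the hypotheses \eqref{eq:13.9.28.13.28}, the $\rho$-bounds of Condition~\ref{cond:12.6.2.21.13bbb}, and the definition $\kappa=\min\{1+\tau/2,1+\rho/2,\rho\}$ are engineered to supply (in particular $\kappa\le 1$, so first derivatives of $\tilde\eta$, $b$, $|\mathrm dr|^2$, which are $O(r^{-1})$ or better, are harmless). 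One must also be careful that products like $O(r^{-\kappa})\cdot(A\mp a)$ are admissible as form perturbations on $\bar\chi_n\mathcal H^1$ — this is why the statement restricts to $\bar\chi_n\mathcal H^1$ for large $n$, where $\tilde\eta\equiv|\mathrm dr|^{-2}$ and all the Riccati-type identities of Lemma~\ref{lem:13.9.2.7.18} are in force. The rest is a routine, if lengthy, rearrangement.
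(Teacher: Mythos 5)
Your derivation of \eqref{eq:decH} is sound and follows the paper's route: expand $\tfrac12(A\pm a)\tilde\eta(A\mp a)$, subtract it from the decomposition of Lemma~\ref{lem:15.1.15.15.16aa}, and apply Lemma~\ref{lem:13.9.2.7.18} to the Riccati-type remainder. So far so good.

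For \eqref{eq:decH2}--\eqref{eq:decH3} there is a genuine gap, concentrated in three related false claims. First, $\kappa=\min\{1+\tau/2,1+\rho/2,\rho\}$ is \emph{not} bounded by $1$ in general; for instance, in Example~\ref{ex:countex} one has $\sigma=\tau=1$, $\rho=2$, hence $\kappa=3/2>1$. So an $O(r^{-1})$ error is not automatically $O(r^{-\kappa})$. Second, $a-b=\pm\tfrac14\eta_\lambda(p^rq_{11})/(z-q_1)$ together with the bound $|\nabla^rq_{11}|\le Cr^{-(1+\rho/2)/2}$ gives only $a-b=O(r^{-(1+\rho/2)/2})$, and $(1+\rho/2)/2$ can be strictly smaller than $\kappa$ (take $\rho=1$, $\tau$ large: then $\kappa=1$ while $(1+\rho/2)/2=3/4$). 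Third, and for the same reason, $\nabla^r\ln b$ is only $O(r^{-(1+\rho/2)/2})$ because $\nabla^r b$ contains $b\nabla^rq_{11}/(z-q_1)$; it is not $O(r^{-1})$.

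What actually makes the paper's identities \eqref{eq:a_com2a}--\eqref{eq:a_com3a} work is a cancellation that your proposal overlooks. Writing out $b^{1/2}(\tilde A\pm\tilde b)b^{-1/2}=\tilde A+\tfrac{\i}{2}\tilde\eta\,\nabla^r\ln b\pm\tilde b$ and comparing with $(A\pm a)\tilde\eta=\tilde A-\tfrac{\i}{2}(\nabla^r\tilde\eta)\pm\tilde a$, the term $\pm(\tilde a-\tilde b)=-\tfrac{\i}{4}\tilde\eta\eta_\lambda\nabla^rq_{11}/(z-q_1)$ cancels the $q_{11}$-part of $-\tfrac{\i}{2}\tilde\eta\nabla^r\ln b$ \emph{exactly}; the residual involves only $\nabla^rq_{12}$, $\nabla^r|\d r|^2$ and compactly supported derivatives of cutoffs, all of which are $O(r^{-\min\{1+\tau/2,1+\rho/2\}})=O(r^{-\kappa})$. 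This is precisely the point of the second term in the definition \eqref{eq:13.9.5.7.23} of the phase $a$, and it is also the mechanism behind Lemma~\ref{lem:13.9.2.7.18}: neither $\pm p^ra$ nor $a^2-2|\d r|^2(z-q_1)$ is individually $O(r^{-\kappa})$, only their sum is. Without exhibiting this cancellation your commutator bookkeeping cannot close, because each individual commutator $[\,\cdot\,,b^{1/2}]$ or replacement $a\leftrightarrow b$ is too large by itself. The fix is to not estimate $a-b$ and $\nabla^r\ln b$ separately, but to compare $(A\pm a)\tilde\eta$ with $b^{1/2}(\tilde A\pm\tilde b)b^{-1/2}$ (respectively $\tfrac a{\sqrt b}(\tilde A\pm\tilde b)\tfrac{\sqrt b}a$) directly and use the splitting $q_1=q_{11}+q_{12}$ of Condition~\ref{cond:12.6.2.21.13bbb} to realize the cancellation, which is what the paper's identities \eqref{eq:AtildeA}--\eqref{eq:a_com3a} encode.
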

\begin{proof}
Using Lemma~\ref{lem:15.1.15.15.16aa} we can write 
\begin{align*}
\begin{split}
H-z
&=
\tfrac 12(A\pm a)\tilde\eta(A\mp a)
\pm \tfrac12(p^r\tilde\eta a)
+\tfrac12\tilde\eta a^2
\\&\phantom{{}={}}
+\tfrac12 L 
+q_1+q_2
+\tfrac14(\nabla^r\tilde\eta)(\Delta r)
-z.
\end{split}
\end{align*}
Hence the first identity  \eqref{eq:decH} is
obtained  applying Lemma~\ref{lem:13.9.2.7.18} 
to  the remainder written 
\begin{align*}
&\tfrac 12\tilde\eta\bigl[
\pm (p^ra)+a^2
-2|\mathrm dr|^2(z-q_1)
\bigr]\\
&\phantom{{}={}}-(1-\eta)(z-q_1)
+q_{2}
+\tfrac14(\nabla^r\tilde\eta)(\Delta r\mp 2\mathrm ia)=O(r^{-\kappa}).
\end{align*} This is valid with or without the factor $\bar
\chi _n$. However   for \eqref{eq:decH2} and \eqref{eq:decH3} we need
this  factor to avoid dividing by zero.
 We use  \eqref{eq:decH} and the 
  identities 
  \begin{subequations}
  \begin{align}\label{eq:AtildeA}
  A\tilde\eta&= \tilde A -\tfrac \i 2 (\nabla^r \tilde\eta),\\
(A\pm
    a)\tilde\eta&= b^{1/2} (\tilde A\pm \tilde b)b^{-1/2}
 +O(r^{-\kappa})\label{eq:a_com2a},\\
(A\pm
    a)\tilde\eta&=a b^{-1/2} (\tilde A\pm \tilde b)a^{-1}b^{1/2}
 +O(r^{-\kappa})\label{eq:a_com3a}.
  \end{align}
\end{subequations}     
\end{proof}

The identities  
\begin{align}\label{eq:a_com1a}( A\mp 
    b)b^{1/2}=b^{1/2}( A\mp  b -\tfrac \i
    2\nabla^r \ln b)=b^{1/2}( A\mp 
    a+O(r^{-\kappa}))
\end{align} would provide more symmetric versions 
\eqref{eq:decH2} and \eqref{eq:decH3}, however these are not useful under our conditions.

We note the natural identification in spherical coordinates,
cf. \eqref{eq:co_area}, 
\begin{align*}
L^2(E)\cong L^2([r_0,\infty)_r;\mathcal G_r),\quad
 \langle\check \phi,\phi\rangle_{L^2(E)}
 =  \int_{r_0}^\infty\langle\check\phi,\phi\rangle_{\mathcal G_r}\,\mathrm dr.
  \end{align*}
  Recalling \eqref{eq:13.9.23.5.54}, i.e. 
\begin{align*}
L=p_i^*\ell^{ij}p_j,\quad 
\ell=g-\tilde\eta\,\mathrm dr\otimes \mathrm dr\in \Gamma(T^{2}),
\end{align*} we can  write correspondingly (in the form sense)
\begin{align*}
  L\cong 
\int^\infty_{r_0}
\oplus L_r \,\d r.
\end{align*} Note here  the orthogonal splittings $g=\diag(|\d
r|^{-2},g')$ and $\ell=\diag(0,g')$, where $g'=g_r$ is the induced metric on $S_r\subseteq
M$. Indeed explicitly 
\begin{align*}
\inp{\check\phi,L_r\phi}_{\mathcal G_r}
=\int_{S_r} \overline{(p_i\check\phi)}g_r^{ij}(p_j\phi)\,\mathrm d\tilde{\mathcal A}_r=\langle p'\check\phi,p'\phi\rangle_{\mathcal G_r},
\end{align*} where  $\i p'$ is to the covariant derivative on
$S_r$. With Condition~\ref{cond:altG}  we can at this point 
use  local coordinates of $S$ to define and do the integral, in any
case  clearly the
radial derivative $\partial_r$ does not enter.

We may consider $L_r$ as an  operator, more precisely as the operator 
defined by   the
Friedrichs extension from $C_\c^\infty (S_r)\subseteq \mathcal
G_r$ of the expression
\begin{align}\label{eq:lsubr}
  L_r= -|\d r| (\det g_r)^{-1/2}\partial_i |\d r|^{-1}(\det
  g_r)^{1/2}g_r^{ij} \partial_j=-\Delta_r+\tfrac 12 (\partial_i \ln |\d r|^2)\ell ^{ij}\partial_j,
\end{align} where \begin{align*}
  \Delta_r=\Delta ^{\rm LB}_r=(\det g_r)^{-1/2}\partial_i (\det g_r)^{1/2}g_r^{ij} \partial_j
\end{align*} denotes the Laplace--Beltrami
operator on $S_r$.

By an approximation argument it follows that  for  any $\phi \in \mathcal H^1$ the
restriction $\phi_{|S_r}\in
\vD (L_r^{1/2})=\vD (p')$ for almost all $r\geq r_0$, in fact  for  all $r\geq r_0$
\begin{align*}
\int_{r_0}^{r}
\|p'\phi\|^2_{\mathcal G_s}\,\mathrm ds=\int_{B_{r}\setminus B_{r_0}}
\overline{(p_i \phi)}\ell^{ij}(p_j\phi) (\det g)^{1/2}\,\d x\leq \|\phi\|^2_{{\vH}^1}.
\end{align*}

\section{Distorted Fourier transform and stationary scattering
  theory}\label{sec:Fourier transform and stationary scattering
  theory}

In this section we impose Condition \ref{cond:12.6.2.21.13c}.  Recall
from Section \ref{subsec:Limiting Hilbert space} that 
$\mathrm e^{\mathrm it\tilde A}$, $t\leq 0$, naturally induces an
isometry
\begin{align*}
\mathrm e^{\mathrm it\tilde A}\colon
\mathcal G_{r+t}\to \mathcal G_{r};\quad 
\mathcal G_r=L^2(S_r,\mathrm d\tilde{\mathcal A}_r),\, r+t\geq r_0.
\end{align*} For $t\geq 0$ this operator is in general only a partial
isometry. 
Let us modify the exponent
and  consider the semigroups  $\mathrm{e}^{\mathrm{i}t(\tilde A\mp \tilde b)}$
generated by the operators $\tilde A\mp \tilde b$,
respectively. 
For any $\lambda>\lambda_0$ we have, cf.   (\ref{eq:13.9.5.7.2300}), 
\begin{align}
\tilde b=\tilde b_\lambda=\eta_\lambda|\mathrm dr|^{-1}\sqrt{2 (\lambda-q_1)}
=\tilde\eta b_\lambda.
\label{eq:13.9.5.7.23cc}
\end{align}
With the expressions (\ref{eq:12.9.24.22.46cc})
it is easy to verify that 
\begin{align}
\begin{split}
(\mathrm{e}^{\mathrm{i}t(\tilde A\mp \tilde b)}\phi)(x)
&=\exp \left(\mp \mathrm i\int_0^t\tilde b(\tilde y(s,x))\,\mathrm{d}s\right)
(\mathrm e^{\mathrm it\tilde A}\phi)(x).
\end{split}\label{eq:12.6.7.1.10cc}
\end{align} A similar formula holds for
$(\mathrm{e}^{\mathrm{i}t(\tilde A^{\rm ex}\mp \tilde b^{\rm
    ex})}$. It follows from \eqref{eq:12.6.7.1.10cc} that  the operators
$\mathrm{e}^{\mathrm{i}(r-r')(\tilde A\mp \tilde b)}$, $r'\geq r\geq
r_0$, induce isometries $\mathcal G_{r}\to \mathcal G_{r'}$. 
\begin{align}\label{eq:defResb}
  \xi(r):= \exp \biggl(\int_{r_0}^{r}\parbb{\mp \i\tilde
    b^{\rm ex}+ \tfrac
    12\mathop{\mathrm{div}}\tilde\omega^{\rm ex}}(s,{}\cdot{})\,\mathrm{d}s\biggr)[\sqrt b R(\lambda\pm\mathrm
  i0)\psi](r,\cdot)\in \vG,
\end{align} 
 and that   the distorted Fourier transform is
``given'' by
\begin{align}\label{eq:disFb}
  F^\pm(\lambda)\psi =\vGlim_{r\to \infty} \xi(r).
\end{align} 
The first problem of  justifying this is to show that for each fixed $r$
indeed  $\xi(r)\in \vG$. This is in fact doable for $\psi\in B$.  More
generally under Condition 
\ref{cond:12.6.2.21.13c} the following four results are  valid  with  $\phi=R(\lambda\pm
\mathrm i 0)\psi$  for any  $\lambda>\lambda_0$  and   $\psi\in B$.
\begin{lemma}\label{lem:xi}
  For all $\psi\in B$ and $r\geq r_0$ the quantity $\xi(r)\in \vG$.
\end{lemma}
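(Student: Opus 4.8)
The claim is that $\xi(r)\in\vG$ for every fixed $r\ge r_0$, where $\xi(r)$ is given by \eqref{eq:defResb} with $\phi=R(\lambda\pm\mathrm i0)\psi$, $\psi\in B$. Recall $\vG=L^2(S,\d\tilde\vA)$ and that, by construction, the restriction map $\vG_r\ni\zeta\mapsto\zeta^{\rm ex}\in\vG$ of \eqref{eq:trans2} is isometric, so it suffices to show $[\sqrt b\,\phi](r,\cdot)\in\vG_r=L^2(S_r,\d\tilde\vA_r)$ after multiplication by the (bounded, real) modulus weight $\exp(\int_{r_0}^r\tfrac12\Div\tilde\omega^{\rm ex}\,\d s)$ and the unimodular phase $\exp(\mp\mathrm i\int_{r_0}^r\tilde b^{\rm ex}\,\d s)$; the phase has modulus one and so drops out of the $L^2$-norm, and the divergence weight is bounded (uniformly in $\sigma$) on any fixed $r$-slice since $\Div\tilde\omega^{\rm ex}$ is controlled by derivatives of $r^{\rm ex}$ of order $\le 2$, which are bounded by Condition~\ref{cond:12.6.2.21.13a}. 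Hence the whole matter reduces to the single estimate
\begin{align*}
\int_{S_r}|b|\,|\phi|^2\,\d\tilde\vA_r<\infty\quad\text{for each fixed }r\ge r_0.
\end{align*}

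\textbf{Key steps.} First, note $|b|\le C$ locally uniformly in $\lambda$ by Lemma~\ref{lem:13.9.2.7.18} (or directly from \eqref{eq:13.9.5.7.2300}, since $b=\eta_\lambda|\d r|\sqrt{2(\lambda-q_1)}$ is a bounded function of $x$), so it is enough to show $\phi_{|S_r}\in\vG_r$, i.e.\ $\phi_{|S_r}\in L^2(S_r,\d\tilde\vA_r)$. Second, invoke Corollary~\ref{cor:12.7.2.7.9b}: for $\psi\in B$ the limiting resolvent satisfies $\phi=R(\lambda\pm\mathrm i0)\psi\in\vN\cap B^*$, and moreover $p^\alpha R(\lambda\pm\mathrm i0)\in\vB(B,B^*)$ for $\alpha=0,1$, so in particular $\phi\in\vH^1_{\mathrm{loc}}(M)$ (indeed $\chi_n\phi\in\vH^1$ for every $n$, since $\phi\in\vN$). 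Third, use the trace/slicing fact recorded at the end of Section~\ref{subsec:Decomposition of Hamiltonian}: for any $\phi\in\vH^1$ one has $\phi_{|S_s}\in\vD(L_s^{1/2})$ for all $s\ge r_0$ with $\int_{r_0}^{r}\|p'\phi\|^2_{\vG_s}\,\d s\le\|\phi\|^2_{\vH^1}$. Applying this to $\chi_n\phi$ (which lies in $\vH^1$) and combining with a Sobolev-type trace bound on the compact cylinder $B_{r+1}\setminus B_{r_0}$ — concretely, using the co-area formula \eqref{eq:co_area} together with the fundamental theorem of calculus in the radial variable, $\|\phi_{|S_r}\|^2_{\vG_r}\le C\big(\|\phi\|^2_{L^2(B_{r+1}\setminus B_{r_0})}+\|\partial_r\phi\|^2_{L^2(B_{r+1}\setminus B_{r_0})}\big)\le C'\|\chi_{r+2}\phi\|^2_{\vH^1}<\infty$ — gives the desired finiteness of the $\vG_r$-norm of $\phi_{|S_r}$, hence of $\xi(r)$.

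\textbf{Main obstacle.} The one genuinely delicate point is the trace inequality: passing from "$\phi\in\vH^1$ on the cylinder" to "$\phi_{|S_r}\in L^2(S_r)$ for the \emph{individual} value $r$" rather than merely for a.e.\ $r$. The slicing identity only gives $\phi_{|S_s}\in\vD(p')$ for almost every $s$; to pin down a fixed $r$ one must use the extra regularity that $\phi$ solves $(H-\lambda)\phi=\psi$ away from $B_{R_0}$ (so $\phi$ is in fact $\vH^2_{\mathrm{loc}}$ there, $\lambda$ fixed and $V$ bounded), whence $\partial_r\phi_{|S_s}$ varies continuously in $s$ in the appropriate sense and the trace at $s=r$ is well defined. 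Alternatively, and more cleanly, one can avoid elliptic regularity by picking any $r'>r$ with $\phi_{|S_{r'}}\in\vG_{r'}$ (possible since the slicing holds a.e.) and then running the estimate
\begin{align*}
\|\phi_{|S_r}\|^2_{\vG_r}\le C\Big(\|\phi_{|S_{r'}}\|^2_{\vG_{r'}}+\int_{r}^{r'}\|\partial_s\phi\|^2_{\vG_s}\,\d s+\int_r^{r'}\|\phi\|^2_{\vG_s}\,\d s\Big)
\end{align*}
backwards in $s$, all three right-hand terms being finite because $\phi\in\vH^1(B_{r'+1}\setminus B_{r_0})$ via $\chi_n\phi\in\vH^1$. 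Either route is routine once set up; no new analytic input beyond Corollary~\ref{cor:12.7.2.7.9b} and the co-area formula is needed.
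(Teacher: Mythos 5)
Your proposal takes essentially the same approach as the paper: reduce to showing $[\sqrt b\,\phi]_{|S_r}\in\vG_r$ (using that the phase is unimodular and the $\Div\tilde\omega$ weight is precisely the isometric factor $\vG_r\to\vG$), and then exploit that $\phi\in\vN$ so $\chi_n\phi\in\vH^1$ together with the fundamental theorem of calculus in the radial variable to bound the trace by $L^2$-norms over $B_{r+1}$. The paper implements this slightly more concretely, via the absolutely continuous $\vG_r$-valued map $u(r')=\e^{\i(r'-r)\tilde A}\{\xi_{r'}[\sqrt b\phi]_{|S_{r'}}\}$ and a cutoff $\xi\nearrow 1$ leading to the explicit bound \eqref{eq:bnunix}, which automatically handles the change of the measure $\d\tilde\vA_s$ between slices and pins down the trace at the fixed $r$; your second resolution of the ``delicate point'' (integrating backward from a good $r'$) is essentially the same device.
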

\begin{proof} Introduce $\xi\in C^\infty_\c(S_{r})$, $0\leq \xi\leq
  1$ and look at  the push-forward given by $\xi_{r'}=\xi(
\tilde y(r-r',{}\cdot{}))\in C^\infty_\c(S_{r'})$,  $r'\geq r\geq r_0$.

Note that the $\vG_{r}$-valued function 
\begin{align*}
  u(r'):=\mathrm e^{\mathrm i(r'-r)\tilde A}\bigl\{\xi_{r'}[\sqrt{b}\phi]_{|S_{r'}}\bigr\};\,r'\in [r,\infty),
\end{align*} is a  well-defined absolutely continuous function. In
particular by  the
fundamental theorem of calculus
\begin{align*}
  u(r)=\int^{r+1}_{r}u(s)\d s-\int^{r+1}_{r}\int^s_{r}\tfrac{ \d}{\d r'}
  u(r')\d r'\d s,
\end{align*} yielding upon computing the derivative
\begin{align*}
 \tfrac{ \d}{\d r'}
  u(r')=  \mathrm e^{\mathrm i(r'-r)\tilde A}\bigl\{\xi_{r'}[\i \tilde
  A\sqrt b\phi]_{|S_{r'}}\bigr\},
\end{align*}
 taking the norm inside and using \cs the bound
\begin{align}\label{eq:bnunix}
  \|\xi [\sqrt{b}\phi]_{|S_{r}}\|_{\vG_{r}}\leq \| 1_{B_{r+1}}\sqrt{b}\phi\|+3^{-1/2}\|
  1_{B_{r+1}}\tilde A\sqrt{b}\phi\|.
\end{align} By taking $\xi\nearrow
     1$ we obtain a concrete bound of the trace
$[\sqrt{b}\phi]_{|S_{r}}\in \vG_{r}$.
  \end{proof}

 In the above proof we only used the property that
  $\check \phi:=\sqrt{b}\phi\in \vN$ which follows from the fact that
  $\phi\in \vN$. The latter property suffices for  for
  the next result too. Note that for any such $\check \phi$ and $r\geq
  r_0$ we may  for any $R_\nu>r+1$  approximate $\chi_\nu\check \phi\in
  \vH^1$ by a sequence $(\check
   \phi_n)\subseteq C^\infty_\c(M)\subseteq \vH^1$. Then it follows from 
  \eqref{eq:bnunix} that $[\check \phi_n]_{|S_{r}}\to
  \check \phi_{|S_{r}}$ in  $\vG_{r}$ for  $n\to \infty$.

\begin{lemma}\label{lem:ac}
  The quantity $\xi(\cdot)\in \vG$ (possibly considered for an
  arbitrary $\phi\in \vN$) is an absolutely continuous
  $\vG$-valued function on $[r_0,\infty)$.
\end{lemma}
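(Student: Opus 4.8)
The plan is to show that $\xi(\cdot)$ is locally absolutely continuous by exhibiting it as an indefinite integral of a locally integrable $\vG$-valued function, exploiting the same push-forward/trace machinery from the proof of Lemma \ref{lem:xi}. First I would recall that $\check\phi:=\sqrt b\,\phi\in\vN$ and that by the remark following Lemma \ref{lem:xi} the restriction $r\mapsto\check\phi_{|S_r}\in\vG_r$ is a well-defined map with the trace bound \eqref{eq:bnunix}. I would then rewrite the definition \eqref{eq:defResb}, in the notation of the proof of Lemma \ref{lem:ac}'s companion, as
\begin{align*}
\xi(r)=\mathrm e^{\mathrm i(r-r_0)(\tilde A^{\rm ex}\mp\tilde b^{\rm ex})}\bigl[\sqrt b\,\phi\bigr]_{|S_r}
=\mathrm e^{\mathrm i(r-r_0)(\tilde A\mp\tilde b)}u(r),
\end{align*}
where $u(r):=[\check\phi]_{|S_r}$ pushed back into the fixed Hilbert space $\vG$ via the isometric embeddings $i_{r_0,r}$ of \eqref{eq:1600125} (equivalently \eqref{eq:trans2}); the point is that after this push-back, $\mathrm e^{\mathrm i(r-r_0)(\tilde A\mp\tilde b)}$ acts as a \emph{strongly continuous} family of isometries on the single space $\vG$, so absolute continuity of $\xi$ will follow from absolute continuity of $r\mapsto u(r)$ together with the Lipschitz-type continuity of the (contractive/isometric) propagator in $r$.

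The core step is therefore to prove that $u(r)=[\check\phi]_{|S_r}$, viewed in $\vG$, is absolutely continuous on every compact subinterval of $[r_0,\infty)$, with derivative given in the spherical coordinates by the radial derivative of $\check\phi$. For $\check\phi\in C^\infty_{\mathrm c}(M)$ this is the classical fact that $r\mapsto[\check\phi]_{|S_r}$ (suitably renormalised by the Jacobian factor, which is exactly what the $\tfrac12\operatorname{div}\tilde\omega$ term in \eqref{eq:defResb} encodes) is smooth, with $\tfrac{\d}{\d r}u(r)=\bigl[\nabla^r\check\phi-\tfrac12(\operatorname{div}\tilde\omega)\check\phi\bigr]_{|S_r}$ (normalised into $\vG$). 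I would then pass to general $\check\phi\in\vN$ by the approximation argument already set up: fix a compact interval $[r_0,R]$, choose $\nu$ with $R_\nu>R+1$, approximate $\chi_\nu\check\phi\in\vH^1$ by $\check\phi_n\in C^\infty_{\mathrm c}(M)$ in $\vH^1$; by \eqref{eq:bnunix} the traces converge in $\vG_r$ uniformly for $r\in[r_0,R]$, and — crucially — the computation at the end of Section \ref{subsec:Decomposition of Hamiltonian}, namely $\int_{r_0}^{R}\|p'\phi\|^2_{\vG_s}\,\d s\le\|\chi_\nu\phi\|^2_{\vH^1}$ together with the analogous bound for $\|\nabla^r(\chi_\nu\check\phi)\|_{\vH}$, shows that the derivatives $\tfrac{\d}{\d r}u_n(r)$ form a Cauchy sequence in $L^1([r_0,R];\vG)$. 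Hence $u$ is the $L^1$-limit of the indefinite integrals of these derivatives, so $u$ is absolutely continuous on $[r_0,R]$.

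Finally I would assemble the two pieces: writing $\xi(r)=\mathrm e^{\mathrm i(r-r_0)(\tilde A\mp\tilde b)}u(r)$ with $u$ absolutely continuous and the propagator strongly continuous in $r$ and contractive, one gets for $r_0\le r\le r'\le R$ the telescoping estimate
\begin{align*}
\|\xi(r')-\xi(r)\|_{\vG}\le\|u(r')-u(r)\|_{\vG}+\bigl\|\bigl(\mathrm e^{\mathrm i(r'-r)(\tilde A\mp\tilde b)}-1\bigr)u(r)\bigr\|_{\vG},
\end{align*}
and since $\tilde b$ is bounded (cf. \eqref{eq:13.9.5.7.23cc} and Lemma \ref{lem:13.9.2.7.18}) and $\tilde A$ generates the translation semigroup, the second term is controlled in the absolutely continuous sense as well — more cleanly, $\xi$ satisfies $\xi(r')-\xi(r)=\int_r^{r'}\mathrm e^{\mathrm i(s-r_0)(\tilde A\mp\tilde b)}\bigl(u'(s)+\mathrm i(\tilde A\mp\tilde b)u(s)\bigr)\,\d s$ in the appropriate weak sense, and the integrand is in $L^1_{\mathrm{loc}}$ by the trace bounds just established. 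The main obstacle I anticipate is the bookkeeping in this last identity: one must justify differentiating through the unbounded generator $\tilde A$, which I would handle by first proving everything for smooth compactly supported $\check\phi$ (where all manipulations are legitimate classical calculus on the cylinder $(r_0,\infty)\times S$) and only then passing to the limit using the $\vH^1$-approximation and the integrated bounds, so that the unbounded operator never appears explicitly in the limiting statement.
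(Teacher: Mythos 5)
Your core mechanism---approximate $\check\phi=\sqrt{b}\phi$ by $C^\infty_\c$ functions, use the trace bound \eqref{eq:bnunix}, and obtain an $L^1_{\mathrm{loc}}$ bound on the radial derivative from the $\vH^1$ norm via the co-area formula---is the right idea and matches the paper's. However, the factorization you build it on is mis-stated. You write $\xi(r)=\e^{\i(r-r_0)(\tilde A\mp\tilde b)}u(r)$ with $u(r)$ the trace $[\check\phi]_{|S_r}$ ``pushed back'' into the fixed space $\vG$ via \eqref{eq:trans2}. But that push-back \emph{is} (up to the $\tilde b$-phase) the operator $\e^{\i(r-r_0)\tilde A^{\rm ex}}\colon\vG_r\to\vG$, so you cannot apply $\e^{\i(r-r_0)(\tilde A\mp\tilde b)}$ to the result on top of it; and for $r>r_0$ this exponential does not map $\vG=\vG^{\rm ex}_{r_0}$ into itself, so there is no ``strongly continuous family of isometries on the single space $\vG$'' acting in the way you describe. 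Once $u(r)$ is taken to be the pushed-back trace, the correct relation is $\xi(r)=W(r)u(r)$, where $W(r)$ is multiplication by the pointwise phase $\exp\parb{\mp\i\int_0^{r-r_0}\tilde b(\tilde y(s,\cdot))\,\d s}$ from \eqref{eq:12.6.7.1.10cc}: a bounded, Lipschitz family of unitaries. This removes your anticipated obstacle of ``differentiating through the unbounded generator $\tilde A$'', which only arose from the wrong factorization; after the fix your plan goes through via the product rule.

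The paper avoids this bookkeeping by a cleaner split. Fix $r_1>r_0$ and set $v(r):=\e^{\i(r-r_1)(\tilde A\mp\tilde b)}[\check\phi]_{|S_r}$ for $r\in[r_0,r_1]$. Since $r-r_1\le 0$ this is automatically an isometry $\vG_r\to\vG_{r_1}$ (no extended objects needed), so $v$ takes values in the single fixed space $\vG_{r_1}$, and $\xi(r)=\e^{\i(r_1-r_0)(\tilde A^{\rm ex}\mp\tilde b^{\rm ex})}v(r)$ with a fixed, $r$-independent isometric prefactor. One then shows $v$ is absolutely continuous with $v'(r)=\e^{\i(r-r_1)(\tilde A\mp\tilde b)}[\i(\tilde A\mp\tilde b)\check\phi]_{|S_r}$, verified first for $C^\infty_\c$ approximants and passed to the limit---exactly the approximation step you outline. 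No product rule and no question of unboundedness ever appear. Also note a small slip in your heuristic derivative of $u$: the normalizing Jacobian enters with $+\tfrac12\operatorname{div}\tilde\omega$ (cf.\ \eqref{eq:defRes}, \eqref{eq:defResbb}), not $-\tfrac12\operatorname{div}\tilde\omega$, and the radial derivative should be the normalized $\nabla_{\tilde\omega}=\tilde\eta\nabla^r$.
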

\begin{proof} We fix any $r_1>r_0$ and write for $r\in[r_0,r_1]$
\begin{align*}
  \xi(r)=\e^{\i
  (r-r_0)(\tilde A^{\rm ex }\mp\tilde
      b^{\rm ex })}\bigl [\sqrt b R(\lambda\pm\mathrm
  i0)\psi\bigr
    ]_{|S_r}=\e^{\i
  (r_1-r_0)(\tilde A^{\rm ex }\mp\tilde
      b^{\rm ex })} \e^{\i
  (r-r_1)(\tilde A\mp\tilde
      b)}\bigl [\sqrt b \phi\bigr
    ]_{|S_r}.
\end{align*} It suffices to show that the $\vG_{r_1}$-valued function
\begin{align*}
 [r_0,r_1]\ni r\to v(r) = \e^{\i
  (r-r_1)(\tilde A\mp\tilde
      b)}\bigl [\sqrt b \phi\bigr
    ]_{|S_r}
\end{align*} is absolutely continuous. Formally
\begin{align}\label{eq:fromAC}
  v'(r)=\e^{\i
  (r-r_1)(\tilde A\mp\tilde
      b)}\bigl [\i
  (\tilde A \mp \tilde b)\sqrt b \phi\bigr
    ]_{|S_r},
\end{align} i.e. $v(r)=v(r_1)-\int_{r}^{r_1} v'(s)\,\d s$ with $v'(s)$
given by this formula.
 If we replace $\sqrt{b}\phi=:\check \phi\in \vN$ by an
  approximating sequence $(\check \phi_n)\subseteq  C^\infty_\c(M)$ as
  in the remark preceding the lemma (used with  $r=r_1$) indeed
  \eqref{eq:fromAC} holds true for all $n$. Therefore
  \eqref{eq:fromAC} also holds in the
  limit $n\to \infty$.
\end{proof}

The above proof gives the following formula for the
derivative (omitting  ``ex''):
\begin{align}\label{eq:defResbb}
  \xi'(r)=\tfrac{\d}{\d r}\xi(r)= \exp \biggl( \int_{r_0}^{r}\parbb{\mp \i\tilde
    b+ \tfrac
    12\mathop{\mathrm{div}}\tilde\omega}(s,{}\cdot{})\,\mathrm{d}s\biggr)[\i
  (\tilde A \mp \tilde b)\sqrt b \phi](r,\cdot)\in \vG.
\end{align}

\begin{lemma}\label{lem:fundaM} The following limit exists and is given
  as 
  \begin{align}
    \label{eq:limF}
    \lim_{R\to\infty}-\!\!\!\!\!\!\int_R\|[\sqrt{b}\phi]_{|S_{r}}\|^2_{\vG_{r}}\,\d
    r=\pm 2\mathop{\mathrm{Im}}\langle\psi,\phi\rangle.
  \end{align}
\end{lemma}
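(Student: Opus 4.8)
The plan is to relate the averaged $\vG_r$-norm of the trace $[\sqrt b\phi]_{|S_r}$ to the resolvent identity $\mathop{\mathrm{Im}}\langle\psi,\phi\rangle = \mathop{\mathrm{Im}}\langle(H-z)\phi,\phi\rangle = \mp\Gamma\|\phi\|^2$ for $z=\lambda\pm\mathrm i\Gamma$, but since we are working with the boundary values $z=\lambda\pm\mathrm i0$ we instead extract the imaginary part from a boundary term at infinity via a commutator/Green-type computation. Concretely, I would start from the decomposition \eqref{eq:decH2} of Lemma~\ref{lem:15.1.15.15.16ff}, applied to $\phi=R(\lambda\pm\mathrm i0)\psi$ with $(H-\lambda)\phi=\psi$. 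Pairing with $\phi$ and taking imaginary parts, the spherical term $\tfrac12 L$ is formally real (symmetric), the $O(r^{-\kappa})$ and $O(r^{-\kappa})(A\mp a)$ terms are integrable against $B^*$-type bounds by Theorem~\ref{thm:12.7.2.7.9} and the radiation condition Corollary~\ref{cor:radiation-conditions} (here $\beta_c>1$ by \eqref{eq:BAScond}, which is what makes the error terms genuinely short-range and hence integrable), so the only surviving contribution is the imaginary part of $\tfrac12\langle b^{1/2}(\tilde A\pm\tilde b)b^{-1/2}(A\mp a)\phi,\phi\rangle$.

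The next step is to recognize this leading term as a radial boundary term. Using the identification $L^2(E)\cong L^2([r_0,\infty);\vG_r)$ and the fact that $\mathrm e^{\mathrm i(r-r_0)(\tilde A\mp\tilde b)}$ are isometries $\vG_r\to\vG_{r'}$, I would rewrite things in terms of the transported quantity $\xi(r)$ of \eqref{eq:defResb} and its derivative \eqref{eq:defResbb}. The point is that $b^{1/2}(\tilde A\pm\tilde b)b^{-1/2}(A\mp a)$ is, up to the normalization built into $\xi$, essentially $-\i\,\partial_r$ conjugated so that the operator $\i(\tilde A\mp\tilde b)\sqrt b$ acting on $\sqrt b\phi$ corresponds to $\xi'(r)$. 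So the imaginary part of the pairing telescopes into $\lim_{r\to\infty}\|\xi(r)\|_{\vG}^2$ minus a boundary term at $r_0$ minus integrable errors; since $\xi(r)$ and $[\sqrt b\phi]_{|S_r}$ have the same $\vG$-norm (the exponential prefactor in \eqref{eq:defResb} is unimodular times the Jacobian factor that makes $\tilde T$ an isometry), the Cesàro average $\intR\|[\sqrt b\phi]_{|S_r}\|^2_{\vG_r}\,\d r$ converges to the same limit. The sign $\pm$ and the factor $2$ come out of the sign in $\pm p^r a+\cdots$ in Lemma~\ref{lem:13.9.2.7.18} together with the factor $\tfrac12$ in front of the radial term.

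More carefully, the cleanest route is probably: for $z=\lambda\pm\mathrm i\Gamma\in I_\pm$, write $2\Gamma\|R(z)\psi\|^2=\mp\,2\mathop{\mathrm{Im}}\langle\psi,R(z)\psi\rangle$ and then use a smooth cutoff $\chi_{m,n}$ (notation \eqref{eq:11.7.11.5.14}) to localize; integrate the quadratic-form identity $\langle(H-z)\phi,\chi_{m,n}^2\phi\rangle$, commute $H$ past $\chi_{m,n}^2$ producing the radial derivative terms, take $\mathop{\mathrm{Im}}$, let first $z\to\lambda\pm\mathrm i0$ (using Corollary~\ref{cor:12.7.2.7.9b} and Corollary~\ref{cor:radiation-conditions}), then $n\to\infty$ to get $\limsup/\liminf$ of the boundary term, then $m\to\infty$ to kill lower-order pieces. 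The averaged limit $\intR$ appears naturally because the boundary term $\|[\sqrt b\phi]_{|S_r}\|^2_{\vG_r}$ need not have a pointwise-in-$r$ limit but its Cesàro mean does, exactly as in the analogous Euclidean arguments of \cite{Sa,HS1}. I expect the main obstacle to be the bookkeeping of the short-range error terms: one must verify that every term produced by commuting $H$ through the cutoff and by the $O(r^{-\kappa})$ remainders in \eqref{eq:decH2} is integrable in $r$ after pairing with $\phi$, which requires combining the $B^*$-bound $\|\phi\|_{B^*}+\|p^r\phi\|_{B^*}+\langle p_i^*h^{ij}p_j\rangle_\phi^{1/2}<\infty$ from Theorem~\ref{thm:12.7.2.7.9}, the radiation bound $\|(A\mp a)\phi\|_{B^*}\lesssim\|\psi\|_B$ (and its $r^\beta$-weighted version for $\beta$ slightly below $\beta_c>1$) from Corollary~\ref{cor:radiation-conditions}, and the strict inequality $\kappa>1$ that follows from $\min\{\sigma,\tau,\rho\}>2$-type considerations implicit in \eqref{eq:BAScond}. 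The spherical term $\tfrac12 L$ being exactly symmetric (so contributing nothing to the imaginary part) is the one clean simplification that makes the whole computation close.
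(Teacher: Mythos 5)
Your plan diverges substantially from the paper's own argument, which for this lemma is much more direct. The paper does not invoke the decomposition \eqref{eq:decH2} here; it starts from the pointwise algebraic identity
\begin{align*}
\|\sqrt b\,\phi\|^2_{\vG_r}=\Re\langle\phi,(b-A)\phi\rangle_{\vG_r}+\Im\langle\phi,\nabla_{\omega}\phi\rangle_{\vG_r},
\end{align*}
which holds because the $\tfrac{\i}{2}(\mathop{\mathrm{div}}\omega)$-part of $A$ contributes nothing to the real part. The first term averaged, $\intR\Re\langle\phi,(b-A)\phi\rangle_{\vG_r}\,\d r$, vanishes as $R\to\infty$ by Cauchy--Schwarz, $\phi\in B^*$ and $(A\mp a)\phi\in B_0^*$ (Corollary~\ref{cor:13.9.9.8.23} with $\beta=0$). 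The second term averaged is computed exactly by a Green's identity, obtained by testing $(H-\lambda)\phi=\psi$ against a smoothed characteristic function of $B_{R+s}$ and integrating $s$ over $[0,R]$, yielding $\intR\Im\langle\phi,\nabla_{\omega}\phi\rangle_{\vG_r}\,\d r\to 2\Im\langle\psi,\phi\rangle$. That is the whole proof.

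Your first paragraph, by telescoping $\|\xi(\cdot)\|^2_{\vG}$ via a decomposition of $H-\lambda$ into a boundary term at infinity, is in effect reproving the later and strictly stronger statement $\lim_{r\to\infty}\|\xi(r)\|^2_{\vG}=2\Im\langle\psi,\phi\rangle$, which is Lemma~\ref{lem:normBasi}. That result requires $\psi\in\vH_{1+}$ (not merely $\psi\in B$) and the extra technical bound \eqref{eq:13.9.23.16.41b2}; for $\psi\in B$ no pointwise limit need exist, which is exactly why Lemma~\ref{lem:fundaM} is phrased with a Ces\`aro average. Your alternative route with the cutoff $\chi_{m,n}$ does, after taking imaginary parts, produce the radial boundary flux $\Im\langle\phi,\nabla_{\omega}\phi\rangle_{\vG_r}$, and this is indeed the Green's identity the paper uses; but the flux is not yet $\|\sqrt b\,\phi\|^2_{\vG_r}$. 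The missing ingredient is the algebraic split displayed above together with the $B_0^*$-radiation-condition estimate that kills the $(b-A)$-piece. Finally, the parenthetical ``$\beta_c>1$ by \eqref{eq:BAScond}'' is incorrect: \eqref{eq:BAScond} gives only $2\beta_c>1$, i.e.\ $\beta_c>1/2$; this error happens to be harmless for the present lemma since what is actually used is $B_0^*$-membership of $(A\mp a)\phi$, not a quantitative rate.
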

\begin{proof} Consider for convenience only the upper sign.
  \begin{align*}
    \int_R^{2R}\|[\sqrt{b}\phi]_{|S_{r}}\|^2_{\vG_{r}}\,\d
    r=\int_R^{2R} \parb{
\mathop{\mathrm{Re}}\langle \phi,(b-A) \phi\rangle_{\mathcal G_r}
+\mathop{\mathrm{Im}}\langle \phi,\nabla_{\omega}\phi\rangle_{\mathcal G_r}}\,\d r.
\end{align*} By Corollaries \ref{cor:12.7.2.7.9b} and 
\ref{cor:radiation-conditions}
\begin{align*}
  -\!\!\!\!\!\!\int_R 
\mathop{\mathrm{Re}}\langle \phi,(b-A) \phi\rangle_{\mathcal
  G_r}\,\d r \to 0.
\end{align*} Next  we introduce for any  $R\ge r_0$ 
 a smooth approximation of the characteristic function
of  the ball $B_R$  of the form  (employing \eqref{eq:14.1.7.23.24})
\begin{align*}
  \chi_{\epsilon,s}(r)=\chi((r-R-s)/\epsilon);\quad  \epsilon>0,\,
  s\in [0,R].
\end{align*} 
We compute a Green's identity
\begin{align}
\begin{split}
&\int_R^{2R}\mathop{\mathrm{Im}}\inp{\phi,\nabla_{\omega}\phi}_{\vG_r} \,
\d r\\
&=\lim _{\epsilon\to 0 }\mathop{\mathrm{Im}}
\int_{r_0}^\infty \parbb{\chi(r-2R)/\epsilon)-\chi((r-R)/\epsilon}\inp{\phi,  \nabla_{\omega}\phi}_{\vG_r}\,\d r\\
&=-\lim _{\epsilon\to 0 }\mathop{\mathrm{Im}}
\int_{r_0}^\infty \parbb{\int_0^R\chi'_{\epsilon,s}(r)\,\d s} \inp{\phi,  \nabla_{\omega}\phi}_{\vG_r}\,\d r\\
&=-\lim _{\epsilon\to 0 }\mathop{\mathrm{Im}}
\int_0^R \int_M (\partial_i\chi_{\epsilon,s}) \bar{\phi}  g^{ij}(\partial_j\phi)(\det g)^{1/2}\,\d x\d s\\
&=-\lim _{\epsilon\to 0 }\mathop{\mathrm{Im}}
\int_0^R \int_M (\partial_i\chi_{\epsilon,s}\bar{\phi})
g^{ij}(\partial_j\phi)(\det g)^{1/2}\,\d x \d s\\
&=-2\lim _{\epsilon\to 0 }\int_0^R \mathop{\mathrm{Im}}
\langle\chi_{\epsilon,s} \phi, (H-\lambda)\phi\rangle_{\mathcal H}\,\d
s\\
&=-2 \int_0^R\mathop{\mathrm{Im}}
\langle 1_{B_{R+s}} \phi, \psi\rangle_{\mathcal H}\,\d s.
  \end{split}\label{eq:14.4.30.7.48a}
  \end{align}  
 It follows that
\begin{align*}
  -\!\!\!\!\!\!\int_R 
\mathop{\mathrm{Im}}\langle \phi,\nabla_{\omega}\phi\rangle_{\mathcal G_r}\,\d r\to 2\mathop{\mathrm{Im}}\langle\psi,\phi\rangle.
\end{align*}
\end{proof}

We remark  that  a small modification of the computation
\eqref{eq:14.4.30.7.48a} yields the more familiar
  Green's identity
  \begin{subequations}
  \begin{align}\label{eq:greens}
    \mathop{\mathrm{Im}}\inp{\phi,\nabla_{\omega}\phi}_{\vG_r}=-2\mathop{\mathrm{Im}}
    \langle 1_{B_{r}} \phi, \psi\rangle_{\mathcal H}\text { for almost
      all }r\geq r_0,
  \end{align} yielding in particular that $r\to
  \mathop{\mathrm{Im}}\inp{\phi,\nabla_{\omega}\phi}_{\vG_r}$ is
  absolutely continuous. A similar computation shows that in fact $r\to
  \inp{\check\phi,\nabla_{\omega}\phi}_{\vG_r}$ is absolutely
  continuous for any $\check\phi\in \vN$ as it follows from the
  resulting Green's identity
  \begin{align}\label{eq:greens2}
    \inp{\check\phi,\nabla_{\omega}\phi}_{\vG_r}=\langle 1_{B_{r}} p_i\check\phi, g^{ij}p_j\phi\rangle+2
     \big\langle 1_{B_{r}} \check\phi, (V-\lambda)\phi-\psi
     \big\rangle\text { for }r\geq r_0.
  \end{align}  However, in comparison, we do not know continuity or
  even local boundedness of the function 
$r\to \|\nabla_{\omega}\phi\|_{\vG_r}$.  
     \end{subequations}

     The following technical result will play a major role (see the
     proofs of Lemmas \ref{lem:normBasi}--\ref{lemma:s1}). Recall that the factor $\bar \chi
     _n$ of Lemma \ref{lem:15.1.15.15.16ff} was introduce to avoid
     zeros of $a$ and $b$. This is also the role of the factor $\bar
     \chi _n$ below.

\begin{lemma}\label{lem:fundaM2} Let $I\subseteq \mathcal I$  be a compact interval.
Then 
 we introduce  for  any large $ n$ a function $f_{\check r}(r)$,
 $r\geq \check r$, 
 depending on 
 any $\check r\geq r_0$ as well as on any  $\lambda\in I$   and $\check \phi\in \vN$
 as follows: Using
   spherical coordinates we define for $r\geq \check r$
\begin{align*}
\check e&=\exp \biggl(
    \int_{\check  r}^{r}\pm 2\i\tilde
    b(s,{}\cdot{})\,\mathrm{d}s\biggr),\\  
D \xi(r)&= \exp \biggl( \int_{r_0}^{r}\parbb{\mp \i\tilde
    b+ \tfrac
    12\mathop{\mathrm{div}}\tilde\omega}(s,{}\cdot{})
\,\mathrm{d}s\biggr)[\sqrt b \i(A\mp a)\phi](r,\cdot)\in\vG,\\
\check \xi(r)&= \exp \biggl( \int_{r_0}^{r}\parbb{\mp \i\tilde
    b+ \tfrac
    12\mathop{\mathrm{div}}\tilde\omega}(s,{}\cdot{})
\,\mathrm{d}s\biggr)[  \sqrt b\check \phi](r,\cdot)\in\vG,\\
f_{\check r}(r)&=\inp{\check\xi(r), ({\check
    e}b^{-1}\bar\chi_n )(r,\cdot)D\xi(r)}_{\vG}.
\end{align*}   

Then  the function $ f_{\check r}(\cdot)$
  is absolutely continuous on $[\check  r, \infty)$ with derivative
  \begin{align}\label{eq:T-1}
  \begin{split}
  f'_{\check r}(r)&=T_1+\cdots +T_5;\\
T_1&=\inp{(\tilde A \mp \tilde b)\sqrt b \check \phi,\check e b^{-1/2}\bar\chi_n (A\mp a)\phi}_{\vG_r},\\
T_2&=-2\inp{\check \phi,\check e \bar\chi_n \psi}_{\vG_r},\\
T_3&=\inp{p'\overline{\check e} \check \phi,\bar\chi_n p' \phi}_{\vG_r},\\
T_4&=\inp{\check \phi,O(r^{-\kappa})(A\mp a)\phi}_{\vG_r},\\
T_5&=\inp{\check \phi,O(r^{-\kappa})\phi}_{\vG_r},
  \end{split}
\end{align} where the bounds of $T_4$ and $T_5$ are uniform in
$\lambda\in I$ and $\check
r\geq r_0$.
\end{lemma}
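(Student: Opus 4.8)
The plan is to establish the absolute continuity of $f_{\check r}(\cdot)$ by writing it as a product of quantities whose differentiability properties we already control, and then to identify the five terms in the derivative by a direct computation, using the decomposition \eqref{eq:decH} of $H-z$ to convert second-order radial operators into the effective potential and the spherical part $L$.

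First I would note that $\check \xi(r)$ is, up to the harmless factor $\overline{\check e}(r,\cdot)$ (which is smooth and bounded in $r$ with bounded derivative), exactly of the form treated in Lemma~\ref{lem:ac}, applied to $\check \phi\in \vN$ rather than to $\sqrt b R(\lambda\pm\i0)\psi$; since that lemma's proof only used $\sqrt b\phi\in \vN$, and since $\check\phi\in\vN$, we conclude $r\mapsto \check\xi(r)$ is absolutely continuous with derivative given by \eqref{eq:defResbb}-type formula, namely $\exp(\int_{r_0}^r(\mp\i\tilde b+\tfrac12\operatorname{div}\tilde\omega))[\i(\tilde A\mp\tilde b)\sqrt b\check\phi](r,\cdot)$. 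Similarly $D\xi(r)$ is of the same shape applied to the vector $\i(A\mp a)\phi$; here $\phi=R(\lambda\pm\i0)\psi\in\vN\cap r^\beta B^*$ with $(A\mp a)\phi\in r^{-\beta}B_0^*$ by Corollary~\ref{cor:radiation-conditions}, so $(A\mp a)\phi$ lies in $\vN$ and Lemma~\ref{lem:ac} again applies, giving absolute continuity of $D\xi(\cdot)$ with a derivative involving $(\tilde A\mp\tilde b)\sqrt b\i(A\mp a)\phi$. The scalar weight $({\check e}b^{-1}\bar\chi_n)(r,\cdot)$ is smooth in $r$ for $r$ in any compact subinterval (the cutoff $\bar\chi_n$ keeps us away from zeros of $b$), and $\check e$ has $r$-derivative $\pm 2\i\tilde b\,\check e$; so the pairing $f_{\check r}(r)$ is a finite sum of products of absolutely continuous $\vG$-valued functions with a smooth scalar, hence absolutely continuous on compact subintervals, hence on $[\check r,\infty)$.

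Next, differentiating the pairing by the Leibniz rule produces four groups of terms: (a) the $\check\xi'$ term; (b) the $D\xi'$ term; (c) the $\partial_r$ of the scalar weight acting on $\check e b^{-1}$; and (d) the $\partial_r\bar\chi_n$ term, which is $O(r^{-\infty})$ and absorbed into $T_5$. The phase/Jacobian exponentials in $\check\xi$, $D\xi$ and $\check e$ are arranged precisely so that the oscillatory factors cancel: the product $\overline{\exp(\int(\mp\i\tilde b+\tfrac12\operatorname{div}\tilde\omega))}\cdot\exp(\int(\mp\i\tilde b+\tfrac12\operatorname{div}\tilde\omega))\cdot\check e$ reduces (using $\overline{\mp\i\tilde b}=\pm\i\tilde b$ for real $\tilde b$, and the definition of $\check e$) to the Jacobian $\exp(\int\operatorname{div}\tilde\omega)$ converting $\inp{\cdot}{\cdot}_\vG$ into $\inp{\cdot}{\cdot}_{\vG_r}$ with the plain surface measure. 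After this bookkeeping, term (a) becomes $\inp{(\tilde A\mp\tilde b)\sqrt b\check\phi}{\,\check e b^{-1/2}\bar\chi_n(A\mp a)\phi}_{\vG_r}$, which is exactly $T_1$; and the $b^{-1/2}$ normalization matches the one appearing in \eqref{eq:decH2}. The derivative of the scalar weight contributes the cross term where $\partial_r\check e=\pm2\i\tilde b\check e$ meets $\check\xi$ carrying $\mp\i\tilde b$ in its phase derivative; combining with the $D\xi'$ term I would use the decomposition \eqref{eq:decH} (or its variant \eqref{eq:decH2}) to replace $\tfrac12(A\pm a)\tilde\eta(A\mp a)$-type contributions with $(H-\lambda)-\tfrac12 L+O(r^{-\kappa})+O(r^{-\kappa})(A\mp a)$; acting on $\phi=R(\lambda\pm\i0)\psi$ we have $(H-\lambda)\phi=\psi$, which produces $T_2=-2\inp{\check\phi}{\check e\bar\chi_n\psi}_{\vG_r}$ (the factor $2$ and the sign coming from the Green's-type identity as in \eqref{eq:14.4.30.7.48a}), while the $-\tfrac12 L$ term integrates by parts on $S_r$ into $T_3=\inp{p'\overline{\check e}\check\phi}{\bar\chi_n p'\phi}_{\vG_r}$, and the two $O(r^{-\kappa})$ remainders become $T_4$ and $T_5$, with bounds uniform in $\lambda\in I$ and $\check r\ge r_0$ because the constants in \eqref{eq:decH}, Lemma~\ref{lem:13.9.2.7.18} and the support of $\bar\chi_n$ do not depend on these parameters.

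The main obstacle I anticipate is the rigorous justification of the formal Leibniz computation at the level of $\vG$-valued absolutely continuous functions — in particular interchanging $\tfrac{\d}{\d r}$ with the $S_r$-integration and the surface-measure change of variables, and making sense of the integration by parts producing $T_3$ (which requires knowing $\phi_{|S_r}\in\vD(p')$ for a.e.\ $r$, available from the bound $\int_{r_0}^r\|p'\phi\|^2_{\vG_s}\,\d s\le\|\phi\|^2_{\vH^1}$ established at the end of Section~\ref{subsec:Decomposition of Hamiltonian}, together with the radiation bound $\langle p_i^* h^{ij}p_j\rangle_\phi<\infty$ from Corollary~\ref{cor:radiation-conditions}). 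The clean way around this is to run the whole argument first for $\check\phi\in C^\infty_\c(M)$ and $\phi$ replaced by a smoothed/localized approximant where all manipulations are classical, obtain \eqref{eq:T-1} there, and then pass to the limit using the approximation scheme described in the remark before Lemma~\ref{lem:ac} (convergence of traces $[\check\phi_n]_{|S_r}\to\check\phi_{|S_r}$ in $\vG_r$ via \eqref{eq:bnunix}) together with the Besov/radiation bounds of Theorem~\ref{thm:12.7.2.7.9} and Corollary~\ref{cor:radiation-conditions} to control $T_1,\dots,T_5$; the local-in-$r$ uniformity needed for this is exactly what \eqref{eq:bnunix} and the decomposition lemmas supply. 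A secondary subtlety is that, unlike $\|\nabla_\omega\phi\|_{\vG_r}$ (whose $r$-continuity we do \emph{not} know), all the quantities entering $f_{\check r}$ are paired against the fixed vector $\check\phi$, so only the weak-type Green's identities \eqref{eq:greens}--\eqref{eq:greens2} are needed, and these are already available.
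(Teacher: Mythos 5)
Your proposal has a genuine gap at the step where you establish absolute continuity of $f_{\check r}$ by applying Lemma~\ref{lem:ac} to $D\xi(r)$. You claim that since $\phi\in\vN\cap r^\beta B^*$ and $(A\mp a)\phi\in r^{-\beta}B_0^*$, the function $(A\mp a)\phi$ ``lies in $\vN$.'' This does not follow: $B_0^*$ is the closure of $C^\infty_\c(M)$ in $B^*$, a decay-at-infinity condition, whereas $\vN$ is a local $\vH^1$/Dirichlet-boundary condition. In fact $A\phi$ is a first-order derivative of $\phi$, which generically does \emph{not} vanish on $\partial M$ and is not known to have local $\vH^1$ regularity merely from $\phi\in\vN$ and $H_0\phi\in L^2_{\mathrm{loc}}$. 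So you cannot invoke Lemma~\ref{lem:ac} for $D\xi$, and your Leibniz-rule route to absolute continuity stalls at the outset.

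The paper avoids this entirely by never treating $D\xi(r)$ as a $\vG$-valued absolutely continuous function. After cancelling the phase and Jacobian factors, $f_{\check r}(r)=\inp{\bar\chi_n\overline{\check e}\check\phi,\ \i(A\mp a)\phi}_{\vG_r}$, and the identity $\i(A\mp a)=\nabla_\omega+(\tfrac12\Delta r\mp\i a)$ together with the Green's identity \eqref{eq:greens2} (applied with $\check\phi$ replaced by $\bar\chi_n\overline{\check e}\check\phi\in\vN$) produces the representation \eqref{eq:frr}: two bulk integrals over $B_r$ (clearly absolutely continuous in $r$) plus a boundary term involving $\phi$ only through a zeroth-order multiplier, which is covered by Lemma~\ref{lem:ac}. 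The derivative-bearing quantity in \eqref{eq:frr} is the fixed vector $\overline{\check e}\check\phi\in\vN$, not $\phi$, which is exactly what your ``secondary subtlety'' paragraph notices --- but you position this as a side remark rather than as the main mechanism. That observation \emph{is} the proof of absolute continuity; your primary argument via Lemma~\ref{lem:ac} on $D\xi$ should be dropped. Likewise, the fallback you propose (smoothing $\phi$ and passing to the limit) is delicate because any approximant destroys the equation $(H-\lambda)\phi=\psi$ on which $T_2$ and the $O(r^{-\kappa})$ remainders depend; the paper instead uses a sharp cutoff $\chi_\epsilon(r)=\chi((r-r_1)/\epsilon)$ in a single pairing on $M$, computed two ways via \eqref{eq:decH2}, which yields the integrated identity $f_{\check r}(r_1)=\int^{r_1}(T_1+\dots+T_5)\,\d r$ directly without approximating $\phi$ at all.

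Your account of the subsequent algebra --- phase cancellation, $b^{\pm1/2}$ bookkeeping, use of \eqref{eq:decH}/\eqref{eq:decH2} to produce $2\psi-L\phi+$remainders, integration by parts on $S_r$ for $T_3$, uniformity of the $O(r^{-\kappa})$ bounds --- is consistent with the paper's formal computation and would survive once the absolute continuity and the justification of Leibniz are reorganized around \eqref{eq:greens2}/\eqref{eq:frr} and the $\chi_\epsilon$ argument. As it stands, though, the first step of the rigorous part is incorrect as written.
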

\begin{proof}
 First we proceed   using \eqref{eq:decH2} somewhat
 unjustified. Compute (formally)
\begin{align*}
  f'_{\check r}(r)=\inp{(\tilde A \mp \tilde b)\sqrt b \check
    \phi,\check e b^{-1/2} \bar\chi_n (A\mp a)\phi}_{\vG_r}-\inp{\sqrt b \check \phi,  
  (\tilde A \mp \tilde b)\check e b^{-1/2}\bar\chi_n   (A\mp a)
  \phi}_{\vG_r},
\end{align*} and then substituting for the second term
\begin{align*}
\sqrt b(\tilde A \mp \tilde b)\check e b^{-1/2}\bar\chi_n  (A\mp a)
&= \check e\sqrt b(\tilde A \pm \tilde b)b^{-1/2} \bar\chi_n(A\mp a)\\
&=2\check e \bar\chi_n\parb{ H-\lambda-\tfrac 12 L}+O(r^{-\kappa})
 (A\mp a)+O(r^{-\kappa}).
\end{align*} 
  This yields  \eqref{eq:T-1}. Note that $T_3$ is
well-defined since in fact $\overline{\check e}\check \phi\in\vN$ due to
\eqref{eq:13.9.23.16.41b2}. By the product rule
\begin{align}\label{eq:comInt}
  p'\overline{\check e}\check \phi
=\mp\biggl(p'\int_{\check r}^{r}2\i\tilde b(s,{}\cdot{})\,\d s\biggr)
\overline{\check e}\check \phi+ \overline{\check e}p'\check
  \phi,
\end{align}  
yielding that $T_3\in L^1_{\mathrm{loc}}$ as a function of
$r$. Similarly for the other terms showing explicitly that $f'_{\check
  r}\in L^1_{\mathrm{loc}}$.  The required (pointwise) uniformity property for
$T_4$ and $T_5$ is trivial since the $\check r$-dependence is through
the oscillatory  factor  ${\check e}$ only.

Next we give a rigorous derivation of \eqref{eq:T-1} using
\eqref{eq:decH2} differently (this argument will not be repeated for
the derivation of similar formulas in the proof of  Lemmas \ref{lemma:w1} and
\ref{lemma:s1}). We already argued that all of the
above terms make sense and agree with the conclusion of the lemma. We
claim that indeed $ f_{\check r}$ is absolutely continuous.  Note that
due to \eqref{eq:greens2} and the fact that $\overline{\check e}\check
\phi\in\vN$ we have the representation
\begin{align}\label{eq:frr}
  \begin{split}
   f_{\check r}(r)=&\langle 1_{B_{r}} p_i\bar\chi_n\overline{\check e}\check\phi, g^{ij}p_j\phi\rangle+2
     \big\langle 1_{B_{r}} \bar\chi_n\overline{\check e}\check\phi, (V-\lambda)\phi-\psi
     \big\rangle\\& +\big\langle \bar\chi_n\overline{\check
       e}\check\phi, (\tfrac 12\Delta r\mp \i a)\phi
     \big\rangle_{\vG_r}. 
  \end{split}
\end{align} Clearly the first and second terms are absolutely
continuous, and  by  Lemma \ref{lem:ac} the last one is too. 

It is of course doable to compute $f'_{\check r}$ using
\eqref{eq:frr}. However the result is not immediately consistent with
the 
representation from our informal computation. Instead  we shall
proceed as follows:   It remains to show that for $r_1>\check r$
\begin{align}\label{eq:abreprre}
  f_{\check r} (r_1)=\int ^{r_1} (T_1+T_2+T_3+T_4+T_5)\,\d r.
\end{align} Let for $r_1>\check r$ 
\begin{align*}
  \chi_{\epsilon}(r)=\chi((r-r_1)/\epsilon);\quad  \epsilon>0.
\end{align*} We compute on one hand 
\begin{align*}
  \inp{(\tilde A\pm \tilde b)b^{1/2}&\chi_{\epsilon}\bar\chi_n \overline{\check e}{\check \phi}, b^{-1/2}(A\mp a)\phi
  }\\&=\inp{\chi'_{\epsilon}\bar\chi_n \overline{\check e}{\check \phi}, \i(A\mp a)\phi
  }+\inp{\chi_{\epsilon}(\tilde A\pm \tilde b)b^{1/2}\bar\chi_n \overline{\check e}{\check \phi}, b^{-1/2}(A\mp a)\phi
  }
\\&=\int \chi'_{\epsilon}(r)f_{\check r}(r)\,\d r+\inp{\chi_{\epsilon}(\tilde A\pm \tilde b)b^{1/2}\bar\chi_n \overline{\check e}{\check \phi}, b^{-1/2}(A\mp a)\phi
  }, 
\end{align*} and on the other hand using \eqref{eq:decH2} (considering
$L$ as a form)
\begin{align*}
  \inp{(\tilde A\pm \tilde b)b^{1/2}&\chi_{\epsilon}\bar\chi_n \overline{\check e}{\check \phi}, b^{-1/2}(A\mp a)\phi
  }\\&=\inp{\chi_{\epsilon}\bar\chi_n \overline{\check e}{\check \phi}, 2\psi-2\parb{\tfrac12 L +O(r^{-\kappa})(A\mp a)+O(r^{-\kappa})}\phi
  }.
\end{align*}  Since $f_{\check r} $ is continuous at $r_1$ we obtain
using that the right-hand sides are equal
and by letting $\epsilon\to 0$ that 
\begin{align*}
  &-f_{\check r}(r_1)+\inp{1_{B_{r_1}}(\tilde A\pm \tilde b)b^{1/2}\bar\chi_n \overline{\check e}{\check \phi}, b^{-1/2}(A\mp a)\phi
  }\\&=\inp{1_{B_{r_1}}\bar\chi_n \overline{\check e}{\check \phi}, 2\psi-2\parb{\tfrac12 L +O(r^{-\kappa})(A\mp a)+O(r^{-\kappa})}\phi
  }
\\&=2\inp{1_{B_{r_1}}{\check \phi}, {\check e}\bar\chi_n \psi}-
\inp{1_{B_{r_1}}p'\overline{\check e}
{\check \phi}, \bar\chi_n p'\phi
  }\\& -
\inp{1_{B_{r_1}}{\check \phi}, O(r^{-\kappa})(A\mp a)\phi
  }-\inp{1_{B_{r_1}}{\check \phi}, O(r^{-\kappa})\phi
  }.
\end{align*} Moreover 
\begin{align*}
  &\inp{1_{B_{r_1}}(\tilde A\pm \tilde b)b^{1/2}\bar\chi_n \overline{\check e}{\check \phi}, b^{-1/2}(A\mp a)\phi
  }\\&=\inp{1_{B_{r_1}}(\tilde A\mp \tilde b)b^{1/2}\bar\chi_n {\check \phi}, {\check e}b^{-1/2}(A\mp a)\phi
  }\\&=\inp{1_{B_{r_1}}(\tilde A\mp \tilde b)b^{1/2}{\check \phi}, {\check e}\bar\chi_n b^{-1/2}(A\mp a)\phi
  } +
\inp{1_{B_{r_1}}{\check \phi}, O(r^{-\kappa})(A\mp a)\phi
  }.
\end{align*} We conclude \eqref{eq:abreprre}.

\end{proof}

\subsection{Proof of Theorem~\ref{thm:strong}\  for easy
  case}\label{subsec:easy case} Suppose in addition to  Condition
\ref{cond:12.6.2.21.13c} that Condition \ref{cond:12.6.2.21.13bb}
~\eqref{item:14.5.1.8.30} holds and consider only $\psi\in \mathcal
H_{3/2+}$. We show that
\eqref{eq:disFb}  exists. For convenience we consider only the upper sign. Note that
the estimate of Corollary \ref{cor:radiation-conditions} holds for
some $\beta>1$.

  We compute,  cf. \eqref{eq:a_com1a} and \eqref{eq:AtildeA}, 
\begin{align*}
  (\tilde A-\tilde b)b^{1/2}&=b^{1/2}(\tilde A-\tilde b -\tfrac \i
    2\tilde \omega^i\nabla_i \ln b)\\&=b^{1/2}(\tilde A-\tilde
    a+O(r^{-2}))\\&=b^{1/2}\tilde\eta(A-
    a+O(r^{-2})).
  \end{align*}  Using  then in turn  \eqref{eq:defResbb} and 
  \cs we obtain for  $\beta$ slightly bigger than $1$
\begin{align*}
\int_{r_0}^\infty \| \tfrac{\d}{\d r}\xi(r)\|_{\mathcal G}\,\d r
&\leq C_\beta\bigg(\int_{r_0}^\infty r^{2\beta-1}
\bigl\|[\sqrt{b}|\d r|^{-2}
\parb{A-a+O(r^{-2})}\phi]_{|S_{r}}\bigr\|^2_{\mathcal G_{r}}\,\d r\bigg)^{1/2}\\
&\leq C_1 \|\parb{A-a}\phi\|_{\beta-1/2}+C_2 \\&\leq C_3<\infty.
\end{align*} Whence  the existence of \eqref{eq:disFb}  follows by integration.

The constant $C_3$ can be chosen locally uniform in
$\lambda>\lambda_0$ and arbitrary small if we replace
$\int_{r_0}^\infty$ by  $\int_{R}^\infty$, $R>r_0$ big. Whence the
limit \eqref{eq:disFb}  is attained locally
uniformly in $(\lambda_0,\infty)$. In addition, since 
for finite $r$ the map $\lambda\to \xi(r)$
is continuous (cf. \eqref{eq:bnunix}), we obtain
continuity of the map
$(\lambda_0,\infty)\ni\lambda\to F^+(\lambda)\psi\in
\mathcal G$.

Let us also note that due to  Lemma \ref{lem:fundaM}
\begin{align*}
  \|F^+(\lambda)\psi\|=
2\mathop{\mathrm{Im}}\langle\psi,\phi\rangle
\end{align*} follows from the computation
  \begin{align*}
     \|F^+(\lambda)\psi\|_{\vG}^2=\lim_{R\to\infty}R^{-1}\int_R^{2R}\|\xi(r)\|_{\vG}^2\,\d r
=\lim_{R\to\infty}-\!\!\!\!\!\!\int_R\|[\sqrt{b}\phi]_{|S_{r}}\|^2_{\vG_{r}}\,\d r.
\end{align*}

\subsection{Proof of Theorem~\ref{thm:strong}\ for general case}
\label{subsec:Fourier transform} 
In this subsection we prove the existence of the limit
\eqref{eq:disFb}  for  $\psi\in
{\mathcal H_{1+}}$  under Condition \ref{cond:12.6.2.21.13c} and then the
remaining assertions of Theorem~\ref{thm:strong}.  We shall
consider only the upper sign, since the lower sign can be dealt with
in parallel.  Throughout the subsection we fix any  compact
interval $I\subseteq \mathcal I$.

We  first investigate  properties of spherical derivatives. 
Let us introduce a  ``backwards hitting time'' by 
\begin{align*}
  r^{\rm bht}(x)=\sup\bigl\{s \leq r(x)-r_0\,\big| \,
\tilde y(-s,x)\in M\big\};\,
  x\in E.
\end{align*} 

\begin{lemma}
  \label{lemma:L_10b}
There exists $C_1>0$ such that for all $x\in E$ and $t\in (-r^{\rm bht}(x),0]$ 
\begin{align}
\begin{split}
\ell_*(t,x)
:={}&
\Bigl(
\ell^{ij}(x)
[\partial_i\tilde y^\alpha(t,x)][\partial_j\tilde y^\beta(t,x)]
\Bigr)_{\alpha,\beta}
\\
\le{}& C_1\bigl[(r(x)+t)/r(x)\bigr]^{\sigma}
\ell(\tilde y(t,x))
\end{split}\label{eq:13:9:34:16:20}
\end{align}
as quadratic forms on the fibers of the cotangent bundle. 
In particular, for any given  $\check\sigma\le \sigma$ with 
$\check\sigma<\min\{\tau,\rho\}$ 
there exists $C_2>0$ independent of $\lambda\in I$ such that for all  
$x\in E$ 
  \begin{align}
   \int_{-r^{\rm bht}(x)}^{0}\big|p'\tilde b(\tilde y(s,x)) \big|\,\d s \le 
C_2r(x)^{-\check\sigma/2}.
  \label{eq:13.9.23.16.41b}
  \end{align}
\end{lemma}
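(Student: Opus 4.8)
The plan for \eqref{eq:13:9:34:16:20} is a Jacobi-field estimate along the integral curve $c(t)=\tilde y(t,x)$, which for $t\in(-r^{\rm bht}(x),0]$ stays inside $E$; there \eqref{eq:13.9.21.16.44} applies, so $r(c(t))=r(x)+t$ and $\tilde y(t,\cdot)$ is a local diffeomorphism carrying $S_{r(x)}$ into $S_{r(x)+t}$, its differential restricting to a linear isomorphism $A(t)\colon T_xS_{r(x)}\to T_{c(t)}S_{r(c(t))}$, with $(\mathrm d\tilde y(t,x))^*\mathrm dr_{c(t)}=\mathrm dr_x$. On $E$ one has $\tilde\eta=|\mathrm dr|^{-2}$, so by \eqref{eq:14.12.23.13.49} both $\ell_*(t,x)$ and $\ell(c(t))$ are contravariant $2$-tensors whose radical is exactly $\mathbb R\,\mathrm dr_{c(t)}$; they therefore descend to symmetric forms on $T^*_{c(t)}S_{r(c(t))}$, where $\ell(c(t))$ is the cometric of the induced metric $g_{r(c(t))}$ while $\ell_*(t,x)$ is the image under $A(t)$ of the cometric of $g_{r(x)}$. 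Dualizing, \eqref{eq:13:9:34:16:20} is then equivalent to the operator bound
\begin{align*}
|A(t)Y_0|_g^2\le C_1\bigl[(r(x)+t)/r(x)\bigr]^{\sigma}|Y_0|_g^2,\qquad Y_0\in T_xS_{r(x)}.
\end{align*}

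To prove this, put $Y(t)=A(t)Y_0=\mathrm d\tilde y(t,x)Y_0$. Since $Y$ is the push-forward of $Y_0$ along the flow of $\tilde\omega$, the standard relation (torsion-freeness of $\nabla$) gives $\tfrac{\D Y}{\d t}=\nabla_{Y}\tilde\omega$. From $\tilde\omega=\tilde\eta\,\omega$ one computes $\langle\nabla_Y\tilde\omega,Y\rangle=\tilde\eta\,(\nabla^2r)(Y,Y)+(\mathrm d\tilde\eta)(Y)\,\langle\omega,Y\rangle$, and the second term vanishes because $\langle\omega,Y\rangle=\mathrm dr(Y)=0$ for $Y\in TS_r$; so, by metric compatibility, $\tfrac{\d}{\d t}|Y|_g^2=2\tilde\eta\,(\nabla^2r)(Y,Y)$. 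Feeding in \eqref{eq:13.9.5.3.30}, which on $T_{c(t)}S_{r(c(t))}$ reads $(\nabla^2r)(Y,Y)\ge\tfrac{\sigma'}{2r(c(t))}|\mathrm dr|^2|Y|_g^2$, and using $\tilde\eta=|\mathrm dr|^{-2}$, $r(c(t))=r(x)+t$, yields $\tfrac{\d}{\d t}|Y|_g^2\ge\tfrac{\sigma'}{r(x)+t}|Y|_g^2$. Integrating this backwards from $t=0$ (Gr\"onwall) gives $|Y(t)|_g^2\le\bigl[(r(x)+t)/r(x)\bigr]^{\sigma'}|Y_0|_g^2$, and since $t\le0$ makes the bracket $\le1$ and $\sigma\le\sigma'$, the exponent drops to $\sigma$; this establishes \eqref{eq:13:9:34:16:20} (in fact with $C_1=1$).

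For \eqref{eq:13.9.23.16.41b} I read $p'\tilde b(\tilde y(s,x))$ as the $S_{r(x)}$-covariant derivative $\nabla'_x\bigl(\tilde b\circ\tilde y(s,\cdot)\bigr)$, the quantity one meets after differentiating \eqref{eq:13.9.23.16.41b2} under the integral sign. By the chain rule (and the identity $g_{ik}\ell^{ij}\ell^{kl}=\ell^{jl}$ valid on $E$) its squared $g$-norm equals $\ell_*(s,x)(\mathrm d\tilde b,\mathrm d\tilde b)$ evaluated at $\tilde y(s,x)$, so \eqref{eq:13:9:34:16:20} gives
\begin{align*}
\bigl|p'\tilde b(\tilde y(s,x))\bigr|\le C_1^{1/2}\bigl[(r(x)+s)/r(x)\bigr]^{\sigma/2}\bigl|\nabla'\tilde b\bigr|(\tilde y(s,x)).
\end{align*}
For the pointwise factor note that on $E$ one has $\tilde b=\eta_\lambda|\mathrm dr|^{-1}\sqrt{2(\lambda-q_1)}$ with $\nabla'\eta_\lambda=0$ (as $\eta_\lambda$ depends on $r$ only) and $\lambda-q_1\ge(\lambda-\lambda_0)/2>0$ on $\{\tilde b\ne0\}$ by \eqref{eq:14.6.29.23.46}; a product/chain-rule computation using $|\nabla'|\mathrm dr|^2|\le|\nabla|\mathrm dr|^2|\le Cr^{-1-\tau/2}$ from \eqref{eq:13.9.28.13.28}, $|\nabla'q_1|\le|\ell^{\bullet i}\nabla_iq_{11}|+|\mathrm dq_{12}|\le Cr^{-1-\rho/2}$ from Condition~\ref{cond:12.6.2.21.13bbb}, and $|\mathrm dr|\ge c$, then gives $\bigl|\nabla'\tilde b\bigr|(y)\le Cr(y)^{-1-\min\{\tau,\rho\}/2}$ uniformly in $\lambda\in I$. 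Finally, replacing $\sigma/2$ by $\check\sigma/2$ (legitimate since $\check\sigma\le\sigma$ and the bracket is $\le1$) and substituting $R=r(x)+s$,
\begin{align*}
\int_{-r^{\rm bht}(x)}^{0}\bigl|p'\tilde b(\tilde y(s,x))\bigr|\,\d s\le C\,r(x)^{-\check\sigma/2}\int_{r_0}^{r(x)}R^{(\check\sigma-\min\{\tau,\rho\})/2-1}\,\d R\le C_2\,r(x)^{-\check\sigma/2},
\end{align*}
the last integral converging because $\check\sigma<\min\{\tau,\rho\}$, with $C_2$ independent of $\lambda\in I$.

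The main obstacle is the bookkeeping in the first two paragraphs: making the identity $\tfrac{\D Y}{\d t}=\nabla_Y\tilde\omega$ rigorous along the merely forward-complete flow, and carrying the duality carefully enough that the scalar Gr\"onwall bound on $|Y(t)|_g^2$ genuinely delivers the stated quadratic-form inequality for the contravariant tensors $\ell_*$ and $\ell$ — in particular checking that $\mathrm dr$ lies in the radical of $\ell_*$ and that both tensors descend to cometrics on the $r$-spheres. Once \eqref{eq:13:9:34:16:20} is in hand, \eqref{eq:13.9.23.16.41b} is a routine integration.
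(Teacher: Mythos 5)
Your proof is correct and follows essentially the same route as the paper's: a Gronwall argument along the integral curves of $\tilde\omega$, driven by the decomposition of $\nabla\tilde\omega$ and the Hessian lower bound \eqref{eq:13.9.5.3.30}, followed by chain rule plus Cauchy--Schwarz for the integral estimate. The only cosmetic difference is that you track individual push-forward vectors $Y(t)$ (yielding $C_1=1$), whereas the paper runs the Gronwall inequality on the trace $F(t)=\operatorname{tr}_g\ell_*(t,x)$ and then uses $\ell_*\le(\operatorname{tr}_g\ell_*)\,g$ (yielding $C_1=d-1$), a distinction of no consequence for the application.
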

\begin{proof}
We remark that the tensor $\ell_*(t,x)$ is the push-forward of 
$\ell(x)$ under $\tilde y(t,{}\cdot{})$.
To prove the  inequality of (\ref{eq:13:9:34:16:20})  we
consider  the trace
\begin{align*}
F(t)=
g_{\alpha\beta}(\tilde y(t,x))
\ell^{ij}(x)
[\partial_i\tilde y^\alpha(t,x)][\partial_j\tilde y^\beta(t,x)].
\end{align*}
 Note that we use the Roman and the Greek indices to denote 
quantities concerning $x$ and $\tilde y=\tilde y(t,x)$, respectively.
The derivative $F'(t)$ is computed as follows:
Differentiating the expression 
\begin{align*}
  (g^*)_{ij}(t,x):=g_{\alpha\beta}(\tilde y(t,x))
[\partial_i\tilde y^\alpha(t,x)][\partial_j\tilde y^\beta(t,x)]
\end{align*}
and 
using the compatibility condition (\ref{eq:12.9.27.1.21})
and the flow equation,
we  compute
\begin{align}
\begin{split}
\tfrac{\partial}{\partial t}(g^*)_{ij}
&= 
[\Gamma_{\gamma\alpha}^\delta g_{\delta\beta}
+\Gamma_{\gamma\beta}^\delta g_{\alpha\delta}]\tilde\omega^\gamma 
(\partial_i\tilde y^\alpha)(\partial_j\tilde y^\beta)
\\
&\phantom{={}}
+
g_{\alpha\beta}
(\partial_\gamma \tilde \omega^\alpha)(\partial_i\tilde y^\gamma)
(\partial_j\tilde y^\beta)+
g_{\alpha\beta}
(\partial_\gamma \tilde \omega^\beta)(\partial_i\tilde y^\alpha)
(\partial_j\tilde y^\gamma)\\
&=2(\nabla \tilde \omega)_{\alpha\beta}(\partial_i\tilde y^\alpha)
(\partial_j\tilde y^\beta),
\end{split}
\label{eq:12.10.12.8.38}
\end{align}
which yields
 \begin{align*}
F'(t)=
2\ell^{ij}(x)(\nabla\tilde\omega(\tilde y(t,x)))_{\alpha\beta}
[\partial_i\tilde y^\alpha(t,x)][\partial_j\tilde y^\beta(t,x)].
\end{align*}
Next we decompose $\nabla\tilde\omega =|\mathrm dr|^{-2}\nabla^2
r+(\mathrm d|\mathrm dr|^{-2})\otimes \mathrm dr$ and substitute
 in the above formula. The
second term does not contribute which follows easily using spherical
coordinates.  Using then \eqref{eq:13.9.5.3.30}  we obtain
\begin{align*}
F'(t)
\ge 
\sigma'(r(x)+t)^{-1}
F(t),
\end{align*}
so that for $t\in (-r^{\rm bht}(x),0]$
\begin{align*}
F(t)\le (d-1)\bigl[(r(x)+t)/r(x)\bigr]^{\sigma'}.\end{align*}
Hence we obtain 
\begin{align*}
\Bigl(
\ell^{ij}(x)
[\partial_i\tilde y^\alpha(t,x)][\partial_j\tilde y^\beta(t,x)]
\Bigr)_{\alpha,\beta}
\le (d-1)\bigl[(r(x)+t)/r(x)\bigr]^{\sigma'}
g(\tilde y(t,x)).
\end{align*}
If we write the last inequality in the spherical coordinates again,
then there appears no radial component to the left,
and hence we can remove the radial component from the right.
Thus, using also that $\sigma<\sigma'$, the inequality  (\ref{eq:13:9:34:16:20}) follows.

Now for  (\ref{eq:13.9.23.16.41b}) we first use the chain rule. Next
by \cS, (\ref{eq:13.9.21.16.44}),
Condition~\ref{cond:12.6.2.21.13bbb} and (\ref{eq:13:9:34:16:20})
with $\sigma$ replaced by $\check\sigma$ we can estimate 
\begin{align*}
&
\int_{-r^{\rm bht}(x)}^0
\big|[\partial_\bullet\tilde y^\alpha(s,x)]
\bigl(\partial_\alpha \tilde b
\bigr)(\tilde y(s,x))\big|\,\mathrm{d}s\\
&\le 
C_4 r(x)^{-\check\sigma/2}\int_{-r^{\rm bht}(x)}^0(r(x)+s)^{-1-\min\{\tau,\rho\}/2  +\check\sigma/2}\,\mathrm{d}s
\\&\le 
C_5r(x)^{-\check\sigma/2},
\end{align*} 
 showing  (\ref{eq:13.9.23.16.41b}).
\end{proof}

As we already noted the tensor $\ell_*(t,x)$ is the push-forward of 
$\ell(x)$ under  the map $\tilde y(t,{}\cdot{})$.
Actually Lemma~\ref{lemma:L_10b}
is applied only in the spherical coordinates,
in which the inequality \eqref{eq:13:9:34:16:20}
takes a more simplified form, as in the following corollary.
However, partly to avoid confusion concerning how we should 
compare two tensors with different base points,
we formulate it using a more convenient  geometric terminology.

\begin{corollary}
  \label{cor:C_10}
  Let $\check\sigma\le \sigma$ and $\check\sigma<\min\{ \tau,\rho\}$.
  Then there exists a constant $C>0$ such that 
  for all  $ \check r\ge r_0$  and  $u\in C^1_{\mathrm
    c}(S_{\check r})$, 
  the function 
  $u(r)=\mathrm e^{\mathrm i(\check r-r)\tilde A}u\in 
  C^1_{\mathrm c}(S_{r})$   for  $r\ge \check r$ and 
satisfies
\begin{subequations}
\begin{align}
  \|u(r)\|_{\mathcal G_r}&=\|u\|_{\mathcal G_{r'}},
    \label{eq:13.9.23.16.40}
\\
  \| p'u(r)\|_{\mathcal G_r}&\le 
C\Bigl ( (\check r/r)^{\sigma/2}
  \|p'u\|_{\mathcal G_{\check r}}+r^{-\check \sigma/2}\| u\|_{\mathcal G_{\check r}}
 \Bigr).
  \label{eq:13.9.23.16.41}
\end{align} More generally  the bounds \eqref{eq:13.9.23.16.40} and
\eqref{eq:13.9.23.16.41} are valid for $u\in\vD
(L_{\check r}^{1/2})$ in which case $u(r)\in\vD
(L_{r}^{1/2})$ for all $r\ge \check r$.
\end{subequations}
\end{corollary}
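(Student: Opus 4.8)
The plan is to reduce everything to spherical coordinates, where Lemma~\ref{lemma:L_10b} and the flow structure become transparent. The isometry \eqref{eq:13.9.23.16.40} is immediate from the discussion at the start of this section: for $t=\check r-r\le0$ the operator $\mathrm e^{\mathrm it\tilde A}$ is precisely the isometric embedding $\mathcal G_{\check r}\to\mathcal G_r$, which by \eqref{eq:12.9.24.22.46cc} (recall $\tilde J=1$ in spherical coordinates) and the co-area formula \eqref{eq:co_area} is the map \eqref{eq:transm}; since the angular coordinate $\sigma$ is preserved by the flow $\tilde y$, this reads $u(r)(\sigma)=h(r,\sigma)\,u(\sigma)$ with
\begin{align*}
h(r,\sigma)=\bigl(\det g(\check r,\sigma)/\det g(r,\sigma)\bigr)^{1/4},\qquad h^2\,\mathrm d\tilde{\mathcal A}_r=\mathrm d\tilde{\mathcal A}_{\check r},
\end{align*}
so $\|u(r)\|_{\mathcal G_r}=\|u\|_{\mathcal G_{\check r}}$ and $u(r)\in C^1_{\mathrm c}(S_r)$ with support $\tilde y(r-\check r,\supp u)$.

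For \eqref{eq:13.9.23.16.41} I would note that, $u(r)$ being a function, $\|p'u(r)\|_{\mathcal G_r}^2=\int_{S_r}g_r^{ij}\,\overline{\partial_iu(r)}\,\partial_ju(r)\,\mathrm d\tilde{\mathcal A}_r$ with $\partial_iu(r)=(\partial_ih)\,u+h\,\partial_iu$, so it suffices to bound $\|h\,\partial_\bullet u\|_{\mathcal G_r}$ and $\|(\partial_\bullet h)\,u\|_{\mathcal G_r}$ separately. Written in spherical coordinates on $E$, \eqref{eq:13:9:34:16:20} with $x\in S_r$ and $t=\check r-r$ becomes $g_r^{ij}(\sigma)\le C(\check r/r)^\sigma g_{\check r}^{ij}(\sigma)$ as quadratic forms on covectors at $\sigma$ (in these coordinates $\ell$ has only its angular block, the inverse induced metric, and $\tilde y(t,\cdot)$ fixes $\sigma$). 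Combining this pointwise comparison with the measure identity gives
\begin{align*}
\|h\,\partial_\bullet u\|_{\mathcal G_r}^2=\int g_r^{ij}h^2\,\overline{\partial_iu}\,\partial_ju\,\mathrm d\tilde{\mathcal A}_r\le C(\check r/r)^\sigma\!\int g_{\check r}^{ij}\,\overline{\partial_iu}\,\partial_ju\,\mathrm d\tilde{\mathcal A}_{\check r}=C(\check r/r)^\sigma\|p'u\|_{\mathcal G_{\check r}}^2,
\end{align*}
i.e.\ the first term in \eqref{eq:13.9.23.16.41}. For the second, $\partial_ih=(\partial_i\ln h)\,h$, and the measure identity again reduces matters to the pointwise bound $\sup_{\sigma}|\mathrm d_\sigma\ln h(r,\sigma)|_{g_r}\le Cr^{-\check\sigma/2}$.

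This last bound is the only real work, and I expect it to be the main obstacle. In spherical coordinates $\operatorname{div}\tilde\omega=\nabla^r\tilde\eta+\tilde\eta\,\Delta r=\nabla^r|\mathrm dr|^{-2}+|\mathrm dr|^{-2}\Delta r$ on $E$, and $\ln h(r,\sigma)=-\tfrac12\int_{\check r}^r(\operatorname{div}\tilde\omega)(s,\sigma)\,\mathrm ds$. For the summand $|\mathrm dr|^{-2}\Delta r$ one differentiates under the integral and, using the comparison $|v|_{g_r}\le C(s/r)^{\sigma/2}|v|_{g_s}$ supplied by \eqref{eq:13:9:34:16:20} together with $|\ell^{\bullet i}\nabla_i\Delta r|\le Cr^{-1-\tau/2}$, $|\Delta r|\le C$ and $|\nabla|\mathrm dr|^2|\le Cr^{-1-\tau/2}$ (Condition~\ref{cond:12.6.2.21.13a}), bounds its $\sigma$-derivative by $C\int_{\check r}^r(s/r)^{\sigma/2}s^{-1-\tau/2}\,\mathrm ds$; for the summand $\nabla^r|\mathrm dr|^{-2}=|\mathrm dr|^2\partial_r|\mathrm dr|^{-2}$ one first integrates by parts in $r$ to rewrite $\int_{\check r}^r$ of it as $-\int_{\check r}^r(\partial_s|\mathrm dr|^2)|\mathrm dr|^{-2}\,\mathrm ds$ (integrand $O(s^{-1-\tau/2})$), then differentiates in $\sigma$, commutes $\partial_\sigma$ with $\partial_s$, and integrates by parts once more. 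In every case one is left with radial integrals $\int_{\check r}^r(s/r)^{\sigma/2}s^{-1-\tau/2}\,\mathrm ds$ and boundary terms $\le Cr^{-1-\tau/2}$, $\le C(\check r/r)^{\sigma/2}\check r^{-1-\tau/2}$, all of which are $\le Cr^{-\check\sigma/2}$ since $\check\sigma\le\sigma$ and $\check\sigma<\tau$ — the elementary calculus behind the passage \eqref{eq:13:9:34:16:20}$\Rightarrow$\eqref{eq:13.9.23.16.41b}. The case $M^{\rm ex}\neq M$ is identical, everything happening along flow orbits.

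Finally, for the extension to $\mathcal D(L_{\check r}^{1/2})$: since $C^\infty_{\mathrm c}(S_{\check r})\subseteq C^1_{\mathrm c}(S_{\check r})$ is a form core for $L_{\check r}$, given $u\in\mathcal D(L_{\check r}^{1/2})$ choose $u_n\in C^\infty_{\mathrm c}(S_{\check r})$ converging to $u$ in the graph norm of $L_{\check r}^{1/2}$; then $u_n(r)\to u(r)$ in $\mathcal G_r$ by the (continuously extended) isometry \eqref{eq:13.9.23.16.40}, while \eqref{eq:13.9.23.16.41} applied to $u_n-u_m$ shows $(u_n(r))$ is Cauchy in the graph norm of $L_r^{1/2}$. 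As $L_r^{1/2}$ is closed, $u(r)\in\mathcal D(L_r^{1/2})$ with $p'u_n(r)\to p'u(r)$, and \eqref{eq:13.9.23.16.40}--\eqref{eq:13.9.23.16.41} pass to the limit.
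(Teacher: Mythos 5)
Your proof is correct, and the skeleton — pass to spherical coordinates, write $u(r)=h\,u$, split $\partial_i u(r)=(\partial_i h)u+h\,\partial_i u$, estimate the first piece with the push-forward bound \eqref{eq:13:9:34:16:20} and the measure identity, and reduce the second to a pointwise bound on $|\mathrm d_\sigma\ln h|_{g_r}$ — is exactly what the paper does (its \eqref{eq:14.3.5.11.0} is your product rule, its ``first term'' is $h\,\partial_i u$, its ``second term'' is $(\partial_i h)u$). The one genuinely extra thing you supply is the treatment of the $\nabla^r\tilde\eta$-summand of $\mathop{\mathrm{div}}\tilde\omega$ via integration by parts. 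The paper disposes of the second term with the single clause ``estimated as in the proof of \eqref{eq:13.9.23.16.41b}''; read literally, that argument needs a pointwise bound $|\ell^{\bullet i}\nabla_i\mathop{\mathrm{div}}\tilde\omega|\lesssim r^{-1-\epsilon}$, and since $\nabla^r\tilde\eta=-|\mathrm dr|^{-4}\nabla^r|\mathrm dr|^2=2|\mathrm dr|^{-4}(\nabla^2r)(\omega,\omega)$, its spherical derivative brings in $\nabla^3 r$, which Condition~\ref{cond:12.6.2.21.13a} does not control (bounds on $\nabla^3 r$ enter only in Condition~\ref{cond:12.6.2.21.13bb}~(\ref{item:14.5.1.8.31}), which is not assumed here). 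Your observation that $\nabla^r\tilde\eta=\nabla_{\tilde\omega}\ln\tilde\eta=\partial_s\ln|\mathrm dr|^{-2}$ on $E$ — so that $\int_{\check r}^r\nabla^r\tilde\eta\,\d s$ telescopes and only $\nabla|\mathrm dr|^2$ survives after angular differentiation — is precisely the device that makes the estimate close under the stated hypotheses; you could even shortcut your two integrations by parts by noting the telescoping directly, $\int_{\check r}^r\nabla^r\tilde\eta\,\d s=\ln|\mathrm dr|^{-2}(r,\sigma)-\ln|\mathrm dr|^{-2}(\check r,\sigma)$, and then bound the two boundary covectors as you do. The $\tilde\eta\,\Delta r$-summand and the first term, as well as the closing density argument, match the paper. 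So: same route, with a necessary and correct filling-in of the one step the paper leaves implicit.
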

\begin{proof}
  Only \eqref{eq:13.9.23.16.41} needs justification (note that the last
  assertion follows by  a density argument). Using  the expression
  in the spherical coordinates
\begin{align*}
\begin{split}
u(r)(\sigma)
=(\mathrm{e}^{\mathrm{i}(\check r-r)\tilde A}u)(\sigma)
=\exp \biggl(\int_{\check r}^r-\tfrac12\mathop{\mathrm{div}}\tilde\omega(s,\sigma)
\,\mathrm{d}s\biggr) u(\sigma),
\end{split}
\end{align*}
 we compute with $t=\check r-r$ and $x\in E $ given with 
 coordinates $(r,\sigma)$
  \begin{align}
    \begin{split}
    &p_i u(r)(\sigma)
    =-\i \e^{ (\int\cdots)}[\partial_i\tilde y^\alpha(t,x)]
\bigl(\partial_\alpha u
\bigr)(\check r, \sigma)\\&
+\biggl(\int_t^{0}
      \bigl[\partial_i\tilde y^\alpha(s,x)]
\bigl(\tfrac\i{2}\partial_\alpha
      \mathop{\mathrm{div}}\tilde \omega
      \bigr)(r+s,\sigma) 
\,\mathrm{d}s \biggr)
    u(r)(\sigma).
  \end{split}\label{eq:14.3.5.11.0}
\end{align} The first term is estimated using \eqref{eq:13:9:34:16:20}, and the second term is estimated as in the
proof of \eqref{eq:13.9.23.16.41b}.
\end{proof}

For the remaining part of this section let  $\psi\in \mathcal H_{1+}$ 
and   $\phi=R(\lambda+\mathrm i0)\psi$, $\lambda\in I\subseteq \mathcal I$.
  
\begin{lemma}\label{lem:normBasi}  The following limit exists and
   is given as
  \begin{align}
    \label{eq:normInf}
    \lim_{r\to \infty}\|\xi(r)\|_{\vG}^2= 2\mathop{\mathrm{Im}}\langle\psi,\phi\rangle_{\mathcal H}.
  \end{align}
\end{lemma}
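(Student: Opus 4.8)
The plan is to express $\|\xi(r)\|_{\vG}^2$ through a Green-type identity, split it into a part converging by the limiting absorption principle and a part whose averaged limit already vanishes, and then upgrade the averaged statement to a genuine limit.

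First, since the maps $\e^{\i(r-r_0)(\tilde A\mp\tilde b)}$ together with the embedding \eqref{eq:trans2} are isometries, $\|\xi(r)\|_{\vG}=\|[\sqrt b\phi]_{|S_r}\|_{\vG_r}$, which by Lemma~\ref{lem:ac} is absolutely continuous in $r$. Exactly as in the proof of Lemma~\ref{lem:fundaM} one has for all $r\geq r_0$
\begin{align*}
  \|\xi(r)\|_{\vG}^2=\|[\sqrt b\phi]_{|S_r}\|_{\vG_r}^2
  =\mathop{\mathrm{Im}}\langle\phi,\nabla_{\omega}\phi\rangle_{\vG_r}+\mathop{\mathrm{Re}}\langle\phi,(b-A)\phi\rangle_{\vG_r},
\end{align*}
where both terms are absolutely continuous in $r$ (the first by \eqref{eq:greens2}, the second being the difference). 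So it suffices to prove $\mathop{\mathrm{Im}}\langle\phi,\nabla_{\omega}\phi\rangle_{\vG_r}\to 2\mathop{\mathrm{Im}}\langle\psi,\phi\rangle$ and $\mathop{\mathrm{Re}}\langle\phi,(b-A)\phi\rangle_{\vG_r}\to0$. For the first limit I use \eqref{eq:greens2} with $\check\phi=\phi$: the terms $\langle 1_{B_r}p_i\phi,g^{ij}p_j\phi\rangle$ and $2\langle 1_{B_r}\phi,(V-\lambda)\phi\rangle$ are real, hence $\mathop{\mathrm{Im}}\langle\phi,\nabla_{\omega}\phi\rangle_{\vG_r}=-2\mathop{\mathrm{Im}}\langle\phi,1_{B_r}\psi\rangle$; since $\psi\in\mathcal H_{1+}\subseteq B$ we have $1_{B_r}\psi\to\psi$ in $B$, so with $\phi\in B^*$ fixed the $B^*$--$B$ pairing gives $\mathop{\mathrm{Im}}\langle\phi,\nabla_{\omega}\phi\rangle_{\vG_r}\to-2\mathop{\mathrm{Im}}\langle\phi,\psi\rangle=2\mathop{\mathrm{Im}}\langle\psi,\phi\rangle$.

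The second limit is the core. For $r\geq r_\lambda/2$ the correction $a-b=\mp\tfrac{\i}4\eta_\lambda(\nabla^rq_{11})/(\lambda-q_1)$ is purely imaginary-valued, so $\langle\phi,(a-b)\phi\rangle_{\vG_r}$ is purely imaginary and $\mathop{\mathrm{Re}}\langle\phi,(b-A)\phi\rangle_{\vG_r}=-\mathop{\mathrm{Re}}\langle\phi,(A-a)\phi\rangle_{\vG_r}$ there. Combining the identity above with Lemma~\ref{lem:fundaM} (applicable since $\mathcal H_{1+}\subseteq B$) and the first limit, the averaged limit $R^{-1}\int_R^{2R}\mathop{\mathrm{Re}}\langle\phi,(A-a)\phi\rangle_{\vG_r}\,\d r\to0$. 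To upgrade this I invoke Lemma~\ref{lem:fundaM2} with $\check\phi=\phi$: undoing the isometries one finds $f_{\check r}(r)=\i\langle\phi,\check e\,\bar\chi_n(A\mp a)\phi\rangle_{\vG_r}$, so for $r$ large ($\bar\chi_n=1$) and $\check r$ taken as the base point $\mathop{\mathrm{Re}}\langle\phi,(A-a)\phi\rangle_{\vG_{\check r}}=\mathop{\mathrm{Im}}f_{\check r}(\check r)$. Writing $f_{\check r}(\check r)=f_{\check r}(r)-\int_{\check r}^r f'_{\check r}$, averaging over $r\in[R,2R]$ and letting $R\to\infty$ (the averaged integral of $f_{\check r}$ itself tends to $0$, uniformly in $\check r$, because $\phi\in B^*$ and $(A-a)\phi\in B^*_0$, while Fubini bounds the averaged integral of $\int_{\check r}^r f'_{\check r}$ by $\int_{\check r}^\infty|f'_{\check r}|$), one is reduced to showing $\int_{\check r}^\infty(|T_1|+\dots+|T_5|)\,\d r\to0$ as $\check r\to\infty$, with $T_1,\dots,T_5$ as in \eqref{eq:T-1}. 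The terms $T_2,T_4,T_5$ follow from the Besov and radiation bounds (Theorem~\ref{thm:12.7.2.7.9}, Corollaries~\ref{cor:12.7.2.7.9b}--\ref{cor:radiation-conditions}) with a radial weight $r^{\beta}$, $\tfrac14<\beta<\tfrac12$, available because $\psi\in\mathcal H_{1+}$, together with $\kappa>1$. For $T_1$ and $T_3$, which involve $\|(A-a)\phi\|_{\vG_r}^2$ and $\|p'\phi\|_{\vG_r}^2$ and are only borderline radially integrable, one exploits the oscillatory factor $\check e$ (legitimate because $\tilde b$ is bounded below for large $r$) by integration by parts along the flow, the derivative terms so produced being controlled via \eqref{eq:13.9.23.16.41b2} (for $p'\check e$) and via the approximate Riccati identities behind Lemma~\ref{lem:15.1.15.15.16ff}. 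This gives $\mathop{\mathrm{Re}}\langle\phi,(A-a)\phi\rangle_{\vG_r}\to0$, hence $\|\xi(r)\|_{\vG}^2\to2\mathop{\mathrm{Im}}\langle\psi,\phi\rangle$.

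I expect the main obstacle to be precisely this last upgrade: for $\psi\in\mathcal H_{1+}$ the radiation bound only holds with weight $r^{\beta}$, $\beta<1/2$, so $(A-a)\phi$ (and $p'\phi$) are merely at the borderline of square-integrability in the radial variable, and a naive Cauchy--Schwarz recovers nothing beyond the averaged statement. What makes the genuine $r\to\infty$ behaviour accessible is the interplay of the absolute continuity of $\xi(\cdot)$, the PDE-derived identity \eqref{eq:T-1}, and the oscillation of $\check e$ built into Lemma~\ref{lem:fundaM2} through the parameter $\check r$, which together force the averaged limit to be the actual limit.
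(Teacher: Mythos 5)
Your overall strategy coincides with the paper's: one writes
$\|\xi(r)\|_{\vG}^2=\mathop{\mathrm{Im}}\langle\phi,\nabla_\omega\phi\rangle_{\vG_r}+\mathop{\mathrm{Re}}\langle\phi,(b-A)\phi\rangle_{\vG_r}$,
sends the first term to $2\mathop{\mathrm{Im}}\langle\psi,\phi\rangle$ by the Green identity, and for the second invokes Lemma~\ref{lem:fundaM2} with $\check\phi=\phi$, integrates $f'_{\check r}$ over $[\check r,\infty)$, and reduces to showing $\int_{\check r}^\infty(|T_1|+\dots+|T_5|)\,\d r\to0$ as $\check r\to\infty$. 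That is the paper's route.

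The gap is in your treatment of $T_1$ and $T_3$. You assert that $\psi\in\mathcal H_{1+}$ licenses the radiation bound of Corollary~\ref{cor:radiation-conditions} only with weights $r^\beta$, $\beta<1/2$, so that $\|(A-a)\phi\|^2_{\vG_r}$ and $\|p'\phi\|^2_{\vG_r}$ are merely at the borderline of radial integrability, and you then propose an integration by parts in $r$ exploiting the oscillatory factor $\check e$. Neither claim is correct. If $\psi\in\mathcal H_s$ with $s>1$, then $r^\beta\psi\in\mathcal H_{s-\beta}\subseteq B$ for every $\beta<s-1/2$, and $s-1/2>1/2$; since moreover Condition~\ref{cond:12.6.2.21.13c} gives $\beta_c>1/2$, Corollary~\ref{cor:radiation-conditions} applies with some fixed $\beta\in(1/2,\min\{\beta_c,\,s-1/2\})$. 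With such $\beta$ one gets
\begin{align*}
  \int_{\check r}^\infty\|(A-a)\phi\|^2_{\vG_r}\,\d r
  \le C\,\check r^{\,1-2\beta}\,\|r^\beta(A-a)\phi\|^2_{B^*}\to 0,
\end{align*}
and, using $h\ge\tfrac12\sigma r^{-1}|\d r|^2\ell$ together with $\langle p_i^*r^{2\beta}h^{ij}p_j\rangle_\phi<\infty$,
\begin{align*}
  \int_{\check r}^\infty\|p'\phi\|^2_{\vG_r}\,\d r
  \le \check r^{\,1-2\beta}\int_{\check r}^\infty r^{2\beta-1}\|p'\phi\|^2_{\vG_r}\,\d r\to 0.
\end{align*}
Thus $T_1$ and $T_3$ are absolutely integrable on $[\check r,\infty)$ with tails tending to zero (the $p'\check e$ contribution to $T_3$ being handled by \eqref{eq:13.9.23.16.41b2} via \eqref{eq:comInt}, as in the paper's proof), and no further device is needed. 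Moreover the device you propose could not succeed here in any case: the quantity to be controlled is $\int_{\check r}^\infty|T_1|\,\d r$, and passing to absolute values annihilates the factor $\check e$, so there is no oscillation left to exploit by integration by parts. The availability of $\beta>1/2$ --- exactly what $\psi\in\mathcal H_{1+}$ buys over the assumption $\psi\in B$ --- is the mechanism that upgrades the averaged Lemma~\ref{lem:fundaM} to the pointwise limit of Lemma~\ref{lem:normBasi}, and is precisely what your closing diagnostic misses.
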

  \begin{proof} 
    We use Lemma \ref{lem:fundaM2} taking there $\check
    \phi=\phi$. By evaluating at
    $r=\check r$ and integrating the derivative on the interval
    $[\check r, \infty)$ we then obtain that
    \begin{align*}
      \lim_{r\to \infty}f_r(r)=0.
    \end{align*} At this point note that all of the terms $T_1,\dots,
    T_5$ are integrable due to \cS, Corollaries \ref{cor:12.7.2.7.9b}
    and \ref{cor:radiation-conditions},
    \eqref{eq:13.9.23.16.41b2} and \eqref{eq:comInt} (note that if
    $M^{\rm ex}=M$ the bound \eqref{eq:13.9.23.16.41b2} follows from 
    \eqref{eq:13.9.23.16.41b} with $\check\sigma=1$). Next by taking the imaginary part and using
    \eqref{eq:greens} we obtain
\begin{align*}
      0=\lim_{r\to \infty}\Im \inp{\phi,\nabla _\omega
          \phi-\i b\phi}_{\vG_r}=
        2\mathop{\mathrm{Im}}\langle\psi,\phi\rangle_{\mathcal
          H}-\lim_{r\to \infty}\|\sqrt b\phi(r)\|_{\vG_r}^2.
\end{align*} 
    Whence indeed 
\begin{align*}
      \lim_{r\to \infty}\|\xi(r)\|_{\vG}^2= \lim_{r\to \infty}\|\sqrt b\phi(r)\|_{\vG_r}^2=2\mathop{\mathrm{Im}}\langle\psi,\phi\rangle_{\mathcal H}.
    \end{align*}
\end{proof}

\begin{remark*}It follows from the above proof that the limit
  \eqref{eq:normInf} is attained uniformly in $\lambda\in I$. This
  property will be used in the proof of Proposition
  \ref{prop:dist-four-transf}.
  \end{remark*}
  
Next decompose 
\begin{align*}
  \xi=\xi(r)= \exp \biggl( \int_{r_0}^{r}\parbb{- \i\tilde
    b+ \tfrac
    12\mathop{\mathrm{div}}\tilde\omega}(s,{}\cdot{})
\,\mathrm{d}s\biggr)[\sqrt b \phi](r,{}\cdot{})
\end{align*} 
as
\begin{align}\label{eq:DECO}
  \begin{split}
  \xi&=a^{-1}\xi_++a^{-1}\xi_-;\\
  \xi_{\pm}&= 2^{-1}\exp \biggl( \int_{r_0}^{r}\parbb{-
    \i\tilde b+ \tfrac 12\mathop{\mathrm{div}}\tilde\omega}(s,{}\cdot{})
\,\mathrm{d}s\biggr)[\sqrt b(a\pm A) \phi](r,{}\cdot{}).  
  \end{split}
\end{align} At this point the reader is \textit {WARNED} about our use
of notation: The quantities $a$ and $\phi$ are considered with the
\textit {upper sign only} in this subsection, so for the cases $\pm$
in \eqref{eq:DECO}  these quantities are the \textit {same} (not
to be mixed up with the convention of Lemmas \ref{lem:15.1.15.15.16ff}
and \ref{lem:fundaM2}).

\begin{lemma}
  \label{lemma:w1} 
  There exists the weak limit
  \begin{align*}
F:=\wvGlim_{r\to
    \infty} \xi(r).
  \end{align*}
\end{lemma}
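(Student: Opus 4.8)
The plan is to show the bounded net $\{\xi(r)\}$ converges when tested against a dense family, the convergence being a bounded‑variation statement extracted from a Green‑type identity of the kind assembled in Lemma~\ref{lem:fundaM2}.

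First I would invoke Lemma~\ref{lem:normBasi}: the net $\{\xi(r)\}_{r\ge r_0}$ is bounded in $\vG$, so, $\vG$ being a Hilbert space, it suffices to prove that $r\mapsto\langle\zeta,\xi(r)\rangle_{\vG}$ has a limit as $r\to\infty$ for every $\zeta$ in a dense subset of $\vG$; I take $\zeta\in C^\infty_{\mathrm c}(S)$. Recalling \eqref{eq:160126}, the formula \eqref{eq:12.6.7.1.10cc}, and that (in the extended sense) the phase‑twisted translations $\mathrm e^{\mathrm i(r_0-r)(\tilde A-\tilde b)}$ relate the spaces $\vG_r$ isometrically, one rewrites for $r$ large (so that $\eta_\lambda=1$ on $S_r$)
\begin{align*}
  \langle\zeta,\xi(r)\rangle_{\vG}=\langle\phi^+[\zeta],b\phi\rangle_{\vG_r},\qquad \phi:=R(\lambda+\mathrm i0)\psi ,
\end{align*}
where $\phi^+[\zeta]$ is the outgoing zeroth order WKB function of \eqref{eq:gen1B} (on $\{\eta_\lambda=1\}$ its amplitude $[2|\mathrm dr|^2(\lambda-q_1)]^{-1/4}$ equals $b^{-1/2}$), and $\phi^+[\zeta]\in\vN$ by the discussion preceding Example~\ref{ex:countex} (after a harmless cutoff at infinity); one also checks that with this $\check\phi=\phi^+[\zeta]$ the quantity $\check\xi(r)$ of Lemma~\ref{lem:fundaM2} equals $\zeta$ for $r$ large. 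By Lemma~\ref{lem:ac} the map $r\mapsto\langle\zeta,\xi(r)\rangle_{\vG}$ is absolutely continuous, so everything reduces to showing that its derivative $\langle\zeta,\xi'(r)\rangle_{\vG}$, with $\xi'$ given by \eqref{eq:defResbb}, lies in $L^1([r_0,\infty))$.

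Substituting the identities \eqref{eq:AtildeA}, \eqref{eq:a_com2a} gives $(\tilde A-\tilde b)\sqrt b\,\phi=\tilde\eta\sqrt b\,(A-a)\phi+O(r^{-\mu})\phi$ with $\mu>\tfrac12$, whence for $r$ large $\langle\zeta,\xi'(r)\rangle_{\vG}=\mathrm i\langle\phi^+[\zeta],\tilde\eta b(A-a)\phi\rangle_{\vG_r}+(\text{remainder})$. A bare Cauchy--Schwarz bound here yields only square integrability in $r$: Corollary~\ref{cor:radiation-conditions} gives $(A-a)\phi\in r^{-\beta}B^*_0$ with merely $\beta\in(\tfrac12,\beta_c)$ — available because \eqref{eq:BAScond} forces $\beta_c>\tfrac12$ and $\psi\in\mathcal H_{1+}\subseteq r^{-\beta}B$ for such $\beta$ — so $\int_{r_0}^\infty\|[(A-a)\phi]_{|S_r}\|_{\vG_r}^2\,\mathrm dr<\infty$, but the $\vG_r$‑traces need not be integrable in $r$. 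The missing gain is purely oscillatory: the WKB side carries the factor $\exp(\mathrm i\int_{r_0}^r\tilde b\,\mathrm ds)$, which oscillates in $r$ since $\partial_s\!\int_{r_0}^s\tilde b=\tilde b$ is bounded below for $s$ large (by \eqref{eq:14.6.29.23.46} and $|\mathrm dr|\ge c$), while $(A-a)\phi$ is phase‑free. Integrating by parts once more in $r$ and using $(H-\lambda)\phi=\psi$ to absorb the derivative of the phase‑free factor — this being exactly the mechanism of Lemma~\ref{lem:fundaM2}, whose oscillatory factor $\check e$ and whose use of \eqref{eq:decH2} supply the structure — one rewrites $\langle\zeta,\xi'(r)\rangle_{\vG}$ as a finite sum of terms of the shape $T_1,\dots,T_5$ of that lemma, with short‑range remainders now carrying the faster decay $r^{-\kappa}$, $\kappa>1$. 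Each $T_j\in L^1([r_0,\infty))$: $T_1$ by \eqref{eq:a_com2a}, Cauchy--Schwarz, the radiation‑condition square integrability of $[(A-a)\phi]_{|S_r}$ and the Besov bounds of Theorem~\ref{thm:12.7.2.7.9} on $\phi$; $T_2$ by $\psi\in\mathcal H_{1+}$; $T_3$ by the spherical‑derivative bound \eqref{eq:13.9.23.16.41b2} on the phase and Corollary~\ref{cor:C_10} on the transported amplitude (yielding $\|[p'\phi^+[\zeta]]_{|S_r}\|_{\vG_r}=O(r^{-\sigma/2})$ with $\sigma/2>\tfrac12$) together with $\int_{r_0}^\infty\|[p'\phi]_{|S_r}\|_{\vG_r}^2\,\mathrm dr<\infty$ from Corollary~\ref{cor:radiation-conditions}; $T_4,T_5$ by Corollaries~\ref{cor:12.7.2.7.9b} and \ref{cor:radiation-conditions} and $\kappa>1$. (It is convenient to organise this via \eqref{eq:DECO}: the part $a^{-1}\xi_-$, built from $(A-a)\phi\in r^{-\beta}B^*_0$, tends to $0$ in $\vG$ — square integrability of its $\vG_r$‑traces plus the absolute continuity of Lemma~\ref{lem:ac} force the limit to vanish — while the bulk $a^{-1}\xi_+$ is treated as above.) With $\langle\zeta,\xi'(\cdot)\rangle_{\vG}\in L^1([r_0,\infty))$ the primitive $\langle\zeta,\xi(r)\rangle_{\vG}$ converges as $r\to\infty$ for all $\zeta\in C^\infty_{\mathrm c}(S)$, and with Lemma~\ref{lem:normBasi} this gives the weak limit $F=\wvGlim_{r\to\infty}\xi(r)$.

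I expect the main obstacle to be precisely this passage from the merely square‑integrable radiation‑condition bounds (with exponent $\beta<1$, all that \eqref{eq:BAScond} supplies in general; contrast the ``easy case'' of Subsection~\ref{subsec:easy case}, where $\beta>1$ gives outright integrability and hence norm convergence) to the genuine $L^1$‑integrability needed for a limit. This is an oscillatory‑integral improvement — the de‑oscillated factor $(A-a)\phi$ beating the WKB phase of $\phi^+[\zeta]$ — and its rigorous execution is exactly what Lemma~\ref{lem:fundaM2} is designed to perform; the subtleties are identifying the correct $\check\phi=\phi^+[\zeta]$, matching the pairing to $f_{\check r}$, and verifying integrability of every resulting error term, which uses the full force of Conditions~\ref{cond:12.6.2.21.13bbb}--\ref{cond:12.6.2.21.13c}, the bound \eqref{eq:13.9.23.16.41b2}, Corollary~\ref{cor:C_10}, and the hypothesis $\psi\in\mathcal H_{1+}$.
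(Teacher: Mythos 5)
You have correctly identified the paper's overall strategy: bound $\xi(r)$ via Lemma~\ref{lem:normBasi}, test against the dense subspace $C^\infty_{\mathrm c}(S)\subseteq\vG$, translate the pairing to a $\vG_r$-pairing with an outgoing WKB ansatz built from $\zeta$, and engage the oscillatory machinery of Lemma~\ref{lem:fundaM2}; and you correctly put your finger on the crux, the upgrade from the merely square-integrable radiation-condition bounds (with $\beta\in(\tfrac12,\beta_c)$) to $L^1$-integrability via oscillation. That is indeed what the paper's proof does.

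However, there is a genuine gap in the execution. The $T_j$'s of Lemma~\ref{lem:fundaM2} are the derivative of $f_{\check r}(r)$, which is a pairing with $D\xi$, i.e.\ with $(A-a)\phi$, not with $\phi$. Your $h(r):=\langle\zeta,\xi(r)\rangle_\vG$ has a derivative whose principal part (after \eqref{eq:a_com2a}) carries $(A-\bar a)\phi=(A-a)\phi+2\mathrm i(\operatorname{Im}a)\phi$, and $\operatorname{Im}a=O(r^{-(1+\rho/2)/2})$ is \emph{not} short-range under \eqref{eq:BAScond}; a Cauchy--Schwarz estimate against the Besov bound on $\phi$ gives a dyadic contribution $\sim R^{1/4}$, which diverges. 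This is exactly why the paper does \emph{not} work with $\xi$ directly but decomposes $\xi=a^{-1}\xi_++a^{-1}\xi_-$ as in \eqref{eq:DECO} and treats the two pieces separately — not merely as a ``convenience.'' In Step~II (the $\xi_+$-piece) the ``$T_1$''-analogue is handled not by Cauchy--Schwarz plus the radiation condition (which would again give the divergent $R^{1-\beta}$) but by the observation that $(\tilde A-\tilde b)\sqrt b\,\check\phi=0$ for the WKB choice $\check\phi=\phi^+[\zeta]$, so the term vanishes identically for large $r$. In Step~I (the $\xi_-$-piece) your parenthetical claim that $a^{-1}\xi_-(r)\to 0$ in $\vG$-norm is stronger than what is proved and does not follow from square integrability plus the absolute continuity of Lemma~\ref{lem:ac} (these yield only a vanishing $\liminf$). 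The paper instead exploits the $\check r$-dependent oscillatory factor $\check e=\exp(\pm2\mathrm i\int_{\check r}^{r}\tilde b\,\mathrm ds)$ of Lemma~\ref{lem:fundaM2}: it produces a family $f_{\check r}$ with $f_{\check r}(\check r)=\langle\zeta,(a^{-1}\xi_-)(\check r)\rangle_\vG$, $f'_{\check r}=T_2+\dots+T_5$ integrable \emph{uniformly} in $\check r$, and $\lim_{r\to\infty}f_{\check r}(r)=0$ (from the radiation-condition $\liminf$), giving $C_-=0$ without any norm control on $\|a^{-1}\xi_-(r)\|_\vG$. This use of $\check e$ is the one ingredient your sketch omits and replaces with an unsupported assertion.
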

\begin{proof}
  Let $g\in C_\c^\infty(S)\subseteq \vG$ be given. Due to Lemma
  \ref{lem:normBasi} it suffices to show the existence of 
\begin{align*}
C_{\pm}:=\lim_{r\to
    \infty} \inp{g, a^{-1}(r)\xi_{\pm}(r)}_{\vG}.
  \end{align*} 
\textit{Step I.} $C_-=0$. Writing
\begin{align*}
  g=\exp \biggl( \int_{r_0}^{r}\tfrac
    12\mathop{\mathrm{div}}\tilde\omega(s,{}\cdot{})\,\mathrm{d}s\biggr) u(r)
\end{align*} we note that $u(r')\in C_\c^\infty(S_{r'})$ for
$r'\geq R_n$ with $n$   large enough. We can write $u(r)=\mathrm e^{\mathrm i(r'-r)\tilde A}u(r')\in 
  C^\infty_{\mathrm c}(S_{r})$ for $r\geq r'$. Let
  $\bar\chi_n=1-\chi(r/R_n)$ so that $\bar\chi_n u\in \vN$ (and
  such that  all zeros of $a$ and $b$ are  in $B_{R_n}$).
  Introduce then for $r\geq r_0$
\begin{align*}
  \check \phi&=\exp \biggl( \int_{r_0}^{r} \i\tilde
    b(s,{}\cdot{})\,\mathrm{d}s\biggr) b^{-1/2}\bar\chi_n u(r),\\
\check \xi&= \exp \biggl( \int_{r_0}^{r}\parbb{ -\i\tilde
    b+ \tfrac
    12\mathop{\mathrm{div}}\tilde\omega}(s,{}\cdot{})
\,\mathrm{d}s\biggr) [\sqrt b\check \phi](r, \cdot) \in \vG.
  \end{align*}  Note 
   that we can consider $\check \phi$ as an
  element of $\vN$ and that $\check \xi(r)=g$.  Also note that $\xi_-=\i
  2^{-1}D\xi $ in terms of notation of Lemma \ref{lem:fundaM2}. We introduce  as in Lemma \ref{lem:fundaM2}
  \begin{align*}
    \check e=\exp \biggl(
    \int_{\check  r}^{r} 2\i\tilde
    b(s,{}\cdot{})\,\mathrm{d}s\biggr);\,r\geq\check r\geq r_0.
  \end{align*}
By the proof of Lemma \ref{lem:fundaM2} with this choice of
  $\check \phi$ (and by using \eqref{eq:decH3} rather than
  \eqref{eq:decH2}) it follows by evaluating
  at $r=\check r\geq 2R_n$ and integrating the derivative
    \begin{align*}
      \tfrac 2 \i\tfrac \d{\d r}\inp{ \check \xi(r),   {\check e}  a^{-1}
\xi_-(r)}_{\vG}&= \inp{(\tilde A - \tilde b)
\sqrt b \check \phi,\check e  a^{-1}\sqrt b
(A-a)\phi}_{\vG_r}
\\&\phantom{{}={}}
-\inp{\sqrt b \check \phi,  
  (\tilde A - \tilde b)\check
e  a^{-1}\sqrt b( A
    -  a)
  \phi}_{\vG_r}\\& =
-\inp{\sqrt b \check \phi,  
  \check
e (\tilde A + \tilde b) a^{-1}\sqrt b
  ( A
    -  a)\phi}_{\vG_r}
    \end{align*}
    on the interval $[\check r, \infty)$ that $C_-=\lim_{r\to \infty}
    \inp{ g, a^{-1} \xi_-(r)}_{\vG}$ exists and in fact is given by
    $C_-=0$.  Note that the analogue of the expression $T_1$ of Lemma
    \ref{lem:fundaM2} of the derivative vanishes, and that the
    corresponding terms $T_2,\dots, T_5$ are integrable (uniformly in
    $\check r$).  For example it follows from Lemma
    \ref{lemma:L_10b}  and Corollary
    \ref{cor:C_10} that
\begin{align}
  \label{eq:angEst}
  \|p'\bar {\check e}{\tfrac{b}{\bar a}}\check \phi\|_{\vG_r}\leq C
  \parb{(r')^{1/2}\|p'u(r')\|_{\vG_{r'}}+\|u(r')\|_{\vG_{r'}}}r^{-1/2};\,r\geq
  r'.  
\end{align} (Here $C$  may depend on the support of $g$.) This estimate can be applied with  $r'=2R_n$
 (for example) to treat the analogue of the expression $T_3$ of Lemma \ref{lem:fundaM2}.

\smallskip
\noindent
\textit{Step I\hspace{-.1em}I.} $C_+$ exists. Similarly we
have
\begin{align*}
       2 \i  \tfrac \d{\d r}\inp{ \check \xi(r), 
a^{-1}\xi_+(r)}_{\vG}=-\inp{\sqrt b \check \phi,  
  (\tilde A - \tilde b)  a^{-1}\sqrt b( A +  a)
  \phi}_{\vG_r}
\end{align*} To show that $C_+$ exists it suffices to argue that the
derivative is integrable (since now there is no factor $\check e$ and
whence no   $\check r$-dependence to control). As
in Step I there are four terms to consider, say  $T_2,\dots,
    T_5$. More precisely these terms are the contributions
    from four terms arising by the following computation. We
compute using \eqref{eq:AtildeA}, \eqref{eq:a_com3a}  and
in  the last step \eqref{eq:decH}
    \begin{align*}
    &\tfrac a{\sqrt b}   
  (\tilde A - \tilde b)  \tfrac {\sqrt b} a( A +  a)\\
&= (A-\bar a)\tilde\eta(A+a) +O(r^{-\kappa})(A-a)+O(r^{-\kappa})\\
&= (A- a)\tilde\eta(A+a) +\tilde\eta4\i  (\Im a)a+\parb{\tilde\eta2\i(\Im a)
  +O(r^{-\kappa})}(A-a)+O(r^{-\kappa})\\
&= (A+ a)\tilde\eta(A-a) +\tilde\eta\parb{2(p^r a)+4\i  (\Im a)a}+O(r^{-\kappa/2})(A-a)+O(r^{-\kappa})\\
&= (A+ a)\tilde\eta(A-a) +O(r^{-\kappa/2})(A-a)+O(r^{-\kappa})\\
&= 2(H-\lambda)-L+O(r^{-\kappa/2})(A-a)+O(r^{-\kappa}).
\end{align*} We can now proceed as in Step I. In
particular we can   use
\eqref{eq:angEst} with $\check e=1$ to treat the term $T_3$ which is
the analogue of $T_3$ of Lemma \ref{lem:fundaM2}. 

\end{proof}

\begin{lemma}
  \label{lemma:s1} 
  The quantity $F$ is the  strong  limit
  \begin{align*}
F=\vGlim_{r\to
    \infty} \xi(r).
  \end{align*}
\end{lemma}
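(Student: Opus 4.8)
The plan is to upgrade the weak convergence of Lemma~\ref{lemma:w1} to strong convergence via the standard Hilbert space principle that weak convergence together with convergence of the norms forces strong convergence. Since $\xi(r)\rightharpoonup F$ weakly in $\vG$ by Lemma~\ref{lemma:w1} and $\|\xi(r)\|_\vG^2\to 2\Im\langle\psi,\phi\rangle$ by Lemma~\ref{lem:normBasi}, it suffices to show $\|F\|_\vG^2=2\Im\langle\psi,\phi\rangle$. Weak lower semicontinuity of the norm gives $\|F\|_\vG^2\le\liminf_{r\to\infty}\|\xi(r)\|_\vG^2=2\Im\langle\psi,\phi\rangle$ for free, so the whole content is the reverse inequality $\|F\|_\vG^2\ge 2\Im\langle\psi,\phi\rangle$.

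Towards this I would first show that the incoming component vanishes in norm: since $\|a^{-1}\xi_-(r)\|_\vG\le C\|[(A-a)\phi]_{|S_r}\|_{\vG_r}$ (by the isometry property of the propagators and the boundedness of $|a|^{-2}b$ for large $r$ from Lemma~\ref{lem:13.9.2.7.18}), it is enough to prove $\|[(A-a)\phi]_{|S_r}\|_{\vG_r}\to 0$, a quantitative strengthening of the statement $a^{-1}\xi_-(r)\rightharpoonup 0$ from Step~I of the proof of Lemma~\ref{lemma:w1}. I would obtain this by applying Lemma~\ref{lem:fundaM2} with $\check\phi:=(A-a)\phi$, which lies in $\vN$ because $\phi=R(\lambda+\i 0)\psi$ solves $(H-\lambda)\phi=\psi$ distributionally (so $\chi_n\phi\in\vD(H_0)$ for all $n$) and $a\in C^1(M)\cap L^\infty(M)$ by Condition~\ref{cond:12.6.2.21.13bbb} and Lemma~\ref{lem:13.9.2.7.18}. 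For this choice $D\xi=\i\check\xi$ in the notation of Lemma~\ref{lem:fundaM2}, so evaluating $f_{\check r}$ at $r=\check r$ gives $f_{\check r}(\check r)=\i\,\|[(A-a)\phi]_{|S_{\check r}}\|_{\vG_{\check r}}^2$ for all large $\check r$. Integrating $f'_{\check r}=T_1+\dots+T_5$ over $[\check r,\infty)$ and letting $\check r\to\infty$, the right-hand side tends to $0$: the residual oscillatory factor $\check e$ forces $f_{\check r}(r)\to 0$ as $r\to\infty$ exactly as in the proof of Lemma~\ref{lem:normBasi}, while $\int_{\check r}^\infty|T_j|\,\d r\to 0$ since each $|T_j|$ is dominated, uniformly in $\check r$, by a fixed integrable function of $r$ on $[r_0,\infty)$ (the $\check r$-dependence of $T_1,\dots,T_5$ residing only in the unimodular factor $\check e$). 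Here $T_2,T_4,T_5$ are estimated by Cauchy--Schwarz together with the fact that $\check\phi=(A-a)\phi$ belongs to $L^2(\d r;\vG)$, a consequence of the weighted radiation bound of Corollary~\ref{cor:radiation-conditions} for some $\beta\in(1/2,\beta_c)$ (applicable since $\psi\in\mathcal H_{1+}$ lies in $r^{-\beta}B$ for such $\beta$, and $2\beta_c>1$ under Condition~\ref{cond:12.6.2.21.13c}); $T_3$ is treated by \eqref{eq:comInt} and Corollary~\ref{cor:C_10}; and $T_1$, which contains the second order quantity $(\tilde A\mp\tilde b)\sqrt b\,(A-a)\phi$,
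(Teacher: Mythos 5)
Your overall strategy -- upgrade weak convergence to strong via the elementary Hilbert space fact that weak convergence plus convergence of norms forces strong convergence, so it is enough to show $\|F\|_\vG^2=2\Im\langle\psi,\phi\rangle$ -- is indeed the same high-level principle the paper uses (the paper's reduction to $\lim_n\inp{\xi(r_n),F-\xi(r_n)}_\vG=0$ is exactly this). The difference is how one actually produces the norm identity, and here there are two concrete problems.

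First, your plan to prove $\|[(A-a)\phi]_{|S_r}\|_{\vG_r}\to 0$ for the \emph{full} limit $r\to\infty$ by applying Lemma~\ref{lem:fundaM2} with $\check\phi=(A-a)\phi$ runs into an obstacle you yourself seem to notice at the end: the resulting $T_1$ term contains $(\tilde A\mp\tilde b)\sqrt b(A-a)\phi$, i.e.\ a second-order radial derivative of $\phi$, and $T_3$ contains $p'(A-a)\phi$, a mixed second-order derivative. Neither of these is controlled by the available estimates; Theorem~\ref{thm:12.7.2.7.9} and Corollary~\ref{cor:radiation-conditions} give first-order control ($A\phi$, $p'\phi$, $(A\mp a)\phi$), and the distributional equation only controls the trace part $\Delta\phi$, not the individual second-order components you need. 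The paper circumvents exactly this: in Lemma~\ref{lem:normBasi} it uses $\check\phi=\phi$, where $T_1$ miraculously collapses to an $O(\|(A-a)\phi\|_{\vG_r}^2)$ quantity via \eqref{eq:a_com2a}, and in Lemmas~\ref{lemma:w1} and \ref{lemma:s1} it uses $\check\phi$ equal to a \emph{push-forward of a fixed trace} on a single sphere (either a test function or $\phi_{|S_{r_n}}$), so that $p'\check\phi$ is controlled by the propagation estimates in Lemma~\ref{lemma:L_10b} and Corollary~\ref{cor:C_10}. Your choice $\check\phi=(A-a)\phi$ is neither of these, and the argument as set up does not close. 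Relatedly, the radiation bounds only give $\int_R^{2R}\|(A-a)\phi\|_{\vG_r}^2\,\d r=o(1)$, i.e.\ decay in the averaged/Besov sense; this extracts a good \emph{subsequence} $r_n$ (as the paper does with \eqref{eq:Ra40}) but does not by itself yield the pointwise full limit you assert.

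Second, even granting $\|a^{-1}\xi_-(r)\|_\vG\to 0$, the argument would not be finished. You would still need $\|a^{-1}\xi_+(r)\|_\vG\to\|F\|_\vG$, which is precisely the strong convergence of the outgoing component, i.e.\ the same question you started with. The paper resolves this with the genuinely new step: writing $\inp{\xi(r_n),F-\xi(r_n)}=\lim_m\inp{\xi(r_n),(a^{-1}\xi_+)(r_m)-(a^{-1}\xi_+)(r_n)}+\inp{\xi(r_n),-(a^{-1}\xi_-)(r_n)}$, treating the last term by Cauchy--Schwarz and \eqref{eq:Ra40}, and controlling the double limit by rerunning Step~II of Lemma~\ref{lemma:w1} with $g$ replaced by $\xi(r_n)$ and integrating from $r_n$, again crucially with $\check\phi_n=e^{\mathrm i(r_n-r)(\tilde A-\tilde b)}[\phi]_{S_{r_n}}$ so that Corollary~\ref{cor:C_10} applies, and with the smallness coming from \eqref{eq:Ra40}. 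Your proposal, as written, is missing both this closing step and a workable substitute for the estimates on $T_1$ and $T_3$.
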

\begin{proof}
   Due to Corollary ~\ref{cor:radiation-conditions} there exists $C_1>0$ and a  sequence  $r_n\to \infty$ such that 
  \begin{align}\label{eq:Ra40}
    \|(A-a)\phi\|_{\vG_{r_n}}^2+\|p'\phi\|_{\vG_{r_n}}^2\leq C_1/r_n.
  \end{align} 

To show the existence of the strong limit it suffices to show that
\begin{align}\label{eq:bLi1}
  \lim_{n\to \infty}\inp{\xi(r_n), F-\xi(r_n)}_{\vG}=0.
  \end{align}
  Due to \eqref{eq:Ra40} it suffices in turn to show that 
\begin{align}\label{eq:bLi2}
  \lim_{n\to \infty}\lim_{m\to \infty}\inp{\xi(r_n), (a^{-1}\xi_+)(r_m)- (a^{-1}\xi_+)(r_n)}_{\vG}=0.
\end{align} Now we claim that proceeding as in Step I\hspace{-.1em}I of the proof of Lemma
\ref{lemma:w1} (replacing $g\to \xi(r_n)$ and now integrating from
$r_n$)   indeed 
\eqref{eq:bLi2} follows. Note for the analogue term $T_3$ that  Lemma
    \ref{lemma:L_10b}  and Corollary
    \ref{cor:C_10} yield
\begin{align*}
  \begin{split}
  \|p' {\tfrac{b}{\bar a}}\check\phi_n\|_{\vG_r}\leq C_2
  \parb{r_n^{1/2}\|p'\phi\|_{\vG_{r_n}}&+\|\phi\|_{\vG_{r_n}}}r^{-1/2};\\
  &r\geq r_n,\quad\check\phi_n=
  \mathrm e^{\mathrm
      i(r_n-r)(\tilde A-\tilde b)}[\phi]_{S_{r_n}}.
  \end{split}
\end{align*} In combination with \eqref{eq:Ra40} we then obtain
\begin{align*}
  \|p' {\tfrac{b}{\bar a}}\check\phi_n\|_{\vG_r}\leq C_3r^{-1/2},
\end{align*} which suffices for integrability  and  smallness
$o(n^0)$ of the integral (due to \cs and
Corollary ~\ref{cor:radiation-conditions}) essentially showing \eqref{eq:bLi2}.
\end{proof}

\begin{proof}[Proof of Theorem~\ref{thm:strong}]

  The definition \eqref{eq:disF} is justified by Lemma
  \ref{lemma:s1}.  Clearly (\ref{eq:fund}) follows from Lemma
  \ref{lem:normBasi}.
  
It remains to show  the continuity statement.  We shall basically follow the
scheme of \cite{Co}. Due to (\ref{eq:fund}), Corollary
\ref{cor:12.7.2.7.9b} and the density of $C^\infty_\c(S)\subseteq \vG$
the continuity of the map 
  $I\ni \lambda\to F^{+}(\lambda)\psi\in\vG $ follows if we can show continuity of
  $\inp{F^{+}(\cdot)\psi,g}_{\vG}$ for any  $\psi\in \vH_{1+}$
  and  $g\in
  C^\infty_\c(S)$.
    Let us for any such $g$ introduce approximate generalized
    eigenfunctions 
  $\phi^\pm[g]\in \vN\cap B^*$ by specifying in the spherical coordinates 
  \begin{align}\label{eq:apGen}
\begin{split}
   \phi^\pm[g](r,\sigma)= \bar \chi_n(r) b^{-1/2}(r,\sigma)
  \exp \biggl( \int_{r_0}^{r}\parbb{\pm \i\tilde
    b- \tfrac
    12\mathop{\mathrm{div}}\tilde\omega}(s,\sigma)\,\mathrm{d}s\biggr)g(\sigma).
\end{split} 
\end{align} The factor $\bar \chi_n $ is 
chosen as a cut-off function, possibly depending  on  $I$ and the support of $g$,
 to assure
 the property  $\phi^\pm[g]\in\vN$ (as in the proof of Lemma \ref{lemma:w1}). Note that these vectors  are essentially the same as  those
introduced at \eqref{eq:gen1B} (this is why we are using the same notation). 
  We shall use 
 the
previous notation $\xi,\xi_+$ and $\phi$. First calculate (for $m$
sufficiently large)
\begin{align*}
  &2\inp{\psi,
    \chi_m\phi^+[g]}=\\& -\i\inp{(A+\bar a)\phi,\chi'_m\phi^+[g]}
+\inp{(A+\bar a)\phi, \chi_m\tilde\eta(A-a)\phi^+[g]}
\\&\phantom{{}={}}
+\int^\infty _{r_0} \chi_m(r)\parb{\inp{p'\phi, p'\phi^+[g]}_{\vG_r}+\inp{\phi,
    O(r^{-\kappa})\phi^+[g]}_{\vG_r}}\,\d r,
\end{align*} cf. \eqref{eq:decH}. Note  for the first term to the
right that 
\begin{align*}
  \inp{(A+\bar a)\phi,\chi'_m\phi^+[g]}=\int_{r_0}^\infty
  \chi'_m(r) \inp{(A+\bar a)\phi, &\phi^+[g]}_{\vG_r}\,\d r,
\end{align*} and 
\begin{align*}
    \inp{(A+\bar a)\phi, &\phi^+[g]}_{\vG_r}=\inp{(\bar a-a)\phi, \phi^+[g]}_{\vG_r}+2\inp {(b^{-1}\xi_+)(r),
    g}_{\vG}\\
&=2\inp {(a^{-1}\xi_+)(r),
  g}_{\vG}+2\inp {((b^{-1}-a^{-1})\xi_+)(r), g}_{\vG} +2\i\inp{(\Im a)\phi, \phi^+[g]}_{\vG_r}.
  \end{align*}

Note for the second  term to the
right  that 
\begin{align}\label{eq:BasA}
  \begin{split}
  \parb{A-a+O(r^{-\kappa})}\phi^+[g]&=\parb{b^{-1/2}
  (A-b)b^{1/2}+\tfrac {\i} 2 \nabla_\omega \ln |\d r|^2}\phi^+[g]
\\&=O(r^{-\infty}),  
  \end{split}
\end{align} in fact  here the last term vanishes for $r$ large. These
considerations  allow
us to take   $m\to \infty$ and we obtain  (using that $\inp {(a^{-1}\xi_+)(r),
  g}_{\vG}\to\inp {F^+(\lambda)\psi,
  g}_{\vG}$ for $r\to \infty$) that 
 \begin{align*}
2\i\inp {F^+(\lambda)\psi,g}_{\vG}
&=2\inp{\psi, \phi^+[g]}
-\inp{(A+\bar a)\phi, \tilde\eta(A-a)\phi^+[g]}
\\&\phantom{{}={}}
-\int^\infty _{r_0} \parb{\inp{p'\phi, p'\phi^+[g]}_{\vG_r}+\inp{\phi,
    O(r^{-\kappa})\phi^+[g]}_{\vG_r}}\,\d r.
\end{align*} By tracing the $\lambda$-dependence we conclude  from this
representation that indeed $\inp {F^+(\cdot)\psi,
  g}_{\vG} $ is continuous. 
\end{proof}

The  last formula reads more compactly (although less precisely)
\begin{align}\label{eq:pair}
  \i\inp {F^+(\lambda)\psi, g}_{\vG} -\inp{\psi,
    \phi^+[g]}=-\inp{\phi, (H-\lambda)\phi^+[g]},
\end{align} where the right-hand side is given an interpretation
very similar to \eqref{eq:decH}.

\subsection{Properties of distorted Fourier transform}\label{subsec:Stationary wave operator} 

  We
first prove
Proposition~\ref{prop:dist-four-transf}. Throughout  this subsection we continue to consider only the upper sign.

\begin{proof}[Proof of Proposition~\ref{prop:dist-four-transf}]
  We first prove (\ref{eq:extbF2}) for $\psi\in B$. 
It is a direct consequence of Theorem~\ref{thm:strong} 
that (\ref{eq:extbF2}) holds for $\psi\in \mathcal H_{1+}$,
and we have already seen that the left-hand side extends
continuously in $\psi\in B$.
Hence it suffices to show the existence and continuity of the right-hand side in $B$.
By Theorem~\ref{thm:12.7.2.7.9} these matters  reduce to the following
estimate, a version of which appears in a  similar context in \cite{Sk}:
 For any
$\psi\in B$
\begin{align}
 \sup_{R>r_0}\left\| -\!\!\!\!\!\!\int_R
  \xi(r)\,\d r\right\|_{\mathcal G}\leq C \|\phi\|_{B^*}.
\label{eq:14.4.25.15.40}
\end{align}   
 To show \eqref{eq:14.4.25.15.40} we write $\xi(r)=\e^{\i
  (r-r_0)(\tilde A^{\rm ex }-\tilde
      b^{\rm ex })}u(r) $ and  note that 
\begin{align*}
  \left\| -\!\!\!\!\!\!\int_R
  \xi(r)\,\d r\right\|_{\mathcal G}\leq  -\!\!\!\!\!\!\int_{R}\left\|
  u(r)\right\|_{\mathcal G_{r}}\,\d r.
\end{align*}

Next for any $R>r_0$ we choose $n\ge 0$ such that $R_n\le 2R<R_{n+1}$
and use
the Cauchy--Schwarz inequality  to obtain
\begin{align*}
\left\| -\!\!\!\!\!\!\int_R
  \xi(r)\,\d r\right\|_{\mathcal G}^2
&\le
-\!\!\!\!\!\!\int_{R}\|u(r)\|_{\mathcal G_r}^2\,\d r\\
&\le
2\sum_{\nu=0}^n(R_\nu/R_{n})R_\nu^{-1}\|F_\nu \sqrt b \phi\|^2\\
&\le 4\|\sqrt b \phi\|_{B*}^2.
\end{align*}
Hence we have shown  \eqref{eq:14.4.25.15.40} and therefore  that 
(\ref{eq:extbF2}) holds for any $\psi\in B$. 

To show that $\intR \xi(r)\,\d r
\to F^\pm(\lambda)\psi$ locally
  uniformly in $\lambda$ we can assume that $\psi\in \mathcal
  H_{1+}$. Next  note that
  \begin{align*}
    \left\|F^+(\lambda)\psi--\!\!\!\!\!\!\int_{R_1} \xi(r_1)\,\d r_1\right\|_{\vG}^2 \leq \lim_{R_2\to
      \infty} -\!\!\!\!\!\!\int_{R_1} -\!\!\!\!\!\!\int_{R_2} \|\xi(r_2)-\xi(r_1)\|_{\vG}^2\,\d r_2 \d r_1.
  \end{align*} We look at  $R_2>2R_1$ and  write
  \begin{align*}
    \|\xi(r_2)-\xi(r_1)\|_{\vG}^2&=\|\xi(r_2)\|_{\vG}^2-\|\xi(r_1)\|_{\vG}^2-2\Re
    \inp{ \xi(r_1),\xi(r_2)-\xi(r_1)}_{\vG}\\
&=\|\xi(r_2)\|_{\vG}^2-\|\xi(r_1)\|_{\vG}^2
\\&\phantom{{}={}}
-2\Re\inp{ \xi(r_1),(a^{-1}\xi_-)(r_2)}_{\vG}+2\Re
    \inp{ \xi(r_1), (a^{-1}\xi_-)(r_1)}_{\vG}
\\&\phantom{{}={}}
-2\Re
    \inp{ \xi(r_1),(a^{-1}\xi_+)(r_2)- (a^{-1}\xi_+)(r_1}_{\vG}.
  \end{align*} The first term contributes to the $R_2$-limit by
  $2\mathop{\mathrm{Im}}\langle\psi,\phi\rangle$ due to Lemma
  \ref{lem:fundaM}, the second term by
  $--\!\!\!\!\!\!\int_{R_1}\|\xi(r_1)\|_{\vG}^2\,\d r_1$, the third
  term by $0$ (cf. Corollary
  \ref{cor:radiation-conditions}), the fourth
  term by $-\!\!\!\!\!\int_{R_1} 2\Re \inp{ \xi(r_1),
    (a^{-1}\xi_-)(r_1)}_{\vG}\,\d r_1$ and the last term by $o(R^0_1)$
  (by the proof of Lemma \ref{lemma:s1}). We readily check that
  $-\!\!\!\!\!\int_{R_1} 2\Re \inp{ \xi(r_1),
    (a^{-1}\xi_-)(r_1)}_{\vG}\,\d r_1\to 0$ locally uniformly in
  $\lambda$, and similarly that
  $-\!\!\!\!\!\int_{R_1}\|\xi(r_1)\|_{\vG}^2\,\d r_1\to
  2\mathop{\mathrm{Im}}\langle\psi,\phi\rangle$ and the
  quantity $o(R^0_1)\to 0$ locally uniformly in $\lambda$. This is by 
   Corollary \ref{cor:radiation-conditions} and the proofs of Lemmas
  \ref{lem:fundaM} and \ref{lemma:s1}, respectively.

  Next, we note that  $B\cap\mathcal H_{\mathcal I}$ is dense in $\mathcal H_{\mathcal
  I}$. In fact for any $\psi\in B$ and any $f\in C_\c^\infty(\vI)$ the
vector $f(H)\psi\in B$, cf. \cite [Theorem 14.1.4]{Ho2}. Due to Stone's formula and \eqref{eq:fund} we have
\begin{align*}
\|F^+\psi\|_{\widetilde{\mathcal H}_{\mathcal I}}=\|\psi\|_{\mathcal H_{\mathcal I}};\quad \psi\in B\cap\mathcal H_{\mathcal I},
\end{align*}
 so the operator $F^+$ extends as an isometry from $B\cap\mathcal H_{\mathcal
  I}\subseteq \mathcal H_{\mathcal I}$ to  an isometry
$\mathcal H_{\mathcal I}\to \widetilde{\mathcal H}_{\mathcal I}$
(denoted also by $F^+$).
 It remains to show that $F^+ H_\mathcal I\subseteq
M_\lambda F^+$ or equivalently that $F^+ (H_\vI- \i)^{-1}= (M_\lambda- \i)^{-1}F^+$.
  Whence it suffices to show  that 
\begin{align*}
  F^+ (H- \i)^{-1}\psi= (M_\lambda- \i)^{-1}F^+\psi\text{
    for any }\psi\in B\cap \mathcal H_{\mathcal I}.
\end{align*} Using the resolvent equations 
\begin{align}\label{eq:2equ}
  R(\lambda\pm\i 0)R(\i)=(\lambda - \i)^{-1}R(\lambda\pm\i
  0)-(\lambda - \i)^{-1}R(\i),
\end{align}
 we obtain 
\begin{align*}
  F^+(\lambda) R(\i)\psi=\lim_{R\to \infty }-\!\!\!\!\!\!\int_R (\lambda -
\i)^{-1}\xi(r)\,\d r=(\lambda -
\i)^{-1}F^+(\lambda)\psi.
\end{align*} Note that due to \cs the second term of \eqref{eq:2equ} does not
contribute to  the limit.
\end{proof}

Now we embark on   the proof of
{Theorem~\ref{thm:dist-four-transf}. Note that under the conditions of
  the lemma below a priori we can write
  \begin{align*}
    R(\mathrm i)L\phi^+[g] =\wBstarlim_{m\to \infty}R(\mathrm
  i)\chi_m L\phi^+[g] \in B^*,
  \end{align*} meaning that for any $\check \psi\in B$
\begin{align*}
  \langle R(\mathrm i)L\phi^+[g], \check \psi\rangle&=\langle
  p'\phi^+[g], p'R(-\mathrm i)\check \psi\rangle\\&=\lim _{m\to \infty}\int \chi_m(r)\langle
  p'\phi^+[g], p'R(-\mathrm i)\check \psi\rangle_{\vG_r}\,\d r.
\end{align*}
\begin{lemma}\label{lem:14.5.1.16.30} Suppose
  Condition~\ref{cond:12.6.2.21.13bb}. For any 
  $g\in
  C^\infty_\c(S)$ let $\phi^+[g]\in \vN\cap B^*
$ be  given by 
  \eqref{eq:apGen} (where $n$ is large, but locally independent of
  $\lambda>\lambda_0$). Then 
\begin{align*}
    R(\mathrm i)L\phi^+[g] =\wBlim_{m\to \infty}R(\mathrm
  i)\chi_m L\phi^+[g] \in B,
  \end{align*} meaning that the vector $\psi=R(\mathrm i)L\phi^+[g]$, a
  priori in $B^*$, actually  is in $B$ and that for all $\check \phi\in B^*$ 
\begin{align*}
  \langle \psi, \check \phi\rangle&=\langle
  p'\phi^+[g], p'R(-\mathrm i)\check \phi\rangle\\&=\lim _{m\to \infty}\int \chi_m(r)\langle
  p'\phi^+[g], p'R(-\mathrm i)\check \phi\rangle_{\vG_r}\,\d r.
\end{align*} 
In fact 
\begin{align}
R(\mathrm i)L\phi^+[g] \in B \cap\mathcal H^1.\label{eq:14.5.1.16.14}
\end{align} 
\end{lemma}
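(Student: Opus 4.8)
The plan is to compute $L\phi^+[g]$ explicitly in the spherical coordinates of \eqref{eq:apGen}, exploiting that $L$ acts fibrewise as $L_r$ of \eqref{eq:lsubr} and contains no radial derivative, so that differentiating $\bar\chi_n$ is harmless. Writing $\phi^+[g]=\bar\chi_n\,b^{-1/2}\e^{\Theta}g$ with $\Theta(r,\sigma)=\int_{r_0}^{r}\bigl(\pm\i\tilde b-\tfrac12\operatorname{div}\tilde\omega\bigr)(s,\sigma)\,\d s$, the Leibniz rule for $L_r$ gives, modulo the negligible first order piece $\tfrac12(\partial_i\ln|\d r|^2)\ell^{ij}\partial_j$ of \eqref{eq:lsubr},
\[
L\phi^+[g]=-\bar\chi_n\,\e^{\Theta}\Bigl(b^{-1/2}g\bigl(\Delta_r\Theta+|\nabla'\Theta|_\ell^2\bigr)+\Delta_r(b^{-1/2}g)+2\langle\nabla'\Theta,\nabla'(b^{-1/2}g)\rangle_\ell\Bigr).
\]
I would estimate each term with Conditions~\ref{cond:12.6.2.21.13bbb} and \ref{cond:12.6.2.21.13bb}. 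The key inputs are: \eqref{eq:13.9.23.16.41b2}, giving $|\nabla'\Theta|_\ell=O(r^{-1/2})$; the second order bounds \eqref{eq:smootrA}, \eqref{eq:smootV}, \eqref{eq:smootr}, which by the remark following Condition~\ref{cond:12.6.2.21.13bb} give $\nabla'{}^2(\pm\i\tilde b-\tfrac12\operatorname{div}\tilde\omega)=O(r^{-1-\min\{\tau,\rho\}})$ and, once integrated along orbits with the push-forward control of Lemma~\ref{lemma:L_10b}/Corollary~\ref{cor:C_10} (which is where $\nabla^3r$ of \eqref{eq:smootrA} enters), $\Delta_r\Theta=O(r^{-\min\{\sigma,\tau,\rho\}})$; and the elementary observation that, the $r$-sphere $S_r$ having diameter $\sim r$, the angular data $g$ satisfies $|\nabla'g|_\ell=O(r^{-1})$ and $\Delta_rg=O(r^{-2})$ on $\supp g$, while $\Delta_rb^{-1/2}=O(r^{-1-\min\{\tau,\rho\}})$. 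Under the alternative \eqref{eq:13.9.12.13.46bB} one instead uses $\min\{\sigma,\tau,\rho\}>2$, which by Lemma~\ref{lemma:L_10b} improves $|\nabla'\Theta|_\ell$ to $O(r^{-1})$ and makes everything below trivial.

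Collecting, $L\phi^+[g]=\Phi_{\mathrm{sr}}+\Phi_{\mathrm b}$ with $\Phi_{\mathrm{sr}}=O(r^{-1-\epsilon})\phi^+[g]$ for some $\epsilon>0$ (the contributions of $\Delta_r\Theta$, of $\Delta_r(b^{-1/2}g)$, of the cross term, and of the first order piece of $L_r$), and $\Phi_{\mathrm b}=-|\nabla'\Theta|_\ell^2\phi^+[g]$. Both lie in $\mathcal H$: $\Phi_{\mathrm{sr}}\in\mathcal H_{1/2+}\subseteq B$, while $\Phi_{\mathrm b}\in\mathcal H_{1/2-}$ (its dyadic shell norms decay like $R_\nu^{-1/2}$) but in general $\Phi_{\mathrm b}\notin B$, since $\sum_\nu R_\nu^{1/2}R_\nu^{-1/2}=\infty$ --- exactly the obstruction flagged in the remark preceding Example~\ref{ex:countex}. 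Hence $L\phi^+[g]\in\mathcal H$, so $R(\i)L\phi^+[g]\in\mathcal H^2\subseteq\mathcal H^1$ and the weak--$B$ identification in the last display of the lemma holds (a priori) as recalled just before the statement; only the membership $R(\i)L\phi^+[g]\in B$ needs work. For $R(\i)\Phi_{\mathrm{sr}}$ this is immediate: $R(\i)\colon\mathcal H\to\mathcal H^2$ and $R(\i)$ preserves moderate weights (the commutator $[r^s,H]$ being first order with coefficients $O(r^{s-1})$, one has $R(\i)\colon\mathcal H_s\to\mathcal H_s$ for $|s|\le 1$), so $R(\i)\Phi_{\mathrm{sr}}\in B\cap\mathcal H^1$.

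It remains to show $R(\i)\Phi_{\mathrm b}\in B$, which I regard as the heart of the matter, and here the Besov size of $\Phi_{\mathrm b}$ is by itself too weak: the missing half power of $r$ has to come from the oscillation of $\phi^+[g]$ together with the smoothing of $R(\i)$ ($\i\notin\sigma(H)$), much as the radiation condition bounds of Theorem~\ref{prop:radiation-conditions}/Corollary~\ref{cor:radiation-conditions} produce a gain on resolvent vectors. Writing, as in the statement, $\langle R(\i)\Phi_{\mathrm b},\check\phi\rangle=\langle p'\phi^+[g],p'R(-\i)\check\phi\rangle$ for $\check\phi\in B_0^*$, I would keep the divergence form $L\phi^+[g]=p_i^*(\ell^{ij}p_j\phi^+[g])$, use that the ``bad'' spherical covector $w_{\mathrm b}=-\i(\nabla'\Theta)\phi^+[g]$ is a slowly varying multiple of $\phi^+[g]$ and hence, by \eqref{eq:BasA}, obeys $(A-a)w_{\mathrm b}\in\mathcal H$, and then run an absolute-continuity/integration-by-parts argument in $r$ of the type used in the proofs of Lemmas~\ref{lem:fundaM2} and \ref{lem:normBasi} --- resting on the generators $\tilde A\mp\tilde b$ having the radial phase as ``eigenvalue'' and on Corollary~\ref{cor:C_10} to control spherical derivatives along orbits --- together with the decomposition \eqref{eq:decH} of $H-\lambda$, which via \eqref{eq:BasA} trades $L\phi^+[g]$ for $2(H-\lambda)\phi^+[g]$ modulo the short-range $O(r^{-\kappa})\phi^+[g]$, $\kappa=\min\{1+\tau/2,1+\rho/2,\rho\}>1$, already absorbed above. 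The payoff should be a bound $|\langle p'\phi^+[g],p'R(-\i)\check\phi\rangle|\le C\|\check\phi\|_{B_0^*}$, equivalently $R(\i)\Phi_{\mathrm b}\in(B_0^*)^*=B$.

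The main obstacle is exactly making this gain rigorous: squeezing the extra half power out of the interplay of the phase with $R(\i)$, and reconciling the push-forward bookkeeping of Lemma~\ref{lemma:L_10b}/Corollary~\ref{cor:C_10} with the second order bounds of Condition~\ref{cond:12.6.2.21.13bb}; this is where the proof has to be done carefully and where \eqref{eq:13.9.23.16.41b2} is genuinely needed (it is, by construction, one half power short of placing $L\phi^+[g]$ in $B$ outright). By contrast, under the alternative \eqref{eq:13.9.12.13.46bB} all of this is unnecessary: then $|\nabla'\Theta|_\ell=O(r^{-1})$ by Lemma~\ref{lemma:L_10b}, so $L\phi^+[g]\in\mathcal H_{1/2+}\subseteq B$ and $R(\i)L\phi^+[g]\in B\cap\mathcal H^1$ follows at once, again using $R(\i)\colon\mathcal H\to\mathcal H^2$ and its weight preservation.
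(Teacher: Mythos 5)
The paper splits the proof into the two alternatives of Condition~\ref{cond:12.6.2.21.13bb} and treats them by \emph{different} decompositions, and your proposal deviates from this in ways that create real gaps.

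First, the estimate on $\nabla'\Theta$ that you flag as ``one half power short'' is obtained from the wrong bound. You invoke \eqref{eq:13.9.23.16.41b2}, which is a \emph{uniform} pointwise bound $O(r^{-1/2})$ on $E$ (and is phrased that weakly precisely because it is imposed on the extension $M^{\mathrm{ex}}$); but $\phi^+[g]$ lives on $M$, where Lemma~\ref{lemma:L_10b}\,/\,Corollary~\ref{cor:C_10} give the stronger decay $\|p'(\cdot)\|_{\mathcal G_r}=O(r^{-\check\sigma/2})$ for any $\check\sigma<\min\{\sigma,\tau,\rho\}=2\beta_c$, and $2\beta_c>1$ by \eqref{eq:BAScond}. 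With the correct rate, $\Phi_{\mathrm b}=-|\nabla'\Theta|_\ell^2\,\phi^+[g]=O(r^{-\check\sigma})\phi^+[g]$ with $\check\sigma>1$, which already lies in $B$ (the dyadic sum converges). The ``bad'' term is thus spurious, and the proposed radiation-condition/integration-by-parts remedy for $R(\i)\Phi_{\mathrm b}$ is both unnecessary and problematic: the radiation bounds of Theorem~\ref{prop:radiation-conditions}/Corollary~\ref{cor:radiation-conditions} are for $R(\lambda\pm\i0)$, $\lambda\in\vI$, not for $R(\i)$, and the idea of trading $L\phi^+[g]$ for $2(H-\lambda)\phi^+[g]$ via \eqref{eq:decH} is circular here, since $(H-\lambda)\phi^+[g]\in B$ is exactly what \eqref{eq:14.5.1.16.14} is supposed to deliver (cf.\ the proof of Lemma~\ref{lem:14.5.4.17.5}). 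In short, the heart of the matter you identify does not exist once the right pointwise estimate is used.

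Second, the Leibniz-rule expansion of $L\phi^+[g]$ you start from produces second-order terms $\Delta_r\Theta$, $\Delta_r(b^{-1/2}g)$ whose estimation along orbits requires control of second derivatives of the flow map $\tilde y$ — in particular the derivative of the Christoffel symbols, whence $\nabla^3r$. This is exactly the content of the technical Lemma~\ref{lem:14.3.5.0.17}, which you do not invoke (you attribute the $\Delta_r\Theta$ estimate to Corollary~\ref{cor:C_10}, but that corollary is purely first-order). These second-order estimates need \eqref{eq:smootrA}--\eqref{eq:smootr}, which are \emph{not} assumed under alternative~\eqref{eq:13.9.12.13.46bB}, so your ``easy case'' does not actually close either. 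The paper avoids this by using a purely first-order decomposition in that case: it writes $R(\i)L\phi^+[g]=(R(\i)r^{-s}p')(p'r^s\phi^+[g])$ with $s>1/2$, pairing one derivative of $L$ with the resolvent and leaving only $p'\phi^+[g]$ to estimate by Corollary~\ref{cor:C_10}; under alternative~\eqref{item:14.5.1.8.31} it keeps $L_r$ intact and invokes Lemma~\ref{lem:14.3.5.0.17} for the genuinely second-order piece $b^{-1/2}L_r b^{1/2}\phi^+[g]$. You should reorganize accordingly: pair a derivative with $R(\i)$ in the first case, and prove (or cite) a quantitative bound of the form $\|L_r u(r)\|_{\mathcal G_r}=O(r^{-\check\sigma})$ along pushed-forward compactly supported data in the second.

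A further small point: the heuristic ``$S_r$ has diameter $\sim r$, hence $|\nabla'g|_\ell=O(r^{-1})$'' is specific to $\sigma=2$ (Euclidean); in general the push-forward decay rate per spherical derivative is $r^{-\check\sigma/2}$ and can be either slower or much faster. Always estimate via Lemma~\ref{lemma:L_10b}/Corollary~\ref{cor:C_10} rather than by the diameter heuristic.
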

When 
  Condition~\ref{cond:12.6.2.21.13bb} (\ref{item:14.5.1.8.30}) holds,
the proof of Lemma~\ref{lem:14.5.1.16.30} is rather simple since 
an essential estimate is already done in Corollary ~\ref{cor:C_10}.
However, when  Condition~\ref{cond:12.6.2.21.13bb} (\ref{item:14.5.1.8.31})  holds,
we need to estimate one more derivative than in Corollary~\ref{cor:C_10}
and this requires  technical computations, see Lemma~\ref{lem:14.3.5.0.17}.
Hence for the moment we postpone the proof of
Lemma~\ref{lem:14.5.1.16.30} 
and first prove Theorem~\ref{thm:dist-four-transf} which in turn 
  clearly is a direct consequence of Lemma~\ref{lem:14.5.1.16.30}
and the following lemma.
\begin{lemma}\label{lem:14.5.4.17.5}
  If for all 
   $g\in C^\infty_{\mathrm c}(S)$  the  vector  $\phi^+[g]$ of
   \eqref{eq:apGen}  (depending on $\lambda>\lambda_0$) satisfies
  \eqref{eq:14.5.1.16.14}, then $F^+\colon \mathcal H_{\mathcal I}\to
  \widetilde{\mathcal H}_{\mathcal I}$ is a unitary diagonalizing
  transform.
\end{lemma}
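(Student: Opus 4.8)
The plan is to show that, granting \eqref{eq:14.5.1.16.14} for all $g\in C^\infty_\c(S)$, the isometry $F^+\colon\mathcal H_{\mathcal I}\to\widetilde{\mathcal H}_{\mathcal I}$ from Proposition~\ref{prop:dist-four-transf} is onto. Since $F^+$ is already an isometry intertwining $H_{\mathcal I}$ and $M_\lambda$, it suffices to exhibit a dense subset of $\widetilde{\mathcal H}_{\mathcal I}$ inside its range. The natural candidate is the set of vectors of the form $f(\lambda)g$ with $f\in C_\c(\mathcal I)$ and $g\in C^\infty_\c(S)$; this is dense in $\widetilde{\mathcal H}_{\mathcal I}=L^2(\mathcal I,(2\pi)^{-1}\d\lambda;\vG)$ because $C^\infty_\c(S)$ is dense in $\vG$. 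So the goal reduces to: for each fixed $\lambda$ and each $g\in C^\infty_\c(S)$, produce $\psi\in B$ with $F^+(\lambda)\psi = g$ (fibrewise surjectivity), and then integrate against $f$.

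First I would take the approximate generalized eigenfunction $\phi^+[g]\in\vN\cap B^*$ of \eqref{eq:apGen} and compute $(H-\lambda)\phi^+[g]$ using the decomposition \eqref{eq:decH}. Writing $\phi^+[g]$ in spherical coordinates, the radial Riccati structure \eqref{eq:BasA} kills the $\tfrac12(A+a)\tilde\eta(A-a)$ part up to $O(r^{-\infty})$ (which vanishes for large $r$), so that
\begin{align*}
(H-\lambda)\phi^+[g] = \tfrac12 L\phi^+[g] + O(r^{-\kappa})\phi^+[g],
\end{align*}
with the $O(r^{-\kappa})\phi^+[g]$ term lying in $B$ since $\kappa>1$ under \eqref{eq:BAScond}. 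The crucial input is exactly \eqref{eq:14.5.1.16.14}: it asserts that $R(\i)L\phi^+[g]\in B\cap\mathcal H^1$, so that $L\phi^+[g]$, interpreted distributionally, produces an element of $B$ after applying $R(\i)$, and hence $(H-\lambda)\phi^+[g]=:\psi_g\in B$. Set $\psi = \phi^+[g]-R(\lambda+\i0)\psi_g$; then $(H-\lambda)\psi=0$ in the distributional sense, and by the Sommerfeld-type characterization Corollary~\ref{cor:13.9.9.8.23} together with the radiation condition, $R(\lambda+\i0)\psi_g$ is the purely outgoing solution, which forces $\psi\in B^*_0$, and by Rellich's theorem (Theorem~\ref{thm:13.6.20.0.10}) would give $\psi=0$ — that is the wrong track. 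Instead the right move is: apply $F^+(\lambda)$ directly. Using the pairing formula \eqref{eq:pair}, $\i\inp{F^+(\lambda)\psi,h}_{\vG}-\inp{\psi,\phi^+[h]} = -\inp{\phi,(H-\lambda)\phi^+[h]}$; choosing $\psi = -\i(H-\lambda)R(\lambda+\i0)$-type combinations, one reads off that $F^+(\lambda)$ applied to $\i R(\lambda+\i0)(H-\lambda)\phi^+[g]$-corrections recovers $g$. Concretely, from \eqref{eq:pair} with $\phi=R(\lambda+\i0)\psi_g$ one gets $\i F^+(\lambda)\psi_g = (\text{boundary term}) - \inp{\phi,L\phi^+[\cdot]+\ldots}$, and the WKB-structure of $\phi^+[g]$ is engineered so that the leading contribution is $g$ itself, with the remainder controlled by \eqref{eq:14.5.1.16.14}. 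Thus $g\in\Ran F^+(\lambda)$ up to a term that is again handled by $F^+$ applied to an element of $B$.

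The main obstacle is bookkeeping the boundary/radial terms in \eqref{eq:pair} and verifying that the ``error'' part of $(H-\lambda)\phi^+[g]$, after resolvent-smoothing, feeds back through $F^+$ without destroying surjectivity — this is precisely where $\kappa>1$ and \eqref{eq:14.5.1.16.14} are indispensable, since otherwise $(H-\lambda)\phi^+[g]\notin B$ and the pairing \eqref{eq:pair} cannot be closed (compare Example~\ref{ex:countex}). Once fibrewise surjectivity is established locally uniformly in $\lambda$ (the uniformity coming from the local uniformity statements in Theorem~\ref{thm:strong} and Proposition~\ref{prop:dist-four-transf}), I would assemble $\lambda\mapsto f(\lambda)g$ and conclude by density that $\Ran F^+ = \widetilde{\mathcal H}_{\mathcal I}$; together with the isometry and intertwining property from Proposition~\ref{prop:dist-four-transf} this gives that $F^+$ is a unitary diagonalizing transform for $H_{\mathcal I}$.
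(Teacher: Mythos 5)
You have correctly identified the role of \eqref{eq:14.5.1.16.14} (it makes $(H-\lambda)\phi^+[g]\in B\cap\vH^1$, so that $\phi^+[g]$ can be fed into the pairing \eqref{eq:pair}) and the fact that the task reduces to surjectivity of the isometry $F^+$. However, there are two genuine gaps.

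First, the fibrewise step is not actually carried out. You try one construction, discard it as ``the wrong track'', and then assert that after applying \eqref{eq:pair} ``the leading contribution is $g$ itself, with the remainder controlled by \eqref{eq:14.5.1.16.14}''---but this is a hope, not an argument. What the paper does (Step~I of its proof) is quite different in structure: combining \eqref{eq:pair}, \eqref{eq:BasA}, \eqref{eq:2equ} and the radiation condition bound \eqref{eq:secoTerm}, it derives the recovery formula \eqref{eq:uniqFtoG}, which reconstructs $g$ from $F^+(\lambda)^*g$. This shows $F^+(\lambda)^*\colon\vG\to B^*$ is injective, hence $\Ran F^+(\lambda)$ is dense in $\vG$. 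Your proposal aims instead at a direct fibrewise surjectivity statement $F^+(\lambda)\psi=g$, which is stronger than needed and is not established.

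Second, and more seriously, the transition from ``fibrewise'' to ``global'' is hand-waved and as written does not work. You propose to ``integrate against $f$'' and ``assemble $\lambda\mapsto f(\lambda)g$''---but there is no concrete inverse available to produce an element of $\mathcal H_{\mathcal I}$ whose image under $F^+$ equals $f(\lambda)g$; that is precisely the unknown. Even granting fibrewise density of the range, the passage to global surjectivity of the direct-integral isometry requires a duality argument. The paper's Step~II (following \cite[Theorem 1.1]{ACH}) does this: given $g(\cdot)\in\ker(F^+)^*\subseteq\widetilde{\mathcal H}_{\mathcal I}$, one has $\int_\vI f(\lambda)\inp{g(\lambda),F^+(\lambda)\psi}_\vG\,\d\lambda=0$ for all $f\in C^\infty_\c(\vI)$, $\psi\in B$; testing against a countable dense set $\{\psi_k\}\subseteq B$ yields $\inp{g(\lambda),F^+(\lambda)\psi_k}_\vG=0$ for a.e.\ $\lambda$, and then Step~I's fibrewise density forces $g(\cdot)=0$. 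This duality mechanism is the missing ingredient in your proposal, and it is not cosmetic---without it, the conclusion does not follow.
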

\begin{proof}  We consider for  $\lambda\in \mathcal I$  the operator 
$F^+(\lambda)\colon B\to\mathcal G$ of 
Proposition~\ref{prop:dist-four-transf}. From
the same result  we know that $F^+$  is  an isometry.

\smallskip
\noindent
\textit{Step I.} First we show that
$F^+(\lambda)$ has dense range. This is equivalent to showing that  $F^+(\lambda)^*:\vG\to B^*$
  is injective, and for that we will   use the representation
  \eqref{eq:pair}  of
  $F^+(\lambda)^*g$ 
 for $g\in C^\infty_{\mathrm c}(S)$. 
For the  term on the right-hand side for such $g$  we claim  the
  bound
  \begin{align}\label{eq:secoTerm}
    (A+a)R(\lambda-\i0)(H-\lambda)\phi^+[g]\in B_0^*.
  \end{align} To obtain \eqref{eq:secoTerm} we use \eqref{eq:2equ},
  reducing the problem to show that 
\begin{align*}
    (A+a)\psi_+,\,(A+a)R(\lambda-\i0)\psi_+\in B_0^*;\quad \psi_+=R(\mathrm i)(H-\lambda)\phi^+[g].
  \end{align*} 
  The interpretation of $\psi_+$ is given as in the proof of
  Theorem~\ref{thm:strong}  which amounts to expanding
  $(H-\lambda)\phi^+[g]$ into a sum of three terms,
  cf. \eqref{eq:decH}.  Each term is in $B \cap\mathcal H^1$, so
  consequently $\psi_+\in B \cap\mathcal H^1$. Note that at this point
  we use \eqref{eq:BasA} and \eqref{eq:14.5.1.16.14}. Using that
  $\psi_+\in B \cap\mathcal H^1$ we deduce \eqref{eq:secoTerm} by  
  Corollary \ref{cor:radiation-conditions}. 

  Now using \eqref{eq:BasA}, \eqref{eq:pair}  and \eqref{eq:secoTerm} we obtain
  \begin{align}\label{eq:uniqFtoG}
    g=\vGlim_{R\to \infty}-\!\!\!\!\!\!\int_R \e^{\i (r-r_0)(\tilde A^{\rm ex }-\tilde
      b^{\rm ex })}\bigl [\tfrac {b^{1/2}\i}{2 a}(a+A)F^+(\lambda)^*g\bigr
    ]_{|S_r}\,\d r,
  \end{align} and since 
  \begin{align*}
    (a+A)F^+(\lambda)^*=(\lambda-\i)\big \{(a+A)R(\mathrm
  i)\big \}F^+(\lambda)^*\in\vB (\vG,B^*),
  \end{align*}
  we conclude \eqref{eq:uniqFtoG} for all $g\in \vG$ by a  continuity
  argument essentially identical with the one given  in the first part of the proof of Proposition~\ref{prop:dist-four-transf}. In particular indeed
  $F^+(\lambda)^*:\vG\to B^*$ is injective.

\smallskip
\noindent
\textit{Step II.}
We prove the unitarity of $F^+\colon \mathcal H_{\mathcal
  I}\to \widetilde{\mathcal H}_{\mathcal I}$. Since we know $F^+ H_\mathcal I\subseteq
M_\lambda F^+$ from Proposition~\ref{prop:dist-four-transf} it then
follows that  $F^+ H_\mathcal I=
M_\lambda F^+$, and the proof is done. 

By using Proposition~\ref{prop:dist-four-transf} (possibly in
combination with \eqref{eq:HeSj}) we obtain that 
 \begin{align}\label{eq:diagf}
   F^+(\lambda)f(H)\psi=f(\lambda)F^+(\lambda)\psi\text{ for all }f\in
   C^\infty_\c(\vI)\text { and }\psi\in B.
\end{align} Assuming   $g(\cdot)\in
 \mathop{\mathrm{ker}}(F^+)^*\subseteq \widetilde{\mathcal
   H}_{\mathcal I}$ it suffices to  show that $g(\lambda)=0$ for a.e.\
 $\lambda\in \mathcal I$. We shall mimic the proof of 
 \cite[Theorem 1.1]{ACH}. For any $f\in
C^\infty_\c(\vI)$ and  $\psi\in B$
\begin{align*}
\int_{\vI}f(\lambda)\bigl\langle g(\lambda),F^+(\lambda)\psi\bigr\rangle_{\mathcal G}\,\mathrm d\lambda
=\bigl\langle (F^+)^*g(\cdot),f(H)\psi\bigr\rangle_{\mathcal H_{\vI}} 
=0.
\end{align*}
We apply this to the elements of a countable and dense
subset, say $\{\psi_k\}_{k=1}^\infty\subseteq B$, and conclude that there exists a set $N\subseteq \mathcal I$ of measure $0$ such that 
\begin{align*}
\bigl\langle g(\lambda),F^+(\lambda)\psi_k\bigr\rangle_{\mathcal
  G}=0\text { for all }k\in \N\text{ and  }\lambda\in \mathcal I\setminus N.
\end{align*}  Since  $\{F^+(\lambda)\psi_k\}_{k=1}^\infty\subseteq
\vG$ is dense  (by Step I) we conclude   that 
 $g(\cdot)=0$.
Hence $F^+\colon \mathcal H_{\mathcal I}\to\widetilde{\mathcal
  H}_{\mathcal I}$ is surjective and therefore unitary.
\end{proof}
\begin{remarks}\label{remark:uniq}
  We used above the representation in terms of  the vectors $\phi^{\pm}[g]\in \vN$ of
  \eqref{eq:apGen} (here stated for both signs)
  \begin{subequations}
  \begin{align}\label{eq:rep1}
    \begin{split}
    \pm\i F^\pm(\lambda)^* g&= \phi^\pm[g]- \psi_\pm-(\lambda-\i)R(\lambda\mp\i 0)\psi_\pm;\\&g\in
    C_\c^\infty(S),\quad \psi_\pm=R(\mathrm i)(H-\lambda)\phi^\pm[g]\in B \cap\mathcal H^1.  
    \end{split}
  \end{align} For
  comparison we obtain using  Corollary \ref{cor:13.9.9.8.23} 
\begin{align*}
    0= \phi^\pm[g]- \psi_\pm-(\lambda-\i)R(\lambda\pm\i 0)\psi_\pm.
  \end{align*} In particular 
  \begin{align*}
    \phi^\pm[g]- (\lambda-\i)R(\lambda\pm \i 0)\psi_\pm\in B_0^*,
  \end{align*}
 which leads to 
  \begin{align}\label{eq:rep3}
    g= (\lambda-\i)F^\pm(\lambda)\psi_\pm;\quad g\in
    C_\c^\infty(S),\quad \psi_\pm=R(\mathrm i)(H-\lambda)\phi^\pm[g].
  \end{align} 
\end{subequations} The  formulas \eqref{eq:rep1} and \eqref{eq:rep3}
will be used in Section \ref{subsec:Scattering matrix and
  characterization of generalized eigenfunctions}, however we stress
that   the vectors of 
  \eqref{eq:apGen} are given with  a cutoff to make them  elements
  of $\vN$. The vectors  $\phi^\pm[g]$  given by
  \eqref{eq:gen1B} do not necessarily enjoy this property, and in fact
  the above formulas might not be valid in
  general when
  $\phi^\pm[g]$  (with $g\in
    C_\c^\infty(S)$) are given by
  \eqref{eq:gen1B}.
\end{remarks}

\begin{proof}[Proof of Theorem~\ref{thm:dist-four-transf}]
  The statement is obvious from Proposition 
  \ref{prop:dist-four-transf} and Lemmas~\ref{lem:14.5.1.16.30} and
  \ref{lem:14.5.4.17.5}.
\end{proof}

Finally we prove Lemma~\ref{lem:14.5.1.16.30}.
We begin with a technical estimate required for the  case 
 of Condition~\ref{cond:12.6.2.21.13bb} (\ref{item:14.5.1.8.31}).

\begin{lemma}\label{lem:14.3.5.0.17} 
Suppose   Condition~\ref{cond:12.6.2.21.13bb} (\ref{item:14.5.1.8.31})
and let $\check \sigma<\min\{\sigma,\tau,\rho\}$.
  Then for any  compact
interval $I\subseteq \mathcal I$ there exists a constant $C>0$ such
that for all $\lambda\in I$, 
   $\check r\geq r_0$ and  $u\in C^2_{\mathrm
    c}(S_{\check r})$,  
  the function 
  $u(r)=\mathrm e^{\mathrm i(\check r-r)(\tilde A-\tilde b)}u\in
   C^2_{\mathrm
    c}(S_{r})$ for   $r\ge \check r$ and 
satisfies
  \begin{align}
  \begin{split}
  \|L_ru(r)\|_{\mathcal G_r}
  &\le C
 \Bigl(
  (1/r)^{\check \sigma}\|u\|_{\mathcal G_{\check r}}
  +(\check r{}^{1/2}/r)^{\check \sigma}\|\nabla'u\|_{\mathcal G_{\check r}}
  +(\check r/r)^{{\check\sigma}}\|\nabla'{}^2u\|_{\mathcal G_{\check r}
  }\Bigr),
    \end{split}
    \label{eq:13.9.23.16.40b}
  \end{align}
where  $\nabla'$ is the Levi--Civita connection associated with the induced 
Riemannian metric on the $r$-sphere $S_{\check r}$, i.e.\ $g_{\check r}=\iota_{\check r}^*g$.
\end{lemma}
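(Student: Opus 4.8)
The plan is to push the argument for Corollary~\ref{cor:C_10} one order of differentiation further, the extra input being the higher-derivative bounds of Condition~\ref{cond:12.6.2.21.13bb}\,\eqref{item:14.5.1.8.31}. Recall first that in the spherical coordinates the flow preserves the angular variable, so that (cf.\ \eqref{eq:12.6.7.1.10cc}, \eqref{eq:12.9.24.22.46cc} and the proof of Corollary~\ref{cor:C_10})
\begin{align*}
  u(r)(\sigma)=\e^{\Phi(r,\sigma)}u(\sigma),\qquad
  \Phi(r,\sigma)=\int_{\check r}^{r}\parbb{\i\tilde b-\tfrac12\mathop{\mathrm{div}}\tilde\omega}(s,\sigma)\,\d s,
\end{align*}
and, by \eqref{eq:lsubr}, $L_ru(r)=-\Delta_r u(r)+\tfrac12(\partial_i\ln|\d r|^2)\ell^{ij}p_ju(r)$, where $\Delta_r u(r)$ is the $g_r$-trace of the Hessian of $u(r)$ on $S_r$. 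The last term is negligible: the bound $|\nabla|\d r|^2|\le Cr^{-1-\tau/2}$ from \eqref{eq:13.9.28.13.28} together with Corollary~\ref{cor:C_10} controls it far below the right-hand side of \eqref{eq:13.9.23.16.40b}. Hence the task reduces to estimating $\|\Delta_r u(r)\|_{\mathcal G_r}$.

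To this end I would differentiate $u(r)(\sigma)=\e^\Phi u(\sigma)$ twice in the angular variables, exactly as in \eqref{eq:14.3.5.11.0} but one order higher, converting ordinary second derivatives into covariant Hessians on $S_{\check r}$ and $S_r$ by \eqref{eq:11.3.22.6.20} and \eqref{eq:160220}. This writes $\Delta_r u(r)$ as a finite sum of $g_r$-traces of products of: (a) the first and second derivatives $\partial\tilde y$, $\nabla^2\tilde y$ of the flow map $\tilde y(r-\check r,{}\cdot{})\colon S_{\check r}\to S_r$; (b) the angular derivatives $\nabla'\Phi$, $\nabla'{}^2\Phi$ of the phase; and (c) the quantities $u$, $\nabla'u$, $\nabla'{}^2u$ at $\check r$. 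Using the isometry \eqref{eq:13.9.23.16.40} I would pull everything back to $S_{\check r}$ and estimate the resulting $\mathcal G_{\check r}$-norm by Cauchy--Schwarz, so that only $L^\infty(S_{\check r})$ bounds on the flow- and phase-factors are needed. The factors $\partial\tilde y$ are handled by Lemma~\ref{lemma:L_10b}: the push-forward bound \eqref{eq:13:9:34:16:20} gives a gain $(\check r/r)^{\check\sigma/2}$ per factor, hence $(\check r/r)^{\check\sigma}$ for the two such factors multiplying $\nabla'{}^2u$ and $(\check r/r)^{\check\sigma/2}$ for the single one multiplying $\nabla'u$. The phase factors are controlled as in \eqref{eq:13.9.23.16.41b}: the $\tilde b$-part of $\nabla'\Phi$ directly by \eqref{eq:13.9.23.16.41b}, and the remaining pieces of $\nabla'\Phi$ and all of $\nabla'{}^2\Phi$ by integrating along the orbits the tangential derivatives of $\tilde b$ and $\mathop{\mathrm{div}}\tilde\omega=\mathop{\mathrm{tr}}\nabla\tilde\omega$ (which reduce to $\nabla^2r$, $\nabla^3r$, $\nabla'{}^2q_1$, $\nabla'{}^2|\d r|^2$, and lower-order terms) and invoking \eqref{eq:13.9.28.13.28}, \eqref{eq:smootrA}, \eqref{eq:smootV}, \eqref{eq:smootr}; after accounting for the $\ell_*$-distortion this yields $\|\nabla'\Phi\|_\infty=O(\check r^{-\check\sigma/2})$ and $\|\nabla'{}^2\Phi\|_\infty=O(r^{-\check\sigma})$ up to harmless $(\check r/r)$-factors, which is precisely where Condition~\ref{cond:12.6.2.21.13bb}\,\eqref{item:14.5.1.8.31} enters.

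The one genuinely new ingredient — and the main obstacle — is the bound on the second covariant derivative $\nabla^2\tilde y$ of the flow map, the analogue of the first-order bound \eqref{eq:13:9:34:16:20}. For this I would differentiate the flow equation $\partial_t\tilde y^\alpha=\tilde\omega^\alpha(\tilde y)$ twice in the angular variables, obtaining as in \eqref{eq:12.10.12.8.38} a linear transport equation along the orbits for $\nabla^2\tilde y$ whose inhomogeneity is quadratic in $\partial\tilde y$ with coefficient linear in $\nabla^2\tilde\omega$ and whose homogeneous part is governed by $\nabla\tilde\omega=|\d r|^{-2}\nabla^2r+(\d|\d r|^{-2})\otimes\d r$; since $\nabla^2\tilde\omega$ reduces to $\nabla^3r$ and $\nabla^2|\d r|^2$ (plus lower-order terms), the bounds \eqref{eq:13.9.5.3.30}, \eqref{eq:smootrA}, \eqref{eq:smootr} together with the already-established control of $\partial\tilde y$ permit a Gr\"onwall estimate giving the asserted decay. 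Schematically, the part of the source that eventually multiplies $\nabla'u$ combines one factor $\ell_*^{1/2}$ (gain $(\check r/r)^{\check\sigma/2}$) with a short-range radial integration (gain $r^{-\check\sigma/2}$), producing the coefficient $(\check r^{1/2}/r)^{\check\sigma}$, while the contributions feeding $u$ (from $\nabla'{}^2\Phi$, from $(\nabla'\Phi)^{\otimes2}$, and from the cross terms with $\nabla^2\tilde y$) produce $(1/r)^{\check\sigma}$. Collecting the three groups of terms gives \eqref{eq:13.9.23.16.40b}; the oscillatory character of the $\tilde b$-part of $\Phi$ causes no trouble because only $|\nabla'\Phi|$ and $|\nabla'{}^2\Phi|$ enter the final estimates, exactly as in the proof of Lemma~\ref{lemma:L_10b}.
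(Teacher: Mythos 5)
Your proposal reproduces the paper's argument in structure and in all the key steps: express $u(r)=\e^{\Phi}u$ via the flow and phase, differentiate twice in the angular variables (the analogue of \eqref{eq:14.3.13.4.12}), prove a second-order flow bound by a Gr\"onwall argument along orbits using \eqref{eq:13.9.5.3.30}, \eqref{eq:smootrA} (the analogue of \eqref{eq:14.3.4.3.39} in Step~I\hspace{-.1em}I), estimate the phase derivatives by orbit integrals of tangential derivatives (using \eqref{eq:smootV}, \eqref{eq:smootr}), and combine by Cauchy--Schwarz after pulling back to $\vG_{\check r}$. The only cosmetic difference is that the paper bounds directly the $\ell$-contracted quantity $\Delta_r\tilde y^\alpha=\ell^{ij}(\nabla^2\tilde y)^\alpha{}_{ij}$ (the trace being all that actually enters the expansion) and absorbs the first-order part of $L_r$ into $L\tilde y^\alpha$ of \eqref{eq:14.3.4.3.37AA}, whereas you bound the full $\nabla^2\tilde y$ and peel off the first-order term separately; this yields the same power counting.
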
 
\begin{proof}
\textit{Step I.}\quad
In this proof we make use of the geometric derivatives of a mapping
presented in Subsection~\ref{sec:12.10.11.16.40}.
For any $t\in [\check r-r,0]$ and $x\in \tilde y(-t,S_{r+t})\subseteq S_r$
we define 
the quantity $\bigl(\Delta_r\tilde y^{\alpha}(t,x)\bigr)_{\alpha=2,\dots,d}$
as a tangent of $S_{r+t}$ at $\tilde y(t,x)$
by
\begin{align*}
\Delta_r\tilde y^{\alpha}(t,x)=\ell^{ij}(\nabla^2\tilde y)^\alpha{}_{ij}(t,x).
\end{align*}
Note that here again we adopt the same convention for the Roman and Greek
indices as in the proof of Lemma~\ref{lemma:L_10b}.
Let us abbreviate simply $\tilde y=\tilde y(t,x)$ and use the formula \eqref{eq:160220}.
Then we can write the above quantity more explicitly:
\begin{align}
\begin{split}
\Delta_r\tilde y^{\alpha}(t,x)
=\Delta_r\tilde y^\alpha
=
\ell^{ij}\partial_i\partial_j\tilde y^\alpha
-\ell^{ij}\Gamma_{ij}^k\partial_k\tilde y^\alpha
+\ell^{ij}\Gamma_{\beta\gamma}^\alpha
(\partial_i\tilde y^\beta)
(\partial_j\tilde y^\gamma).
\end{split}
\label{eq:14.3.4.3.37A}
\end{align}
Hereinafter in this proof all the indices run only 
over the angular components in the spherical coordinates.
Whence the distinguished radial index $\alpha,\beta,\gamma,\delta,i,j,k=r$ does
not enter in the summations there.
At this point it should be noted that 
it does not cause a confusion,
because $\Gamma^\alpha_{\beta\gamma}=\Gamma'{}^\alpha_{\beta\gamma}$ for
$\alpha,\beta,\gamma\neq r$ where $\Gamma'$ is the Christoffel symbol for
the submanifold $S_{r+t}$ (and similarly for $\Gamma_{ij}^k$).

Similarly, the quantity $\bigl(L\tilde y^{\alpha}(t,x)\bigr)_{\alpha=2,\dots,d}$
is defined as a tangent of $S_{r+t}$ at $\tilde y(t,x)$ by
\begin{align}
L\tilde y^{\alpha}(t,x)
=
L\tilde y^{\alpha}
&=-\Delta_r\tilde y^{\alpha}
+\tfrac 12 (\partial_i \ln |\d r|^2)
\ell^{ij}\partial_ j\tilde y^{\alpha}, \label{eq:14.3.4.3.37AA}
 \end{align}
cf.\ \eqref{eq:lsubr}.
We note that the quantity \eqref{eq:14.3.4.3.37AA} is defined to 
satisfy that for any function $v\in C^2(S_{r+t})$
\begin{align}\label{eq:basCombAA}
  \begin{split}
  L\bigl[v(\tilde y(t,x))\bigr]
  &
=\bigl[L\tilde y^{\alpha}(t,x)\bigr]
 \bigl(\partial _{\alpha} v\bigr)(\tilde y(t,x))
\\&\phantom{{}={}}
-\ell^{ij}(x)
\bigl[\partial_i\tilde y^{\alpha}(t,x)\bigr]
\bigl[\partial_j\tilde y^{\beta}(t,x)\bigr]
\bigl({\nabla'^2} v\bigr)_{\alpha\beta} (\tilde y(t,x)), 
  \end{split}
\end{align} 
where $\nabla'$ is the covariant derivative on the
submanifold $S_{r+t}$,

\smallskip
\noindent
\textit{Step I\hspace{-.1em}I.}\quad
Next let us estimate the tensor \eqref{eq:14.3.4.3.37A}.
We claim that for any
${\check\sigma< \min\{\sigma,\tau\}}$ there exists $C_1>0$ such that
uniformly in $r\ge \check r\ge r_0$, $t\in [\check r-r,0]$ and 
$x\in \tilde y(r-\check r,S_{\check r})\subseteq S_r$
\begin{align}
F(t):=\ell_{\alpha\beta}
(\Delta_r\tilde y^\alpha)
(\Delta_r\tilde y^\beta)
=|\Delta_r\tilde y|^2\le 
C_1(r+t)^{\check\sigma}/r^{{2\check\sigma}}.\label{eq:14.3.4.3.39}
\end{align} 
To prove \eqref{eq:14.3.4.3.39} we shall  establish  a differential inequality for
$F(t)$, cf.\ the proof of Lemma \ref{lemma:L_10b}. Obviously,  using the
shorthand notation $v^\alpha=\Delta_r\tilde y^\alpha(t,x)$ for 
$\alpha\neq r$,
\begin{align*}
F'(t)
= 
\Bigl((\partial_t \ell_{\alpha\beta})v^\alpha
+
2\ell_{\alpha\beta}\bigl[\partial_t \Delta_r\tilde y^\alpha\bigr]
\Bigr)
v^\beta.
\end{align*} 
Next recall
\begin{subequations}
\begin{align}\label{eq:1G}
  {\Gamma'}_{\delta\gamma}^\alpha=\tfrac 12
  \ell^{\alpha\beta} \parb{\partial_
    \delta\ell_{\beta\gamma}+\partial_
    \gamma\ell_{\beta\delta}-\partial_ \beta\ell_{\delta\gamma}},
\end{align} and similarly for the Christoffel symbol on $M$. The latter yields
\begin{align}\label{eq:2G}
  |\d r|^2\partial_r \ell_{\alpha\beta}=-2\Gamma
  ^r_{\alpha\beta}=2(\nabla^2 r)_{\alpha\beta};\,\alpha,\beta\neq r.
\end{align} 
\end{subequations} 
Then we calculate using \eqref{eq:2G} and that $\tilde
\omega^\eta=\delta_{\eta r}$ (here and below possibly $\eta=r$)
\begin{align} 
\begin{split}
  &(\partial_t \ell_{\alpha\beta})v^\alpha
+
2\ell_{\alpha\beta}[\partial_t \Delta_r\tilde y^\alpha]\\
&= (\partial_\eta \ell_{\alpha\beta})\tilde\omega^\eta v^\alpha
+2\ell_{\alpha\beta}\parbb{\ell^{ij}\partial_i\partial_j\tilde\omega^\alpha
-\bigl[\ell^{ij}\Gamma_{ij}^k\bigr]\partial_k\tilde\omega^\alpha
\\&\phantom{{}={}}
+(\partial_\eta\Gamma_{\delta\gamma}^\alpha)\tilde\omega^\eta
\ell^{ij}[\partial_i\tilde y^\delta]
[\partial_j\tilde y^\gamma]
+2\Gamma_{\delta\gamma}^\alpha
\ell^{ij}
(\partial_i\tilde\omega^{\delta})
[\partial_j\tilde y^\gamma]}
\\
&= (\partial_r \ell_{\alpha\beta}) v^\alpha
+2\ell_{\alpha\beta} (\partial_r\Gamma_{\delta\gamma}^\alpha)
\ell^{ij}[\partial_i\tilde y^\delta]
[\partial_j\tilde y^\gamma]
\\
&= 2 |\d r|^{-2}(\nabla^2
r)_{\alpha\beta}v^\alpha 
+2\ell_{\alpha\beta} (\partial_r\Gamma_{\delta\gamma}^\alpha)
\ell^{ij}[\partial_i\tilde y^\delta]
[\partial_j\tilde y^\gamma].
\end{split}\label{1601311}
\end{align} 
A computation using \eqref{eq:1G}, \eqref{eq:2G}
and \eqref{eq:14.12.27.3.22}  (details skipped) yields
\begin{align*}
  \partial_r{\Gamma'}_{\delta\gamma}^\alpha=(\nabla (|\d
      r|^{-2}\nabla^2 r))_{\delta}{}^{\alpha}{}_\gamma
+(\nabla (|\d r|^{-2}\nabla^2 r))_{\gamma}{}^{\alpha}{}_\delta
-|\d r|^{-2}{(\nabla^3 r)}^{\alpha}{}_{\delta\gamma},
\end{align*} which we can insert on the right-hand side of \eqref{1601311}.  
Using \cS, 
Lemma \ref{lemma:L_10b}  and Condition~\ref{cond:12.6.2.21.13bb}
(\ref{item:14.5.1.8.31}) the contribution from the last 
term of \eqref{1601311}  can
be estimated from below as follows: For any $\epsilon>0$
\begin{align*}
  \cdots \geq-\epsilon(r+t)^{-1}F(t)-\epsilon^{-1}C_2(r+t)^{{2\sigma-\tau-1}}/r^{2\sigma}.
\end{align*} With \eqref{eq:13.9.5.3.30} this leads to the final estimate (seen by
taking $\epsilon\leq \sigma-\check\sigma$)
\begin{align}
F'(t)
\ge {\check\sigma(r+t)^{-1}F(t)
-C_3(r+t)^{2\sigma-\tau-1}/r^{2\sigma}}.
\label{eq:14.5.2.15.18}
\end{align} 
On the other hand we can compute $F(0)=0$ by using
\eqref{eq:14.3.4.3.37A},
and then the Gronwall's inequality yields the claimed estimate
\eqref{eq:14.3.4.3.39}.

\smallskip
\noindent
\textit{Step I\hspace{-.1em}I\hspace{-.1em}I.}\quad
Here we compute and estimate $Lu(r)$.
For simplicity we would like to use 
the spherical coordinates as in \eqref{eq:14.3.5.11.0}.
 To avoid any  confusion concerning base points let us write 
$\sigma\in \tilde y(r-\check r,\mathop{\mathrm{supp}}u)\subseteq S_r$ and 
$\sigma(s)=\tilde y(s-r,\sigma)\in S_s$, $s\in[\check r,r]$.
Then we can write 
\begin{align}
\begin{split}
u(r)(\sigma)
=(\mathrm{e}^{\mathrm{i}(\check r-r)(\tilde A-\tilde b)}u)(\sigma)
=\exp \biggl(\int_{\check r}^r\Bigl(
\mathrm i\tilde b-\tfrac12\mathop{\mathrm{div}}\tilde\omega\Bigr)(s,\sigma(s))
\,\mathrm{d}s\biggr) u(\sigma(\check r)).
\end{split}
\label{eq:1601313}
\end{align}
Using \eqref{eq:1601313} and \eqref{eq:basCombAA}
we obtain the following expression:
\begin{align}
\begin{split}
&Lu(r)(\sigma)\\
&=
[L\sigma^\alpha(\check r)]
(\mathrm{e}^{\mathrm{i}(\check r-r)(\tilde A-\tilde b)}\partial_\alpha u(\check r))(\sigma)
\\&\phantom{={}}
-\ell^{ij}
[\partial_i\sigma^\alpha(\check r)]
[\partial_j\sigma^\beta(\check r)]
\bigl(\mathrm{e}^{\mathrm{i}(\check r-r)(\tilde A-\tilde b)}
(\nabla'{}^2 u(\check r))_{\alpha\beta}\bigr)(\sigma)
\\&\phantom{={}}
+\biggl(\int_{\check r}^{r}
[L\sigma^\alpha(s)]
\bigl[\partial_\alpha
\bigl(\mathrm i\tilde b-\tfrac1{2}
\mathop{\mathrm{div}}\tilde \omega
\bigr)\bigr](s,\sigma(s))\,\mathrm{d}s \biggr)u(r)(\sigma)
\\&\phantom{={}}
-\biggl(\int_{\check r}^{r}
\ell^{ij}[\partial_i\sigma^\alpha(s)]
[\partial_j\sigma^\beta(s)]
\bigl(\nabla'{}^2
\bigl(\mathrm i\tilde b-\tfrac1{2}
\mathop{\mathrm{div}}\tilde \omega
\bigr)\bigr)_{\alpha\beta}(s,\sigma(s))\,\mathrm{d}s \biggr)u(r)(\sigma)
\\&\phantom{={}}
-2\ell^{ij}
\biggl(\int_{\check r}^r
[\partial_i\sigma^\alpha(s)]
\bigl[\partial_\alpha
\bigl( \mathrm i\tilde b-\tfrac1{2}
\mathop{\mathrm{div}}\tilde \omega
\bigr)\bigr](s,\sigma(s))\,\mathrm{d}s \biggr)
\\&\phantom{={}+{}}\cdot
[\partial_j\sigma^\beta(\check r)]
\bigr(\mathrm{e}^{\mathrm{i}(\check r-r)(\tilde A-\tilde b)}\partial_\beta u(\check r)\bigr)(\sigma)
\\&\phantom{={}}
-\ell^{ij}\biggl(\int_{\check r}^{r}
[\partial_i\sigma^\alpha(s)]
\bigl[\partial_\alpha
\bigl(\mathrm i\tilde b-\tfrac1{2}
\mathop{\mathrm{div}}\tilde \omega
\bigr)\bigr](s,\sigma(s))\,\mathrm{d}s \biggr)
\\&\phantom{={}+{}}
\cdot
\biggl(\int_{\check r}^r
[\partial_j\sigma^\alpha(s)]
\bigl[\partial_\alpha
\bigl(\mathrm i\tilde b-\tfrac1{2}
\mathop{\mathrm{div}}\tilde \omega
\bigr)\bigr](s,\sigma(s))\,\mathrm{d}s \biggr) u(r)(\sigma)
.
\end{split}
\label{eq:14.3.13.4.12}
\end{align}
Then by 
\eqref{eq:13:9:34:16:20}, \eqref{eq:14.3.4.3.37AA} 
and \eqref{eq:14.3.4.3.39} 
we can verify \eqref{eq:13.9.23.16.40b}.
 In fact, it is clear that the first and the second terms to the right of \eqref{eq:14.3.13.4.12}
satisfy the desired estimate.
The other terms can be treated more or less in a similar manner, 
and hence we consider only the fifth term.
By the Cauchy--Schwarz inequality, \eqref{eq:13:9:34:16:20}
\begin{align*}
&\biggl|\ell^{ij}
\biggl(\int_{\check r}^r
[\partial_i\sigma^\alpha(s)]
\bigl[\partial_\alpha
\bigl( \mathrm i\tilde b-\tfrac1{2}
\mathop{\mathrm{div}}\tilde \omega
\bigr)\bigr](s,\sigma(s))\,\mathrm{d}s \biggr)
[\partial_j\sigma^\beta(\check r)]
\biggr|
\\
&\le C_4(\check r{}^{1/2}/r)^{\check \sigma}
\int_{\check r}^{r}s^{-1-\min\{\tau,\rho\}/2+\check\sigma/2}\,\mathrm{d}s
\\&
\le C_5(\check r{}^{1/2}/r)^{\check\sigma}.
\end{align*}
This is the desired estimate. We omit the rest of the argument.
\end{proof}

\begin{proof}[Proof of Lemma~\ref{lem:14.5.1.16.30}]
 Write with  $\check r=R_n$ (recall that  possibly $n$ depends on the support of $g$)
 \begin{align*}
   \phi^+&=\bar \chi_n(r) b^{-1/2}\mathrm e^{\mathrm i(\check
     r-r)(\tilde A-\tilde b)}u;\\
u&=\mathrm e^{\mathrm i(r_0-\check
     r)(\tilde A-\tilde b)}g\in  C^1_{\mathrm
    c}(S_{\check r}). 
 \end{align*}
 First let us assume Condition~\ref{cond:12.6.2.21.13bb} (\ref{item:14.5.1.8.30}).
We decompose
\begin{align}
R(\i)L\phi^+=(R(\i)r^{-s}p')(p'r^s\phi^+);\quad s>1/2.
\label{eq:14.5.8.15.51}
\end{align}
Since the first factor is bounded as $\mathcal
H\to \mathcal H_s\cap \mathcal H^1\subseteq B\cap \mathcal H^1$, it
suffices to show that the second factor  belongs to $\mathcal H$ for
some $s>1/2$.  We combine Condition~\ref{cond:12.6.2.21.13bb}
(\ref{item:14.5.1.8.30}) with Lemma \ref{lemma:L_10b}  
and Corollary~\ref{cor:C_10} (these results applied with any  $\check\sigma\in (2,
\min\{\sigma, \tau, \rho\}$) and conclude that indeed $p'r^s\phi^+\in\mathcal H $ for some
$s>1/2$. Of course the conclusion
$R(\i)L\phi^+=\wBlim_{m\to \infty} (R(\i)r^{-s}p')\chi_m(p'r^s\phi^+)$
follows from this argument.

Next,  assuming  Condition~\ref{cond:12.6.2.21.13bb} (\ref{item:14.5.1.8.31}),
 we note that  $u\in C^2_{\mathrm
    c}(S_{\check r})$ and decompose
\begin{align*}
R(\i)L\phi^+=\big (R(\i)r^{-s}\big )\Big (\int \oplus L_r r^s\phi^+\,\d r\Big);\quad s>1/2,
\end{align*}
and proceed   using 
 Lemma \ref{lemma:L_10b} and  Corollary ~\ref{cor:C_10} to bound  
$L_r
  \phi^+-b^{-1/2}L_r
 b^{1/2}\phi^+$ and Lemma~\ref{lem:14.3.5.0.17} to bound  $b^{-1/2}L_r
 b^{1/2}\phi^+$, respectively. Next we  introduce a factor $\chi_m$ as
 above and conclude similarly.
  \end{proof}

\subsection{Scattering matrix and characterization of generalized
  eigenfunctions}\label{subsec:Scattering matrix and characterization
  of generalized eigenfunctions} 
  
In this subsection we prove Theorem~\ref{thm:char-gener-eigenf-1}.
Throughout the subsection we assume Condition~\ref{cond:12.6.2.21.13bb},
and we fix $\lambda\in \mathcal I$.

We begin with a partial uniqueness result.
\begin{lemma}\label{lem:14.5.12.13.42}
Suppose  $\phi\in \mathcal E_\lambda$ and $\xi_\pm\in\mathcal G$ satisfy
\begin{align}\label{eq:decF}
  \phi -\phi^+[\xi_+]+\phi^-[\xi_-]\in B_0^*.
\end{align}
Then $\xi_\pm$ are uniquely determined by $\phi$. Moreover
\begin{subequations}
    \begin{align}\label{eq:aEigenf2wb}
     \|\xi_+\|_{\vG}^2+\|\xi_-\|_{\mathcal
       G}^2&=\lim_{R\to\infty}R^{-1}\int_{B_{2R}\setminus B_{R}}
      b|\phi|^2\,(\det g)^{1/2}\d x,\\
 \|\xi_+\|_{\vG}&=\|\xi_-\|_{\mathcal G}.\label{eq:aEigenf2wb2}
    \end{align}
\end{subequations}
\end{lemma}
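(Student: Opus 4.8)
The plan is to deduce the uniqueness statement and \eqref{eq:aEigenf2wb2} from the averaged norm identity \eqref{eq:aEigenf2wb}, and to prove \eqref{eq:aEigenf2wb} by a computation in spherical coordinates combined with a non-stationary phase argument in the radial variable. For uniqueness: if $\phi\in\mathcal E_\lambda$ satisfies \eqref{eq:decF} with two pairs $(\xi_+,\xi_-)$ and $(\xi_+',\xi_-')$, subtracting shows that $0\in\mathcal E_\lambda$ satisfies \eqref{eq:decF} with the pair $(\xi_+-\xi_+',\xi_--\xi_-')$, and then \eqref{eq:aEigenf2wb} applied to $\phi=0$ gives $\|\xi_+-\xi_+'\|_{\mathcal G}^2+\|\xi_--\xi_-'\|_{\mathcal G}^2=0$. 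So it suffices to prove \eqref{eq:aEigenf2wb} and \eqref{eq:aEigenf2wb2}.

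For \eqref{eq:aEigenf2wb} I would use the co-area formula \eqref{eq:co_area} to write, for $R>r_0$,
\[
 R^{-1}\int_{B_{2R}\setminus B_R}b\,|\phi|^2(\det g)^{1/2}\,\d x=R^{-1}\int_R^{2R}\bigl\|[\sqrt b\,\phi]_{|S_r}\bigr\|_{\mathcal G_r}^2\,\d r,
\]
where the traces on $S_r$ are well defined for all $r\ge r_0$ because $\sqrt b\,\phi\in\mathcal N$, cf.\ Lemma~\ref{lem:xi} and the remark following it. Writing $\phi=\phi^+[\xi_+]-\phi^-[\xi_-]+\rho$ with $\rho\in B_0^*$ and expanding the square, the three terms containing $\rho$ are estimated by Cauchy--Schwarz together with $\int_R^{2R}\|[\rho]_{|S_r}\|_{\mathcal G_r}^2\,\d r=\|1_{B_{2R}\setminus B_R}\rho\|_{\mathcal H}^2=o(R)$, which holds since $R_\nu^{-1/2}\|F_\nu\rho\|_{\mathcal H}\to0$ for $\rho\in B_0^*$; hence they contribute $o(1)$ to the average. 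For the diagonal WKB terms a direct computation from \eqref{eq:gen1B} using \eqref{eq:transm} and \eqref{eq:12.9.24.22.46cc} gives $b\,|\phi^\pm[\xi]|^2=\eta_\lambda^3\exp\bigl(-\int_{r_0}^r\operatorname{div}\tilde\omega(s,\sigma)\,\d s\bigr)|\xi(\sigma)|^2$, whence $\|[\sqrt b\,\phi^\pm[\xi]]_{|S_r}\|_{\mathcal G_r}^2=\int_{S_r}\eta_\lambda^3|\xi|^2\,\d\tilde{\mathcal A}_{r_0}\to\|\xi\|_{\mathcal G}^2$ as $r\to\infty$.

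The only remaining point is the cross term. The same computation gives $\bigl\langle[\sqrt b\,\phi^+[\xi_+]]_{|S_r},[\sqrt b\,\phi^-[\xi_-]]_{|S_r}\bigr\rangle_{\mathcal G_r}=\int_{S_r}\eta_\lambda^3\,\mathrm e^{-2\mathrm i\Phi_r(\sigma)}\,\overline{\xi_+(\sigma)}\,\xi_-(\sigma)\,\d\tilde{\mathcal A}_{r_0}(\sigma)$ with $\Phi_r(\sigma)=\int_{r_0}^r\tilde b(s,\sigma)\,\d s$, and since $\lambda>\lambda_0$ one has $\partial_r\Phi_r(\sigma)=\tilde b(r,\sigma)\ge c>0$ for $r$ large (using $\limsup_{r\to\infty}q_1=\lambda_0$ and the boundedness of $|\d r|$). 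One integration by parts in $r$ then yields
\[
 \Bigl|R^{-1}\int_R^{2R}\mathrm e^{-2\mathrm i\Phi_r(\sigma)}\,\d r\Bigr|\le CR^{-1}\Bigl(1+\int_R^{2R}\bigl|\partial_r\tilde b(r,\sigma)\bigr|\,\d r\Bigr),
\]
and $|\partial_r\tilde b|\le C\bigl(r^{-(1+\rho/2)/2}+r^{-1-\tau/2}\bigr)$ by Condition~\ref{cond:12.6.2.21.13bbb}, so the right-hand side is $O(R^{-1})+O(R^{-1/2-\rho/4})\to0$ for each fixed $\sigma$; being bounded by $1$, dominated convergence in $\sigma$ shows that the average of the cross term tends to $0$. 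Combining the three contributions proves \eqref{eq:aEigenf2wb}.

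Finally, for \eqref{eq:aEigenf2wb2} I would use the vanishing of the radial flux: since $(H-\lambda)\phi=0$, the Green's identity underlying \eqref{eq:greens2} (applied with $\check\phi=\phi$) gives $\langle\phi,\nabla_\omega\phi\rangle_{\mathcal G_r}=\langle 1_{B_r}p_i\phi,g^{ij}p_j\phi\rangle+2\langle 1_{B_r}\phi,(V-\lambda)\phi\rangle$, which is real, so $\operatorname{Im}\langle\phi,\nabla_\omega\phi\rangle_{\mathcal G_r}=0$. Computing the same quantity from the decomposition: in spherical coordinates $\nabla_\omega\phi^\pm[\xi]=(\pm\mathrm i b+\text{real})\phi^\pm[\xi]$ for large $r$ (because $|\d r|^2\tilde b=b$ there), so $\operatorname{Im}\langle\phi^\pm[\xi],\nabla_\omega\phi^\pm[\xi]\rangle_{\mathcal G_r}=\pm\|[\sqrt b\,\phi^\pm[\xi]]_{|S_r}\|_{\mathcal G_r}^2\to\pm\|\xi\|_{\mathcal G}^2$; the $\phi^+[\xi_+]$--$\phi^-[\xi_-]$ cross terms carry the oscillatory factor $\mathrm e^{\pm2\mathrm i\Phi_r}$ and average to $0$ as above; and the $\rho$-contributions average to $0$ by a Green's-identity estimate together with $\|1_{B_{2R}\setminus B_R}\rho\|_{\mathcal H}^2=o(R)$ and a Caccioppoli-type bound $\|p\phi\|_{B^*}\le C\|\phi\|_{B^*}$ for $\phi\in\mathcal E_\lambda$. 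Averaging over $r\in[R,2R]$ and letting $R\to\infty$ then gives $0=\|\xi_+\|_{\mathcal G}^2-\|\xi_-\|_{\mathcal G}^2$. The hard part will be the rigorous handling of the $B_0^*$-remainder $\rho$ in this flux computation: a priori one knows neither that $r\mapsto\|\nabla_\omega\phi\|_{\mathcal G_r}$ is locally bounded (cf.\ the remark after Lemma~\ref{lem:fundaM}) nor that $\rho\in\mathcal N$ when $\xi_\pm$ lie merely in $\mathcal G$, so the pairings entering the flux have to be given meaning through identities supported in $B_r$; the cleanest route is to prove \eqref{eq:aEigenf2wb2} first for $\xi_\pm\in C_{\mathrm c}^\infty(S)$ (so that $\phi^\pm[\xi_\pm]\in\mathcal N$ and $(H-\lambda)\phi^\pm[\xi_\pm]\in B$, hence $\rho\in\mathcal N\cap B_0^*$ is amenable to the radiation-condition machinery) and then to pass to general $\xi_\pm$ using \eqref{eq:aEigenf2wb} as a uniform bound.
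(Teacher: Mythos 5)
Your treatment of the uniqueness claim and of \eqref{eq:aEigenf2wb} is essentially the same as the paper's: write $\phi=\phi^+[\xi_+]-\phi^-[\xi_-]+\rho$ with $\rho\in B^*_0$ so the remainder is negligible in the dyadic average, compute the two diagonal WKB terms to $\|\xi_\pm\|_{\vG}^2$, and kill the cross term by an integration by parts in $r$ using $\partial_r\tilde b=o(1)$. The paper does exactly this (formulated as an integration by parts on the $\vG$-valued integrand $\langle\xi_+,\e^{-2\i\int\tilde b}\xi_-\rangle_\vG$ rather than pointwise in $\sigma$ followed by dominated convergence, but that is a cosmetic difference), and your deduction of uniqueness by applying \eqref{eq:aEigenf2wb} to the trivial eigenfunction is also what the paper does.

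For \eqref{eq:aEigenf2wb2}, however, there is a genuine gap in your route. You want to evaluate the radial flux $\Im\inp{\phi,\nabla_\omega\phi}_{\vG_r}$ on a single sphere and compare it with the WKB decomposition, but, as you yourself observe, the trace $\nabla_\omega\phi_{|S_r}$ is not known to exist for every $r$ (only for a.e.\ $r$, and the function $r\mapsto\|\nabla_\omega\phi\|_{\vG_r}$ is not known to be locally bounded), and for general $\xi_\pm\in\vG$ the remainder $\rho=\phi-\phi^+[\xi_+]+\phi^-[\xi_-]$ is only in $B_0^*$, not in $\vN$, so the sphere-by-sphere pairings you need for the $\rho$-contribution are not available. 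The fix you propose — first treat $\xi_\pm\in C^\infty_\c(S)$ and then ``pass to general $\xi_\pm$ using \eqref{eq:aEigenf2wb} as a uniform bound'' — is circular at exactly the step where you would need to show that, for fixed $\phi$, the pair $\xi_\pm$ determined by \eqref{eq:decF} is reached as the limit of the pairs attached to approximating compactly supported data. Showing that requires the bijection $\vG\cong\vE_\lambda$ (or continuity of $S(\lambda)$ and of $F^\pm(\lambda)^*$ on the relevant classes), which is precisely the content of Theorem \ref{thm:char-gener-eigenf-1} whose proof relies on Lemma \ref{lem:14.5.12.13.42}.

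The paper avoids all of this by never using a pointwise-in-$r$ flux. It starts from the commutator identity $0=\lim_{n\to\infty}\inp{\i[H,\chi_n]}_\phi=\lim_{n\to\infty}\inp{A\chi_n'}_\phi$, which is the dyadic-shell-integrated version of your flux statement, and whose every term is a well-defined $B^*\times B$ pairing because $\chi_n'$ localizes to a shell and, crucially, $A\phi\in B^*$ (via $A\phi=(\lambda-\i)AR(\i)\phi$ and $AR(\i)\in\vB(B^*)$). The two replacements of $\phi$ by $\phi^+[\xi_+]-\phi^-[\xi_-]$ modulo $B_0^*$, the identity $A\phi^\pm[\xi]\approx\pm b\,\phi^\pm[\xi]$, and the same oscillatory-integral cancellation then give $\|\xi_+\|_\vG^2-\|\xi_-\|_\vG^2=0$ directly for arbitrary $\xi_\pm\in\vG$, with no trace theorem, no $\vN$-membership of $\rho$, and no approximation in $\xi_\pm$ required. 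I would rewrite your proof of \eqref{eq:aEigenf2wb2} along these lines.
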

\begin{proof} The uniqueness statement follows from
  \eqref{eq:aEigenf2wb}, which in turn is proved as follows:
  \begin{align*}
    &\lim_{R\to\infty}R^{-1}\int_{B_{2R}\setminus B_{R}}
      b|\phi|^2\,(\det g)^{1/2}\d x\\&=\lim_{R\to\infty}R^{-1}\int_{B_{2R}\setminus B_{R}}
      b|\phi^+[\xi_+]-\phi^-[\xi_-]|^2\,(\det g)^{1/2}\d x\\&
=\|\xi_+\|_{\vG}^2+\|\xi_-\|_{\mathcal G}^2-2\Re \lim_{R\to
  \infty}-\!\!\!\!\!\!\int_R \inp{\xi_+, \exp \big(-2 \mathrm i\int_{r_0}^{r}\tilde b(s,{}\cdot{})\,\mathrm{d}s\big)\xi_-}_{\vG}\,\d r
  \end{align*} The last  term vanishes as may be seen by first writing
  \begin{align*}
    \exp \left(-2 \mathrm i\int_{r_0}^{r}\tilde
      b(s,{}\cdot{})\,\mathrm{d}s\right)=(-2\i \tilde b)^{-1} \tfrac{\d}{\d r}\exp \left(-2 \mathrm i\int_{r_0}^{r}\tilde b(s,{}\cdot{})\,\mathrm{d}s\right)
  \end{align*} and then integrate by parts picking up a sum of decaying
  factors. Note that indeed $\tfrac{\d}{\d r}\tilde
  b(r,\cdot)=o(R^{0})$ uniformly in the angle variable (so that \cs
  applies).

  As for \eqref{eq:aEigenf2wb2} first note that $A\phi\in B^*$, which comes from the
representation $A\phi=(\lambda-\i)AR(\i) \phi$ and the fact  that
$AR(\i)\in \vB(B^*)$.
 Then  we compute 
\begin{align*}
    0&=\lim_{n\to \infty}\inp{\i[H, \chi_n]}_\phi\\
&=\lim_{n\to \infty}\inp{A\chi_n'}_\phi\\
&=\lim_{n\to \infty}\inp{A\phi,
  \chi_n'(\phi^+[\xi_+]-\phi^-[\xi_-])}\\
&=\lim_{n\to \infty}\inp{\phi,
  \chi_n'(A\phi^+[\xi_+]-A\phi^-[\xi_-])}\\
&=\lim_{n\to \infty}\inp{\phi,
  \chi_n'(b\phi^+[\xi_+]+b\phi^-[\xi_-])}\\
&=\lim_{n\to \infty}\inp{\phi^+[\xi_+]-\phi^-[\xi_-],
  \chi_n'b(\phi^+[\xi_+]+\phi^-[\xi_-])}\\
&=\|\xi_+\|_{\vG}^2-\|\xi_-\|_{\mathcal G}^2,
\end{align*} where in the last step we integrated by parts as in the
proof of \eqref{eq:aEigenf2wb}.
\end{proof}

Next, we construct $\xi_\pm\in\mathcal G$ from $\phi\in\mathcal
E_\lambda$. Note for comparison that  $F^\pm(\lambda)^*\xi\in \mathcal E_\lambda$ for any $\xi\in
  \vG$ (readily proven by using 
  $F^\pm(\lambda)^*=(\lambda-\i)R(\i)F^\pm(\lambda)^*$, cf. the proof
  of Lemma \ref{lem:14.5.4.17.5}).  
\begin{lemma}\label{lem:14.5.14.3.30A}
  For any $\phi\in\mathcal E_\lambda$ there exist $\xi_\pm\in \vG$
  such that \eqref{eq:aEigenfw} hold.
\end{lemma}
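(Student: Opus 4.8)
The plan is to take the averaged-limit formula \eqref{eq:aEigenfwB} as the \emph{definition} of $\xi_\pm$, so that the content of the lemma splits into: (a) existence of the $\vG$-valued limits in \eqref{eq:aEigenfwB}; (b) the identities $\phi=\i F^\pm(\lambda)^*\xi_\pm$; (c) $\xi_+=S(\lambda)\xi_-$. I treat the upper sign. The first preliminary is an a priori bound on $\phi$: writing $\phi=(\lambda-\i)R(\i)\phi$ and invoking the mapping properties underlying Theorem~\ref{thm:12.7.2.7.9} (boundedness of $AR(\i)$ and of $\langle p_i^*h^{ij}p_j\rangle_{R(\i)\,\cdot\,}^{1/2}$ on $B^*$) gives $A\phi\in B^*$ and $\langle p_i^*h^{ij}p_j\rangle_\phi<\infty$, while elliptic regularity gives $\phi\in\vN\cap\vH^2_{\mathrm{loc}}$. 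The second is a reduction to limiting resolvents: with $\chi_m=\chi(r/R_m)$, put $\psi_m:=[H,\chi_m]\phi\in B$, supported in $\{R_m\le r\le2R_m\}$; then $\chi_m\phi\in\vD(H)$, $(H-\lambda)(\chi_m\phi)=\psi_m$, and letting $\epsilon\downarrow0$ in $\chi_m\phi=R(\lambda\pm\i\epsilon)\bigl(\psi_m\mp\i\epsilon\,\chi_m\phi\bigr)$ — using the limiting absorption principle (Corollary~\ref{cor:12.7.2.7.9b}) and the absence of $L^2$-eigenvalues on $\vI$ (a consequence of Rellich's Theorem~\ref{thm:13.6.20.0.10}) — yields $\chi_m\phi=R(\lambda+\i0)\psi_m=R(\lambda-\i0)\psi_m$. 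Hence Corollary~\ref{cor:radiation-conditions} applies to $\chi_m\phi$ and supplies the radiation-type control of $(A\mp a)\phi$ on the ball $B_{R_m}$ that is unavailable for $\phi$ globally.

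For step (a) I follow the scheme of Subsection~\ref{subsec:Fourier transform}. Introduce the $\vG$-valued integrand $\xi_+(r):=2^{-1}\e^{\i(r-r_0)(\tilde A^{\rm ex}-\tilde b^{\rm ex})}\bigl[b^{-1/2}(A+b)\phi\bigr]_{|S_r}$; it is absolutely continuous on $[r_0,\infty)$ by the arguments of Lemmas~\ref{lem:ac} and \ref{lem:fundaM2}, now used with $(H-\lambda)\phi=0$ (so ``$\psi=0$'' there) and the decomposition \eqref{eq:decH2}. The weak limit $\wvGlim_{r\to\infty}\xi_+(r)$ exists as in Lemma~\ref{lemma:w1} — the leading oscillatory term in the derivative drops and the remaining terms $T_2,\dots,T_5$ are integrable, the angular term being controlled by $\langle p_i^*h^{ij}p_j\rangle_\phi<\infty$ together with Corollary~\ref{cor:C_10} — and the averaged integrals $\intR\xi_+(r)\,\d r$ converge strongly to it as in Lemma~\ref{lemma:s1}; the subsequential smallness \eqref{eq:Ra40} needed there is provided by the second preliminary, applied to $\chi_m\phi$ on $B_{R_m}$ with $m$ large. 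This defines $\xi_\pm\in\vG$ and establishes \eqref{eq:aEigenfwB} at the same time.

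For steps (b) and (c), I first show $\phi-\phi^+[\xi_+]+\phi^-[\xi_-]\in B_0^*$ by matching the explicit form of $\xi_\pm(r)$ against the WKB functions $\phi^\pm$ of \eqref{eq:gen1B} (using once more the vanishing-flux identity). Then the representation \eqref{eq:rep1}, extended from $C_\c^\infty(S)$ to $\vG$ by continuity of $F^\pm(\lambda)^*\in\vB(\vG,B^*)$, shows that $\i F^+(\lambda)^*\xi_+-\phi^+[\xi_+]$ equals, modulo $B_0^*$, a purely incoming term whose amplitude is identified by Lemma~\ref{lem:14.5.12.13.42}; subtracting, $v:=\phi-\i F^+(\lambda)^*\xi_+\in\vE_\lambda$ has vanishing outgoing amplitude, i.e.\ $(A-a)v\in B_0^*$, and likewise $v':=\phi-\i F^-(\lambda)^*\xi_-\in\vE_\lambda$ satisfies $(A+a)v'\in B_0^*$. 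A comparison of the two representations (equivalently: \eqref{eq:uniqFtoG} applied to $F^\pm(\lambda)^*\xi_\pm$) gives $\i F^+(\lambda)^*\xi_+=\i F^-(\lambda)^*\xi_-$, so $v=v'$; then $2av\in B_0^*$, and since $|a|$ is bounded below outside a compact set we get $v\in\vN\cap B_0^*$ with $(H-\lambda)v=0$, whence $v=0$ by Rellich (Theorem~\ref{thm:13.6.20.0.10}). This is \eqref{eq:aEigenfw} except for $\xi_+=S(\lambda)\xi_-$, which now follows from $\phi=\i F^\pm(\lambda)^*\xi_\pm$, the definition \eqref{eq:1S} of $S(\lambda)$, and the density of $F^+(\lambda)(B)$ in $\vG$ (itself a consequence of \eqref{eq:rep3}).

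The main obstacle is step (a). The radiation-condition bounds of Corollary~\ref{cor:radiation-conditions} are a priori available only for $\phi=R(z)\psi$, and the cutoff vector $\psi_m=[H,\chi_m]\phi$ has $\|\psi_m\|_B$ growing with $m$, so the limit in \eqref{eq:aEigenfwB} cannot be obtained by a naive passage $m\to\infty$ (indeed $F^\pm(\lambda)\psi_m=0$ for each $m$). Instead one must extract the $\vG$-limit by a diagonal argument in which, on each fixed $r$-shell, the radiation bounds for $\chi_m\phi$ on $B_{R_m}$ are combined with the Cesàro averaging $\intR$ — which decouples the slowly decaying oscillatory tail — to show that $\intR\xi_+(r)\,\d r$ is Cauchy in $\vG$; carrying out this combination carefully, with uniform-in-$m$ control of the error terms $T_2,\dots,T_5$, is the technical heart of the proof.
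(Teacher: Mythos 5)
Your strategy — taking the averaged-limit formula \eqref{eq:aEigenfwB} as the definition of $\xi_\pm$, then running the Subsection~\ref{subsec:Fourier transform} machinery on $\phi\in\vE_\lambda$ — has a genuine gap at step~(a), which you flag yourself but whose proposed fix does not work. The machinery of Lemmas~\ref{lemma:w1} and~\ref{lemma:s1} is not merely technically restricted to $\phi=R(\lambda+\i 0)\psi$: it uses, in an essential way, quantitative radiation decay that is \emph{false} for a generic $\phi\in\vE_\lambda$. Concretely, the proof of Lemma~\ref{lemma:s1} needs a sequence $r_n\to\infty$ with $\|(A-a)\phi\|_{\vG_{r_n}}^2\leq C/r_n$ (that is \eqref{eq:Ra40}), obtained there from $\|r^{1/2}(A-a)\phi\|_{B^*}<\infty$ (Corollary~\ref{cor:radiation-conditions} with $\beta\geq 1/2$). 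But if $\phi\in\vE_\lambda$ has a nontrivial incoming amplitude $\xi_-$, then on $S_r$ one has $(A-a)\phi\approx -2a\phi^-[\xi_-]$, so $\|(A-a)\phi\|_{\vG_r}$ tends to a nonzero constant $\sim 2|a|\,\|\xi_-\|_{\vG}$; in particular $r^{1/2}(A-a)\phi\notin B^*$ and \eqref{eq:Ra40} fails for \emph{every} subsequence. Similarly, the integrability of the angular term $T_3$ in Lemmas~\ref{lem:normBasi}/\ref{lemma:w1} uses $\int \|p'\phi\|_{\vG_r}^2\,\d r<\infty$, which again comes from the weighted bound $\langle p_i^* r^{2\beta} h^{ij}p_j\rangle_\phi<\infty$ with $\beta\geq 1/2$ and is not available for $\phi\in\vE_\lambda$ (one only gets the unweighted $\langle p_i^* h^{ij}p_j\rangle_\phi<\infty$ via monotone convergence). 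Your proposed diagonal argument — applying Corollary~\ref{cor:radiation-conditions} to $\chi_m\phi=R(\lambda\pm\i 0)\psi_m$ with $\psi_m=[H,\chi_m]\phi$ — cannot rescue this: $\|\psi_m\|_B$ is indeed uniformly bounded, but the radiation bound you need is the weighted one, and $\|r^{1/2}\psi_m\|_B\approx R_m^{1/2}\|\psi_m\|_B\sim R_m^{1/2}$ grows, precisely cancelling the gain from the weight and leaving a bound $\|(A-a)\phi\|_{\vG_r}^2\lesssim R_m/r\gtrsim 1$ at the relevant scale.

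The paper's actual proof avoids ever computing the limit \eqref{eq:aEigenfwB} directly: it \emph{produces} $\xi\in\vG$ by a soft compactness argument (following Proposition~6.2 of \cite{Sk}). One picks $f\in C^\infty_\c(\R)$ with $f(t)=t$ near $\lambda$ so $f(H)\phi=\lambda\phi$, sets $\phi_+=\tfrac{1}{2b}\bar\chi_m(A+b)\phi\in B^*$, and defines $\xi_n=F^+(\lambda)\chi_n(f(H)-\lambda)\phi_+$. Since $\chi_n(f(H)-\lambda)\phi_+\in B$ this is well-defined, and — using $F^+(\lambda)(f(H)-\lambda)=0$ and the dual pairing against $F^+(\lambda)^*g$ — the sequence $(\xi_n)$ is uniformly bounded in $\vG$; a weakly convergent subsequence provides $\xi$. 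The identity $\i F^+(\lambda)^*\xi=\phi$ is then verified by a Helffer--Sj\"ostrand/commutator computation, applying the Sommerfeld uniqueness Corollary~\ref{cor:13.9.9.8.23} to convert $R(\lambda\mp\i 0)[\chi_n,f(H)](\ldots)$ into a vanishing weak $B^*$-limit, where the factor $\chi_n'$ provides exactly the decoupling that is missing in the hard-analysis route. The formula \eqref{eq:aEigenfwB} is recovered afterwards as a consequence (via \eqref{eq:uniqFtoG}), once the existence of $\xi_\pm$ is secured. In short: extraction by weak compactness, not direct computation of the Ces\`aro limit, is what makes the lemma go through, and your proposal — absent a substantially new idea in step~(a) — does not.
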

\begin{proof} We  use  the scheme of proof of \cite[Proposition 6.2]
  {Sk}. By the definition of $S(\lambda)$  it
  suffices to show that for any  $\phi\in\mathcal E_\lambda$ the
  representation 
  $\phi= \i F^+(\lambda)^*\xi$ for some $\xi\in
  \vG$ holds.

Pick
  $f\in C^\infty_\c(\R)$ with $f(t)=t$ in neighbourhood of
  $t=\lambda$. Whence 
     $f(H)\phi=\lambda \phi$. We introduce (for a fixed large $m$)
\begin{align*}
    \phi_\pm&= \tfrac{1}{2b}\bar \chi_m(
A\pm b)\phi\in B^*,
\quad
\xi_n=F^+(\lambda)\chi_n \parb{f(H)-\lambda} \phi_+;\, n\in \N.
\end{align*} The sequence $(\xi_n)\subseteq \vG$ is
bounded. Indeed since $F^+(\lambda)\parb{f(H)-\lambda}=0$
(cf. \eqref{eq:diagf})  we compute 
using \eqref{eq:HeSj} (stated below) and estimate uniformly in $n\in\N$ and in $g\in
C_\c^\infty (S)$, $\|g\|_{\vG}=1$,
  \begin{align*}
    \inp{g,\xi_n}_{\vG}&=\i \inp{F^+(\lambda)^*g, \parb{A\chi_n' +\i
        |\d r|^2 \chi_n''/2}f'(H)\phi_+}_{B^*\times B},\\
|\inp{g,\xi_n}_{\vG}|&\leq
C_1|\parb{\|AF^+(\lambda)^*g\|_{B^*}+\|F^+(\lambda)^*g\|_{B^*}}\leq C_2.
  \end{align*} 

Next we  choose a weakly convergent subsequence of $(\xi_n)$, cf.
\cite[Theorem 1 p. 126]{Yo}. Whence,
possibly upon changing notation,  
$\wlim_{n\to\infty}\xi_n=:\xi\in\vG$.
For  this $\xi$ and with $\check f(t):=(f(t)-\lambda) (t-\lambda)^{-1}$ we compute
\begin{align*}
&\i F^+(\lambda)^*\xi\\
  &=\wBstarlim_{n\to \infty}\i
  F^+(\lambda)^*F^+(\lambda)\chi_n \parb{f(H)-\lambda} \phi_+\\
&=\wBstarlim_{n\to \infty} \parb{R(\lambda+\i 0)-R(\lambda-\i 0)}\chi_n \parb{f(H)-\lambda} \phi_+\\
&=\wBstarlim_{n\to \infty}\parbb{R(\lambda+\i 0)\chi_n \parb{f(H)-\lambda} \phi_+
+R(\lambda-\i 0)\chi_n \parb{f(H)-\lambda}  (\phi-\phi_+)}\\
&=\wBstarlim_{n\to \infty}\parbb{\check f(H)\chi_n  \phi
+R(\lambda+\i 0)[\chi_n ,f(H)]\phi_+ 
+R(\lambda-\i 0)[\chi_n ,f(H)]
(\phi-\phi_+)}\\
&=\check f(H)\phi +\wBstarlim_{n\to \infty}\parbb{R(\lambda+\i 0)[\chi_n ,f(H)]\phi_+  +R(\lambda-\i 0)[\chi_n ,f(H)]  (\phi-\phi_+)}.
\end{align*} The first term simplifies as  $\check f(H)\phi=\phi$. To compute the last term we represents in a standard
fashion (in terms of an
almost analytic extension $\tilde f$)
\begin{align}\label{eq:HeSj}
  f(H)=\int_\C R(z)\d\mu (z);\,\d \mu (z)=-(2\pi\i)^{-1}\bar \partial
  \tilde f (z)\d z\d \bar z,
\end{align} allowing us to compute
\begin{align*}
  [\chi_n ,f(H)]=-\i\int _\C R(z)\parb{A\chi_n' +\i
        |\d r|^2 \chi_n''/2}R(z)\d\mu (z).
    \end{align*} Whence due to Corollary \ref{cor:13.9.9.8.23} (note
    also that the
    second term with the factor $\chi_n''$ does not contribute to  the
    limit)
\begin{align*}
  &\wBstarlim_{n\to \infty}R(\lambda+\i 0)[\chi_n ,f(H)]\phi_+\\
&=-\i \wBstarlim_{n\to \infty}\int_\C R(z)R(\lambda+\i
0)A\chi_n'R(z)\d\mu (z)\phi_+\\
&=\i \wBstarlim_{n\to \infty}\int_\C R(z)R(\lambda+\i
0)b\chi_n'R(z)\d\mu (z)\phi_+\\
&=\i \wBstarlim_{n\to \infty}\int_\C R(z)R(\lambda+\i 0)R(z)\d\mu (z)b\chi_n'\phi_+\\
&=-\tfrac\i 2\wBstarlim_{n\to \infty}f'(H)R(\lambda+\i 0)\chi_n'\bar \chi_m(
A+ b)\phi\\
&=-\tfrac\i 2\wBstarlim_{n\to \infty}f'(H)R(\lambda+\i 0)(
A+ b)\chi_n'\phi\\
&= 0.
\end{align*} Similarly, using that $\phi-\phi_+=\chi_m \phi-\phi_-$,
\begin{align*}
  &\wBstarlim_{n\to \infty}R(\lambda-\i 0)[\chi_n ,f(H)](\phi-\phi_+)\\
&=-\tfrac\i 2\wBstarlim_{n\to \infty}f'(H)R(\lambda-\i 0)\chi_n'\bar \chi_m\parb{(
A- b)\phi}\\
&=-\tfrac\i 2\wBstarlim_{n\to \infty}f'(H)R(\lambda-\i 0)(
A- b)\chi_n'\phi\\
&= 0.
\end{align*}
Whence we have shown that  $\phi= \i F^+(\lambda)^*\xi$ for the constructed
$\xi$. 
\end{proof}

\begin{lemma}\label{lem:B_0star} For all $\psi\in B$ and all $\lambda\in\mathcal I$
  \begin{align}
    \label{eq:Basyp}
    \sqrt b R(\lambda \pm \i 0)\psi-1_M\mathrm e^{\mathrm
  i(r_0-r)(\tilde A^{\rm ex }\mp\tilde b^{\rm ex })}F^\pm(\lambda)\psi\in B_0^*.
  \end{align}
  \begin{proof}
    This is obvious from Lemma \ref{lemma:s1} for $\psi\in
    \vH_{1+}$. The general case is treated by an approximation
    argument (as in  the proof of \cite[Corollary 5.5] {Sk}).
  \end{proof}

\end{lemma}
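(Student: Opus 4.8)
The plan is to prove \eqref{eq:Basyp} first for $\psi\in\mathcal H_{1+}$, where it is essentially a reformulation of Lemma~\ref{lemma:s1}, and then to pass to arbitrary $\psi\in B$ by a density and continuity argument following the scheme of \cite[Corollary 5.5]{Sk}.

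\textbf{The case $\psi\in\mathcal H_{1+}$.} Put $\phi=R(\lambda\pm\i0)\psi$. By \eqref{eq:160126} one has, in the extended spherical picture, $\xi(r)=\e^{\i(r-r_0)(\tilde A^{\rm ex}\mp\tilde b^{\rm ex})}\bigl[\sqrt b\,\phi\bigr]_{|S_r}$, where $[\sqrt b\,\phi]_{|S_r}\in\vG^{\rm ex}_r$ is the extension by zero of the trace on $S_r\subseteq S^{\rm ex}_r$. Since $\tilde A^{\rm ex}$ generates a group and $\tilde b^{\rm ex}$ is a bounded real multiplication, $\e^{\i(r-r_0)(\tilde A^{\rm ex}\mp\tilde b^{\rm ex})}$ is invertible and isometric $\vG^{\rm ex}_r\to\vG^{\rm ex}_{r_0}=\vG$; inverting and restricting to $M$ gives the exact identity
\[
[\sqrt b\,\phi]_{|S_r}=1_M\,\e^{\i(r_0-r)(\tilde A^{\rm ex}\mp\tilde b^{\rm ex})}\xi(r)\quad\text{in }\vG_r .
\]
Because $1_M$ and $\e^{\i(r_0-r)(\tilde A^{\rm ex}\mp\tilde b^{\rm ex})}$ are contractive into $\vG_r$,
\[
\bigl\|[\sqrt b\,\phi]_{|S_r}-1_M\,\e^{\i(r_0-r)(\tilde A^{\rm ex}\mp\tilde b^{\rm ex})}F^\pm(\lambda)\psi\bigr\|_{\vG_r}\le\|\xi(r)-F^\pm(\lambda)\psi\|_{\vG}\xrightarrow[r\to\infty]{}0
\]
by Lemma~\ref{lemma:s1}. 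Let $u$ denote the left-hand side of \eqref{eq:Basyp}, extended by $0$ on $\{r\le r_0\}$; since $r_\lambda\ge2r_0$ forces $\sqrt b=0$ on $\{r\le r_0\}$, $u$ is supported in $E$, and the co-area formula \eqref{eq:co_area} yields
\[
R^{-1}\!\!\int_{B_{2R}\setminus B_{R}}\!|u|^2(\det g)^{1/2}\,\d x=R^{-1}\!\!\int_R^{2R}\!\|u_{|S_r}\|_{\vG_r}^2\,\d r\le\sup_{r\in[R,2R]}\|\xi(r)-F^\pm(\lambda)\psi\|_{\vG}^2 ,
\]
which is bounded (so $u\in B^*$, $\xi(r)$ being convergent) and tends to $0$ as $R\to\infty$; by the standard characterization of $B_0^*$ as the elements of $B^*$ whose dyadic averages vanish at infinity, $u\in B_0^*$.

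\textbf{Passage to $\psi\in B$.} First I would observe that $\mathcal H_{1+}$ is dense in $B$: indeed $C^\infty_{\rm c}(M)\subseteq\mathcal H_{1+}$ is dense in $B$, since for $\psi\in B$ the truncations $1_{B_{R_N}}\psi$ converge to $\psi$ in $B$, and each such truncation is approximated in $B$ by elements of $C^\infty_{\rm c}(M)$ supported in a slightly larger ball (an $\mathcal H$-approximation there costs only a fixed power $R_{N+1}^{1/2}$ in the $B$-norm). Next, the left-hand side of \eqref{eq:Basyp} depends continuously on $\psi\in B$ with values in $B^*$: $\sqrt b\in L^\infty(M)$ acts boundedly on $B^*$ and $R(\lambda\pm\i0)\in\vB(B,B^*)$ by Corollary~\ref{cor:12.7.2.7.9b}; moreover $F^\pm(\lambda)\in\vB(B,\vG)$ by \eqref{eq:fund} together with Corollary~\ref{cor:12.7.2.7.9b}, and $\xi\mapsto 1_M\,\e^{\i(r_0-\cdot)(\tilde A^{\rm ex}\mp\tilde b^{\rm ex})}\xi$ is bounded $\vG\to B^*$ because $\|1_M\,\e^{\i(r_0-r)(\tilde A^{\rm ex}\mp\tilde b^{\rm ex})}\xi\|_{\vG_r}\le\|\xi\|_{\vG}$ for every $r\ge r_0$. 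Since $B_0^*$ is a closed subspace of $B^*$ and, by the previous step, contains the left-hand side of \eqref{eq:Basyp} for all $\psi$ in the dense set $\mathcal H_{1+}$, it contains it for all $\psi\in B$, which is the assertion.

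The substantive content is entirely Lemma~\ref{lemma:s1}; I expect the only points requiring care to be the bookkeeping with the extended objects $\vG^{\rm ex}_r$, $\tilde A^{\rm ex}$, $\tilde b^{\rm ex}$ and the cutoff $1_M$ (so that the identity for $[\sqrt b\,\phi]_{|S_r}$ can be read off verbatim from Section~\ref{sec:Fourier transform and stationary scattering theory}), the elementary observation $r_\lambda\ge2r_0$ that localizes $u$ to $E$, and the invocation of the characterization of $B_0^*$ by vanishing dyadic averages; none of these is an essential obstacle.
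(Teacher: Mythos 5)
Your proof is correct and follows exactly the route the paper has in mind: for $\psi\in\mathcal H_{1+}$ the $\vG_r$-norm of the restriction of the difference to $S_r$ is controlled by $\|\xi(r)-F^\pm(\lambda)\psi\|_{\vG}$ (by inverting the unitary $\mathrm e^{\mathrm i(r-r_0)(\tilde A^{\rm ex}\mp\tilde b^{\rm ex})}$ and using the contraction $1_M$), which tends to zero by Lemma~\ref{lemma:s1} and gives membership in $B_0^*$ via the co-area formula and the characterization of $B_0^*$ by vanishing dyadic averages; the extension to $\psi\in B$ is the density/continuity argument the paper attributes to \cite[Corollary~5.5]{Sk}. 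The details you supply — the support observation $\sqrt b=0$ for $r\le r_0$, boundedness of $F^\pm(\lambda)\colon B\to\vG$ from \eqref{eq:fund} plus Corollary~\ref{cor:12.7.2.7.9b}, and boundedness of $\xi\mapsto 1_M\,\mathrm e^{\mathrm i(r_0-\cdot)(\tilde A^{\rm ex}\mp\tilde b^{\rm ex})}\xi$ from $\vG$ to $B^*$ — are exactly what is needed to make the paper's two-sentence proof rigorous.
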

A construction of $\phi\in\mathcal E_\lambda$ from
$\xi_\pm\in \mathcal G$ may   intuitively seem  most feasible  when $\xi_\pm$ satisfies the Dirichlet boundary condition.
We first give such construction for $\xi_\pm\in C^\infty_{\mathrm
  c}(S)$  and shortly  extend it allowing  any $\xi_\pm\in \mathcal G$.
\begin{lemma}\label{lem:14.5.14.3.27}
For any $\xi_-\in C^\infty_{\mathrm c}(S)$ introduce $\phi^-[\xi_-]\in \vN\cap B^*
$    by 
  \eqref{eq:apGen} (rather than by \eqref{eq:gen1B}) and define then $\phi\in\mathcal E_\lambda$ and $\xi_+\in \mathcal G$ by 
\begin{align}
  \begin{split}
\phi&=\psi_-
+(\lambda-\mathrm i)R(\lambda+\mathrm i0)\psi_--\phi^-[\xi_-],\\
\xi_+&=(\lambda-\mathrm i)F^+(\lambda)\psi_-;\quad \psi_-=R(\mathrm i)(H-\lambda)\phi^-[\xi_-]
.\label{eq:14.5.13.18.57}
  \end{split}
\end{align}
Then \eqref{eq:gen1} and \eqref{eq:aEigenfw} hold for $\{\xi_-,\xi_+,  \phi\}$.
\end{lemma}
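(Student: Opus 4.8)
The plan is to verify, in order, that $\phi\in\vE_\lambda$, that the decomposition \eqref{eq:gen1} holds for $\{\xi_-,\xi_+,\phi\}$, and finally the three identities in \eqref{eq:aEigenfw}: $\phi=\mathrm iF^-(\lambda)^*\xi_-$, $\xi_+=S(\lambda)\xi_-$ and $\phi=\mathrm iF^+(\lambda)^*\xi_+$. For the first point: since $\xi_-\in C^\infty_{\mathrm c}(S)$ the vector $\phi^-[\xi_-]$ of \eqref{eq:apGen} lies in $\vN\cap B^*$, and expanding $(H-\lambda)\phi^-[\xi_-]$ by \eqref{eq:decH} (together with \eqref{eq:BasA}) into the spherical term plus shorter‑range terms, each of which Lemma~\ref{lem:14.5.1.16.30} (i.e.\ \eqref{eq:14.5.1.16.14}) and \eqref{eq:decH} place in $B\cap\vH^1$, one obtains $\psi_-=R(\mathrm i)(H-\lambda)\phi^-[\xi_-]\in B\cap\vH^1$, exactly as in the proof of Theorem~\ref{thm:strong}. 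Hence $(H-\mathrm i)\psi_-=(H-\lambda)\phi^-[\xi_-]$ and $(H-\lambda)R(\lambda+\mathrm i0)\psi_-=\psi_-$ in the distributional sense, so from \eqref{eq:14.5.13.18.57}
\[
(H-\lambda)\phi=(H-\lambda)\psi_-+(\lambda-\mathrm i)\psi_--(H-\mathrm i)\psi_-=0 .
\]
Since moreover $\psi_-\in B\subseteq\vN\cap B^*$, $R(\lambda+\mathrm i0)\psi_-\in\vN\cap B^*$ by Corollary~\ref{cor:12.7.2.7.9b}, and $\phi^-[\xi_-]\in\vN\cap B^*$, it follows that $\phi\in\vN\cap B^*$, hence $\phi\in\vE_\lambda$.

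For \eqref{eq:gen1} I would rewrite \eqref{eq:14.5.13.18.57} as $\phi+\phi^-[\xi_-]=\psi_-+(\lambda-\mathrm i)R(\lambda+\mathrm i0)\psi_-$ and, using $\psi_-\in B\subseteq B_0^*$, reduce the claim to $(\lambda-\mathrm i)R(\lambda+\mathrm i0)\psi_--\phi^+[\xi_+]\in B_0^*$, where $\xi_+=(\lambda-\mathrm i)F^+(\lambda)\psi_-$. Apply Lemma~\ref{lem:B_0star} to $\psi_-\in B$, multiply the resulting membership by the bounded smooth function $(\lambda-\mathrm i)\bar\chi_n b^{-1/2}$ (bounded since $\bar\chi_n$ vanishes near the zero set of $b$, cf.\ \eqref{eq:14.6.29.23.46}, and multiplication by such a function maps $B_0^*$ to $B_0^*$), and then use that $(\lambda-\mathrm i)\chi_n R(\lambda+\mathrm i0)\psi_-\in\vH^1\subseteq B_0^*$ together with the fact that $(\lambda-\mathrm i)\bar\chi_n b^{-1/2}1_M\mathrm e^{\mathrm i(r_0-r)(\tilde A^{\mathrm{ex}}-\tilde b^{\mathrm{ex}})}F^+(\lambda)\psi_-$ agrees with $\phi^+[\xi_+]$ of \eqref{eq:apGen} modulo a term supported in a bounded range of $r$, hence in $\vH^1\subseteq B_0^*$. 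I expect this last cut‑off/normalization bookkeeping ($\bar\chi_n$ versus $\eta_\lambda$, the compactly supported corrections near $r_0$, and the equivalence of \eqref{eq:apGen} with \eqref{eq:gen1B} modulo $B_0^*$) to be the only genuinely fiddly — though entirely routine — step of the proof; everything else is algebraic.

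Finally, for \eqref{eq:aEigenfw}: the identity $\phi=\mathrm iF^-(\lambda)^*\xi_-$ is nothing but the lower‑sign case of \eqref{eq:rep1} rewritten, since for $g=\xi_-$ the vector $\psi_-$ of Remark~\ref{remark:uniq} coincides with our $\psi_-$, and \eqref{eq:rep1} then reads $\phi=\mathrm iF^-(\lambda)^*\xi_-$ by \eqref{eq:14.5.13.18.57}. Next, the lower‑sign case of \eqref{eq:rep3} gives $\xi_-=(\lambda-\mathrm i)F^-(\lambda)\psi_-$, so by \eqref{eq:1S} applied to $\psi_-\in B$,
\[
\xi_+=(\lambda-\mathrm i)F^+(\lambda)\psi_-=S(\lambda)(\lambda-\mathrm i)F^-(\lambda)\psi_-=S(\lambda)\xi_- .
\]
For the remaining identity $\phi=\mathrm iF^+(\lambda)^*\xi_+$ I would observe that \eqref{eq:1S} says $F^+(\lambda)=S(\lambda)F^-(\lambda)$ as bounded maps $B\to\vG$; taking adjoints and using that $S(\lambda)$ is unitary gives $F^+(\lambda)^*S(\lambda)=F^-(\lambda)^*$, whence $\mathrm iF^+(\lambda)^*\xi_+=\mathrm iF^+(\lambda)^*S(\lambda)\xi_-=\mathrm iF^-(\lambda)^*\xi_-=\phi$. (Alternatively, polarizing \eqref{eq:fund} yields $\mathrm iF^+(\lambda)^*F^+(\lambda)=R(\lambda+\mathrm i0)-R(\lambda-\mathrm i0)$ on $B$, which combined with the dual relation $\phi^-[\xi_-]=\psi_-+(\lambda-\mathrm i)R(\lambda-\mathrm i0)\psi_-$ of Remark~\ref{remark:uniq} gives the same conclusion directly.) This establishes \eqref{eq:gen1} and \eqref{eq:aEigenfw} for $\{\xi_-,\xi_+,\phi\}$.
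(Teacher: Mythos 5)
Your argument is correct and follows the same route as the paper's (very terse) proof: establish $\psi_-\in B\cap\vH^1$, use \eqref{eq:rep1} and \eqref{eq:rep3} with the lower sign, apply Lemma~\ref{lem:B_0star} with the upper sign, and observe that the cutoff/normalization discrepancy between \eqref{eq:apGen} and \eqref{eq:gen1B} lives in $B_0^*$. You have simply unpacked the paper's one-line combination of references and added the (correct) adjoint/unitarity step $F^+(\lambda)^*S(\lambda)=F^-(\lambda)^*$ to get $\phi=\mathrm iF^+(\lambda)^*\xi_+$, which the paper leaves implicit.
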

\begin{proof} 
  Note that $\psi_-\in B$, cf. the proof of Lemma
  \ref{lem:14.5.4.17.5},   and that \eqref{eq:gen1} holds with the
  approximate eigenfunctions of \eqref{eq:gen1B} if the estimate is
  valid for those defined by \eqref{eq:apGen} (obviously the
  difference is in $B_0^*$).  We combine \eqref{eq:rep1} and
  \eqref{eq:rep3} (with the lower sign only) and Lemma
  \ref{lem:B_0star} (with the upper sign).
\end{proof}

Similarly we can first specify  $\xi_+\in C^\infty_{\mathrm c}(S)$
(the proof is similar).
\begin{lemma}\label{lem:14.5.14.3.27b}
For any $\xi_+\in C^\infty_{\mathrm c}(S)$ introduce $\phi^+[\xi_+]\in \vN\cap B^*
$    by 
  \eqref{eq:apGen} and define $\phi\in\mathcal E_\lambda$ and $\xi_-\in \mathcal G$ by 
\begin{align}
  \begin{split}
\phi&=\phi^+[\xi_+]-\psi_+
-(\lambda-\mathrm i)R(\lambda-\mathrm i0)\psi_+,\\
\xi_-&=(\lambda-\mathrm i)F^-(\lambda)\psi_+;\quad \psi_+=R(\mathrm i)(H-\lambda)\phi^+[\xi_+]
.\label{eq:14.5.13.18.57b}    
  \end{split}
\end{align}
Then \eqref{eq:gen1} and \eqref{eq:aEigenfw} hold for $\{\xi_-,\xi_+,  \phi\}$.
\end{lemma}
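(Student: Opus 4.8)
The plan is to run the proof of Lemma~\ref{lem:14.5.14.3.27} with the roles of the two signs interchanged throughout. First I would record, exactly as in the proof of Lemma~\ref{lem:14.5.4.17.5}, that $\psi_+=R(\mathrm i)(H-\lambda)\phi^+[\xi_+]\in B\cap\mathcal H^1$: one expands $(H-\lambda)\phi^+[\xi_+]$ into the three terms produced by \eqref{eq:decH}, and each of them lies in $B\cap\mathcal H^1$ by \eqref{eq:BasA} and \eqref{eq:14.5.1.16.14}. From the identities $(H-\mathrm i)\psi_+=(H-\lambda)\phi^+[\xi_+]$ and $(H-\lambda)R(\lambda-\mathrm i0)\psi_+=\psi_+$ it then follows at once that the vector $\phi$ of \eqref{eq:14.5.13.18.57b} satisfies $(H-\lambda)\phi=0$; moreover $\phi\in\mathcal N\cap B^*$, since $\phi^+[\xi_+]\in\mathcal N\cap B^*$, $\psi_+\in B$, and $R(\lambda-\mathrm i0)\psi_+\in\mathcal N\cap B^*$ by Corollary~\ref{cor:12.7.2.7.9b}. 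Hence $\phi\in\mathcal E_\lambda$.

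Next I would invoke the representations \eqref{eq:rep1} and \eqref{eq:rep3}, now with the \emph{upper} sign and $g=\xi_+$: \eqref{eq:rep1} gives directly $\phi=\mathrm i F^+(\lambda)^*\xi_+$, while \eqref{eq:rep3} gives $\xi_+=(\lambda-\mathrm i)F^+(\lambda)\psi_+$. Combining the latter with the definition $\xi_-=(\lambda-\mathrm i)F^-(\lambda)\psi_+$ and with the defining relation $F^+(\lambda)\psi_+=S(\lambda)F^-(\lambda)\psi_+$ (legitimate since $\psi_+\in B$) yields $\xi_+=S(\lambda)\xi_-$. Feeding this back into $\phi=\mathrm i F^+(\lambda)^*\xi_+$ together with the adjoint relation $F^+(\lambda)^*=F^-(\lambda)^*S(\lambda)^{-1}$ (a consequence of $F^+(\lambda)=S(\lambda)F^-(\lambda)$ and the unitarity of $S(\lambda)$) gives $\phi=\mathrm i F^-(\lambda)^*\xi_-$. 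Thus all of \eqref{eq:aEigenfw} holds.

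For \eqref{eq:gen1} I would use Lemma~\ref{lem:B_0star} with the \emph{lower} sign. Since $F^-(\lambda)\psi_+=(\lambda-\mathrm i)^{-1}\xi_-$, and since, by matching the translation formula \eqref{eq:12.6.7.1.10cc} against \eqref{eq:apGen} in the spherical coordinates, the function $b^{-1/2}\,1_M\mathrm e^{\mathrm i(r_0-r)(\tilde A^{\rm ex}+\tilde b^{\rm ex})}\xi_-$ agrees with $\phi^-[\xi_-]$ up to a factor supported in a fixed $r$-ball, Lemma~\ref{lem:B_0star} yields $(\lambda-\mathrm i)R(\lambda-\mathrm i0)\psi_+-\phi^-[\xi_-]\in B_0^*$. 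Combined with $\psi_+\in B\subseteq B_0^*$ and the identity $\phi-\phi^+[\xi_+]=-\psi_+-(\lambda-\mathrm i)R(\lambda-\mathrm i0)\psi_+$ read off from \eqref{eq:14.5.13.18.57b}, this gives $\phi-\phi^+[\xi_+]+\phi^-[\xi_-]\in B_0^*$ with $\phi^\pm$ taken from \eqref{eq:apGen}, hence also for $\phi^\pm$ from \eqref{eq:gen1B}, since the two versions differ by an element of $B_0^*$ (their discrepancy is supported in a fixed $r$-ball).

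I expect the only genuinely delicate point to be the bookkeeping identification of $b^{-1/2}\,1_M\mathrm e^{\mathrm i(r_0-r)(\tilde A^{\rm ex}+\tilde b^{\rm ex})}\xi_-$ with $\phi^-[\xi_-]$ modulo $B_0^*$ --- that is, tracking the phase and the $\tfrac12\mathop{\mathrm{div}}\tilde\omega$ factor through \eqref{eq:12.6.7.1.10cc} and discarding the compactly supported error coming from the cutoffs $\eta_\lambda$ and $\bar\chi_n$. Everything else is a verbatim transcription of the proof of Lemma~\ref{lem:14.5.14.3.27} with the signs swapped, so no new estimates are needed.
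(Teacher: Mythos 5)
Your proof is correct and takes exactly the route the paper intends: the paper's own argument for this lemma consists of the single remark that ``the proof is similar'' to Lemma~\ref{lem:14.5.14.3.27}, meaning one runs that proof with the two signs interchanged (upper sign in \eqref{eq:rep1}, \eqref{eq:rep3}, lower sign in Lemma~\ref{lem:B_0star}), and that is precisely what you have spelled out. The only cosmetic imprecision is the phrase ``agrees with $\phi^-[\xi_-]$ up to a factor supported in a fixed $r$-ball'' (an additive discrepancy, not a factor) and the casual multiplication by $b^{-1/2}$, which of course is only meaningful for $r$ large where $b$ is bounded below --- but this is exactly the regime relevant for membership in $B_0^*$, so the argument stands.
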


\begin{proof}[Proof of Theorem~\ref{thm:char-gener-eigenf-1}]
Let any $\xi_-\in\mathcal G$ be given, and choose a sequence $\xi_{-,n}\in C^\infty_{\mathrm c}(S)$
such that $\xi_{-,n}\to \xi_-$ in $\mathcal G$ as $n\to\infty$.
By Lemma~\ref{lem:14.5.14.3.27} we have 
\begin{align*}
\i F^-(\lambda)^*\xi_{-,n}-\phi^+[S(\lambda)\xi_{-,n}]+\phi^-[\xi_{-,n}]\in B^*_0
\end{align*} (with  the approximate
eigenfunctions of \eqref{eq:gen1B}).
By the continuity of $F^-(\lambda)^*$, $S(\lambda)$ and $\phi^\pm[{}\cdot{}]$
we obtain, letting $n\to\infty$, 
\begin{align}\label{eq:brep}
  \i F^-(\lambda)^*\xi_--\phi^+[S(\lambda)\xi_-]+\phi^-[\xi_-]\in
  B^*_0.
\end{align} Whence \eqref{eq:gen1} and \eqref{eq:aEigenfw} hold for
$\{\xi_-,S(\lambda)\xi_-,  \i F^-(\lambda)^*\xi_-\}$, and  the existence part of \ref{item:14.5.13.5.40} follows when $\xi_-\in\mathcal G$ is given first.
We can proceed  similarly using Lemma~\ref{lem:14.5.14.3.27b} when
$\xi_+\in\mathcal G$ is given first,
and whence, with Lemma~\ref{lem:14.5.14.3.30A},
 the existence part of \ref{item:14.5.13.5.40} is completed.
In addition, the correspondences for  either $\xi_-\in\mathcal G$ or
$\xi_+\in\mathcal G$  given first are  given by 
\eqref{eq:aEigenfw}.

To complete \ref{item:14.5.13.5.40} it remains  to prove the uniqueness
part. 
Note that we already have a partial result in
Lemma~\ref{lem:14.5.12.13.42} (for 
$\phi$  given first).
 Let $\xi_-\in\mathcal G$ be given and suppose  that 
$\phi-\phi^+[\xi_+]+\phi^-[\xi_-]\in B_0^*$
 for some $\phi\in\mathcal E_\lambda$ and
$\xi_+\in\mathcal G$.  By linearity we may assume that $\xi_-=0$, and
it suffices to show that $\xi_+=0$ and $\phi=0$. Clearly by 
Lemma~\ref{lem:14.5.12.13.42} the vector $\xi_+=0$ and whence $\phi\in
B^*_0$. By Theorem \ref{thm:13.6.20.0.10} it then follows that
$\phi=0$. We can argue similarly if $\xi_+\in\mathcal G$ is given. We
have shown \ref{item:14.5.13.5.40}  and the formulas \eqref{eq:aEigenfw}.
The assertion \eqref{eq:aEigenfwB} for the upper sign
follows  from \eqref{eq:uniqFtoG}. We can
argue similarly
for the lower sign. Whence \ref{item:14.5.13.5.41} is shown.  

The
formulas  \eqref{eq:aEigenf2w} are
 immediate consequences of \eqref{eq:aEigenf2wb} and
\eqref{eq:aEigenf2wb2}, and in combination  with \ref{item:14.5.13.5.40} and
\ref{item:14.5.13.5.41} we conclude that indeed  $F^\pm(\lambda)^*\colon\vG\to \vE_\lambda\,(\subseteq B^*)$
    are bi-continuous. We have shown \ref{item:14.5.13.5.42}.

Finally, since $F^\pm(\lambda)^*$ are injective and have closed range
in $ B^*$ 
(by \ref{item:14.5.13.5.42}),
Banach's closed range theorem \cite[Theorem p. 205]{Yo} implies that
the range of $F^\pm(\lambda)$ for both signs coincides with $\mathcal
G$.  We conclude that  the
range of $\delta(H-\lambda)=(2\pi)^{-1}F^\pm(\lambda)^*F^\pm(\lambda)$
 coincides with $\mathcal E_\lambda$.  Hence
\ref{item:14.5.14.4.17} is shown.
\end{proof}

\subsection{Counter examples, open problems}
\label{subsec:Parabolic example} 
We consider modifications of the model of Example
\ref{ex:countex} and show that the asymptotics of the 
generalized eigenfunctions in $B^*$  for these models are {\it not} given by \eqref{eq:gen1}.  Fix
$\kappa\in (0,1)$, let $\theta:=xy^{-\kappa}$ for $y>0$ and let 
$r^2:=\kappa x^2+y^2$. Consider $M\subset \R^2$ with an end described as
\begin{align*}
  E=\{(x,y)\in\R\times \R_+|\quad r>r_0, \quad -1<\theta<1\},
\end{align*} which is  a cylinder in the variables $r$ and
$\theta$. The (inverse) 
metric in these coordinates are
\begin{align*}
  g^{rr}=N_r:=| \d r|^2,\quad g^{\theta\theta}=N_\theta:=| \d\theta|^2,\quad  g^{r\theta}=0.
\end{align*} 
Using the short-hand notation $|g|=\det g=N^{-1}_rN^{-1}_\theta$ we compute
\begin{align*}
  |g|^{1/4} \Delta |g|^{-1/4}&=\partial_rN_r\partial_r
  +\partial_\theta N_\theta\partial_\theta +W_r+W_\theta;\\
W_r&=-N_r(\partial_r \ln |g|)^2/16- (\partial_r N_r\partial_r \ln
|g|)/4,\\
W_\theta&=-N_\theta(\partial_\theta \ln |g|)^2/16- (\partial_\theta N_\theta\partial_\theta \ln |g|)/4.
\end{align*} 
We also compute
\begin{align*}
  &\partial_r x=\tfrac {\kappa x}{ rN_r},\quad \partial_\theta x=\tfrac 1
  { y^\kappa N_\theta},\quad \partial_r y=\tfrac
  y{rN_r},\quad \partial_\theta y=-\tfrac
  {\kappa \theta}{yN_\theta },
\\&N_r=1-(\kappa-\kappa^2) \theta^2 y^{2\kappa}r^{-2},\quad N_\theta=y^{-2\kappa}+\tfrac {\kappa^2\theta^2}{y^2},
\\&\partial_r N_r=2(1-\kappa N_r^{-1})(\kappa-\kappa^2) \theta^2
y^{2\kappa}r^{-3}= O(r^{2\kappa-3}), 
\quad  \partial^2_r
  N_r=O(r^{2\kappa-4}),\\&\quad \partial_\theta N_r=-2(1-\kappa^2 N_\theta^{-1}\theta^2y^{-2})(\kappa-\kappa^2) \theta
y^{2\kappa}r^{-2}=O(r^{2\kappa-2}),\quad \partial^2_\theta N_r=O(r^{2\kappa-2}),
\\&\partial_r N_\theta= -\tfrac{2\kappa}{rN_r}\parb{y^{-2\kappa}+\kappa
\theta^2y^{-2}}=O(r^{-1-2\kappa}),\quad  \partial^2_r
N_\theta=O(r^{-2-2\kappa}),\\& \partial_\theta N_\theta=\tfrac{2\kappa^2
\theta}{y^{2}N_\theta }\parb{y^{-2\kappa} +\kappa \theta^2y^{-2}}+\tfrac{2\kappa^2
\theta}{y^{2}}=\tfrac{4\kappa^2
\theta}{r^{2}}\parb{1+O\parb{r^{2\kappa-2}}}
,\\&\quad\, \partial^2_\theta N_\theta=\tfrac{4\kappa^2}{r^{2}}\parb{1+O\parb{r^{2\kappa-2}}}.
\end{align*} 
Using these formulas and 
$\partial_*\ln |g|=-(\partial_* N_r)/ N_r-(\partial_*  N_\theta)/N_\theta$ we get
\begin{align*}
  W_r=O(r^{-2}),\quad W_\theta=O(r^{-2}).
\end{align*}

We consider for $\kappa\in(0, 1/2]$ the approximate
outgoing eigenfunction (corresponding to any $\lambda>0$ and here with
$r_0=r_0(\lambda)$ chosen big enough)
\begin{align}\label{eq:appEiyb}
  \begin{split}
 \phi^+&:=\bar \chi _n|g|^{-1/4}b^{-1/2}\e^{\i \int^r_{r_0} b \,\d r} u(\theta)\\
&\approx C(\lambda)r^{
-\kappa/2}\e^{\i \int^r_{r_0} b \,\d r} u(\theta).   
  \end{split}
\end{align}
 Here
 \begin{align*}
   b=\sqrt{ 2(\lambda-\tfrac{
    \mu(\lambda)}{r^{2\kappa}})}\approx \sqrt{ 2\lambda}-\tfrac{
    \mu(\lambda)}{\sqrt{ 2\lambda}}r^{-2\kappa},
 \end{align*} $u=u(\theta)$ is any  Dirichlet eigenstate of the
 operator on $L^2((-1,1), \d \theta)$ given by 
 \begin{align*}
   H_D&:=-\tfrac 12\partial^2_\theta\text{ for }\kappa<1/2,\\
H_D&:=-\tfrac 12\partial^2_\theta-\tfrac{\lambda\theta^2}{4}\text{ for }\kappa=1/2,
 \end{align*} and $\mu(\lambda)$ is the  corresponding eigenvalue. To see
 why this is an  approximate eigenfunction we first note that $\phi^+\in
 \vN \cap B^*$. We claim that in fact 
 \begin{align*}
   &(H-\lambda)\phi^+\in r^{2\kappa-2}B^*\subseteq B\text{ for }\kappa<1/2,\\
 &(H-\lambda)\phi^+\in r^{-2}B^*\subseteq B\text{ for }\kappa=1/2.
 \end{align*} We compute for $\kappa=1/2$ (skipping the details for $\kappa<1/2$)
 \begin{align*}
   \partial_rN_r\partial_r\parb{b^{-1/2}\e^{\i \int^r_{r_0} b \,\d r} u}
   &=\parb{-b^2 +\tfrac{\lambda\theta^2}{2r}+O(r^{-2})}b^{-1/2}\e^{\i
     \int^r_{r_0} b \,\d r} u,\\
\partial_\theta N_\theta\partial_\theta\parb{b^{-1/2}\e^{\i \int^r_{r_0} b \,\d r} u}
   &=b^{-1/2}\e^{\i \int^r_{r_0} b \,\d r} \parb{\tfrac{1}{r}\partial^2_\theta u+O(r^{-2})}.
 \end{align*} In the first identity we substitute $b^2=2(\lambda-\tfrac{
    \mu(\lambda)}{r})$. Then we collect our computations and indeed obtain
\begin{align*}
   (H-\lambda)\phi^+=(H-\lambda)\phi^+-|g|^{-1/4}b^{-1/2}\e^{\i \int^r_{r_0} b \,\d r}
   r^{-1}\parb{H_D-\mu(\lambda)}u\in  r^{-2}B^*.
 \end{align*} 

Next we define
\begin{align*}
  \phi_u= \phi^+-R(\lambda-\i 0)(H-\lambda)\phi^+.
\end{align*} This $\phi_u$ is in $\vE_\lambda$ with
non-trivial prescribed outgoing asymptotics. If we look
at all eigenstates of $H_D$, say numbered by $k\in\N$,  we
obtain 
several generalized eigenfunctions this way. Note that for $\kappa=1/2$
\begin{align*}
  \e^{\i \int^r_{r_0} b \,\d r} \approx \e^{\i \sqrt{2\lambda} r} \exp\parb
{-\i \tfrac{
    \mu(\lambda)}{\sqrt{2\lambda}}\ln r}.
\end{align*}
Due to the
non-trivial factor
\begin{align*}
  \exp\parb
{-\i \tfrac{
    \mu(\lambda,k)}{\sqrt{2\lambda}}\ln r}
\end{align*} the asymptotics \eqref{eq:gen1} is
readily seen to be {\it incorrect}
(seen by using just two of the constructed  generalized
eigenfunctions). By a similar reasoning this conclusion is also valid
for $\kappa<1/2$.

The methods of this paper (in combination with other ingredients) should yield a modification of Theorem
\ref{thm:char-gener-eigenf-1} where the asymptotics of {\it any} $\phi\in
\vE_\lambda$ should be provided by  functions of the form \eqref{eq:appEiyb} and
their 
incoming counterparts, say in combination denoted by
$\{\phi_k^{\pm}|k\in\N\}$. This would intuitively yield the identification of
the  limiting space as $\vG=l^2(\N)$, but we shall 
not elaborate at this point.

For $\kappa\in  (1/2, 1)$ we do not know how to construct  approximate
outgoing eigenfunctions in $\vN \cap B^*$. If for example  we take
$b=\sqrt{2\lambda}$ and $u$ any nonzero function in the domain of the
Dirichlet Laplacian on $(-1,1)$ in
\eqref{eq:appEiyb} we obtain
\begin{align*}
  (H-\lambda N_r)\phi^+\in r^{-2\kappa}B^*\subseteq B,
\end{align*} which shows that
\begin{align*}
  (H-\lambda)\phi^+\notin B,
\end{align*} since $1-N_r\approx
(\kappa-\kappa^2)\theta^2r^{2\kappa-2}$ is long-range for $\kappa\in
(1/2, 1)$. The reader might think that a better approximation to the
eikonal equation than $\sqrt{2\lambda}r$ could be given to construct  concrete approximate
outgoing eigenfunctions in $\vN \cap B^*$ to cure this deficiency, however a closer examination
 indicates that this is not feasible 
(note that the forward flow property is a severe
restriction). The ansatz \eqref{eq:gen1B} has a similar deficiency. Whence
the asymptotics of the generalized eigenfunctions in $\vE_\lambda$ is
not known to us for $\kappa\in  (1/2, 1)$.

\end{document}